\providecommand{\tabularnewline}{\\}
\numberwithin{equation}{section}
\numberwithin{figure}{section}
\theoremstyle{remark}
\newtheorem{rem}{\protect\remarkname}
\theoremstyle{definition}
\newtheorem{defn}{\protect\definitionname}
\theoremstyle{definition}
 \newtheorem{example}{\protect\examplename}
\theoremstyle{plain}
\newtheorem{prop}{\protect\propositionname}
\theoremstyle{plain}
\newtheorem{cor}{\protect\corollaryname}
\theoremstyle{remark}
\newtheorem{claim}{\protect\claimname}
\theoremstyle{remark}
\newtheorem*{rem*}{\protect\remarkname}
\theoremstyle{plain}
\newtheorem{lem}{\protect\lemmaname}
\definecolor{brown}{RGB}{120,60,10}
\providecommand{\claimname}{Claim}
\providecommand{\corollaryname}{Corollary}
\providecommand{\definitionname}{Definition}
\providecommand{\examplename}{Example}
\providecommand{\lemmaname}{Lemma}
\providecommand{\propositionname}{Proposition}
\providecommand{\remarkname}{Remark}
\begin{document}
\title{Biased-Belief Equilibrium\thanks{The authors are very grateful to the anonymous referees for very helpful
comments and suggestions.} }
\author{Yuval Heller\thanks{Department of Economics, Bar Ilan University, Israel. \protect\href{mailto:yuval.heller@biu.ac.il}{yuval.heller@biu.ac.il}.
URL: \protect\href{https://sites.google.com/site/yuval26/}{https://sites.google.com/site/yuval26/}.
The author is grateful to the \emph{European Research Council} for
its financial support (ERC starting grant \#677057).} \and Eyal Winter\thanks{Lancaster University, Management School, and The Hebrew University,
Department of Economics. \protect\href{mailto:mseyal@mscc.huji.ac.il\%20}{mseyal@mscc.huji.ac.il }.
URL: \protect\href{http://www.ma.huji.ac.il/~mseyal/}{http://www.ma.huji.ac.il/$\sim$mseyal/}.
The author is grateful to the German-Israeli Foundation for Scientific
Research and Google for their financial support.}}

\maketitle
\noindent Final pre-print of a paper published in \emph{American Economic
Journal: Microeconomics} 12(2), 1--40, 2020.
\begin{abstract}
We investigate how distorted, yet structured, beliefs can persist
in strategic situations. Specifically, we study two-player games in
which each player is endowed with a biased-belief function that represents
the discrepancy between a player\textquoteright s beliefs about the
opponent's strategy and the actual strategy. Our equilibrium condition
requires that: (1) each player choose a best-response strategy to
his distorted belief about the opponent's strategy, and (2) the distortion
functions form best responses to one another. We obtain sharp predictions
and novel insights into the set of stable outcomes and their supporting
stable biases in various classes of games.

JEL classification: C73, D83. 

Keywords: commitment, indirect evolutionary approach, distortions,
wishful thinking, strategic complements, strategic substitutes.
\end{abstract}

\section{Introduction}

Standard models of equilibrium behavior attribute rationality to players
at two different levels: beliefs and actions (see, e.g., \citealp{aumann1995epistemic}).
Players are assumed to behave as if they form correct beliefs about
the opponents' behavior, and they choose actions that maximize their
utility given the beliefs that they hold. Much of the literature in
behavioral and experimental economics that documents violations of
the assumption that players have correct beliefs ascribes these violations
to cognitive limitations. However, in interactive environments where
one person\textquoteright s beliefs affect other persons\textquoteright{}
actions, belief distortions are not arbitrary, and they may arise
to serve some strategic purpose. 

In this paper we investigate how distorted, yet structured, beliefs
can persist in strategic situations. Our basic assumption here is
that distorted beliefs can persist because they offer a strategic
advantage to those who hold them even when these beliefs are wrong.
More specifically, players often hold distorted beliefs as a form
of commitment device that affects the behavior of their counterparts.
The precise cognitive process that is responsible for the formation
of beliefs is complex, and it is beyond the scope of this paper to
outline it. We believe, however, that in addition to analytic assessment
of evidence, preferences in the form of desires, fears, and other
emotions contribute to the process and, to an extent, facilitate belief
biases. If the evidence is unambiguous and decisive, or if the consequence
of belief distortion is detrimental to the player\textquoteright s
welfare, preferences may play less of a role and learning may work
to calibrate beliefs to reality. But when beliefs are biased in ways
that favor their holders by affecting the behavior of their counterparts,
learning can actually reinforce biases rather than diminish them. 

\paragraph{Biased Beliefs}

Standard equilibrium notions in game theory draw a clear line between
preferences and beliefs. The former are exogenous and fixed; the latter
can be amended through Bayesian updating but are not allowed to be
affected by preferences. However, phenomena such as wishful thinking
(see, e.g., \citealp{babad1991wishful}) and overconfidence (see,
e.g., \citealp{forbes2005some,barber2001boys,malmendier2005ceo,heller2014overconfidence}),
where beliefs are tilted toward what their holder desires reality
to be, suggest that in real life, beliefs and preferences can intermingle,
and that biased beliefs may be persistent. Similarly, belief rigidity
and belief polarization (see, e.g., \citealp{lord1979biased,ross1982shortcomings})
refer to situations in which two people with conflicting prior beliefs
each strengthen their beliefs in response to observing the same data.
The parties\textquoteright{} aversion to depart from their original
beliefs can also be regarded as a form of interaction between preferences
and beliefs. 

It is easy to see how the belief biases described above can have strategic
benefits in interactive situations. Wishful thinking and optimism
can facilitate cooperation in interactions that require mutual trust.
Overconfidence can deter competitors, and belief rigidity can allow
an agent to support a credible threat. An important objective of our
analysis is to identify the strategic environments that support biases
such as wishful thinking  as part of equilibrium behavior. It is
worthwhile to note that individuals are not the only ones susceptible
to strategically motivated belief biases. Governments are prone to
be affected by such biases as well. The Bush administration\textquoteright s
unsubstantiated confidence in Saddam Hussein\textquoteright s possession
of \textquotedblleft weapons of mass destruction\textquotedblright{}
prior to the Second Gulf War and the vast discrepancy between Israeli
and US intelligence assessments of Iran\textquoteright s nuclear intentions
prior to the signing of the Iran nuclear deal can be interpreted as
strategically motivated belief distortion.\footnote{There are other possible interpretations of these controversial real-life
examples. In a dynamic real-life setup it is hard to have access to
agents' private information, and therefore it is very difficult to
achieve direct empirical evidence for persistent biased beliefs. There
are a few lab experiments that elicit subjects' beliefs (using monetary
incentives and proper scoring rules) about the expected behavior of
the opponent. \citet{nyarko2002experimental} demonstrate that the
elicited forecasts of subjects about about the opponents' future behavior
substantially differ from the empirical play of opponents in the past.
\citet{palfrey2009eliciting} present evidence that forecasts by players
(about the opponent's behavior in a simple two-player game) are significantly
different from the forecasts of external observers. Moreover, the
players' forecasts are systematically biased, and significantly less
accurate than the forecasts of the external observers.} 

Belief biases in strategic environments are also connected to self-interest
biases regarding moral and ethical standards. \citet{Babcock-Lewonstin}
had participants in a lab experiment negotiate a deal between a plaintiff
and a defendant in a court case. When they asked participants to make
predictions about the outcome of the real court case the authors found
a significant belief divergence depending on the role participants
were assigned to in the negotiations. A similar moral hypocrisy was
revealed by \citet{rustichini2014moral} who showed that subjects'
subjective judgments regarding fairness in bargaining depended on
the bargaining power they were assigned in the experiment. 

A different body of empirical evidence consistent with strategic beliefs
is offered by the psychiatric literature on \textquotedblleft depressive
realism\textquotedblright{} (e.g., \citealp{dobson1989conceptual}).
This literature compares probabilistic assessments conveyed by psychiatrically
healthy people with those suffering from clinical depression. Participants
in both categories were requested to assess the likelihood of experiencing
negative or positive events in both public and private setups. Comparing
subjects\textquoteright{} answers with the objective probabilities
of these events revealed that in a public setup clinically depressed
individuals were more realistic than their healthy counterparts for
both types of events. The apparent belief bias among healthy individuals
can be reasonably attributed to the strategic component of beliefs.
Mood disorders negatively affect strategic reasoning (\citealp{inoue2004deficiency}),
which, to a certain extent, may diminish strategic belief distortion
among clinically depressed individuals relative to their healthy counterparts. 

For biased beliefs to yield a strategic advantage to the agents holding
them, it is essential that (1) agents be committed to follow their
biased beliefs, and (2) agents best-reply to the perceived behavior
induced by their counterparts\textquoteright{} biases (both on and
off the equilibrium path). For the sake of tractability, we shall
avoid formalizing a concrete dynamic model that describes how biased
beliefs are formed, and how agents credibly commit to these biased
beliefs. Instead, we shall adopt a static approach by imposing equilibrium
conditions on the agents\textquoteright{} beliefs and the opponents'
interpretation of their beliefs. (We discuss our modeling approach
and its evolutionary interpretation in Section \ref{subsec:Discussion-of-the},
and we present a formal evolutionary foundation in Appendix \ref{sec:evol-interpetation-of-BBE}.)
This static approach is consistent with a large part of the literature
on endogenous preferences (see, e.g., the literature cited below).
Nevertheless, we mention a few mechanisms that can facilitate these
processes and turn biased beliefs into a credible commitment device.
\begin{enumerate}
\item Refraining from accessing or using biased sources of information,
e.g., subscribing to a newspaper with a specific political orientation,
consulting biased experts, and reading Facebook's personalized news
feeds, which are typically biased due to friends who hold similar
beliefs.
\item Passionately following a religion, a moral principle, or an ideology
that has belief implications on human behavior. 
\item Possessing personality traits that have implications on beliefs (e.g.,
narcissism or naivety).
\end{enumerate}
The mechanisms described above are likely not only to induce belief
biases, but also to generate signals sent to the player\textquoteright s
counterparts about these biases with a certain degree of verifiability.
These mechanisms, the signals they induce, and their interpretation
are the main forces that facilitate biased-belief equilibrium. 

\paragraph{Solution Concept}

Our notion of biased-belief equilibrium (henceforth, BBE) uses a two-stage
paradigm. In the first stage each player is endowed with a biased-belief
function. This function represents the discrepancy between a player\textquoteright s
beliefs about the strategy profile of other players and the actual
profile. In the second stage the players play the biased game induced
by their distortion functions, in which each player chooses a best-reply
strategy to his biased belief about the opponent's strategy (the chosen
strategy profile is referred to as the equilibrium outcome). Finally,
our equilibrium condition requires that the distortion functions not
be arbitrary, but form best replies to one another.

If one of the players deviates to being endowed with a different biased-belief
function, then there might be multiple Nash equilibria in the new
biased game induced by this deviation. Our weak notion (\emph{weak
BBE}) requires the deviator to be outperformed in at least one equilibrium
of the new biased game. Our strong notion (\emph{strong BBE}) requires
(1) each agent to have a \emph{monotone} biased belief, according
to which he assigns a higher probability to his opponent playing a
certain strategy than this probability actually is, and (2) a deviator
to be outperformed in \emph{all} Nash equilibria of the new biased
game. Our main notion, \emph{BBE},\emph{ }lies in between these two
notions, and it requires (1) each agent to have a monotone biased
belief, and (2) the deviator to be outperformed in at least one plausible
Nash equilibrium of the new biased game, where we rule out implausible
Nash equilibria in which the non-deviator behaves differently even
though he does not observe any change in the deviator's perceived
strategy.

In Section 2.5 we present our main evolutionary interpretation of
our solution concept, according to which the endowed biased beliefs
are the result of an evolutionary process of social learning (the
interpretation is formalized in Appendix \ref{sec:evol-interpetation-of-BBE}).
In addition, we present an alternative, delegation interpretation
of the model (which is formalized in Appendix \ref{sec:Principal-Agent}).

\paragraph{Nash Equilibrium and BBE}

We begin our analysis by studying the relations between BBE outcomes
and Nash equilibria. We show that any Nash equilibrium can be implemented
as the outcome of a BBE, though in some cases this requires that the
players have biased beliefs that are accurate on the equilibrium path,
but that they be blind to some deviations of the opponent off the
equilibrium path. This, in particular, implies that every game admits
a BBE. Next, we show that introducing biased beliefs does not change
the set of equilibrium outcomes in games in which at least one of
the players has a dominant action. By contrast, BBE admits non-Nash
behavior in most other games.

\paragraph{Main Results}

Our main results show that the notion of BBE induces substantial predictive
power in various classes of interval games. In these classes of games
the strategy of each player is a number in a bounded interval, where
a higher strategy (interpreted as a higher investment) induces a higher
payoff for the opponent. We begin by characterizing the set of BBE
in games with strategic complements (\citealp{bulow1985multimarket}),
such as price competition with differentiated goods (Example \ref{exam-Price-competition-with-differianted}),
input games (Example \ref{exam-partnership-game} in Appendix \ref{subsec:Examples-of-Games-complements}),
and stag hunt games (Example \ref{exa-stag-hunt-analysis} in Appendix
\ref{subsec:Examples-of-Games-complements}). We show three key properties
of any BBE: (1) \emph{overinvestment}: the strategy of each agent
is (weakly) higher than the best reply to the opponent's (real) strategy,
(2) \emph{ruling out bad outcomes}: both players invest more than
their investments in the worst Nash equilibrium of the underlying
game, and (3) \emph{wishful thinking}: each agent perceives his opponent
as investing (weakly) more than the opponent's real investment.

Next, we characterize the set of BBE in games with strategic substitutes,
such as Cournot competitions (Example \ref{exam-Cournot}) and hawk-dove
games (Example \ref{Exam-Cornot-Stackelberg} in Appendix \ref{subsec:Hawk-dove-game}).
We show three key properties of any BBE: (1) \emph{underinvestment}:
the strategy of each agent is (weakly) higher than the best reply
to the opponent's (real) strategy, (2) \emph{ruling out excellent
outcomes}: at least one of the players invests less than his investments
in one of the Nash equilibria of the underlying game, and (3) \emph{wishful
thinking}: each agent perceives his opponent as investing (weakly)
more than the opponent's real investment.

Finally, we characterize the set of BBE in a class of games (which
are less common in economic interactions), in which the strategy of
player 1 is a complement of player 2's strategy, while the strategy
of player 2 is a substitute of player 1's strategy (e.g., duopolistic
competition in which one firm chooses its quantity while the opposing
firm chooses its price (\citealp{singh1984price}), and various classes
of asymmetric contests (\citealp{dixit1987strategic})). We show that
in this class of games agents present \emph{pessimism} in any BBE\emph{:}
each agent perceives his opponent as investing (weakly) less than
the opponent's real investment.

\paragraph{Additional Results}

Our next result shows an interesting class of BBE that exist in all
games. We say that a strategy is undominated Stackelberg if it maximizes
a player\textquoteright s payoff in a setup in which the player can
commit to an undominated strategy, and his opponent reacts by best-replying
to this strategy. We show that every game admits a BBE in which one
of the players is ``strategically stubborn'' in the sense of having
a constant belief about the opponent's strategy, and always playing
his undominated Stackelberg strategy, while the opponent is ``rational''
in the sense of having undistorted beliefs and best-replying to the
player's true strategy.

Section \ref{sec:folk-theorem-results} shows that unless one imposes
both requirements on the definition of a BBE, namely, monotonicity
and ruling out implausible equilibria, then the set of BBE outcomes
is very large in various classes of games. Specifically, Proposition
\ref{pro-monotone-BBE-outcomes-finite-games} shows that for a large
class of finite games, a strategy profile is a monotone weak BBE iff
(1) no player uses a strictly dominated strategy, and (2) the payoff
of each player is above the minmax payoff of the player in a setup
in which both players are restricted to choose only undominated strategies
(i.e., strategies that are not strictly dominated). Proposition \ref{pro-interval-strong-continous-folk}
shows a similar folk theorem result for non-monotone strong BBE in
a large class of interval games.

\paragraph{Empirical Predictions}

Our main results imply two empirical predictions. First, they suggest
that efficient (non-Nash equilibrium) outcomes are easier to support
in games with strategic complements, relative to games with strategic
substitutes. This prediction is consistent with the experimental findings
of \citet{potters2009cooperation}, which show that there is significantly
more cooperation in games with strategic complements than in the case
of strategic substitutes.

Our second empirical prediction is that wishful thinking is strategically
stable in many common environments, though some (less common) strategic
interactions may induce pessimism. This empirical prediction is consistent
with the experimental evidence that people tend to present wishful
thinking, while the presented level of wishful thinking may substantially
differ between various environments; see, e.g., \citet{babad1991wishful,budescu1995relationship,doi:10.1080/13546789508256906}
and \citet{mayraz2013wishful}.

\paragraph{Structure}

The structure of this paper is as follows. We discuss the related
literature in Section \ref{sec:Related-Literature-and}. Section \ref{sec:Model}
describes the model. In Section \ref{sec:Relations-with-Nash} we
analyze the relations between BBE and Nash equilibria. Section \ref{sec:Monotone-Games-and-wishful}
defines games with strategic complements/substitutes and wishful thinking.
We analyze these games and present our main results in Section \ref{sec:Main-Results}.
In Section \ref{sec:Additional-Results} we present additional results:
(1) the relation between BBE and strategies played by a Stackelberg
leader, and (2) folk theorem results when relaxing the definition
of BBE. We conclude in Section \ref{sec:Discussion}. All the appendices
of the paper appear in the online supplementary material. Appendix
\ref{sec:Additional-Examples} presents various interesting examples.
We formally present the evolutionary interpretation of our solution
concept in Appendix \ref{sec:evol-interpetation-of-BBE}, and the
delegation interpretation in Appendix \ref{sec:Principal-Agent}.
Appendix \ref{sec:Discontinuous-Biased-Beliefs} relaxes the assumption
that biased beliefs have to be continuous. Appendix \ref{sec:Partial-Observability}
shows how to extend our results to a setup with partial observability.
Appendix \ref{sec:Proofs} presents our formal proofs.

\section{Related Literature and Contributions\label{sec:Related-Literature-and}}

Our paper aims at making a contribution to the behavioral game theory
literature. Much of this literature concerns behavioral equilibrium
concepts that depart from the framework of Nash equilibrium by introducing
weaker rationality conditions. This has been done primarily at the
level of preferences (e.g., \citealp{GuethYaari1992Explaining,fehr1999theory,bolton2000erc,acemoglu2001evolution,heifetz2004evolutionary,DekelElyEtAl2007Evolution,heifetz2007dynamic,friedman2009equilibrium,Herold_Kuzmics_2009,heller2016rule,winter2017mental}).
But it has also been done at the level of beliefs (e.g., \citealp{geanakoplos1989psychological,rabin1993incorporating,battigalli2007guilt,attanasi2008survey,battigalli2009dynamic,battigalli2017frustration,gannon2017evolutionary}).
This latter literature deals with belief-dependent preferences, and
focuses primarily on the way players\textquoteright{} beliefs about
the intentions of others affect their preferences and behavior. 

Our equilibrium concept also operates on beliefs rather than preferences
but is based on an inherently different approach. Preferences in our
model are not affected by beliefs but beliefs are biased in a way
that serves players\textquoteright{} strategic purposes. Our analysis
of biased belief goes beyond characterizing equilibrium outcomes.
An additional important objective is to identify the belief biases
that support these equilibrium outcomes in different strategic environments.
Central to our analysis are belief-distortion properties, such as
wishful thinking and pessimism, that sustain BBE in different strategic
environments.

The existing literature has presented various prominent solution concepts
that assume that players have distorted beliefs. Some examples include
models of level-\emph{k} and cognitive hierarchy (see, e.g., \citealp{stahl1994experimental,nagel1995unraveling,costa2001cognition,camerer2004cognitive}),
analogy-based expectation equilibrium (\citealp{jehiel2005analogy}),
cursed equilibrium (\citealp{eyster2005cursed}), and Berk-Nash equilibrium
(\citealp{esponda2016berk}). These equilibrium notions have been
helpful in understanding strategic behavior in various setups, and
yet these notions pose a conceptual challenge to our understanding
of the persistence of distorted beliefs, even in view of the empirical
evidence for such persistence. If players can infer the truth ex post
why don\textquoteright t they calibrate their beliefs toward reality?
Much of the literature presenting such models points to cognitive
limitations as the source of this rigidity. Our model and analysis
offer an additional perspective to this issue by suggesting that belief
biases that yield a strategic advantage in the long run are likely
to emerge in equilibrium. In this sense our approach can be viewed
as providing a tool to explain why some cognitive limitations persist
while others do not (see Example \ref{exam-traveler-dilemma-1} in
Appendix \ref{sec:Additional-Examples}, in which we show how level-1
behavior can be supported as part of a BBE outcome in the traveler's
dilemma).

Our notion of BBE is related to the notion of conjectural equilibrium
(\citealp[originally written in 1988]{battigalli1997conjectural})
insofar as both solution concepts relax the Nash equilibrium's requirement
that beliefs need to be consistent with actual play (while still requiring
that an agent's action has to be optimal given the agent's belief).
A conjectural equilibrium is defined in an environment in which players
do not observe each other's actions but rather observe signals of
each other's actions, according to an exogenous feedback correspondence\emph{.
}In a conjectural equilibrium each player best replies to his belief
about the opponent's action, and this belief is required to be consistent
with the signal observed by the player. There are two key structural
differences between a BBE and a conjectural equilibrium. First, a
BBE is defined in an environment in which there is no exogenous feedback
correspondence; rather, the feedback correspondence is implicitly
defined as part of the solution concept by the agents' biased-belief
functions. These biased-belief functions are not restricted by a consistency
requirement with respect to an exogenous feedback mechanism, but rather
they are are restricted by the requirement that each biased-belief
function has to be a best reply against the opponent's biased belief.
The second structural difference is that while a BBE describes what
would be the agent's belief for \emph{any} feasible action of the
opponent, a conjectural equilibrium describes only the agent's belief
about the equilibrium action of the opponent.

Despite these structural differences, it is interesting to discuss
relations between the equilibrium behavior induced by each solution
concept, i.e., the relations between a BBE outcome and a conjectural
equilibrium outcome. Without restricting the feedback correspondence,
the notion of conjectural equilibrium is rather broad (it rules out
only strictly dominated strategies), and, accordingly, any BBE outcome
is a conjectural equilibrium outcome. \citeauthor{fudenberg1993self}'s
\citeyearpar{fudenberg1993self} notion of self-confirming equilibrium
deals with extensive-form games, and refines conjectural equilibrium
by requiring that the feedback correspondence is the one in which
each player observes the opponent's realized actions (but does not
observe the opponent's behavior off the equilibrium path). In the
setup of two-player one-shot games, which is the focus of the present
paper, the set of self-confirming equilibria coincides with the set
of Nash equilibria (whereas the set of BBE outcomes is broader and
includes non-Nash outcomes). Another refinement of conjectural equilibrium
is the rationalizable conjectural equilibrium (\citealp{rubinstein1994rationalizable};
the notion has been generalized to games with structural uncertainty
in \citealp{esponda2013rationalizable}). This concept requires that
the agents' beliefs be consistent with the common knowledge that all
players maximize utility given their signals. There is no inclusion
relation between the set of BBE outcomes and the set of rationalizable
conjectural equilibrium outcomes. Specifically, in games with a unique
rationalizable action profile, such as price competitions with differentiated
goods and Cournot competitions, the unique rationalizable conjectural
equilibrium outcome is the Nash equilibrium (for any feedback correspondence),
while the set of BBE outcomes is substantially larger (see Examples
\ref{exam-Price-competition-with-differianted} and \ref{exam-Cournot}).
By contrast, in games such as stag hunt and hawk\textendash dove,
when the feedback correspondence is non-informative any action profile
is a conjectural equilibrium outcome, while the set of BBE outcomes
is much more restricted (see Examples \ref{exa-stag-hunt-analysis}
and \ref{exam-The-hawk-dove-r} in Appendix \ref{sec:Additional-Examples}).

\section{Model\label{sec:Model} }

\subsection{Underlying Game}

Let $i\in\left\{ 1,2\right\} $ be an index used to refer to one of
the players in a two-player game, and let $j$ be an index referring
to the opponent. Let $G=\left(S,\pi\right)$ be a normal-form two-player
game (henceforth, \emph{game}), where $S=\left(S_{1},S_{2}\right)$
and each $S_{i}$ is a convex compact set of strategies. Specifically,
we focus on two cases:
\begin{enumerate}
\item \emph{Finite games}: Each $S_{i}$ is a simplex over a finite set
of pure actions, where each strategy corresponds to a mixed action
(i.e., $A_{i}$ is a finite set of pure actions, and $S_{i}=\Delta\left(A_{i}\right)$),
and the von Neumann\textendash Morgenstern payoff function is linear
with respect to the mixing probability.
\item \emph{Interval games}: Each $S_{i}$ is a bounded interval in $\mathbb{R}$
(e.g., each player chooses a real number representing quantity, price,
or effort). 
\end{enumerate}
We denote by $\pi=\left(\pi_{1},\pi_{2}\right)$ players' payoff functions;
i.e., $\pi_{i}:S\rightarrow\mathbb{R}$ is a function assigning each
player a payoff for each strategy profile. We use $s_{i}$ to refer
to a typical strategy of player $i$. We assume each payoff function
$\pi_{i}\left(s_{i},s_{j}\right)$ to be continuously twice differentiable
in both parameters and weakly concave in the first parameter ($s_{i}$). 

Let $BR$ (resp., $BR^{-1}$) denote the (inverse) best-reply correspondence;
i.e., 
\[
BR\left(s_{i}\right)=argmax_{s_{j}\in S_{j}}\left(\pi_{j}\left(s_{i},s{}_{j}\right)\right)
\]
 is the set of best replies against strategy $s_{i}\in S_{i}$, and
\[
BR^{-1}\left(s_{i}\right)=\left\{ s_{j}\in S_{j}|s_{i}\in BR\left(s_{j}\right)\right\} 
\]
 is the set of strategies for which $s_{i}$ is a best reply against
them.

In a finite game, we use $a_{i}\in A_{i}$ to denote also the degenerate
mixed action that assigns mass one to $a_{i}$. When the set of actions
of a player is given as an ordered set $A_{i}=\left(a_{i}^{1},a_{i}^{2},...,a_{i}^{n}\right)$,
we identify a mixed action with a vector $s_{i}=\left(\alpha_{1},\alpha_{2},...,\alpha_{n}\right)$,
where $0\leq\alpha_{k}=s_{i}\left(a_{i}^{k}\right)$ for each $1\leq k\leq n$,
and $\sum_{k}\alpha_{k}=1$. Given two strategies $s_{i},s'_{i}\in S_{j}$
and $\alpha\in\left[0,1\right]$, let $\alpha\cdot s_{i}+\left(1-\alpha\right)\cdot s'_{i}$
be the mixture of the two strategies: $\left(\alpha\cdot s_{i}+\left(1-\alpha\right)\cdot s'_{i}\right)\left(a_{i}\right)=\alpha\cdot s_{i}\left(a_{i}\right)+\left(1-\alpha\right)\cdot s'_{i}\left(a_{i}\right)$.

When there are two (ordered) actions for each player (say, $A_{i}=\left\{ c_{i},d_{i}\right\} $),
we identify a mixed action $s_{i}$ with the probability it assigns
to the first pure action $s_{i}\left(c_{i}\right)$, and we identify
the set of strategies $S_{i}$ with the interval $\left[0,1\right]$.
Thus, a game with two actions for each player can be captured both
as a finite game and as an interval game.

\subsection{Biased-Belief Function}

We start here with the definition of biased-belief functions that
describe how players' beliefs are distorted. A\emph{ biased belief}
$\psi_{i}:S_{j}\rightarrow S_{j}$ is a \emph{continuous} function
that assigns to each strategy of the opponent, a (possibly distorted)
belief about the opponent's play. That is, if the opponent plays $s_{j}$,
then player $i$ believes that the opponent plays $\psi_{i}\left(s_{j}\right)$.
We call $s_{j}$ the opponent's real strategy, and we call $\psi_{i}\left(s_{j}\right)$
the opponent's perceived (or biased) strategy. Formally, the continuity
requirement is that if $\left(s_{j,n}\right)_{n}\underset{n\rightarrow\infty}{\rightarrow}s_{j}$,
then $\left(\psi_{i}\left(s_{j,n}\right)\right)_{n}\underset{n\rightarrow\infty}{\rightarrow}\psi_{i}\left(s_{j}\right)$
(where in a finite game, we say that $\left(s_{j,n}\right)_{n}\underset{n\rightarrow\infty}{\rightarrow}s_{j}$
iff $\left(s_{j,n}\left(a\right)\right)_{n}\underset{n\rightarrow\infty}{\rightarrow}s_{j}\left(a\right)$
for each action $a$).
\begin{rem}
Two reasons motivate us to require that a biased belief be continuous:
(1) continuity implies that each biased game (defined below) admits
a Nash equilibrium, which allows us to simplify the definition of
BBE, and (2) continuity reflects a plausible restriction that a small
change in the opponent's strategy should induce a small change in
the perceived strategy. In Appendix \ref{sec:Discontinuous-Biased-Beliefs}
we present an alternative (and somewhat more complicated) definition
of a BBE that relaxes the assumption that biased beliefs must be continuous,
and we show that all the BBE characterized in the results of the paper
remain BBE when we allow deviators to use discontinuous biased beliefs. 
\end{rem}
We say that a biased belief $\psi_{i}:S_{j}\rightarrow S_{j}$ is\emph{
monotone} if:
\begin{enumerate}
\item In interval games: $s_{j}\geq s'_{j}$ implies $\psi_{i}\left(s_{j}\right)\geq\psi_{i}\left(s'_{j}\right)$
for each strategy $s_{j}\in S_{j}$.
\item In finite games: If the opponent plays $a_{j}$ more often, while
keeping the same proportion of playing the remaining actions, then
the perceived probability that the opponent plays any other action
weakly decreases (which implies, in particular, that the perceived
probability that the opponent plays $a_{j}$ weakly increases); that
is, 
\[
\left(\psi_{i}\left(\left(1-\alpha\right)\cdot s_{j}+\alpha\cdot a_{j}\right)\right)\left(a'_{j}\right)\leq\left(\psi_{i}\left(s_{j}\right)\right)\left(a'_{j}\right)
\]
 for each $\alpha\in\left[0,1\right]$, each action $a_{j}\in A_{j}$,
each action $a'_{j}\neq a_{j}$, and each strategy $s_{j}\in\Delta\left(A_{j}\right)$.
In particular, when the game has two actions for each player, a biased
belief $\psi_{i}$ is\emph{ }monotone iff $\psi_{i}$ is weakly increasing
in $\alpha_{j}$; i.e., $\alpha_{j}\geq\alpha'_{j}$ implies that
$\psi_{i}\left(\alpha_{j}\right)\geq\psi_{i}\left(\alpha'_{j}\right)$.
\end{enumerate}
Monotone biased beliefs reflect a plausible restriction on the distortion
of agents, namely, that if the opponent changes his real strategy
in some direction, the agent captures the direction of the change
correctly, but may have the wrong perception about the magnitude of
the change. 

Let $I_{d}$ be the undistorted (identity) function, i.e., $I_{d}\left(s\right)=s$
for each strategy $s$. A biased belief $\psi$ is \emph{blind }if
the perceived opponent's strategy is independent of the opponent's
real strategy, i.e., if\emph{ $\psi\left(s_{j}\right)=\psi\left(s_{j}'\right)$
}for each \emph{$s_{j},s_{j}'\in S_{j}$. }With a slight abuse of
notation we use $s_{i}$ to denote also the blind biased belief $\psi_{j}$
that is always equal to $s_{i}$.

\subsection{Biased Game}

An underlying game and a profile of biased beliefs jointly induce
a biased game in which the (biased) payoff of each player is determined
by the perceived strategy of the opponent. Formally:
\begin{defn}
Given an underlying game $G=\left(S,\pi\right)$ and a profile of
biased beliefs $\left(\psi_{i},\psi_{j}\right)$, let the \emph{biased
game} \emph{$G_{\psi}=\left(S,\psi\circ\pi\right)$ }be defined as
the game with the following payoff function $\left(\psi\circ\pi\right)_{i}:S_{i}\times S_{j}\rightarrow\mathbb{R}$
for each player $i$: 
\[
\left(\psi\circ\pi\right)_{i}\left(s_{i},s_{j}\right)=\pi_{i}\left(s_{i},\psi_{i}\left(s_{j}\right)\right).
\]
\end{defn}
A Nash equilibrium of a biased game is defined in the standard way.
Formally, a pair of strategies $s^{*}=\left(s_{1}^{*},s_{2}^{*}\right)$
is a Nash equilibrium of a biased game $G_{\psi}=\left(S,\psi\circ\pi\right)$,
if each $s_{i}^{*}$\emph{ }is a best reply against the perceived
strategy of the opponent, i.e., 
\[
s_{i}^{*}=argmax_{s_{i}\in S_{i}}\left(\pi_{i}\left(s_{i},\psi_{i}\left(s_{j}^{*}\right)\right)\right).
\]

Let\emph{ $NE\left(G_{\psi}\right)\subseteq S_{1}\times S_{2}$ }denote
the set of all Nash equilibria of the biased game \emph{$G_{\psi}$. }

Observe that the set of strategies of a biased game is convex and
compact, and the payoff function $\left(\psi\circ\pi\right)_{i}:S_{i}\times S_{j}\rightarrow\mathbb{R}$
is weakly concave in the first parameter and continuous in both parameters.
This implies (due to a standard application of Kakutani's fixed-point
theorem) that each biased game \emph{$G_{\psi}$ }admits a Nash equilibrium
(i.e., $NE\left(G_{\left(\psi'_{i},\psi_{j}^{*}\right)}\right)\neq\emptyset$.) 

\subsection{Weak and Strong BBE}

We are now ready to define our equilibrium concept. A weak biased-belief
equilibrium (abbr. weak BBE) is a pair consisting of a profile of
biased beliefs and a profile of strategies, such that: (1) each strategy
is a best reply to the perceived strategy of the opponent, and (2)
each biased belief is a best reply to the opponent's biased belief,
in the sense that any agent who chooses a different biased-belief
function is outperformed in at least one equilibrium in the new biased
game (relative to the agent's payoff in the original equilibrium).
Formally:
\begin{defn}
\label{def-weak-BBE}A\emph{ weak BBE} is a pair $\left(\psi^{*},s^{*}\right)$,
where $\psi^{*}=\left(\psi_{1}^{*},\psi_{2}^{*}\right)$ is a profile
of biased beliefs and $s^{*}=\left(s_{1}^{*},s_{2}^{*}\right)$ is
a profile of strategies satisfying: (1) $\left(s_{i}^{*},s_{j}^{*}\right)\in NE\left(G_{\psi^{*}}\right)$,
and (2) for each player $i$ and each biased belief $\psi_{i}'$,
there exists a strategy profile $\left(s'_{i},s_{j}'\right)\in NE\left(G_{\left(\psi'_{i},\psi_{j}^{*}\right)}\right)$,
such that the following inequality holds: $\pi_{i}\left(s'_{i},s_{j}'\right)\leq\pi_{i}\left(s_{i}^{*},s_{j}^{*}\right)$. 
\end{defn}
The notion of weak BBE is arguably too permissive because it allows
incumbents: (1) to have implausible non-monotone beliefs, and (2)
to outperform the deviators in a single Nash equilibrium of the biased
game (while, possibly, the incumbents are outperformed by the deviators
in many other equilibria). Proposition \ref{pro-monotone-BBE-outcomes-finite-games}
(in Section \ref{subsec:Folk-Theorem-Result:-Monotone}) demonstrates
that this single Nash equilibrium, in which the deviators are outperformed,
may be implausible due to allowing the incumbents to ``discriminate''
against the deviators, even though the  deviators exhibit exactly
the same perceived behavior as the rest of the population.

The more restrictive refinement of strong BBE requires that (1) incumbents
have monotone beliefs, and (2) deviators who choose a different biased-belief
function be outperformed in \emph{all} equilibria of the induced biased
game. Formally:
\begin{defn}
\label{def-strong-BBE}A\emph{ weak BBE} $\left(\psi^{*},s^{*}\right)$
is a\emph{ strong} \emph{BBE} if (1) each biased function $\psi_{i}^{*}$
is monotone, and (2) the inequality $\pi_{i}\left(s'_{i},s_{j}'\right)\leq\pi_{i}\left(s_{i}^{*},s_{j}^{*}\right)$
holds for every player $i$, every biased belief $\psi'_{i}$, and
every strategy profile $\left(s'_{i},s_{j}'\right)\in NE\left(G_{\left(\psi'_{i},\psi_{j}^{*}\right)}\right)$.
\end{defn}

\subsection{BBE}

Finite games typically induce multiple Nash equilibria. This is often
the case also with respect to biased games. This suggests that the
refinement of strong BBE may be too restrictive, as there are are
potentially many Nash equilibria of many biased games, and the requirement
of the deviators being outperformed in all these equilibria might
be too demanding. Our main solution concept, BBE, lies in between
weak BBE and strong BBE. 

In a BBE, the deviator is required to be outperformed in at least
one \emph{plausible} equilibrium of the new biased game. Roughly speaking,
in a plausible equilibrium of the new biased game induced by a deviation
of player \emph{i} to a different biased belief, player $j$ is allowed
to choose a new strategy only if he distinguishes between \emph{i}'s
original strategy and \emph{i}'s new strategy. More precisely, implausible
equilibria are defined as follows. We say that a Nash equilibrium
of a biased game induced by a deviation of player $i$ is implausible
if (1) player $i$'s strategy is perceived by the non-deviating player
$j$ as coinciding with player $i$'s original strategy, (2) player
$j$ plays differently relative to his original strategy, and (3)
player $j$ playing his original strategy induces an equilibrium of
the biased game. That is, implausible equilibria are those in which
the non-deviating player $j$ plays differently against a deviator
even though player $j$ has no reason to do so: player $j$ does not
observe any change in player $i$'s behavior, and player $j$'s original
behavior remains an equilibrium of the biased game. Formally:
\begin{defn}
\label{def-implausible}Given weak BBE $\left(\psi^{*},s^{*}\right)$,
deviating player $i$, and biased belief $\psi'_{i}$, we say that
a Nash equilibrium of the biased game $\left(s'_{i},s'_{j}\right)\in NE\left(G_{\left(\psi'_{i},\psi_{j}^{*}\right)}\right)$
is \emph{implausible} if: (1) $\psi_{j}^{*}\left(s'_{i}\right)=\psi_{j}^{*}\left(s^{*}{}_{i}\right)$,
(2) $s_{j}^{*}\neq s'_{j}$, and (3) $\left(s'_{i},s_{j}^{*}\right)\in NE\left(G_{\left(\psi'_{i},\psi_{j}^{*}\right)}\right)$.
An equilibrium is \emph{plausible} if it is not implausible. Let $PNE\left(G_{\left(\psi'_{i},\psi_{j}^{*}\right)}\right)$
be the set of all plausible equilibria of the biased game $G_{\left(\psi'_{i},\psi_{j}^{*}\right)}$.
\end{defn}
Note that it is immediate from Definition \ref{def-implausible} and
the nonemptiness of $NE\left(G_{\left(\psi'_{i},\psi_{j}^{*}\right)}\right)$
that $PNE\left(G_{\left(\psi'_{i},\psi_{j}^{*}\right)}\right)$ is
nonempty.
\begin{defn}
\label{Def-A-biased-belief-equilibrium}Weak BBE $\left(\psi^{*},s^{*}\right)$
is \emph{a BBE} if (1) each biased function $\psi_{i}^{*}$ is monotone,
and (2) for each player $i$ and each biased belief $\psi_{i}'$,
there exists a plausible Nash equilibrium $\left(s'_{i},s_{j}'\right)\in PNE\left(G_{\left(\psi'_{i},\psi_{j}^{*}\right)}\right)$,
such that $\pi_{i}\left(s'_{i},s_{j}'\right)\leq\pi_{i}\left(s_{i}^{*},s_{j}^{*}\right)$.
\end{defn}
A strategy profile $s^{*}=\left(s_{1}^{*},s_{2}^{*}\right)$ is a\emph{
(resp., strong, weak) BBE outcome }if there exists a profile of biased
beliefs $\psi^{*}=\left(\psi_{1}^{*},\psi_{2}^{*}\right)$ such that
$\left(\psi^{*},s^{*}\right)$ is a (resp., strong, weak) BBE. In
this case we say that the biased belief $\psi^{*}$ supports (or implements)
the outcome $s^{*}$. 

\subsection{Discussion of the Model\label{subsec:Discussion-of-the}}

\paragraph{Evolutionary/Learning Interpretation}

Biases can emerge in a learning process that reinforces biases that
yield a strategic advantage to their holders. Specifically, we interpret
a BBE to be a reduced-form solution concept capturing the essential
features of an evolutionary process of cultural or social learning.
Our methodology follows the extensive literature that studies the
stability of endogenous preferences using the ``indirect evolutionary
approach'' (see, e.g., \citealp{GuethYaari1992Explaining,guth1995evolutionary,Fershtman_Weiss_1998,Dufwenberg_Guth_1999_EvoandDel,Kockesen2000,guttman2003repeated,Guth_Napel2006,heifetz2007maximize,friedman2009equilibrium,Herold_Kuzmics_2009,Alger_Weibull_HomoMoralis,heller2015coevolution}).
We apply this modeling approach to the study of endogenous biased
beliefs in a setup in which biased beliefs induce behavior, behavior
determines \textquotedblleft success,\textquotedblright{} and success
regulates the evolution of biased beliefs.

In Appendix \ref{sec:evol-interpetation-of-BBE} we formally adapt
the definition of a stable population state from \citet{DekelElyEtAl2007Evolution}
to our setup, and show that the adapted definition is equivalent to
a strong BBE. In what follows we briefly and informally present our
evolutionary interpretation. Consider two large populations of agents:
agents who play the role of player 1, and agents who play the role
of player 2. In each round agents from each population are randomly
matched to play a two-person game against opponents from the other
population. Each agent in each population is endowed with a biased-belief
function. For simplicity, we focus on ``homogeneous'' populations,
in which all agents in the population have the same monotone biased-belief
function. Agents distort their perception about the behavior of the
agents in the other population according to their endowed biased-belief
functions, and they play a Nash equilibrium of the biased game. 

With small probability a few agents (``mutants'') in one of the
populations (say, population 1) may be endowed with a different biased-belief
function due to a random error or experimentation. We assume that
agents of population 2 observe whether their opponents are mutants
or not, and that the agents of population 2 and the mutants of population
1 gradually adapt their play against each other into an equilibrium
of the new biased game. Note that a dynamic adaptation into playing
a Nash equilibrium of the biased game requires agents of population
2 to know the perceived strategy currently being played by the mutants
of population 1, but the agents do not need to know the biased beliefs
of the mutants of population 1.

Finally, we assume that the total ``success'' (fitness) of agents
is monotonically influenced by their (unbiased) payoff in the underlying
game, and that there is a slow process in which the composition of
the population evolves. This slow process might be the result of a
slow flow of new agents who join the population. Each new agent randomly
chooses one of the incumbents in his own population as a ``mentor''
(and mimics the mentor's biased belief), where the probabilities are
such that agents with higher fitness are more likely to be chosen
as mentors. If the original population state is not a BBE, it implies
that there are mutants who outperform the remaining incumbents in
their own population, which in turn implies that the original population
state is not stable, as new agents are likely to mimic more successful
mutants. By contrast, if the original population state is a BBE, it
implies that for any mutant there is a new equilibrium in which the
mutants are weakly outperformed relative to the incumbents of their
own population, and this can allow the BBE to remain a stable state
(as illustrated in the detailed example in Appendix \ref{subsec:Illustration-of-the}).

\paragraph{Variants of the Solution Concept }

\emph{The main solution concept we use in the paper is BBE}. In Section
\ref{sec:folk-theorem-results} we demonstrate that unless one applies
both requirements of Definition \ref{Def-A-biased-belief-equilibrium},
namely, monotonicity and ruling out implausible equilibria, then the
set of BBE is very large (folk theorem results), and some of the biased
beliefs that support some of these equilibria seem implausible. The
intuition for the monotonicity requirement is quite straightforward
(ruling out peculiar biased beliefs in which an opponent who deviates
to play a higher strategy is perceived as deviating to play a lower
strategy). The second requirement rules out implausible equilibria
in which a player responds to his opponent's deviation in spite of
not being able to perceive it

In what follows we sketch a dynamic justification for the second requirement
of ruling out implausible equilibria (following the evolutionary interpretation
described above). Consider a BBE $\left(\left(\psi_{1}^{*},\psi_{2}^{*}\right),\left(s_{1}^{*},s_{2}^{*}\right)\right)$.
Assume that both $\left(s'_{1},s_{2}'\right)$ and $\left(s'_{1},s_{2}^{*}\right)$
are Nash equilibria of the biased game $G_{\left(\psi'_{1},\psi_{2}^{*}\right)}$.
In what follows, we briefly, and informally, explain why $\left(s'_{1},s_{2}'\right)$
is not a plausible equilibrium of the new biased game (and, thus,
why it is ruled out in the definition of BBE). Consider a deviation
of some agents in the population playing in the role of player $1$
to having the biased belief $\psi'_{1}$. Following this deviation,
strategy $s_{1}^{*}$ might not be a best reply to the perceived strategy
of player 2 (i.e., $s_{1}^{*}\not\in BR\left(\psi'_{1}\left(s_{2}^{*}\right)\right)$)
and, as a result, the deviating agents might change their strategy
to $s'_{1}$, which is a best reply to the perceived strategy of player
2 (i.e., $s'_{1}\in BR\left(\psi'_{1}\left(s_{2}^{*}\right)\right)$).
The current strategy profile $\left(s'_{1},s_{2}^{*}\right)$ is a
Nash equilibrium of the biased game (i.e., $\left(s'_{1},s_{2}^{*}\right)\in NE\left(G_{\left(\psi'_{1},\psi_{2}^{*}\right)}\right)$).
In order to move from this equilibrium to $\left(s'_{1},s_{2}'\right)$,
agents of population 2, who are matched against the deviators, have
to change their behavior from $s_{2}^{*}$ to $s_{2}'$, but there
is no reason for them to do so, as their current behavior (namely,
$s_{2}^{*}$) is already a best reply to the perceived strategy of
the deviators (i.e., $s_{2}^{*}\in BR\left(\psi{}_{1}^{*}\left(s'_{1}\right)\right)$),
as well as being how they are used to playing against non-deviators.

\paragraph{Delegation Interpretation}

A different interpretation of our solution concept relies on strategic
delegation. The literature on strategic delegation (see, e.g., \citealp{FershtmanJuddEtAl1991Observable,Dufwenberg_Guth_1999_EvoandDel,fershtman2001strategic})
deals with players who strategically use other agents to play on their
behalf, where the agents so used may have different preferences than
the players using them. We adapt this approach to our setup in which
agents differ in their biased beliefs (rather than in their preferences).
Specifically, in Appendix \ref{sec:Principal-Agent} we show that
the notion of weak BBE is equivalent to a subgame-perfect equilibrium
of a two-stage game in which in stage one each unbiased player strategically
chooses the biased belief of his agent, and in the second stage the
biased agents play on behalf of the players (and each agent can observe
the opposing agent's biased beliefs).

\paragraph{Partial Observability}

The requirement that an agent be able to observe that his opponent
belongs to a group of ``mutant'' agents who have different biased
beliefs than the rest of the population can be explained by pre-play
social cues and messages that facilitate this observation. In Appendix
\ref{sec:Partial-Observability} we show that this observability need
not be perfect. We generalize the model to partial observability by
studying a setup in which, when an agent is matched with a mutant
opponent, the agent privately observes the opponent to be a mutant
with probability $0<p\leq1$. We show that all our results hold in
this extended setup for $p$ sufficiently close to one (and some of
the results hold also for low levels of $p$).

\section{Nash Equilibria and BBE Outcomes\label{sec:Relations-with-Nash}}

In this section we study the relations between Nash equilibria and
BBE outcomes. 

\subsection{Nash Equilibria and Biased Beliefs}

We begin with a simple observation that shows that in any weak BBE
in which the outcome is not a Nash equilibrium, at least one of the
players must distort the opponent's perceived strategy. The reason
for this observation is that if both players have undistorted beliefs,
then it must be that each agent best-replies to the opponent's strategy,
which implies that the outcome is a Nash equilibrium of the underlying
game.

The following example demonstrates that even Nash equilibria may require
distorted beliefs to be supported as BBE outcomes. Specifically, Example
\ref{exam-Cournot-first} shows that this is the case for Nash equilibrium
in a Cournot competition. The intuition behind Example \ref{exam-Cournot-first}
is straightforward. The Cournot equilibrium cannot be supported by
undistorted beliefs because such pairs of beliefs will induce one
of the players to adopt a distorted belief by which he expects his
opponent not to produce at all, and to best-reply to this distorted
belief by producing the monopoly quantity. This in turn will force
the opponent to reduce his production substantially below the Cournot
level, making the deviator better off.
\begin{example}[\emph{Cournot equilibrium cannot be supported by undistorted beliefs,
yet it can be supported by blind beliefs}]
\label{exam-Cournot-first} \label{ex-Cournot-Nash-cannot-be-supported-by-identity}Consider
the following symmetric Cournot game $G=\left(S,\pi\right)$: $S_{i}=\left[0,1\right]$
and $\pi_{i}\left(s_{i},s_{j}\right)=s_{i}\cdot\left(1-s_{i}-s_{j}\right)$
for each player $i$. The interpretation of the game is as follows.
Each $s_{i}$ is interpreted as the quantity chosen by firm $i$,
the price of both goods is determined by the linear inverse demand
function $p=1-s_{i}-s_{j}$, and the marginal cost of each firm is
normalized to be zero. The unique Nash equilibrium of the game is
$s_{i}^{*}=s_{j}^{*}=\frac{1}{3}$, which yields a payoff of $\frac{1}{9}$
to both players. Assume to the contrary that this outcome can be supported
as a weak BBE by the undistorted beliefs $\psi_{i}^{*}=\psi_{j}^{*}=I_{d}$.
Consider a deviation of player $1$ to the blind belief $\psi'_{1}\equiv0$.
The unique equilibrium of the biased game $G_{\left(0,I_{d}\right)}$
is $s'_{1}=\frac{1}{2}$, $s'_{2}=\frac{1}{4}$, which yields a payoff
of $\frac{1}{8}>\frac{1}{9}$ to the deviator. The unique Nash equilibrium
$s_{i}^{*}=s_{j}^{*}=\frac{1}{3}$ can be supported as the outcome
of the strong BBE $\left(\left(\frac{1}{3},\frac{1}{3}\right),\left(\frac{1}{3},\frac{1}{3}\right)\right)$
with blind beliefs, in which each agent believes the opponent is playing
$\frac{1}{3}$ regardless of the opponent's actual play, and the agent
plays the unique best reply to this belief, which is the strategy
$\frac{1}{3}$.
\end{example}
\begin{rem}[Interpretation of Nash equilibria supported by blind beliefs.]
 We interpret an undistorted belief as describing an agent who has
an accurate belief about the opponent's behavior on the equilibrium
path, and, in addition, the agent keeps looking for cues that his
opponent might have a different type, and if the agent observes such
a cue, the agent evaluates the opponent's likely behavior, and best-replies
to this assessment. Example \ref{exam-Cournot-first} shows that the
Cournot equilibrium cannot be supported by a population in which each
agent keeps looking for cues for his opponent's type. In such a population,
deviators would strictly earn by having a blind biased belief that
induces the deviator to play the Stackelberg strategy. The incumbents
will identify the mutants' type, and they will respond by playing
the Stackelberg follower action, which will benefit the deviators.

By contrast, the second part of Example \ref{exam-Cournot-first}
(and its generalization in Proposition \ref{Prop--Nash-is-BBE} below)
shows that any Nash equilibrium can be supported by a blind belief,
which is accurate on the equilibrium path. We interpret such a belief
as describing an agent who understands correctly the equilibrium behavior
of the opposing player, and ignores signals that suggest that his
opponent is about to do something else. Our observation that it is
rather equilibrium that supports belief rigidity, a prevalent behavioral
phenomenon, and not disequilibrium is, we believe, quite interesting.
\end{rem}

\subsection{Any Nash Equilibrium is a BBE Outcome}

The following result generalizes the second part of Example \ref{ex-Cournot-Nash-cannot-be-supported-by-identity},
and shows that any (strict) Nash equilibrium is an outcome of a (strong)
BBE in which both players have blind beliefs that are accurate on
the equilibrium path.
\begin{prop}
\label{Prop--Nash-is-BBE}Let $\left(s_{1}^{*},s_{2}^{*}\right)$
be a (strict) Nash equilibrium of the game $G=\left(S,\pi\right)$.
Let $\psi_{1}^{*}\equiv s_{2}^{*}$ and $\psi_{2}^{*}\equiv s_{1}^{*}$.
Then $\left(\left(\psi_{1}^{*},\psi_{2}^{*}\right),\left(s_{1}^{*},s_{2}^{*}\right)\right)$
is a (strong) BBE. 
\end{prop}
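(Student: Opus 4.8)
The plan is to check, clause by clause, the defining conditions of a strong BBE (Definition~\ref{def-strong-BBE}) in the strict case and of a BBE (Definition~\ref{Def-A-biased-belief-equilibrium}) in the general case, using throughout the fact that both prescribed beliefs are blind. I would begin with monotonicity: a blind belief is trivially monotone, since $\psi_i^*$ is constant and so both the interval-game implication $s_j\geq s_j'\Rightarrow\psi_i^*(s_j)\geq\psi_i^*(s_j')$ and the finite-game inequality hold with equality. This settles clause~(1) of both definitions.

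Next I would verify that $(s_1^*,s_2^*)\in NE(G_{\psi^*})$. Because $\psi_i^*\equiv s_j^*$, the biased payoff of player $i$ is $(\psi^*\circ\pi)_i(s_i,s_j)=\pi_i(s_i,s_j^*)$, independent of $s_j$; hence $i$'s best-reply set in the biased game is $argmax_{s_i}\pi_i(s_i,s_j^*)$, which contains $s_i^*$ exactly because $(s_1^*,s_2^*)$ is a Nash equilibrium of $G$. This gives requirement~(1) of a weak BBE.

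The crux is the reaction of the non-deviator. Suppose player $i$ deviates to an arbitrary $\psi_i'$, producing the biased game $G_{(\psi_i',\psi_j^*)}$. Since $\psi_j^*\equiv s_i^*$ is still blind, player $j$ perceives $\psi_j^*(s_i')=s_i^*$ for every $s_i'$, so $j$'s best-reply set in this biased game is exactly $BR(s_i^*)$; consequently every equilibrium strategy of $j$ lies in $BR(s_i^*)$. For the strong claim I would invoke strictness: a strict Nash equilibrium has $BR(s_i^*)=\{s_j^*\}$, so $s_j'=s_j^*$ in \emph{every} equilibrium $(s_i',s_j')$ of the new biased game, whence the deviator's true payoff satisfies $\pi_i(s_i',s_j^*)\leq\pi_i(s_i^*,s_j^*)$, the inequality being the best-reply property of $s_i^*$ against $s_j^*$ in $G$. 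This is precisely clause~(2) of Definition~\ref{def-strong-BBE}.

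For the non-strict case $BR(s_i^*)$ may be multivalued, so instead of controlling all equilibria I would exhibit one plausible equilibrium in which the deviator does no better. Choosing any $s_i'\in BR(\psi_i'(s_j^*))$, the profile $(s_i',s_j^*)$ is a Nash equilibrium of $G_{(\psi_i',\psi_j^*)}$: player $i$ best-replies to the perceived $\psi_i'(s_j^*)$ by construction, and player $j$ best-replies via $s_j^*\in BR(s_i^*)$. This equilibrium is plausible, because implausibility (Definition~\ref{def-implausible}) would require $s_j'\neq s_j^*$, whereas here $j$ plays exactly $s_j^*$; and again $\pi_i(s_i',s_j^*)\leq\pi_i(s_i^*,s_j^*)$, yielding clause~(2) of Definition~\ref{Def-A-biased-belief-equilibrium}. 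The only place strictness is genuinely needed is the step forcing $s_j'=s_j^*$ across all equilibria: without it a distinct best reply of $j$ to $s_i^*$ could in principle reward the deviator, which is exactly why the non-strict statement retreats to selecting a single, plausibly chosen equilibrium.
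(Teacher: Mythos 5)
Your proof is correct and follows essentially the same route as the paper's: exploit blindness of $\psi_j^*$ so that player $j$'s equilibrium behavior in any post-deviation biased game is pinned to $BR(s_i^*)$, then use the Nash best-reply property of $s_i^*$ against $s_j^*$ to cap the deviator's payoff, with strictness forcing $s_j'=s_j^*$ in all equilibria for the strong case. Your write-up is in fact somewhat more careful than the paper's, since you explicitly verify monotonicity of the blind beliefs and the plausibility of the constructed equilibrium $(s_i',s_j^*)$, steps the paper leaves implicit.
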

\begin{proof}
The fact that $\left(s_{1}^{*},s_{2}^{*}\right)$ is a Nash equilibrium
of the underlying game implies that $\left(s_{1}^{*},s_{2}^{*}\right)$
is an equilibrium of the biased game $G_{\left(\psi_{1}^{*},\psi_{2}^{*}\right)}$.
The fact that the beliefs are blind implies that for any biased belief
$\psi'_{i}$, there is an equilibrium in the biased game $G_{\left(\psi'_{i},\psi_{j}^{*}\right)}$
in which player $j$ plays $s_{j}^{*}$ and player $i$ gains at most
$\pi_{i}\left(s_{i}^{*},s_{j}^{*}\right)$, which implies that $\left(\left(\psi_{1}^{*},\psi_{2}^{*}\right),\left(s_{1}^{*},s_{2}^{*}\right)\right)$
is a BBE. Moreover, if $\left(s_{1}^{*},s_{2}^{*}\right)$ is a strict
equilibrium, then in any equilibrium of any biased game $G_{\left(\psi'_{i},\psi_{j}^{*}\right)}$,
player $j$ plays $s_{j}^{*}$ and player $i$ gains at most $\pi_{i}\left(s_{i}^{*},s_{j}^{*}\right)$,
which implies that $\left(\left(\psi_{1}^{*},\psi_{2}^{*}\right),\left(s_{1}^{*},s_{2}^{*}\right)\right)$
is a strong BBE.
\end{proof}
An immediate corollary of Proposition \ref{Prop--Nash-is-BBE} is
that every game admits a  BBE.
\begin{cor}
Every game admits a BBE.
\end{cor}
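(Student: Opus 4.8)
The plan is to chain together two facts: (i) the underlying game $G$ always possesses a Nash equilibrium, and (ii) any such equilibrium is a BBE outcome by Proposition \ref{Prop--Nash-is-BBE}. Since Proposition \ref{Prop--Nash-is-BBE} is phrased so that the word ``strict'' is optional (strictness of the equilibrium only strengthens the conclusion from BBE to \emph{strong} BBE, and here I need only the weaker conclusion), it applies to an arbitrary Nash equilibrium, strict or not. Thus the entire content of the corollary reduces to establishing existence of a Nash equilibrium of $G$.

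First I would verify that $G$ admits a Nash equilibrium, which is immediate from the standing assumptions of the model. Each $S_{i}$ is convex and compact, and each payoff $\pi_{i}\left(s_{i},s_{j}\right)$ is continuous in both arguments and weakly concave in the first. These are exactly the hypotheses under which the best-reply correspondence is nonempty-, convex-, and closed-valued and upper-hemicontinuous, so that the product best-reply map is a self-correspondence on the convex compact set $S_{1}\times S_{2}$ to which Kakutani's fixed-point theorem applies, yielding a fixed point that is a Nash equilibrium of $G$. In fact this is the very argument already recorded in the excerpt for the existence of a Nash equilibrium in an arbitrary biased game $G_{\psi}$; taking the undistorted profile $\psi=\left(I_{d},I_{d}\right)$ recovers $G$ itself, so no new work is required.

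Second, fixing a Nash equilibrium $\left(s_{1}^{*},s_{2}^{*}\right)$ of $G$ and setting the blind beliefs $\psi_{1}^{*}\equiv s_{2}^{*}$ and $\psi_{2}^{*}\equiv s_{1}^{*}$, Proposition \ref{Prop--Nash-is-BBE} asserts that $\left(\left(\psi_{1}^{*},\psi_{2}^{*}\right),\left(s_{1}^{*},s_{2}^{*}\right)\right)$ is a BBE. Hence $G$ admits a BBE, as claimed.

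I do not anticipate any genuine obstacle here, since the result is a one-line consequence of the preceding proposition together with standard existence of equilibrium. The only point worth stating explicitly, to keep the argument airtight, is that Proposition \ref{Prop--Nash-is-BBE} is being invoked in its weaker (non-strict, BBE rather than strong-BBE) form, so that no assumption of strictness on the equilibrium is needed; a Nash equilibrium, guaranteed to exist by the concavity and compactness assumptions, suffices.
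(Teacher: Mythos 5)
Your proposal is correct and follows exactly the paper's route: the corollary is stated there as an immediate consequence of Proposition \ref{Prop--Nash-is-BBE}, with existence of a Nash equilibrium of $G$ guaranteed by the same Kakutani argument the paper invokes for biased games (of which $G = G_{(I_d,I_d)}$ is a special case). Your explicit remark that only the non-strict version of Proposition \ref{Prop--Nash-is-BBE} is needed is a sound reading of its parenthetical statement and closes the only point the paper leaves implicit.
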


\subsection{Zero-Sum Games }

Recall that a game is \emph{zero sum} if there exists $c\in R^{+}$
such that $\pi_{i}(s_{i},s_{j})+\pi_{j}(s_{i},s_{j})=c$ for each
strategy profile $\left(s_{i},s_{j}\right)\in S$. 

The following simple result shows that the unique Nash equilibrium
payoff of a zero-sum game is also the unique payoff in any weak BBE.
\begin{claim}
\label{claim-The-unique-Nash-zero-sum}The unique Nash equilibrium
payoff of a zero-sum game is also the unique payoff in any weak BBE.
\end{claim}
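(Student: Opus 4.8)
The plan is to exploit the fact that in a zero-sum game the undistorted belief $I_d$ is an especially powerful deviation: a player who best-replies to the opponent's \emph{true} strategy can never be held below the value of the game, no matter what belief the opponent is endowed with. I would first record the existence of a value. Writing $\pi_2=c-\pi_1$, the assumption that each payoff is weakly concave in its own argument makes $\pi_1(s_1,s_2)$ concave in $s_1$ and convex in $s_2$ on the convex compact sets $S_1,S_2$; together with continuity this lets me invoke the minimax theorem to obtain a value $v$ with $\max_{s_1}\min_{s_2}\pi_1(s_1,s_2)=\min_{s_2}\max_{s_1}\pi_1(s_1,s_2)=v$, and every Nash equilibrium of $G$ yields $\pi_1=v$, $\pi_2=c-v$. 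Fix any weak BBE $\left(\psi^*,s^*\right)$; the goal is to show $\pi_1(s_1^*,s_2^*)=v$.

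For the lower bound, consider player $1$ deviating to $\psi_1'=I_d$. In \emph{every} equilibrium $(s_1',s_2')\in NE\!\left(G_{(I_d,\psi_2^*)}\right)$ player $1$ best-replies to the perceived strategy $I_d(s_2')=s_2'$, so $s_1'$ maximizes $\pi_1(\cdot,s_2')$ and hence $\pi_1(s_1',s_2')=\max_{s_1}\pi_1(s_1,s_2')\ge \min_{s_2}\max_{s_1}\pi_1(s_1,s_2)=v$. The weak-BBE condition guarantees \emph{some} equilibrium of this biased game in which the deviator does not gain, i.e. $\pi_1(s_1',s_2')\le\pi_1(s_1^*,s_2^*)$; since that equilibrium also satisfies $\pi_1(s_1',s_2')\ge v$, I conclude $\pi_1(s_1^*,s_2^*)\ge v$.

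For the upper bound I run the symmetric argument with player $2$ deviating to $\psi_2'=I_d$. In every equilibrium of $G_{(\psi_1^*,I_d)}$ player $2$ best-replies to player $1$'s true strategy $s_1'$, so $\pi_2(s_1',s_2')=\max_{s_2}\pi_2(s_1',s_2)=c-\min_{s_2}\pi_1(s_1',s_2)\ge c-\max_{s_1}\min_{s_2}\pi_1(s_1,s_2)=c-v$, i.e. $\pi_1(s_1',s_2')\le v$ at every equilibrium. The weak-BBE condition for player $2$ then yields $\pi_2(s_1^*,s_2^*)\ge c-v$, equivalently $\pi_1(s_1^*,s_2^*)\le v$. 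Combining the two bounds gives $\pi_1(s_1^*,s_2^*)=v$ and $\pi_2(s_1^*,s_2^*)=c-v$, the unique Nash payoff.

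The one point requiring care — and the only real obstacle — is the interaction between the existential quantifier in the weak-BBE condition and the bounds above. The definition only promises \emph{one} equilibrium of the deviating game in which the deviator is not better off, so a bound holding at a single equilibrium would be useless; what makes the argument go through is that the inequalities $\pi_1\ge v$ (resp. $\pi_2\ge c-v$) hold at \emph{all} equilibria of $G_{(I_d,\psi_2^*)}$ (resp. $G_{(\psi_1^*,I_d)}$), a uniformity that comes for free because an undistorted player best-replies to the opponent's actual play. It is worth emphasizing that a blind belief would \emph{not} suffice here: best-replying to a fixed (even optimal) conjecture can leave the deviator exposed to below-value payoffs against off-conjecture play, so the choice of $I_d$ rather than a blind belief is essential.
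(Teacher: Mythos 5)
Your proof is correct and takes essentially the same approach as the paper's: deviate to the undistorted belief $I_{d}$, observe that an undistorted deviator best-replies to the opponent's \emph{true} strategy and therefore earns at least the value in \emph{every} equilibrium of the deviated biased game, and combine this uniform bound with the existential weak-BBE condition. The paper packages this as a single contradiction argument applied to whichever player would fall below his value, while you write it as two explicit bounds (one per player), which is only a cosmetic difference.
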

\begin{proof}
Let $v_{i}$ be the unique Nash equilibrium payoff of player $i$
in the underlying zero-sum game. Assume to the contrary that there
exists a weak BBE $\left(\psi^{*},s^{*}\right)$ in which the payoff
of player $i$ is strictly lower than $v_{i}$. Consider a deviation
of player $i$ into the undistorted bias function $\psi'_{i}=I_{d}$.
The assumption that $\left(\psi^{*},s^{*}\right)$ is a weak BBE implies
that the deviator gets strictly less than $v_{i}$ in a Nash equilibrium
$\left(s'_{i},s'_{j}\right)\in NE\left(G_{\left(\psi'_{i},\psi_{j}^{*}\right)}\right)$,
but this is impossible as the definition of $v_{i}$ implies that
there exists $\hat{s}_{i}$ satisfying $\pi_{i}\left(\hat{s}_{i},s'_{j}\right)\geq v_{i}>\pi_{i}\left(s'_{i},s'_{j}\right)$.
\end{proof}
Example \ref{exa-RPS} in Appendix \ref{subsec:Prisoner's-Dilemma-with}
shows that even though the weak BBE payoff must be the Nash equilibrium
payoff in a zero-sum game, the strategy profile sustaining it need
not be a Nash equilibrium.

\subsection{Games with a Dominant Strategy}

Next we show that if at least one of the players has a dominant strategy,
then any weak BBE outcome must be a Nash equilibrium. Formally:

\begin{prop}
\label{prop-dominant-action}If a game admits a strictly dominant
strategy $s_{i}^{*}$ for player $i$, then any  weak BBE outcome
is a Nash equilibrium of the underlying game.
\end{prop}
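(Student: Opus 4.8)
The plan is to exploit the fact that strict dominance in the underlying game survives \emph{any} belief distortion of player $i$, and then to let player $j$ deviate to undistorted beliefs in order to force $s_j^*$ to be a genuine best reply to $s_i^*$. Throughout, fix a weak BBE $\left(\psi^{*},s^{*}\right)$ with outcome $s^{*}=\left(s_i^*,s_j^*\right)$, where player $i$ is the one holding the strictly dominant strategy.

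First I would observe that strict dominance is preserved under composition with a biased belief. Since $s_i^*$ is strictly dominant in $G$, we have $\pi_i\left(s_i^*,t\right)>\pi_i\left(s_i,t\right)$ for every $t\in S_j$ and every $s_i\neq s_i^*$; substituting $t=\psi_i\left(s_j\right)$ shows that $s_i^*$ is the unique maximizer of the biased payoff $\left(\psi\circ\pi\right)_i\left(\cdot,s_j\right)=\pi_i\left(\cdot,\psi_i\left(s_j\right)\right)$, for every biased belief $\psi_i$ and every $s_j\in S_j$. Consequently, in any Nash equilibrium of any biased game in which player $i$'s belief is held fixed, player $i$ must play $s_i^*$. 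In particular, player $i$'s component of the outcome $s^{*}$ coincides with the dominant strategy $s_i^*$ (which is why the same symbol may be used), and because $s_i^*$ best-replies to every strategy of player $j$ in $G$, it best-replies in particular to $s_j^*$.

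It remains to show that $s_j^*$ is a best reply to $s_i^*$ in the underlying game. Here I would invoke the weak BBE condition (Definition \ref{def-weak-BBE}) for a deviation of player $j$ to the undistorted belief $\psi_j'=I_{d}$: there exists $\left(s_i',s_j'\right)\in NE\left(G_{\left(\psi_i^*,\psi_j'\right)}\right)$, the biased game in which player $i$ retains $\psi_i^*$ and player $j$ adopts $I_{d}$, such that $\pi_j\left(s_i',s_j'\right)\leq\pi_j\left(s_i^*,s_j^*\right)$. By the dominance observation above, player $i$ still plays $s_i'=s_i^*$ in this biased game; and since player $j$'s belief is now undistorted, $s_j'$ best-replies to the \emph{true} strategy $s_i'=s_i^*$ in $G$, so $\pi_j\left(s_i^*,s_j'\right)=\max_{s_j\in S_j}\pi_j\left(s_i^*,s_j\right)$. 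Combining,
\[
\max_{s_j\in S_j}\pi_j\left(s_i^*,s_j\right)=\pi_j\left(s_i^*,s_j'\right)=\pi_j\left(s_i',s_j'\right)\le\pi_j\left(s_i^*,s_j^*\right)\le\max_{s_j\in S_j}\pi_j\left(s_i^*,s_j\right),
\]
forcing equality and hence establishing that $s_j^*$ is a best reply to $s_i^*$.

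Since $s_i^*$ best-replies to $s_j^*$ and $s_j^*$ best-replies to $s_i^*$, both in $G$, the pair $\left(s_i^*,s_j^*\right)$ is a Nash equilibrium of the underlying game. The only delicate point---and the crux of the argument---is the first one: recognizing that player $i$'s play is rigidly pinned to $s_i^*$ in every biased game that leaves player $i$'s belief untouched, which is exactly what allows the undistorted deviation of player $j$ to reveal the true best reply while keeping player $i$'s realized strategy unchanged. Note that no property of $\psi_i^*$ or $\psi_j^*$ beyond this is needed, so the result holds for \emph{weak} BBE (and hence a fortiori for BBE and strong BBE).
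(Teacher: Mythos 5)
Your proof is correct and takes essentially the same route as the paper's: the key observation that strict dominance pins player $i$'s play to $s_{i}^{*}$ in every biased game (regardless of $\psi_{i}$), followed by a deviation of player $j$ to the undistorted belief $I_{d}$, which forces $s_{j}^{*}$ to be a best reply to $s_{i}^{*}$. The only difference is presentational — the paper argues by contradiction (if $s_{j}^{*}$ were not a best reply, the deviation would be strictly profitable in every equilibrium of the new biased game), whereas you run the same inequalities directly through the weak BBE condition — and this is immaterial.
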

\begin{proof}
Observe that $s_{i}^{*}$ is the unique best reply of player $i$
to any perceived strategy of player $j$, and, as a result, player
$i$ plays the dominant action $s_{i}^{*}$ in any  weak BBE. Assume
to the contrary that there is a weak BBE in which player $j$ does
not best-reply against $s_{i}^{*}$. Consider a deviation of player
$j$ to choosing the undistorted belief $I_{d}$. Observe that player
$i$ still plays his dominant action $s_{i}^{*}$, and that player
$j$ best-replies to $s_{i}^{*}$ in any Nash equilibrium of the induced
biased game, and, as a result, player $j$ achieves a strictly higher
payoff, and we get a contradiction.
\end{proof}
Proposition \ref{prop-dominant-action} implies, in particular, that
defection is the unique weak BBE outcome in the prisoner's dilemma
game. Example \ref{exa-PD-withdrawl-1} in Appendix \ref{subsec:Monotone-Strong-BBE-zero-sum-game-not-NAsh}
demonstrates that a relatively small change to the prisoner's dilemma
game, namely, adding a third weakly dominated ``withdrawal'' strategy
that transforms ``cooperation'' into a weakly dominated strategy,
allows us to sustain cooperation as a strong BBE outcome. 

\section{Monotone Games and Wishful Thinking\label{sec:Monotone-Games-and-wishful}}

In this section we present a large class of games with monotone externalities
and monotone differences, and define the notions of wishful thinking
and pessimism, which will be analyzed in Section \ref{sec:Main-Results}. 

\subsection{Monotone Games}

We say that an interval game is monotone if it satisfies two conditions: 
\begin{enumerate}
\item \emph{Monotone externalities}: the payoff function of each player
is strictly monotone in the opponent's strategy. Without loss of generality,
we assume that the \emph{externalities} are \emph{positive}, i.e.,
the payoff of each player is increasing in the opponent's strategy,
i.e., that $\frac{\partial\pi_{i}\left(s_{i},s_{j}\right)}{\partial s_{j}}>0$
for each player $i$ and each pair of strategies $s_{i},s_{j}$. The
assumption of positive externalities (given monotone externalities)
is indeed without loss of generality because if originally the externalities
with respect to player $j$ are negative, then we can redefine player
$j$'s strategy to be its inverse, and obtain positive externalities;
for example, defining the difference between maximal capacity and
quantity to be the strategy of each player in a Cournot competition
yields a game with positive externalities.\\
In a game with positive externalities we refer to a player's strategy
as his \emph{investment}, and when $s_{i}$ increases we refer to
this increase a larger investment by as player $i$.
\item \emph{Monotone differences}: For each player $i$, the derivative
of the player's payoff with respect to his own strategy (i.e., $\frac{\partial\pi_{i}\left(s_{i},s_{j}\right)}{\partial s_{i}}$)
is strictly monotone in the opponent's strategy. Specifically, we
divide the set of monotone games into three disjoint and exhaustive
subsets:
\begin{enumerate}
\item \emph{Strategic complements (increasing differences, supermodular
games)}: $\frac{\partial\pi_{i}\left(s_{i},s_{j}\right)}{\partial s_{i}}$
is strictly increasing in $s_{j}$ for each player $i$ and each strategy
$s_{i}$ (or, equivalently, $\frac{\partial^{2}\pi_{i}\left(s_{i},s_{j}\right)}{\partial s_{i}\cdot\partial s_{j}}>0$
for each $s_{i},s_{j}$). Games with strategic complements are common
in the economics literature, and include, in particular, price competitions
with differentiated goods (Example \ref{exam-Price-competition-with-differianted}),
input games (Example \ref{exam-partnership-game} in Appendix \ref{subsec:Examples-of-Games-complements}),
and stag-hunt games (Example \ref{exa-stag-hunt-analysis} in Appendix
\ref{subsec:Examples-of-Games-complements}). Finite games with a
payoff structure that resembles a discrete variant of strategic complements
include the traveler's dilemma (Example \ref{exam-traveler-dilemma-1}
in Appendix \ref{subsec:Examples-of-Games-complements}).
\item \emph{Strategic substitutes (decreasing differences,} \emph{submodular
games}): $\frac{\partial\pi_{i}\left(s_{i},s_{j}\right)}{\partial s_{i}}$
is strictly decreasing in $s_{j}$ for each player $i$ and each strategy
$s_{i}$ (or, equivalently, $\frac{\partial^{2}\pi_{i}\left(s_{i},s_{j}\right)}{\partial s_{i}\cdot\partial s_{j}}<0$
for each $s_{i},s_{j}$). Games with strategic \emph{substitutes}
are common in the economics literature, and include, in particular,
Cournot (quantity) competitions (Example \ref{exam-Cournot} below)
and hawk-dove games (see Example \ref{exam-The-hawk-dove-r} in Appendix
\ref{subsec:Hawk-dove-game}). 
\item \emph{Opposing differences}: $\frac{\partial\pi_{i}\left(s_{i},s_{j}\right)}{\partial s_{i}}$
is decreasing in $s_{j}$ (for each strategy $s_{i}$), while $\frac{\partial\pi_{j}\left(s_{i},s_{j}\right)}{\partial s_{j}}$
is increasing in $s_{i}$ (for each strategy $s_{j}$). Games with
opposing differences are less common in the economics literature.
Examples of these games include (1) duopolies in which one firm chooses
its quantity, while the other firm chooses its price (see, e.g., Singh
and Vives, 1984), and (2) asymmetric contests, in which it is often
the case that a commitment of the favorite (underdog) player to exert
more (less) effort induces the opponent to exert less effort (see,
e.g., Dixit, 1987).
\end{enumerate}
\end{enumerate}

\subsection{Wishful Thinking}

We say that player $i$ exhibits wishful thinking if the perceived
opponent's strategy yields a higher payoff to the player relative
to the real strategy the opponent plays. Formally:
\begin{defn}
Player $i$ \emph{exhibits wishful thinking} in weak BBE $\left(\left(\psi_{1}^{*},\psi_{2}^{*}\right),\left(s_{1}^{*},s_{2}^{*}\right)\right)$
if $\pi_{i}\left(s_{i},\psi_{i}^{*}\left(s_{j}\right)\right)\geq\pi_{i}\left(s_{i},s_{j}^{*}\right)$
for each $s_{i}\in S_{i}$. 
\end{defn}
\begin{rem}
\label{Remark-wihsful-thinking}Note that in a game with positive
externalities player $i$\emph{ }exhibits wishful thinking in weak
BBE $\left(\left(\psi_{1}^{*},\psi_{2}^{*}\right),\left(s_{1}^{*},s_{2}^{*}\right)\right)$
iff $\psi_{2}^{*}\left(s_{1}^{*}\right)\geq s_{1}^{*}$ and $\psi_{1}^{*}\left(s_{2}^{*}\right)\geq s_{2}^{*}$.
\end{rem}
Similarly, we define the opposite notion, that of exhibiting pessimism.
We say that a BBE exhibits pessimism if the perceived opponent's strategy
yields a lower payoff to the player relative to the real opponent's
strategy for all strategy profiles. It exhibits pessimism in equilibrium
if it satisfies this property with respect to the strategy the opponent
plays on the equilibrium path. Formally:
\begin{defn}
A weak BBE $\left(\left(\psi_{1}^{*},\psi_{2}^{*}\right),\left(s_{1}^{*},s_{2}^{*}\right)\right)$
\emph{exhibits} \emph{pessimism} if $\pi_{i}\left(s_{i},\psi_{i}^{*}\left(s_{j}\right)\right)\leq\pi_{i}\left(s_{i},s_{j}^{*}\right)$
for all $s_{i}\in S_{i}$. 
\end{defn}

\subsection{Additional Definitions}

In what follows we present two definitions that will be used in the
analysis in the following sections: undominated Pareto optimality,
and biased-belief minmax payoff.

We say that a strategy profile is undominated Pareto optimal if it
is (1) undominated, and (2) Pareto optimal among all undominated strategy
profiles. Formally:
\begin{defn}
\label{def-undominated-Pareto}Strategy profile $\left(s_{1}^{*},s_{2}^{*}\right)$
is \emph{undominated Pareto optimal if (1) $s_{i}^{*}\in S_{i}^{U}$
for each player $i$, and (2) there does not exist $\left(s'_{1},s'_{2}\right)\in S_{1}^{U}\times S_{2}^{U}$
with a payoff that Pareto dominates $\left(s_{1}^{*},s_{2}^{*}\right)$,
i.e., $\pi_{1}\left(s_{1}^{*},s_{2}^{*}\right)\leq\pi_{1}\left(s'_{1},s'_{2}\right)$
and $\pi_{2}\left(s_{1}^{*},s_{2}^{*}\right)\leq\pi_{2}\left(s'_{1},s'_{2}\right)$
where at least one of these inequalities is strict.}
\end{defn}
A biased-belief minmax payoff for player $i$ (denoted by $\tilde{M}_{i}^{U}$)
is the maximal payoff player $i$ can guarantee to himself in the
following process: (1) player $j$ chooses an arbitrary perceived
strategy of player $i$, and (2) player $i$ chooses a strategy profile,
under the constraint that player $j$'s strategy is a best reply to
the perceived strategy chosen above. That is, $\tilde{M}_{i}^{U}$
is the payoff player \emph{i} can guarantee himself no matter how
his opponent (player \emph{j}, she) perceives player i's action, assuming
that player $j$ best-replies to what he believes player \emph{i}
is doing (and if there are multiple best replies, then we assume that
player $j$ chooses the best reply that is optimal for player $i$).
Formally:
\begin{defn}
\label{def-BB-minamax}Given game $G=\left(A,u\right)$, let $\tilde{M}_{i}^{U}$
, the\emph{ biased-belief minmax payoff} of player $i$, be defined
as follows:
\[
\tilde{M}_{i}^{U}=\min_{s'_{i}\in S_{i}^{U}}\left(\max_{\left(s_{i},s_{j}\right)\in S_{i}\times BR\left(s'_{i}\right)}\pi_{i}\left(s_{i},s_{j}\right)\right).
\]

Observe that the biased-belief minmax is weakly larger than the undominated
maxmin (Definition \ref{def-undominated-minmax}), i.e., $\tilde{M}_{i}^{U}\geq M_{i}^{U}$
with an equality if the strategy of player $j$ that guarantees that
player $i$'s payoff is at most $M_{i}^{U}$ is a unique best reply
against some strategy of player $i$ (which is the case, in particular,
if the payoff function is strictly concave).
\end{defn}

\section{Main Results\label{sec:Main-Results}}

Our main results characterize the set of BBE and BBE outcomes in three
classes of games: games with strategic complements, games with strategic
substitutes, and games with strategic opposites.

\subsection{Preliminary Result: Necessary Conditions for a Weak BBE Outcome\label{subsec:Preliminary-Result:-Necessary}}

We begin by defining undominated strategies and the undominated minmax
payoff, which will be used to characterize necessary conditions for
a strategy profile to be a weak BBE outcome.

Strategy $s_{i}$ of player $i$ is \emph{undominated} if it is a
best reply of some strategy of the opponent, i.e., if there exists
strategy $s_{j}\in S_{j}$, such that $s_{i}\in BR\left(s_{j}\right)$.
We say that a strategy profile is \emph{undominated} if both strategies
in the profile are undominated. Recall that in a finite game, due
to the minmax theorem, a strategy is undominated iff it is not strictly
dominated by another strategy.

Let $S_{i}^{U}\in S_{i}$ denote the\emph{ set of undominated strategies}
of player $i$. Observe that $S_{i}^{U}$ is not necessarily a convex
set. 

An undominated minmax payoff for player $i$ is the maximal payoff
player $i$ can guarantee to himself in the following process: (1)
player $j$ chooses an arbitrary undominated strategy, and (2) player
$i$ chooses a strategy (after observing player $j$'s strategy).
Formally:
\begin{defn}
\label{def-undominated-minmax}Given game $G=\left(S,u\right)$, let
$M_{i}^{U}$ , the\emph{ undominated minmax payoff} of player $i$,
be defined as follows:
\[
M_{i}^{U}=\min_{s_{j}\in S_{j}^{U}}\left(\max_{s_{i}\in S_{i}}\pi_{i}\left(s_{i},s_{j}\right)\right).
\]

Observe that the undominated minmax is weakly larger than the standard
maxmin, i.e., $M_{i}^{U}\geq\min_{s_{j}\in S_{j}}\left(\max_{s_{i}\in S_{i}}\pi_{i}\left(s_{i},s_{j}\right)\right)$
with an equality if player $j$ does not have any strictly dominated
strategy\footnote{The undominated minmax payoff might be strictly higher than the undominated
\emph{maxmin} payoff due to the non-convexity of $S_{U}^{j}$; i.e.,
player $i$ might be able to guarantee only a lower payoff in a setup
in which player $j$ is allowed to choose his undominated strategy
after observing player $i$'s chosen strategy.} (i.e., if $S_{j}^{U}=S_{j}$).
\end{defn}
The following simple result (which will be helpful in deriving the
main results in the following subsections) shows that any weak BBE
outcome is an undominated strategy profile that yields a payoff above
the player's undominated minmax payoff to each player. 
\begin{prop}
\label{prop-neccesary-conditions}If a strategy profile $s^{*}=\left(s_{1}^{*},s_{2}^{*}\right)$
is a weak BBE outcome, then (1) the profile $s^{*}$ is undominated
and (2) $\pi_{i}\left(s^{*}\right)\geq M_{i}^{U}$.
\end{prop}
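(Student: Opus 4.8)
The plan is to prove the two parts separately, and for each part use a deviation argument: if the asserted property fails at a weak BBE outcome, then the relevant player can profitably deviate to a carefully chosen biased belief, contradicting the weak BBE condition (which requires the deviator to be \emph{weakly outperformed} in at least one equilibrium of the new biased game).

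For part (1), suppose toward a contradiction that some player $i$'s strategy $s_i^*$ is dominated, i.e. $s_i^*\notin S_i^U$, so $s_i^*$ is not a best reply to any strategy of the opponent. I would have player $i$ deviate to the undistorted belief $\psi_i'=I_d$. In any Nash equilibrium $(s_i',s_j')\in NE(G_{(I_d,\psi_j^*)})$ of the new biased game, player $i$ best-replies to the \emph{true} strategy $s_j'$, so $s_i'\in BR(s_j')\subseteq S_i^U$; hence $\pi_i(s_i',s_j')=\max_{s_i}\pi_i(s_i,s_j')$. The key is to compare this with $\pi_i(s_i^*,s_j^*)$. Since $s_i^*$ is dominated, it is strictly worse than a best reply against \emph{every} opponent strategy; I expect to argue that the payoff player $i$ obtains by best-replying (under $I_d$) strictly exceeds what the dominated $s_i^*$ could yield, giving $\pi_i(s_i',s_j')>\pi_i(s_i^*,s_j^*)$ in every equilibrium of the deviation game. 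This contradicts the weak BBE condition, which only guarantees existence of \emph{one} equilibrium in which the deviator is weakly outperformed.

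For part (2), suppose $\pi_i(s^*)<M_i^U$ for some player $i$. Again deviate to $\psi_i'=I_d$. In any equilibrium $(s_i',s_j')$ of $G_{(I_d,\psi_j^*)}$, player $i$ best-replies to $s_j'$, so $\pi_i(s_i',s_j')=\max_{s_i}\pi_i(s_i,s_j')$. The subtlety is that player $j$'s realized strategy $s_j'$ need not be arbitrary: since $\psi_j^*$ is a (monotone) biased belief and $s_j'$ is a best reply to $\psi_j^*(s_i')$, the strategy $s_j'$ lies in $S_j^U$ (it is a best reply to \emph{some} perceived strategy of player $i$), hence $s_j'\in S_j^U$. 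Therefore $\pi_i(s_i',s_j')=\max_{s_i}\pi_i(s_i,s_j')\geq \min_{s_j\in S_j^U}\max_{s_i}\pi_i(s_i,s_j)=M_i^U>\pi_i(s^*)$, again in \emph{every} equilibrium of the deviation game, contradicting weak BBE.

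The main obstacle I anticipate is the bookkeeping in part (2): one must verify carefully that player $j$'s equilibrium strategy $s_j'$ in the deviation game is genuinely undominated so that the minimum in the definition of $M_i^U$ applies. This follows because $s_j'\in BR(\psi_j^*(s_i'))$ witnesses undominatedness, but the argument needs the fact that $S_j^U$ is exactly the set of strategies that are best replies to \emph{some} opponent strategy, and that the perceived strategy $\psi_j^*(s_i')$ is itself a legitimate strategy in $S_i$ (which holds since $\psi_j^*:S_i\to S_i$). A secondary point to handle cleanly in part (1) is converting ``dominated $\Rightarrow$ not a best reply to anything'' into a \emph{strict} payoff gap uniform over the relevant opponent strategies; in interval games with concave payoffs this is immediate, while in finite games it follows from the minmax characterization of undominated strategies already invoked in the text. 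Both deviations use $I_d$, so the two parts share essentially the same mechanism.
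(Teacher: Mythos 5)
Your treatment of part (2) is correct and is essentially the paper's own argument with the details filled in: after the deviation to $I_{d}$, the deviator best-replies to player $j$'s true strategy; player $j$'s equilibrium strategy satisfies $s_{j}'\in BR\left(\psi_{j}^{*}\left(s_{i}'\right)\right)$ and is therefore undominated by the paper's definition (a best reply to some strategy, since $\psi_{j}^{*}\left(s_{i}'\right)\in S_{i}$); hence in \emph{every} equilibrium of $G_{\left(I_{d},\psi_{j}^{*}\right)}$ the deviator earns at least $\min_{s_{j}\in S_{j}^{U}}\left(\max_{s_{i}\in S_{i}}\pi_{i}\left(s_{i},s_{j}\right)\right)=M_{i}^{U}$, contradicting the weak BBE requirement that at least one equilibrium of the new biased game give him at most $\pi_{i}\left(s^{*}\right)<M_{i}^{U}$. (Monotonicity of $\psi_{j}^{*}$ is neither needed nor assumed in a weak BBE, but your parenthetical use of it is harmless.)

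Part (1), however, has a genuine gap. From $s_{i}'\in BR\left(s_{j}'\right)$ and domination of $s_{i}^{*}$ you can only conclude $\pi_{i}\left(s_{i}',s_{j}'\right)=\max_{s_{i}}\pi_{i}\left(s_{i},s_{j}'\right)>\pi_{i}\left(s_{i}^{*},s_{j}'\right)$; the inequality you need, $\pi_{i}\left(s_{i}',s_{j}'\right)>\pi_{i}\left(s_{i}^{*},s_{j}^{*}\right)$, compares payoffs against two \emph{different} opponent strategies and does not follow. Player $j$'s biased belief $\psi_{j}^{*}$ may perceive the deviator's new strategy as a change and trigger a punishing response $s_{j}'$, so that even the best reply against $s_{j}'$ yields strictly less than the original equilibrium payoff; this is exactly the mechanism the paper's folk-theorem constructions (Propositions \ref{pro-monotone-BBE-outcomes-finite-games} and \ref{pro-interval-strong-continous-folk}) exploit, where any deviation, including to $I_{d}$, is held down to $M_{i}^{U}$, which can be well below $\pi_{i}\left(s^{*}\right)$. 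So no contradiction can be extracted from this deviation, and your part (1) does not close. The repair is that part (1) requires no deviation at all, which is how the paper proves it: condition (1) of Definition \ref{def-weak-BBE} already requires $\left(s_{1}^{*},s_{2}^{*}\right)\in NE\left(G_{\psi^{*}}\right)$, i.e., $s_{i}^{*}\in BR\left(\psi_{i}^{*}\left(s_{j}^{*}\right)\right)$, and since $\psi_{i}^{*}\left(s_{j}^{*}\right)\in S_{j}$ this exhibits $s_{i}^{*}$ as a best reply to some opponent strategy, which is precisely the paper's definition of an undominated strategy.
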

\begin{proof}
Assume that $s^{*}=\left(s_{1}^{*},s_{2}^{*}\right)$ is a biased-belief
equilibrium outcome. This implies that each $s_{i}^{*}$ is a best
reply to the player's distorted belief, which implies that each $s_{i}^{*}$
is undominated. Assume to the contrary, that $,\pi_{i}\left(s^{*}\right)<M_{i}^{U}.$
Then, by deviating to the undistorted function $I_{d}$, player $i$
can guarantee a fitness of at least $M_{i}^{U}$ in any distorted
equilibrium.
\end{proof}

\subsection{Games with Strategic Complements\label{sec:Wishful-thinking-and-complements}}

Our first main result characterizes the set of BBE outcomes in games
with strategic complements. It shows that a strategy profile is a
BBE outcome essentially iff (I) it is undominated, (II) it yields
a payoff above the undominated/biased-belief minmax payoff to both
players, and (III) both players overinvest (i.e., use a weakly higher
strategy than the best reply to the opponent). Formally:
\begin{prop}
\label{prop-supermodular-monotone}Let $G$ be a game with \emph{strategic
complements and positive externalities}.\emph{ }
\begin{enumerate}
\item Let $\left(s_{1}^{*},s_{2}^{*}\right)$ be a BBE outcome. Then $\left(s_{1}^{*},s_{2}^{*}\right)$
has the following properties: (I) it is undominated, and it satisfies
for each player $i$: (II) $\pi_{i}\left(s_{i}^{*},s_{j}^{*}\right)\geq M_{i}^{U}$,
and (III) overinvestment: $s_{i}^{*}\geq\min\left(BR\left(s_{j}^{*}\right)\right)$.
\item Let $\left(s_{1}^{*},s_{2}^{*}\right)$ be an undominated profile
that satisfies, for each player $i$: (II) $\pi_{i}\left(s_{i}^{*},s_{j}^{*}\right)>\tilde{M}_{i}^{U}$,
and (III) $s_{i}^{*}\geq\min\left(BR\left(s_{j}^{*}\right)\right)$.
Then, $\left(s_{1}^{*},s_{2}^{*}\right)$ is a BBE outcome. \\
Moreover, if $\pi_{i}\left(s_{i},s_{j}\right)$ is strictly concave
in $s_{i}$ (i.e., $\frac{\partial\pi_{i}^{2}\left(s_{i},s_{j}\right)}{\partial s_{i}^{2}}>0$)
then $\left(s_{1}^{*},s_{2}^{*}\right)$ is a strong BBE outcome.
\end{enumerate}
\end{prop}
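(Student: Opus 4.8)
The plan is to prove the two directions separately, using Proposition \ref{prop-neccesary-conditions} for the easy half of part 1 and an explicit construction of monotone beliefs for part 2. For part 1, properties (I) and (II) come for free: a BBE is in particular a weak BBE, so Proposition \ref{prop-neccesary-conditions} already gives that $s^*$ is undominated and $\pi_i(s^*)\ge M_i^U$. The content is overinvestment (III), which I would prove by contradiction. Suppose $s_i^*<\min BR(s_j^*)$ for some player $i$, and consider the blind deviation $\psi_i'\equiv s_j^*$. Under it player $i$ must play some $s_i'\in BR(s_j^*)$ in every equilibrium of $G_{(\psi_i',\psi_j^*)}$, hence $s_i'\ge\min BR(s_j^*)>s_i^*$; since $s_i^*$ lies strictly below every best reply to $s_j^*$ and $\pi_i(\cdot,s_j^*)$ is concave, $\pi_i(s_i',s_j^*)>\pi_i(s_i^*,s_j^*)$.

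It then remains to control player $j$'s response $s_j'$. Because $s_i'>s_i^*$ and $\psi_j^*$ is monotone, $\psi_j^*(s_i')\ge\psi_j^*(s_i^*)$. I would record the standard monotone-comparative-statics fact that, under strict increasing differences, every best reply to a strictly higher perceived strategy weakly exceeds every best reply to a lower one; together with $s_j^*\in BR(\psi_j^*(s_i^*))$ this gives $s_j'\ge s_j^*$, so positive externalities yield $\pi_i(s_i',s_j')\ge\pi_i(s_i',s_j^*)>\pi_i(s^*)$. The one delicate case is $\psi_j^*(s_i')=\psi_j^*(s_i^*)$, in which player $j$ cannot perceive the deviation: then $s_j^*$ remains a best reply, so any equilibrium with $s_j'\neq s_j^*$ meets all three clauses of Definition \ref{def-implausible} and is implausible, while the surviving plausible equilibria have $s_j'=s_j^*$ and again $\pi_i(s_i',s_j^*)>\pi_i(s^*)$. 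Either way player $i$ strictly outperforms in every plausible equilibrium, contradicting the BBE condition.

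For part 2 I would construct, for each player $i$, a monotone continuous belief $\psi_i^*$ whose induced reaction $r_i(s_j)\in BR(\psi_i^*(s_j))$ has a reward/punish shape: $r_i(s_j)=s_i^*$ for all $s_j\ge s_j^*$, and $r_i(s_j)$ falls to a punishment level $s_i^{p}\in BR(\hat s_j)$ for small $s_j$, where $\hat s_j\in S_j^U$ attains the biased-belief minmax, i.e.\ $\max_{(s_j,s_i):\,s_i\in BR(\hat s_j)}\pi_j(s_j,s_i)=\tilde M_j^U<\pi_j(s^*)$. On the equilibrium path I pick a wishful-thinking perception $v_i^*\in BR^{-1}(s_i^*)$ with $v_i^*\ge s_j^*$ (such a $v_i^*$ exists because $s_i^*$ is undominated and $s_i^*\ge\min BR(s_j^*)$ with $BR$ increasing), so that $s_i^*\in BR(\psi_i^*(s_j^*))$ and $(s^*)$ is an equilibrium of $G_{\psi^*}$. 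Monotonicity of $\psi_i^*$ is available because the punishment perception $\hat s_j$ is lower than the on-path perception $v_i^*$ — punishing an opponent under positive externalities requires the punisher to play low, hence to perceive the opponent as low — and continuity is obtained by interpolating, which I would make steep just below $s_j^*$, exploiting the strict slack $\pi_i(s^*)>\tilde M_i^U$.

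With this shape, take any deviation by player $i$ and let $s_i'$ be an induced strategy; player $j$'s response is governed by the symmetric belief $\psi_j^*$. If $s_i'\ge s_i^*$, player $j$ perceives player $i$ exactly as on the path and plays $s_j^*$, so $\pi_i(s_i',s_j^*)\le\pi_i(s_i^*,s_j^*)$ by overinvestment and concavity in $s_i$; if $s_i'$ is small, player $j$ plays the punishing best reply and $\pi_i\le\tilde M_i^U<\pi_i(s^*)$; on the intermediate range the steep transition keeps $\pi_i(s_i',r_j(s_i'))$ below $\pi_i(s^*)$, because the loss from player $j$'s sharply falling response outweighs any direct gain. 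Under strict concavity each $BR$ is single-valued, so every equilibrium of every $G_{(\psi_i',\psi_j^*)}$ sits at the determined $r_j(s_i')$ and player $i$ is weakly outperformed in all of them, giving a strong BBE. In the general weakly concave case I would instead select, for each deviation, one plausible equilibrium — one in which the non-deviator plays $s_j^*$ when player $i$'s induced strategy is not perceived as different from $s_i^*$, and otherwise the equilibrium in which player $j$ plays his lowest (least favorable) best reply. I expect the main obstacle to be precisely this last step: showing that for every deviation at least one deterring equilibrium survives Definition \ref{def-implausible}, since that is where the continuity of the beliefs, the multiplicity of best replies, and the plausibility test interact, and the reward/punish construction together with the strict minmax inequality are designed exactly so that the surviving plausible equilibrium yields player $i$ no more than $\pi_i(s^*)$.
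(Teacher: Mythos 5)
Your proposal takes essentially the same route as the paper's proof: Part 1 is the paper's argument almost verbatim (the blind deviation $\psi_{i}'\equiv s_{j}^{*}$, monotonicity plus strict complementarity giving $s_{j}'\geq s_{j}^{*}$, and the plausibility argument forcing $s_{j}'=s_{j}^{*}$ when $\psi_{j}^{*}\left(s_{i}'\right)=\psi_{j}^{*}\left(s_{i}^{*}\right)$), and Part 2 is the paper's ``blindness to good news / overreaction to bad news'' construction of monotone continuous beliefs supporting $\left(s_{1}^{*},s_{2}^{*}\right)$. The only differences are presentational: where you ask for a ``steep enough'' interpolation just below the equilibrium strategy, the paper makes this precise via the threshold map $\phi_{s^{*}}\left(s_{i}\right)=\sup\left(X_{s^{*}}\left(s_{i}\right)\right)$, shown to be continuous with limit equal to the on-path perception as $s_{i}\nearrow s_{i}^{*}$, and the plausibility/equilibrium-selection issue you flag as the main obstacle is treated in the paper at the same level of detail you give, namely by asserting that an unchanged perception forces the non-deviator to keep playing $s_{j}^{*}$ in any plausible (or, under strict concavity, any) equilibrium.
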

\begin{proof}[Sketch of Proof (formal proof in Appendix \ref{subsec:Proof-of-Proposition-supermodular})]
\textbf{}\\
\textbf{Part 1: }Proposition \ref{prop-neccesary-conditions} implies
(I) and (II). To prove (III, overinvestment), assume to the contrary
that $s_{i}^{*}<\min\left(BR\left(s_{j}^{*}\right)\right)$. Consider
a deviation of player $i$ that induces him to invest slightly more
than $s_{i}^{*}$. The fact that $s_{i}^{*}<\min\left(BR\left(s_{j}^{*}\right)\right)$
implies that player $i$ strictly earns from his own deviation. The
assumption that the biased belief of the opponent is monotone implies
that the agent's deviation induces the opponent to invest more and,
thereby to further improve the agent's payoff. Thus, the agent gains
from the deviation, and $\left(s_{1}^{*},s_{2}^{*}\right)$ cannot
be a BBE outcome.

\textbf{Part 2:} The strategy profile $\left(s_{1}^{*},s_{2}^{*}\right)$
is supported as a BBE outcome by a profile of biased beliefs $\left(\psi_{1}^{*},\psi_{2}^{*}\right)$
in which each biased belief $\psi_{j}^{*}$ satisfies: (1) blindness
to good news: $\psi_{j}^{*}$ distorts any $s'_{i}\geq s_{i}^{*}$
into $BR^{-1}\left(s_{j}^{*}\right)$, and (2) overreaction to bad
news: $\psi_{j}^{*}$ distorts any $s'_{i}<s_{i}^{*}$ to a sufficiently
low strategy $\psi_{j}\left(s'_{i}\right)$, such that player $i$
loses in any strategy profile $\left(s'_{i},s'_{j}\right)$ in which
player $j$ best-replies to the perceived strategy of player $i$
(i.e., $s'_{j}\in BR\left(\psi_{j}\left(s'_{i}\right)\right)$). These
properties imply that $\left(\left(\psi_{1}^{*},\psi_{2}^{*}\right),\left(s_{1}^{*},s_{2}^{*}\right)\right)$
is a BBE (and a strong BBE if the payoff function is strictly concave).
\end{proof}
Recall that a game with strategic complements admits a \emph{lowest
Nash equilibrium} $\left(\underline{s}_{1},\underline{s}_{2}\right)$
in which both players invest less than in any other Nash equilibrium,
i.e., $s'_{i}\geq\underline{s}_{i}$ for each player $i$ and each
strategy $s'_{i}$ that is played in a Nash equilibrium (see, e.g.,
\citealp{milgrom1990rationalizability}). 

An immediate corollary of Prop. \ref{prop-supermodular-monotone}
is that in each BBE outcome, both players invest more than in any
Nash equilibrium. Formally:
\begin{cor}
\label{cor-no-bad-BBE-outcomes-strategic-complements}Let $G$ be
a game with strategic complements and positive externalities with
a lowest Nash equilibrium $\left(\underline{s}_{1},\underline{s}_{2}\right)$
that satisfies $\underline{s}_{1}<\max\left(S_{i}\right)$ for each
player $i$. Let $\left(s_{1}^{*},s_{2}^{*}\right)$ be a BBE outcome.
Then $\underline{s}_{i}\leq s_{i}^{*}$ for each player $i$.
\end{cor}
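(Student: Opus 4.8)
The plan is to derive this corollary directly from Part~1 of Proposition~\ref{prop-supermodular-monotone}, specifically from the overinvestment property (III), combined with a monotonicity argument comparing best-reply correspondences. Fix a BBE outcome $\left(s_{1}^{*},s_{2}^{*}\right)$. By overinvestment we know $s_{i}^{*}\geq\min\left(BR\left(s_{j}^{*}\right)\right)$ for each player $i$. The goal is to show that the pair $\left(s_{1}^{*},s_{2}^{*}\right)$ sits (weakly) above the lowest Nash equilibrium $\left(\underline{s}_{1},\underline{s}_{2}\right)$ in each coordinate.

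First I would recall the lattice-theoretic structure of supermodular games: in a game with strategic complements the best-reply correspondence is increasing (in the strong set order), and iterating the joint best-reply map from the bottom of the strategy space converges monotonically upward to the lowest Nash equilibrium $\left(\underline{s}_{1},\underline{s}_{2}\right)$ (this is the standard Milgrom--Roberts/Topkis result already invoked in the paragraph preceding the corollary). The key inequality to exploit is the overinvestment condition rewritten as: each player's equilibrium strategy weakly exceeds \emph{some} best reply to the opponent's equilibrium strategy. Combined with increasing differences, this says the profile $\left(s_{1}^{*},s_{2}^{*}\right)$ is a point from which the joint best-reply dynamic cannot push both coordinates strictly downward below the lowest fixed point.

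The cleanest route is a fixed-point/monotone-iteration comparison. Define the lower best-reply map $\underline{b}_{i}\left(s_{j}\right)=\min\left(BR\left(s_{j}\right)\right)$, which under strategic complements is a (weakly) increasing single-valued selection. Overinvestment gives $s_{i}^{*}\geq\underline{b}_{i}\left(s_{j}^{*}\right)$ for each $i$, so the profile $s^{*}$ is a \emph{superfixed point} (a point weakly above its own image) of the increasing map $\underline{b}=\left(\underline{b}_{1},\underline{b}_{2}\right)$. The lowest Nash equilibrium $\underline{s}$ is the least fixed point of $\underline{b}$, obtained as the monotone-increasing limit of $\underline{b}^{n}\left(\min S\right)$ starting from the minimal profile. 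The plan is then to show $\underline{s}_{i}\leq s_{i}^{*}$ by an inductive comparison: starting from the bottom $\left(\min\left(S_{1}\right),\min\left(S_{2}\right)\right)\leq s^{*}$ and applying that $\underline{b}$ is increasing, one checks that the iterates $\underline{b}^{n}\left(\min S\right)$ remain below $s^{*}$ at every stage (using at each step that $\underline{b}\left(s^{*}\right)\leq s^{*}$ by overinvestment together with monotonicity of $\underline{b}$), and passing to the limit yields $\underline{s}\leq s^{*}$. The hypothesis $\underline{s}_{i}<\max\left(S_{i}\right)$ guarantees the lowest equilibrium is interior enough that the relevant best replies are well-defined and the comparison is nonvacuous.

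The main obstacle I anticipate is handling the correspondence-valued (rather than function-valued) best reply carefully: because $BR$ may be multivalued and $S_{i}^{U}$ need not be convex, I must be precise about which selection is increasing and must ensure that the overinvestment inequality involving $\min\left(BR\left(s_{j}^{*}\right)\right)$ interacts correctly with the monotone-iteration limit. The concavity/continuity assumptions on $\pi_{i}$ (continuously twice differentiable, weakly concave in own strategy) plus strict increasing differences should suffice to guarantee that $\underline{b}_{i}$ is increasing and upper semicontinuous, so that the monotone iteration converges to $\underline{s}$; verifying this rigorously, and confirming that the superfixed-point comparison survives the possible non-convexity of the undominated strategy sets, is where most of the care will be needed. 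Everything else is a routine application of Tarski-style monotone convergence to the least fixed point.
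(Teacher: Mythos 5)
Your proposal is correct, but it takes a genuinely different route from the paper. Both arguments start from the same place---the overinvestment property (1.III) of Proposition \ref{prop-supermodular-monotone}---but diverge after that. The paper argues by contradiction via Lemma \ref{lemma-no-worse-than-worset-Nash}: if $s_{i}^{*}<\underline{s}_{i}$ for some player, one constructs a pure Nash equilibrium of the game truncated above by $\left(s_{1}^{*},s_{2}^{*}\right)$ (which exists by supermodularity), notes by minimality of $\left(\underline{s}_{1},\underline{s}_{2}\right)$ that it cannot be an equilibrium of the unrestricted game, and concludes that the cap must bind for some player, i.e., that player strictly underinvests---contradicting (1.III). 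You instead read overinvestment as saying that $s^{*}$ is a prefixed point of the monotone map $\underline{b}=\left(\min BR_{1},\min BR_{2}\right)$ and compare it with the least fixed point, which is exactly $\left(\underline{s}_{1},\underline{s}_{2}\right)$; this is a clean Tarski-style comparison and, notably, does not even use the hypothesis $\underline{s}_{i}<\max\left(S_{i}\right)$, which the paper's lemma needs only to handle corner cases in its strict-inequality conclusion (your remark that this hypothesis makes best replies ``well-defined'' is a mischaracterization, though a harmless one, as is your worry about non-convexity of $S_{i}^{U}$, which never enters your argument since the iteration lives on all of $S$). Two technical points deserve care in your write-up. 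First, if you insist on the iterative version $\underline{b}^{n}\left(\min S\right)\nearrow\underline{s}$, the semicontinuity you need for the limit to be a fixed point is \emph{lower} semicontinuity of $\min BR$ (which follows from the closed graph of $BR$ via Berge's theorem: any limit of minimal best replies is a best reply, hence lies weakly above the minimal one), not upper semicontinuity as you wrote; monotonicity already gives $\underline{b}\left(x^{\infty}\right)\geq x^{\infty}$, and it is the reverse inequality that needs continuity. Second, you can sidestep continuity entirely by using the Knaster--Tarski characterization of the least fixed point as $\inf\left\{ s:\underline{b}\left(s\right)\leq s\right\} $: every Nash equilibrium is a prefixed point, the least fixed point is itself a Nash equilibrium and hence equals $\left(\underline{s}_{1},\underline{s}_{2}\right)$, and overinvestment makes $s^{*}$ a prefixed point, so $\underline{s}\leq s^{*}$ immediately. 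What the paper's approach buys in exchange is that it needs only the \emph{existence} and minimality of the lowest Nash equilibrium (as cited from the supermodular-games literature), not its identification as the least fixed point of the minimal best-reply selection.
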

\begin{proof}
The result is immediate from part (1.III) of Proposition \ref{prop-supermodular-monotone}
(namely, that both agents weakly overinvest in any BBE outcome), and
the observation (which is formally proved in Lemma \ref{lemma-no-worse-than-worset-Nash}
in Appendix \ref{subsec-proof-og-core-no-worse-than-worst}) that
$s_{i}^{*}<\underline{s}_{i}$ implies that at least one of the players
strictly underinvests.
\end{proof}
Corollary \ref{cor-no-bad-BBE-outcomes-strategic-complements} shows
that the notion of BBE rules out socially bad outcomes in which one
(or both) of the players invests less effort than the lowest Nash
equilibrium. In particular, in a price competition with differentiated
goods (see Example \ref{exam-Price-competition-with-differianted}
below), the corollary implies that the price chosen by any player
in any  BBE is at least the player's price in the unique Nash equilibrium
of the game.

The final corollary shows the close relation between BBE and wishful
thinking. Specifically, it shows that any biased belief in any BBE
(with a non-extreme outcome) of a game with strategic complements
exhibits wishful thinking. The intuition is that wishful thinking
causes an agent to believe that the opponent is playing a higher action,
which induces the agent to respond with a higher action, which, in
turn, causes the opponent to respond by playing a higher action, which
benefits the agent.\footnote{Corollary \ref{cor-wishful-tiniking-with-complements} allows for
pessimism of player $i$ in a BBE only if player $i$ plays an extreme
strategy (either, the minimal feasible strategy or the maximal feasible
strategy) and his pessimism does not affect his play; i.e., the best
reply against the real opponent's strategy and the best reply against
the perceived opponent's strategy coincide in being the same extreme
strategy. For example, this is the case in the biased beliefs that
support the action profile $\left(s_{i},s_{j}\right)$ in the stag
hunt game analyzed below.}
\begin{cor}
\label{cor-wishful-tiniking-with-complements}Let $G$ be a game with
positive externalities and strategic complements. Let\\
 $\left(\left(\psi_{1}^{*},\psi_{2}^{*}\right),\left(s_{1}^{*},s_{2}^{*}\right)\right)$
be a BBE. If $s_{i}^{*}\notin\left\{ \min\left(S_{i}\right),\max\left(S_{i}\right)\right\} $,
\emph{then player $i$ exhibits wishful thinking (i.e., $\psi_{i}^{*}\left(s_{j}^{*}\right)\geq s_{j}^{*}$).
}
\end{cor}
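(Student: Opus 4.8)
The plan is to prove the single inequality $\psi_{i}^{*}\left(s_{j}^{*}\right)\geq s_{j}^{*}$, which by Remark \ref{Remark-wihsful-thinking} is precisely the statement that player $i$ exhibits wishful thinking in a game with positive externalities. The starting point is that $\left(s_{1}^{*},s_{2}^{*}\right)\in NE\left(G_{\psi^{*}}\right)$, so $s_{i}^{*}$ is a best reply to the perceived strategy $\psi_{i}^{*}\left(s_{j}^{*}\right)$, i.e.\ $s_{i}^{*}\in BR\left(\psi_{i}^{*}\left(s_{j}^{*}\right)\right)$. Since the payoff functions are continuously differentiable and $s_{i}^{*}$ is interior (by the hypothesis $s_{i}^{*}\notin\{\min(S_{i}),\max(S_{i})\}$), the first-order condition must hold with equality:
\[
\frac{\partial\pi_{i}\left(s_{i}^{*},\psi_{i}^{*}\left(s_{j}^{*}\right)\right)}{\partial s_{i}}=0.
\]

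Next I would argue by contradiction, assuming $\psi_{i}^{*}\left(s_{j}^{*}\right)<s_{j}^{*}$. Because the game has strategic complements, $\frac{\partial\pi_{i}\left(s_{i},s_{j}\right)}{\partial s_{i}}$ is strictly increasing in $s_{j}$, so evaluating at the \emph{real} strategy $s_{j}^{*}>\psi_{i}^{*}\left(s_{j}^{*}\right)$ gives
\[
\frac{\partial\pi_{i}\left(s_{i}^{*},s_{j}^{*}\right)}{\partial s_{i}}>\frac{\partial\pi_{i}\left(s_{i}^{*},\psi_{i}^{*}\left(s_{j}^{*}\right)\right)}{\partial s_{i}}=0.
\]
I would then invoke weak concavity of $\pi_{i}$ in its first argument: since $\frac{\partial\pi_{i}\left(\cdot,s_{j}^{*}\right)}{\partial s_{i}}$ is weakly decreasing and strictly positive at $s_{i}^{*}$, the function $\pi_{i}\left(\cdot,s_{j}^{*}\right)$ is strictly increasing on $[\min(S_{i}),s_{i}^{*}]$, and $s_{i}^{*}$ itself cannot be a maximizer (its marginal payoff is strictly positive). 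Hence every best reply to $s_{j}^{*}$ lies strictly above $s_{i}^{*}$, i.e.\ $\min\left(BR\left(s_{j}^{*}\right)\right)>s_{i}^{*}$.

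This contradicts the overinvestment property (part (1.III) of Proposition \ref{prop-supermodular-monotone}), which guarantees $s_{i}^{*}\geq\min\left(BR\left(s_{j}^{*}\right)\right)$ in any BBE outcome. The assumption therefore fails, so $\psi_{i}^{*}\left(s_{j}^{*}\right)\geq s_{j}^{*}$, which is wishful thinking. I expect the only delicate step to be the deduction $\min\left(BR\left(s_{j}^{*}\right)\right)>s_{i}^{*}$ from a strictly positive marginal payoff under \emph{merely weak} concavity: one must rule out a best reply sitting exactly at $s_{i}^{*}$, which follows because a strictly positive derivative keeps the payoff rising through $s_{i}^{*}$. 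The interiority hypothesis is equally essential and worth flagging—at a boundary strategy the first-order condition would only be an inequality, the derivative at the perceived strategy need not vanish, and the contradiction with overinvestment would evaporate, matching the footnote's observation that pessimism can survive only at extreme, payoff-irrelevant strategies.
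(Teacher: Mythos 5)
Your proof is correct, and it takes a noticeably more direct route than the paper's. The paper argues via a standalone monotonicity lemma on best-reply correspondences (Lemma \ref{lemma-wishful-somplements}): assuming $\psi_{i}^{*}\left(s_{j}^{*}\right)<s_{j}^{*}$, strategic complementarity gives $\max\left(BR\left(\psi_{i}^{*}\left(s_{j}^{*}\right)\right)\right)\leq\min\left(BR\left(s_{j}^{*}\right)\right)$ with equality only at extreme strategies; sandwiching this against $\max\left(BR\left(\psi_{i}^{*}\left(s_{j}^{*}\right)\right)\right)\geq s_{i}^{*}$ (from $s_{i}^{*}\in BR\left(\psi_{i}^{*}\left(s_{j}^{*}\right)\right)$) and the overinvestment bound $s_{i}^{*}\geq\min\left(BR\left(s_{j}^{*}\right)\right)$ forces equality throughout and hence $s_{i}^{*}\in\left\{ \min\left(S_{i}\right),\max\left(S_{i}\right)\right\}$, contradicting interiority only at the final step. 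You instead deploy interiority up front: the first-order condition $\frac{\partial\pi_{i}}{\partial s_{i}}\left(s_{i}^{*},\psi_{i}^{*}\left(s_{j}^{*}\right)\right)=0$ at the interior best reply, boosted by strict complementarity to a strictly positive marginal payoff against the real strategy $s_{j}^{*}$, combined with weak concavity, yields $\min\left(BR\left(s_{j}^{*}\right)\right)>s_{i}^{*}$ outright, which collides with overinvestment immediately. Both arguments rest on part (1.III) of Proposition \ref{prop-supermodular-monotone} and the same underlying calculus (the paper's lemma is itself proved by comparing first-order conditions), but your version inlines the lemma and dispenses with the boundary case distinction, making transparent exactly where interiority is used (FOC holding with equality, plus room to move rightward past $s_{i}^{*}$). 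What the paper's packaging buys in exchange is reusability: the same lemma and the same chain-of-inequalities template are recycled essentially verbatim for the substitutes case (Corollary \ref{cor-wishful-tiniking-with-substitutes}, via Lemma \ref{lemma-wishful-substitute}) and the opposing-differences case (Corollary \ref{cor-pessimism-withstrategic-opposites}), whereas your argument would need to be redone with signs flipped in each setting. Your handling of the one delicate point — ruling out a best reply sitting exactly at $s_{i}^{*}$ — is sound: a strictly positive derivative at an interior point gives strictly larger payoffs just to its right, so $s_{i}^{*}$ cannot maximize $\pi_{i}\left(\cdot,s_{j}^{*}\right)$.
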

\begin{proof}
Assume to the contrary that $\psi_{i}^{*}\left(s_{j}^{*}\right)<s_{j}^{*}$.
The strategic complementarity implies that $\max\left(BR\left(\psi_{i}^{*}\left(s_{j}^{*}\right)\right)\right)\leq\min\left(BR\left(s_{j}^{*}\right)\right)$
with an equality only if 
\[
\max\left(BR\left(\psi_{i}^{*}\left(s_{j}^{*}\right)\right)\right)\in\left\{ \min\left(S_{i}\right),\max\left(S_{i}\right)\right\} 
\]
 (see Lemma \ref{lemma-wishful-somplements} in Appendix \ref{subsec:Proof-of-cor-wishful-complements}
for a formal proof of this claim). Part 1 of Proposition \ref{prop-supermodular-monotone}
and the definition of a BBE imply that 
\[
\max\left(BR\left(\psi_{i}^{*}\left(s_{j}^{*}\right)\right)\right)\geq s_{i}^{*}\geq\min\left(BR\left(s_{j}^{*}\right)\right).
\]
 The previous inequalities jointly imply that 
\[
\max\left(BR\left(\psi_{i}^{*}\left(s_{j}^{*}\right)\right)\right)=s_{i}^{*}=\min\left(BR\left(s_{j}^{*}\right)\right)\in\left\{ \min\left(S_{i}\right),\max\left(S_{i}\right)\right\} ,
\]
 which contradicts the assumption that $s_{i}^{*}\notin\left\{ \min\left(S_{i}\right),\max\left(S_{i}\right)\right\} $.
\end{proof}
Next, we apply our analysis of games with strategic complements to
price competition with differentiated goods (the linear city model
$\grave{\textrm{a}}$ la Hotelling). Specifically, we show that (1)
players choose prices above the unique Nash equilibrium price in all
BBE, and (2) any undominated symmetric price profile above the Nash
equilibrium price can be supported as a strong BBE. In Appendix \ref{subsec:Examples-of-Games-complements}
we present three additional examples: input games, stag hunt games,
and the traveler's dilemma.
\begin{example}[{\emph{Price competition with differentiated goods; see a textbook
analysis in \citealp[Section 12.C]{mas1995microeconomic}}}]
\emph{\label{exam-Price-competition-with-differianted} }Consider
a mass one of consumers equally distributed in the interval $\left[0,1\right]$.
Consider two firms that produce widgets, located at the two extreme
locations: 0 and 1. Every consumer wants at most one widget. Producing
a widget has a constant marginal cost, which we normalize to be zero.
Each firm $i$ chooses price $s_{i}\in\left[0,M\right]$ for its widgets.
The total cost of buying a widget from firm $i$ is equal to its price
$s_{i}$ plus $t$ times the consumer's distance from the firm, where
$t\in\left[0,M\right]$). Each buyer buys a widget from the firm with
the lower total buying cost. This implies that the total demand for
good $i$ is given by function $q_{i}\left(s_{i},s_{j}\right)$: 
\[
q_{i}\left(s_{i},s_{j}\right)=\begin{cases}
\begin{array}{cc}
0 & \frac{s_{j}-s_{i}+t}{2\cdot t}<0\\
\frac{s_{j}-s_{i}+t}{2\cdot t} & 0<\frac{s_{j}-s_{i}+t}{2\cdot t}<1\\
1 & \frac{s_{j}-s_{i}+t}{2\cdot t}>1,
\end{array}\end{cases}
\]
The payoff (profit) of firm $i$ is given by $\pi_{i}\left(s_{i},s_{j}\right)=s_{i}\cdot q_{i}\left(s_{i},s_{j}\right)$.
Observe that the payoff function is strictly concave in $s_{i}$ for
any non-extreme $s_{j}$ (and it is weakly concave for the extreme
values of $s_{j}$). One can show\emph{ }that the game has strategic
complements, and that the best-reply function of each player is:
\[
s_{i}\left(s_{j}\right)=\begin{cases}
\begin{array}{cc}
\frac{s_{j}+t}{2} & s_{j}<3\cdot t\\
s_{j}-t & s_{j}\geq3\cdot t.
\end{array}\end{cases}
\]

It is well known that the unique Nash equilibrium of this example
is given by $s_{i}=s_{j}=t,$ which yields a payoff of $\frac{t}{2}$
to each firm. 

Observe that the set of undominated strategies of each player is the
interval $\left[\frac{t}{2},\frac{M+t}{2}\right]$ (where $\frac{t}{2}$
is the best reply against $0$ and $\frac{M+t}{2}$ is the best reply
against $M$). This implies that the undominated minmax of each player
is equal to $\pi_{i}\left(\frac{3}{4}\cdot t,\frac{t}{2}\right)$=$\frac{3}{4}\cdot t\cdot\frac{3}{8}=\frac{9}{32}\cdot t$.
Proposition \ref{prop-supermodular-monotone} implies that a strategy
profile $\left(s_{i},s_{j}\right)$ is a BBE outcome if for each player
$i$: (1) $s_{i}\in\left[\frac{t}{2},\frac{M+t}{2}\right]$ (undominated
strategy), (2) $\pi_{i}\left(s_{i},s_{j}\right)>\frac{9}{32}\cdot t$
(payoff above the undominated minmax payoff),\footnote{One can show that the constraint on $s_{i}$ implied by $\pi_{i}\left(s_{i},s_{j}\right)>\frac{9}{32}\cdot t$
is nonbinding. The constraint is
\[
s_{i}\in\left(\frac{s_{j}+t-\sqrt{\left(s_{j}+t\right)^{2}-2.25\cdot t^{2}}}{2},\frac{s_{j}+t-\sqrt{\left(s_{j}+t\right)^{2}-2.25\cdot t^{2}}}{2}\right).
\]
 } and (3) overinvestment: $s_{i}\geq\frac{s_{j}+t}{2}$. 

Figure \ref{fig:The-Set-of-differiantaed-goods} shows the set of
BBE outcomes (which coincides with the set of strong BBE outcomes,
due to the strict concavity of the payoff function), for $t=1$ and
$M=3$.

\begin{figure}[h]

\begin{centering}
\caption{\label{fig:The-Set-of-differiantaed-goods}The Set of (Strong) BBE
Outcomes in Example \ref{exam-Price-competition-with-differianted}
($t=1$, $M=3$)}
\includegraphics[scale=0.65]{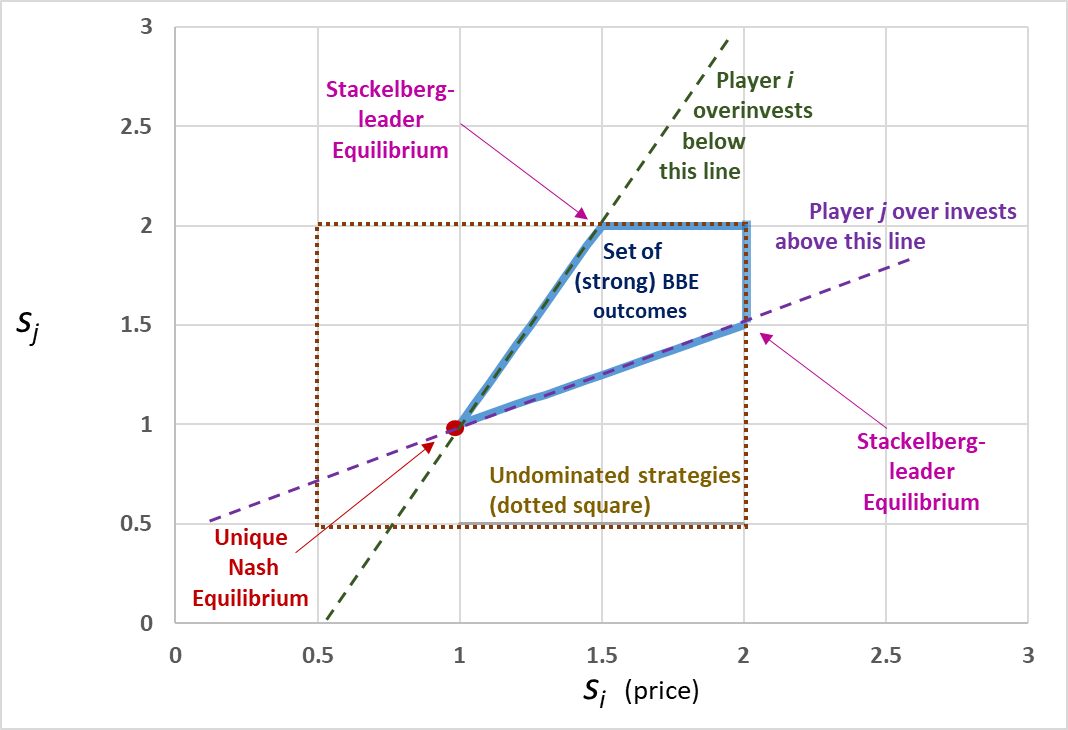}
\par\end{centering}
\end{figure}
Observe that the sum of the payoffs to the two firms, $s_{i}\cdot q_{i}\left(s_{i},s_{j}\right)+s_{j}\cdot q_{j}\left(s_{i},s_{j}\right)$,
is a mixed average of $s_{i}$ and $s_{j}$. The fact that the Nash
equilibrium is in the bottom left corner of the set of BBE outcomes
implies that all BBE outcomes (except the Nash equilibrium itself)
strictly improve social welfare relative to the Nash equilibrium (as
measured by the sum of payoffs of the two firms).

Next, we make two observations regarding the implications of the extent
of wishful thinking on the players' payoffs (both observations hold
also for the input games in Example \ref{exam-partnership-game} in
Appendix \ref{subsec:Examples-of-Games-complements}):
\begin{enumerate}
\item Increasing the wishful thinking of both players improves the players'
payoffs. Specifically, with respect to symmetric BBE outcomes, a higher
level of wishful thinking induces a higher equilibrium price and a
higher payoff to the players: a wishful thinking level of $x^{*}\equiv\psi^{*}\left(s^{*}\right)-s^{*}\in\left[0,1\right]$
induces the symmetric BBE price $x^{*}+t=x^{*}+1$ (which is implied
by the perceived bast-reply equation $s^{*}=\frac{\psi^{*}\left(s^{*}\right)+t}{2}=\frac{s^{*}+x^{*}+t}{2}$),
which yields a payoff of $\frac{x^{*}+1}{2}$ to each player.
\item When the wishful thinking levels of the two players differ, the player
with the higher wishful thinking level has a lower payoff. This is
because the difference between the payoffs of a firm with price $s_{i}$
and an opponent with price $s_{j}<s_{i}$ is equal to: 
\[
\pi_{i}-\pi_{j}=s_{i}\cdot\left(\frac{s_{j}-s_{i}+1}{2}\right)-s_{j}\cdot\left(\frac{s_{i}-s_{j}+1}{2}\right)=0.5\left(s_{j}\left(s_{j}-1\right)-s_{i}\left(s_{i}-1\right)\right)<0.
\]
\end{enumerate}
Intuitively, wishful thinking is like a public good in this setup:
(1) a higher level of wishful thinking is beneficial to social welfare,
and (2) if the two players have different levels of wishful thinking,
the player with the higher level obtains a lower payoff.

We conclude the example by presenting a symmetric biased belief $\psi_{1}^{*}=\psi_{2}^{*}$
that supports the outcome $\left(2,2\right)$ as the BBE $\left(\left(\psi_{1}^{*},\psi_{2}^{*}\right),\left(2,2\right)\right)$
in the game with $M=3$ and $t=1$: 
\[
\psi_{i}^{*}\left(s_{j}\right)=\begin{cases}
3 & s_{j}>2\\
2\cdot s_{j}-1 & s_{j}\in\left[0.5,2\right]\\
0 & s_{j}<0.5.
\end{cases}
\]
Observe that: (1) this BBE yields a payoff of 1 to each player and
(2) the biased belief presents wishful thinking, i.e., $\psi_{i}^{*}\left(2\right)=3>2$.
Further observe that a player with biased belief $\psi_{i}^{*}$ plays
the same strategy as the opponent (regardless of the opponent's biased
belief) in any equilibrium of the biased game in which the opponent
plays any intermediate value of $s_{j}$ (i.e., $s_{j}\in\left[0.5,2\right]$)
:
\[
s_{i}\left(\psi_{i}^{*}\left(s_{j}\right)\right)=\begin{cases}
s_{i}\left(3\right)=2 & s_{j}>2\\
s_{i}\left(2\cdot s_{j}-1\right)=0.5\cdot\left(2\cdot s_{j}-1+1\right)=s_{j} & s_{j}\in\left[0.5,2\right]\\
s_{i}\left(0\right)=0.5 & s_{j}<0.5.
\end{cases}
\]
This implies that the equilibrium payoff of a deviating player $j$
who plays strategy $s_{j}$ is equal to:
\[
\pi_{j}\left(s_{j},s_{i}\left(\psi_{i}^{*}\left(s_{j}\right)\right)\right)=\begin{cases}
s_{j}\cdot0.5\cdot\left(2-s_{j}+1\right)=s_{j}\cdot0.5\cdot\left(3-s_{j}\right)<1 & s_{j}>2\\
s_{j}\cdot0.5\cdot q\left(s_{j},s_{j}\right)=0.5\cdot s_{j} & s_{j}\in\left[0.5,2\right]\\
s_{j}\cdot0.5\cdot\left(0.5-s_{j}+1\right)=s_{j}\cdot0.5\cdot\left(1.5-s_{j}\right)<0.25 & s_{j}<0.5,
\end{cases}
\]
 and it is at most 1, which implies that a deviator cannot gain from
his deviation. 

Finally, note that Figure \ref{fig:The-Set-of-differiantaed-goods}
shows that the two Stackelberg-leader equilibria (the unique subgame-perfect
equilibrium of the sequential games in which one of the players plays
first, and the opponent replies after observing the leader's strategy)
are included in the set of BBE, as is proven in general in Proposition
\ref{Prop-Stackelberg}.
\end{example}

\subsection{Games with Strategic Substitutes\label{sec:Wishful-thinking-and-substitute}}

Our next result characterizes the set of BBE outcomes in games with
strategic substitutes (and positive externalities). It shows that
a strategy profile is a BBE outcome essentially iff (I) it is undominated,
(II) it yields a payoff above the undominated/biased-belief minmax
payoff to both players, and (III) both players underinvest (i.e.,
use a weakly lower strategy than the best reply to the opponent).
Formally:
\begin{prop}
\label{prop-substitues}Let $G$ be a game with \emph{strategic substitutes
and positive externalities}.\emph{ }
\begin{enumerate}
\item Let $\left(s_{1}^{*},s_{2}^{*}\right)$ be a BBE outcome. Then $\left(s_{1}^{*},s_{2}^{*}\right)$
has the following properties: (I) it is undominated, and if satisfies
for each player $i$: (II) $\pi_{i}\left(s_{i}^{*},s_{j}^{*}\right)\geq M_{i}^{U}$
and (III) $s_{i}^{*}\leq\max\left(BR\left(s_{j}^{*}\right)\right)$
(underinvestment).
\item Let $\left(s_{1}^{*},s_{2}^{*}\right)$ be an undominated profile
that satisfies, for each player $i$: (II) $\pi_{i}\left(s_{i}^{*},s_{j}^{*}\right)>\tilde{M}_{i}^{U}$
, and (III) $s_{i}^{*}\leq\max\left(BR\left(s_{j}^{*}\right)\right)$.
Then, $\left(s_{1}^{*},s_{2}^{*}\right)$ is a BBE outcome. \\
Moreover, if $\pi_{i}\left(s_{i},s_{j}\right)$ is strictly concave
then $\left(s_{1}^{*},s_{2}^{*}\right)$ is a strong BBE outcome.
\end{enumerate}
\end{prop}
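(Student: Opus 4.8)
The plan is to follow the proof of Proposition~\ref{prop-supermodular-monotone} essentially verbatim in structure, reversing every inequality that is inherited from the best-reply correspondence (which is now \emph{decreasing}) while leaving the positive-externality comparisons unchanged. For Part~1, properties (I) and (II) are exactly Proposition~\ref{prop-neccesary-conditions}. For (III) I would argue by contradiction: suppose $s_i^*>\max\left(BR\left(s_j^*\right)\right)$ and let player $i$ deviate so as to invest slightly \emph{less} than $s_i^*$. Because $s_i^*$ lies strictly above the best reply to the real strategy $s_j^*$, weak concavity of $\pi_i\left(\cdot,s_j^*\right)$ already makes this reduction profitable against an unchanged opponent. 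Simultaneously, monotonicity of $\psi_j^*$ (which a BBE requires) turns the lower real investment into a weakly lower \emph{perceived} investment, and strategic substitutability makes every best reply of the opponent satisfy $s_j'\geq s_j^*$; positive externalities then make this reaction beneficial to $i$ as well. The two effects reinforce each other, so $i$ strictly outperforms $\pi_i\left(s_i^*,s_j^*\right)$ in every equilibrium of the deviated game, contradicting the BBE condition.

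For Part~2 I would exhibit a monotone supporting profile built symmetrically for the two players. The belief $\psi_j^*$ of the opponent is \emph{blind below the equilibrium} and \emph{punishing above it}: it maps every $s_i'\leq s_i^*$ to a fixed element of $BR^{-1}\left(s_j^*\right)$ (so that $j$ reacts with $s_j^*$), and for $s_i'>s_i^*$ it increases continuously up to the perception $\max\left(S_i^U\right)$ that attains the biased-belief minmax $\tilde M_i^U$. This function is nondecreasing, hence monotone, and continuous. A downward deviation is deterred with no further work: in the blind region the opponent keeps playing $s_j^*$, and since $s_i^*\leq\max\left(BR\left(s_j^*\right)\right)$ weak concavity gives $\pi_i\left(s_i',s_j^*\right)\leq\pi_i\left(s_i^*,s_j^*\right)$; plausibility rules out equilibria in which, facing an unchanged perception, the opponent nonetheless abandons $s_j^*$, so I may take the equilibrium with $s_j'=s_j^*$. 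An upward deviation is deterred by the punishment: once the perception is driven to $\max\left(S_i^U\right)$ the opponent's reaction holds player $i$ to at most $\tilde M_i^U$, which is strictly below $\pi_i\left(s_i^*,s_j^*\right)$ by hypothesis. Because $\tilde M_i^U$ is defined with the best reply selected in $i$'s favor, this bound is robust to which best reply the opponent actually plays. Running the symmetric argument for player $j$ yields a BBE, and under strict concavity best replies are single-valued (and $\tilde M_i^U=M_i^U$), so the punishment binds in \emph{every} equilibrium and the conclusion strengthens to a strong BBE.

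The delicate step, which I expect to be the main obstacle, is the upward punishment close to the equilibrium. For $s_i'$ just above $s_i^*$ the direct incentive still points upward (player $i$ underinvests), so a perception that rises only gradually would leave marginal upward deviations profitable. I would therefore make $\psi_j^*$ rise steeply enough immediately above $s_i^*$ that the induced drop in the opponent's investment outweighs $i$'s direct gain. Writing $g\left(s_i'\right)=\pi_i\left(s_i',BR\left(\psi_j^*\left(s_i'\right)\right)\right)$, the requirement is that $g$ be non-increasing on $\left[s_i^*,\max\left(S_i\right)\right]$; differentiating, the reaction term $\partial\pi_i/\partial s_j\cdot BR'\left(\psi_j^*\left(s_i'\right)\right)\cdot\left(\psi_j^*\right)'\left(s_i'\right)$ is negative and, once $\left(\psi_j^*\right)'$ is taken large enough, dominates the (possibly positive) direct term $\partial\pi_i/\partial s_i$. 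The strict gap $\pi_i\left(s_i^*,s_j^*\right)>\tilde M_i^U$ is exactly what guarantees there is room to absorb the direct gain while the range of $\psi_j^*$ stays within $S_i$; verifying that such a continuous, monotone belief exists and that, together with plausibility, it controls every relevant equilibrium of every deviated game is the crux of the formal argument.
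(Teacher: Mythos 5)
Your overall architecture coincides with the paper's: Part 1 is Proposition \ref{prop-neccesary-conditions} plus a downward-deviation contradiction exploiting monotonicity, strategic substitutes, positive externalities, and plausibility; Part 2 is the same ``blind below the equilibrium, punishing above it'' monotone belief, with downward deviations neutralized by plausibility and concavity, and upward deviations deterred using the strict gap $\pi_{i}\left(s_{i}^{*},s_{j}^{*}\right)>\tilde{M}_{i}^{U}$. However, two steps need repair. In Part 1, a deviation in this model is a \emph{belief}, so ``invest slightly less'' must be implemented as, e.g., the paper's blind belief $\psi'_{i}\equiv s_{j}^{*}$, which forces $s'_{i}\in BR\left(s_{j}^{*}\right)$ in every equilibrium of the deviated game. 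This choice is not mere formalization: your claim that ``every best reply of the opponent satisfies $s'_{j}\geq s_{j}^{*}$'' fails in the boundary case $\psi_{j}^{*}\left(s'_{i}\right)=\psi_{j}^{*}\left(s_{i}^{*}\right)$, where the opponent's best-reply set is unchanged and, under weak concavity, may contain elements below $s_{j}^{*}$. The paper handles exactly this case by plausibility, and with the blind deviation the implausibility condition $\left(s'_{i},s_{j}^{*}\right)\in NE\left(G_{\left(\psi'_{i},\psi_{j}^{*}\right)}\right)$ holds automatically (since $s'_{i}\in BR\left(s_{j}^{*}\right)$ regardless of $s'_{j}$), so every plausible equilibrium indeed has $s'_{j}=s_{j}^{*}$; with an unspecified deviation this step does not go through.

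In Part 2, your calibration of the punishment by derivative domination---making $\left(\psi_{j}^{*}\right)'$ steep so that $\partial\pi_{i}/\partial s_{j}\cdot BR'\cdot\left(\psi_{j}^{*}\right)'$ beats the direct term---presupposes that $BR$ is a single-valued differentiable function, i.e., strict concavity; but the BBE claim of Part 2 is stated for the paper's standing assumption of weak concavity, where $BR$ is a correspondence. The paper avoids derivatives entirely: it defines $X_{s^{*}}\left(s_{i}\right)$ as the set of perceptions all of whose best replies punish $s_{i}$, sets $\phi_{s^{*}}=\inf X_{s^{*}}$, shows that $\phi_{s^{*}}$ is continuous with $\lim_{s_{i}\searrow s_{i}^{*}}\phi_{s^{*}}\left(s_{i}\right)=s_{i}^{e}$ (nonemptiness coming from the gap over $\tilde{M}_{i}^{U}$, the interval structure from substitutes), and then takes any continuous monotone belief lying weakly above $\phi_{s^{*}}$. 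Relatedly, your requirement that $g\left(s'_{i}\right)=\pi_{i}\left(s'_{i},BR\left(\psi_{j}^{*}\left(s'_{i}\right)\right)\right)$ be non-increasing on all of $\left[s_{i}^{*},\max\left(S_{i}\right)\right]$ is generally unattainable: once $\psi_{j}^{*}$ saturates at its ceiling, $g$ can rise again (up to $\tilde{M}_{i}^{U}$). What is needed---and what the gap buys---is only $g\leq\pi_{i}\left(s_{i}^{*},s_{j}^{*}\right)$ throughout. Your steepness idea is essentially the Lipschitz-constant argument the paper uses for Proposition \ref{pro-interval-strong-continous-folk}; it delivers the strictly concave case (and hence the ``moreover'' strong-BBE part), but the weakly concave case is precisely where the paper's $\phi_{s^{*}}$ machinery does the real work.
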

The proof, which is analogous to the proof of Proposition \textbf{\ref{prop-supermodular-monotone}},
is presented in Appendix \ref{subsec:Proof-of-Proposition-substitutes}.

An immediate corollary of Proposition \ref{prop-supermodular-monotone}
is that in each BBE outcome, at least one of the players invests less
relative to his maximal Nash equilibrium investment. Formally:
\begin{cor}
\label{cor-no-efficient-outcome-substitutes}Let $G$ be a game with
strategic substitutes and positive externalities. Let $\left(s_{1}^{*},s_{2}^{*}\right)$
be a BBE outcome. Then, there exists a Nash equilibrium of the underlying
game $\left(s_{1}^{e},s_{2}^{e}\right)$, and a player $i$ such that
$s_{i}^{e}\geq s_{i}^{*}$.
\end{cor}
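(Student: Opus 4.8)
The plan is to combine the underinvestment property of BBE outcomes (part~1.III of Proposition~\ref{prop-substitues}) with the decreasing best-reply structure of games with strategic substitutes, and then to exhibit one ``extremal'' Nash equilibrium against which underinvestment forces a coordinate of the BBE outcome to be weakly dominated. Concretely, I would record two ingredients. First, by Proposition~\ref{prop-substitues} applied to player~$1$, any BBE outcome satisfies $s_{1}^{*}\leq\max\left(BR\left(s_{2}^{*}\right)\right)$. Second, since the game has decreasing differences, standard monotone comparative statics (Topkis) imply that the maximal best-reply selections $\beta_{1}\left(s_{2}\right)=\max\left(BR\left(s_{2}\right)\right)$ and $\beta_{2}\left(s_{1}\right)=\max\left(BR\left(s_{1}\right)\right)$ are each weakly decreasing, so that the composite $g=\beta_{1}\circ\beta_{2}:S_{1}\to S_{1}$ is weakly increasing on the compact interval $S_{1}$.

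Next I would construct the required Nash equilibrium. As a weakly increasing self-map of a compact interval (a complete lattice), $g$ admits a fixed point $s_{1}^{e}$ by Tarski's fixed-point theorem; setting $s_{2}^{e}=\beta_{2}\left(s_{1}^{e}\right)$ gives $s_{1}^{e}=\beta_{1}\left(s_{2}^{e}\right)\in BR\left(s_{2}^{e}\right)$ and $s_{2}^{e}=\beta_{2}\left(s_{1}^{e}\right)\in BR\left(s_{1}^{e}\right)$, so $\left(s_{1}^{e},s_{2}^{e}\right)$ is a Nash equilibrium of the underlying game in which player~$1$ plays her \emph{maximal} best reply. This is precisely the feature I need, and it is exactly the one not guaranteed by merely taking, say, the equilibrium with the largest first coordinate.

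To close the argument I would argue by contradiction: suppose the conclusion fails, so that \emph{every} Nash equilibrium has both coordinates strictly below the BBE outcome; in particular $s_{1}^{e}<s_{1}^{*}$ and $s_{2}^{e}<s_{2}^{*}$ at the equilibrium just built. Using $s_{2}^{e}<s_{2}^{*}$, the fact that $\beta_{1}$ is weakly decreasing, and underinvestment, I obtain
\[
s_{1}^{e}=\beta_{1}\left(s_{2}^{e}\right)\geq\beta_{1}\left(s_{2}^{*}\right)=\max\left(BR\left(s_{2}^{*}\right)\right)\geq s_{1}^{*},
\]
contradicting $s_{1}^{e}<s_{1}^{*}$. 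Hence some Nash equilibrium satisfies $s_{i}^{e}\geq s_{i}^{*}$ for a player $i$ (namely $i=1$ at the constructed equilibrium), which is the claim. Note that only player~$1$'s underinvestment, and neither positive externalities nor player~$2$'s underinvestment, enters beyond its role in establishing Proposition~\ref{prop-substitues}.

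I expect the only delicate point to be the justification that the maximal best-reply selections are weakly decreasing and that their composition is a legitimate monotone self-map to which Tarski applies---equivalently, that an extremal Nash equilibrium built from maximal best replies exists. This is the standard lattice-theoretic structure of games with strategic substitutes (weak concavity makes $BR$ a nonempty compact-valued correspondence, so the max selection is well defined), and once it is invoked the remaining inequalities are immediate, consistent with the statement being an ``immediate corollary.''
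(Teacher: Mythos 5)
Your proof is correct, and it takes a genuinely different route from the paper's. The paper derives the corollary from Lemma \ref{lemma-no-better-than-all-Nash-both-players} (Appendix \ref{subsec:Proof-of-Lemma-3}): it restricts both players' strategy sets to $\left[s_{i}^{*},\max\left(S_{i}\right)\right]$, invokes existence of a pure Nash equilibrium of this restricted game (supermodularity after relabeling one player's strategies), notes that this restricted equilibrium cannot be an equilibrium of the unrestricted game, and concludes by a boundary argument that some player strictly overinvests, i.e., $s_{i}^{*}>\max\left(BR\left(s_{j}^{*}\right)\right)$, which contradicts part (1.III) of Proposition \ref{prop-substitues}; since one does not know which player ends up at the boundary, the paper needs underinvestment of \emph{both} players. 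You instead build an extremal equilibrium of the \emph{unrestricted} game directly, via Tarski applied to the weakly increasing composition $\beta_{1}\circ\beta_{2}$ of maximal best-reply selections, so that player 1's equilibrium strategy is by construction the maximal best reply to $s_{2}^{e}$; a single monotonicity step plus player 1's underinvestment then closes the argument. Your route buys two things. First, it uses only one player's underinvestment, as you note. Second, and more substantively, it sidesteps the delicate final step of the paper's lemma: the paper's chain asserts $\max\left(BR\left(s'_{j}\right)\right)\geq\max\left(BR\left(s_{j}^{*}\right)\right)$, but the restricted equilibrium satisfies $s'_{j}\geq s_{j}^{*}$, and with decreasing best replies this gives the \emph{opposite} weak inequality, so the paper's chain is justified only when $s'_{j}=s_{j}^{*}$ (in the complements-case counterpart, Lemma \ref{lemma-no-worse-than-worset-Nash}, the restriction direction and the increasing best replies do align, which is presumably why the same template was reused). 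Your construction pins the equilibrium down precisely so that the needed comparison, $s_{2}^{e}<s_{2}^{*}$ implying $\beta_{1}\left(s_{2}^{e}\right)\geq\beta_{1}\left(s_{2}^{*}\right)$, runs in the right direction; in that sense your asymmetric argument is not only different but more robust than the paper's, whose main advantage is a symmetric lemma statement that mirrors the complements case.
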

\begin{proof}
The result is immediate from part (1-III) of Proposition \ref{prop-supermodular-monotone}
(namely, that both agents weakly underinvest in any  BBE outcome),
and the observation (which is formally proved in Lemma \ref{lemma-no-better-than-all-Nash-both-players}
in Appendix \ref{subsec:Proof-of-Lemma-3}) that if the effort of
each player $s_{i}^{*}$ is strictly below all of his Nash equilibrium
efforts, then at least one of the players strictly underinvests.
\end{proof}
Corollary \ref{cor-no-efficient-outcome-substitutes} shows that the
notion of BBE rules out socially good outcomes in which both players
invest more effort relative to their maximal Nash equilibrium effort.
In particular, in a Cournot competition (see Example \ref{exam-Cournot}
below), the corollary implies that a collusive outcome in which both
players retain more unused capacity relative to the unique Nash equilibrium. 

Combining Corollary \ref{cor-no-bad-BBE-outcomes-strategic-complements}
and Corollary \ref{cor-no-efficient-outcome-substitutes} implies
the following\emph{ empirical prediction of our model and the notion
of  BBE: efficient (non-Nash equilibrium) outcomes are easier to support
in games with strategic complements, relative to games with strategic
substitutes. }This prediction is consistent with the experimental
findings of \citet{potters2009cooperation}, which show that there
is significantly more cooperation in games with strategic complements
than in games with strategic substitutes. 

The following corollary shows that in games with strategic substitutes,
as in games with strategic complements, there is the a close relation
between BBE and wishful thinking. Specifically, it shows that any
biased belief in any BBE (with a non-extreme outcome) of a game with
strategic substitutes exhibits wishful thinking. The intuition is
that wishful thinking causes an agent to believe that the opponent
is playing a higher action, which induces the agent to respond with
a lower action, which, in turn, causes the opponent to respond by
playing a higher action, which benefits the agent. 
\begin{cor}
\label{cor-wishful-tiniking-with-substitutes}Let $G$ be a game with
positive externalities and strategic substitutes. Let $\left(\left(\psi_{1}^{*},\psi_{2}^{*}\right),\left(s_{1}^{*},s_{2}^{*}\right)\right)$
be a BBE. If $s_{i}^{*}\notin\left\{ \min\left(S_{i}\right),\max\left(S_{i}\right)\right\} $,
\emph{then player $i$ exhibits wishful thinking (i.e., $\psi_{i}^{*}\left(s_{j}^{*}\right)\geq s_{j}^{*}$). }
\end{cor}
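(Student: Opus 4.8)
The plan is to mirror the argument used for strategic complements in Corollary \ref{cor-wishful-tiniking-with-complements}, with the direction of monotonicity of the best-reply correspondence reversed. I would argue by contradiction: suppose player $i$ does \emph{not} exhibit wishful thinking, so that (in a game with positive externalities) the perceived opponent strategy is strictly too low, $\psi_i^*\left(s_j^*\right) < s_j^*$. Because $G$ has strategic substitutes, $BR$ is \emph{decreasing} in the opponent's strategy, so a strictly lower perceived opponent strategy yields weakly higher best replies. The first key step is therefore to establish the substitutes-analogue of the complementarity inequality used in the other corollary, namely
\[
\min\left(BR\left(\psi_i^*\left(s_j^*\right)\right)\right) \geq \max\left(BR\left(s_j^*\right)\right),
\]
with equality only if the common value lies in $\left\{ \min\left(S_i\right),\max\left(S_i\right)\right\}$. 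This is the exact counterpart of Lemma \ref{lemma-wishful-somplements}, and I would prove it the same way, using that $\frac{\partial\pi_i}{\partial s_i}$ is strictly decreasing in $s_j$, so a shared interior maximizer for two distinct opponent strategies is incompatible with strict substitutability.

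Next I would bring in the two facts that sandwich $s_i^*$ between these quantities. On one hand, $s_i^*$ is a best reply to the perceived strategy in the BBE, so $s_i^* \in BR\left(\psi_i^*\left(s_j^*\right)\right)$ and hence $s_i^* \geq \min\left(BR\left(\psi_i^*\left(s_j^*\right)\right)\right)$. On the other hand, part (1.III) of Proposition \ref{prop-substitues} (underinvestment) gives $s_i^* \leq \max\left(BR\left(s_j^*\right)\right)$. Chaining these with the inequality from the first step yields
\[
\max\left(BR\left(s_j^*\right)\right) \leq \min\left(BR\left(\psi_i^*\left(s_j^*\right)\right)\right) \leq s_i^* \leq \max\left(BR\left(s_j^*\right)\right),
\]
so every inequality must be an equality. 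In particular the substitutes inequality holds with equality, which by the first step forces $s_i^* = \max\left(BR\left(s_j^*\right)\right) \in \left\{ \min\left(S_i\right),\max\left(S_i\right)\right\}$, contradicting the hypothesis that $s_i^*$ is non-extreme. This yields the desired conclusion $\psi_i^*\left(s_j^*\right) \geq s_j^*$.

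The main obstacle is the boundary/equality analysis in the first step. The weak monotonicity $\min\left(BR\left(\psi_i^*\left(s_j^*\right)\right)\right) \geq \max\left(BR\left(s_j^*\right)\right)$ follows routinely from decreasing differences, but ruling out a non-extreme point of equality requires the careful argument that a common interior best reply to two distinct opponent strategies cannot coexist with strict substitutability; I would therefore isolate this as a separate lemma in the appendix, exactly as was done for the complements case. Everything else — the contradiction hypothesis, the use of underinvestment from Proposition \ref{prop-substitues}, and the membership $s_i^* \in BR\left(\psi_i^*\left(s_j^*\right)\right)$ — transcribes directly from the proof of Corollary \ref{cor-wishful-tiniking-with-complements} with the order of the extremal selections reversed.
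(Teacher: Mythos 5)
Your proposal is correct and takes essentially the same approach as the paper: the proof in Appendix \ref{subsec:Proof-of-Corollary-wishful-substitutes} likewise assumes $\psi_{i}^{*}\left(s_{j}^{*}\right)<s_{j}^{*}$ for contradiction, sandwiches $s_{i}^{*}$ via $\min\left(BR\left(\psi_{i}^{*}\left(s_{j}^{*}\right)\right)\right)\leq s_{i}^{*}\leq\max\left(BR\left(s_{j}^{*}\right)\right)$ using the BBE best-reply condition together with underinvestment from Proposition \ref{prop-substitues}, and invokes exactly the auxiliary result you describe (the paper's Lemma \ref{lemma-wishful-substitute}), whose equality case forces $s_{i}^{*}$ to be extreme. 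Even your choice to isolate the boundary/equality analysis as a separate appendix lemma mirrors the paper's structure.
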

The proof, which is analogous to the proof of Corollary \ref{cor-wishful-tiniking-with-complements},
is presented in Appendix \ref{subsec:Proof-of-Corollary-wishful-substitutes}.

The following example characterizes the set of BBE outcomes in a Cournot
competition. Appendix \ref{subsec:Hawk-dove-game} presents an analysis
of another game of strategic substitutes: the hawk-dove game.
\begin{example}[\emph{Cournot competition with linear demand}]
\label{exam-Cournot} Consider a symmetric Cournot competition, where
we relabel the set of strategies to describe unused capacity, rather
than quantity, in order to follow the normalization of positive externalities.
Formally, let $G=\left(S,\pi\right)$: $S_{i}=\left[0,1\right]$ and
$\pi_{i}\left(s_{i},s_{j}\right)=\left(1-s_{i}\right)\cdot\left(s_{i}+s_{j}-1\right)$
for each player $i$. Each $s_{i}$ is interpreted as the unused capacity
(= one minus the quantity, i.e., $s_{i}=1-q_{i}$) chosen by firm
$i$, the price of both goods is determined by the linear inverse
demand function $p=1-q_{i}-q_{j}=s_{i}+s_{j}-1$, and the marginal
cost of each firm is normalized to be zero. 

Observe that:
\begin{enumerate}
\item $BR\left(s_{i}\right)=1-\frac{s_{i}}{2}$, and the unique Nash equilibrium
of the game is $s_{1}^{*}=s_{2}^{*}=\frac{2}{3}$, which yields a
payoff of $\frac{1}{9}$ to both players. 
\item The set of undominated strategies of each player is the interval $\left[0.5,1\right]$
(where $1$ is the best reply against 0, and 0.5 is the best reply
against 1).
\item The symmetric Pareto optimal profile (which is also undominated) is
$s_{i}=s_{j}=\frac{3}{4}$, yielding a payoff of $\frac{1}{8}$ to
each player.
\item The undominated minmax payoff $M_{i}^{U}=\frac{1}{16}$, which is
achieved by the opponent playing his lowest undominated strategy $s_{i}=0.5$.
\item The sum of payoffs of both players when they play profile $\left(s_{1},s_{2}\right)$
is $\pi_{1}\left(s_{1},s_{2}\right)+\pi_{2}\left(s_{1},s_{2}\right)=\left(2-\left(s_{i}+s_{j}\right)\right)\cdot\left(\left(s_{i}+s_{j}\right)-1\right)$,
which is an increasing function of $s_{i}+s_{j}$ in the domain of
undominated strategies $s_{i},s_{j}\geq0.5$.
\end{enumerate}
Applying the analysis of the previous subsection to a Cournot competition
shows that strategy profile $\left(s_{1},s_{2}\right)$ is a BBE outcome
iff it satisfies for each player $i$: (1) the strategy is undominated:
$s_{i}\geq0.5$, (2) the payoff is greater than the undominated minmax
payoff: $\left(1-s_{i}\right)\cdot\left(s_{i}+s_{j}-1\right)\geq\frac{1}{16}=M_{i}^{U}$,
and (3) underinvestment relative to the best reply against the opponent:
$s_{i}\leq BR\left(s_{j}\right)=1-\frac{s_{j}}{2}$. Due to having
a strictly concave payoff function, the set of BBE outcomes coincides
with the set of strong BBE outcomes. Figure \ref{fig:The-Set-of-differiantaed-goods}
shows this set of BBE outcomes (the strategy profiles that satisfy
the above three conditions). 
\begin{figure}[h]

\begin{centering}
\caption{\label{fig:The-Set-of-Cornot}The Set of (Strong) BBE Outcomes in
a Cournot Competition}
 \includegraphics[scale=0.8]{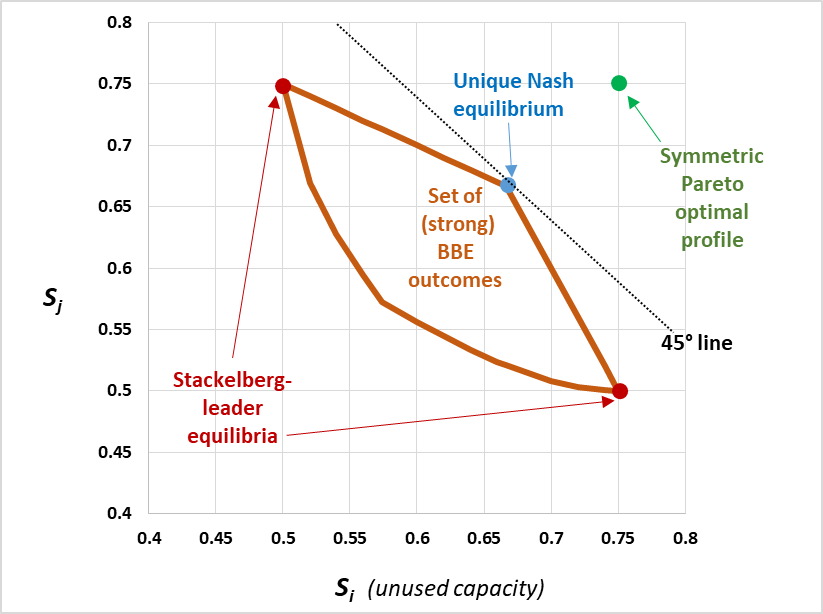}
\par\end{centering}
\end{figure}

Observe that the unique Nash equilibrium $\left(\frac{2}{3},\frac{2}{3}\right)$
is the profile that maximizes the sum $s_{i}+s_{j}$ within the set
of BBE. This implies that all other BBE outcomes yield lower social
welfare (as measured by the sum of payoffs) relative to the Nash equilibrium. 

Next, we make two observations regarding the implications of the level
of wishful thinking on the players' payoffs.
\begin{enumerate}
\item Increasing the wishful thinking of both players decreases the players'
payoffs. Specifically, when focusing on symmetric BBE outcomes, a
higher level of wishful thinking induces a lower level of unused capacity
and a lower payoff to both players; the higher level of production
is induced by the false assessment of each firm that the other firm
is producing less than it actually does.\footnote{A wishful thinking level of $x^{*}\equiv\psi^{*}\left(s^{*}\right)-s^{*}\in\left[0,0.28\right]$
induces a symmetric BBE unused capacity of $s^{*}=\frac{2-x^{*}}{3}$
(which is implied by the perceived best-reply equation $s^{*}=1-\frac{\psi^{*}\left(s^{*}\right)}{2}=1-\frac{s^{*}+x^{*}}{2}$).}
\item When the wishful thinking levels of the two players differ, the player
with the higher wishful thinking has a higher payoff. This is because
the difference between the payoffs of a firm with price $s_{i}$ and
an opponent with price $s_{j}<s_{i}$ is equal to 
\[
\pi_{i}-\pi_{j}=s_{i}\cdot\left(\frac{s_{j}-s_{i}+1}{2}\right)-s_{j}\cdot\left(\frac{s_{i}-s_{j}+1}{2}\right)=0.5\left(s_{j}\left(s_{j}-1\right)-s_{i}\left(s_{i}-1\right)\right)<0.
\]
\end{enumerate}
Thus, a higher level of wishful thinking is beneficial to social welfare,
but harms the player with the higher level (relative to the opponent's
payoff).

Finally, note that Figure \ref{fig:The-Set-of-Cornot} shows that
the two Stackelberg-leader equilibria (the unique subgame-perfect
equilibria of the sequential games in which one of the players plays
first, and the opponent replies after observing the leader's strategy)
are included in the set of BBE, as is proven in general in Proposition
\ref{Prop-Stackelberg}.
\end{example}

\subsection{Pessimism in Games with Opposing Differences\label{sec:Pessimism-in-Games}}

The results of the previous two subsections present a strong tendency
of  BBE to exhibit wishful thinking both in games with strategic complements
and in games with strategic substitutes. This raises the question
of which class of games induces pessimism. In this section we show
that the answer to this question is games with strategic opposites.
Recall that these are games in which the strategy of player 1 is a
complement of player 2's strategy, while the strategy of player 2
is a substitute of player 1's strategy, e.g., duopolistic competitions
in which one firms chooses its quantity while the opposing firm chooses
its price (\citealp{singh1984price}) and various classes of asymmetric
contests (\citealp{dixit1987strategic}). 

Proposition \ref{prop-strategic-opposites} characterizes the set
of BBE outcomes in games with strategic opposites (and positive externalities). 

It shows that a strategy profile is a BBE outcome essentially iff
(I) it is undominated, (II) it yields a payoff above the undominated/biased-belief
minmax payoff to both players, and (III) player 1 (for whom player
2's strategy is a complement) underinvests, while player 2 (for whom
player 1's strategy is a substitute) overinvests. Formally:
\begin{prop}
\label{prop-strategic-opposites}Let $G$ be a game\emph{ }with positive
externalities and strategic opposites\emph{: $\frac{\partial^{2}\pi_{1}\left(s_{1},s_{2}\right)}{\partial s_{1}\partial s_{2}}>0$
and $\frac{\partial^{1}\pi_{2}\left(s_{1},s_{2}\right)}{\partial s_{1}\partial s_{2}}<0$
for each pair of strategies $s_{1},s_{2}$}.\emph{ }
\begin{enumerate}
\item Let $\left(s_{1}^{*},s_{2}^{*}\right)$ be a BBE outcome. Then $\left(s_{1}^{*},s_{2}^{*}\right)$
is (I) undominated: (II) $\pi_{i}\left(s_{i}^{*},s_{j}^{*}\right)\geq M_{i}^{U}$
for each player $i$, and (III) $s_{1}^{*}\leq\max\left(BR\left(s_{2}^{*}\right)\right)$
and $s_{2}^{*}\geq\min\left(BR\left(s_{1}^{*}\right)\right)$ (i.e.,
player 1 underinvests and player 2 overinvests relative to the best
reply to the opponent).
\item Let $\left(s_{1}^{*},s_{2}^{*}\right)$ be a profile satisfying the
following conditions: (I) $\left(s_{1}^{*},s_{2}^{*}\right)$ is undominated,
(II) $\pi_{i}\left(s_{i}^{*},s_{j}^{*}\right)>\tilde{M}_{i}^{U}$
for each player $i$, and (III) $s_{1}^{*}\leq\max\left(BR\left(s_{2}^{*}\right)\right)$
and $s_{2}^{*}\geq\min\left(BR\left(s_{1}^{*}\right)\right)$. Then,
$\left(s_{1}^{*},s_{2}^{*}\right)$ is a BBE outcome. 
\end{enumerate}
\end{prop}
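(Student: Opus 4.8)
The plan is to follow the same two-part template used for Propositions \ref{prop-supermodular-monotone} and \ref{prop-substitues}, adapting the directions of manipulation to the asymmetric reaction structure. Because $\frac{\partial^2\pi_1}{\partial s_1\partial s_2}>0$ and $\frac{\partial^2\pi_2}{\partial s_1\partial s_2}<0$, player 1's best reply $BR_1$ is increasing in the perceived strategy of player 2, while player 2's best reply $BR_2$ is decreasing in the perceived strategy of player 1. Every step below is driven by this single fact.

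For Part 1 (necessary conditions), properties (I) and (II) are immediate from Proposition \ref{prop-neccesary-conditions}. For (III) I would argue by contradiction using blind-belief deviations, as in the complements/substitutes proofs but mirrored per player. Suppose player 1 strictly overinvests, $s_1^*>\max(BR(s_2^*))$. I let player 1 deviate to a blind belief that induces him to play some $\hat s_1\in(\max(BR(s_2^*)),s_1^*)$. Since the incumbent belief $\psi_2^*$ is monotone and $\hat s_1<s_1^*$, player 2 perceives $\psi_2^*(\hat s_1)\le\psi_2^*(s_1^*)$; because $BR_2$ is decreasing, in \emph{every} equilibrium of the deviated biased game player 2 responds with some $s_2'\ge s_2^*$. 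Positive externalities then give $\pi_1(\hat s_1,s_2')\ge\pi_1(\hat s_1,s_2^*)>\pi_1(s_1^*,s_2^*)$, the last inequality because $\hat s_1$ lies strictly between $\max(BR(s_2^*))$ and $s_1^*$ and $\pi_1(\cdot,s_2^*)$ is concave. Thus player 1 strictly gains in every equilibrium, contradicting the BBE condition. The symmetric argument — player 2 deviating upward to $\hat s_2\in(s_2^*,\min(BR(s_1^*)))$, inducing (via monotone $\psi_1^*$ and increasing $BR_1$) a response $s_1'\ge s_1^*$ that helps player 2 through the positive externality — rules out strict underinvestment by player 2.

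For Part 2 (sufficiency) I would construct a monotone, continuous profile $(\psi_1^*,\psi_2^*)$ supporting $(s_1^*,s_2^*)$, mirroring the ``blindness to good news / overreaction to bad news'' construction with the two players' regions reflected. Because player 2 reacts negatively to a higher perceived $s_1$, the dangerous deviation for player 1 is \emph{upward} (toward his best reply), so I make $\psi_2^*$ \emph{blind} (constant at some $c_2\le s_1^*$ with $s_2^*\in BR_2(c_2)$, i.e.\ pessimism) on $[\min(S_1),s_1^*]$ and sharply \emph{overreacting} (steeply increasing up to the punishing perception $\max(S_1^U)$) on $[s_1^*,\max(S_1)]$. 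Symmetrically, since player 1 reacts positively to a higher perceived $s_2$, the dangerous deviation for player 2 is \emph{downward}, so I make $\psi_1^*$ steeply overreacting on $[\min(S_2),s_2^*]$ and blind (constant at $c_1\le s_2^*$ with $s_1^*\in BR_1(c_1)$) on $[s_2^*,\max(S_2)]$. Underinvestment of player 1 and overinvestment of player 2, together with undominatedness, guarantee that such thresholds $c_1,c_2$ exist and that $(s_1^*,s_2^*)\in NE(G_{\psi^*})$; both beliefs are monotone and exhibit pessimism. To block deviations I check two regimes: in the blind regime the deviator is held to the equilibrium response and loses because he moves away from his best reply against it, while in the overreaction regime the opponent's punishing response holds the deviator to at most $\tilde M_i^U<\pi_i(s_i^*,s_j^*)$. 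For the plausibility requirement I always select the equilibrium in which the non-deviator keeps his original action whenever his perception is unchanged, which is plausible by Definition \ref{def-implausible} (this is why only BBE, not strong BBE, is claimed: under merely weak concavity there may be equilibria in which the non-deviator helps the deviator, but those are exactly the implausible ones).

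The main obstacle will be the \emph{transition region} of the overreacting belief. Since biased beliefs must be continuous, the punishing perception cannot be switched on discontinuously at the threshold, so just past $s_i^*$ the opponent's response is still close to the equilibrium response while the deviator may already enjoy a first-order direct gain from moving toward his best reply. The key estimate is that the slope of the overreaction can be made arbitrarily large — the belief need only rise within the bounded set $S_i$ over an arbitrarily short interval — so the induced first-order externality loss, of sign $\frac{\partial\pi_i}{\partial s_j}\cdot BR'\cdot(\psi^*)'$ which is strictly negative in the dangerous direction, dominates the bounded direct gain. The strict inequality $\pi_i(s_i^*,s_j^*)>\tilde M_i^U$ provides exactly the slack needed to push the deviator's payoff strictly below $\pi_i(s_i^*,s_j^*)$ throughout the short transition interval and down to $\tilde M_i^U$ beyond it. Verifying this bound uniformly, and confirming that every equilibrium of each deviated biased game falls into one of the two regimes, is the delicate part; the remainder parallels the proof of Proposition \ref{prop-supermodular-monotone}.
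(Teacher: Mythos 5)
Your Part 1 follows the paper's proof almost exactly (proof by contradiction via a blind-belief deviation, monotonicity of the incumbent belief, the sign of the cross-partials, and positive externalities), with one slip: your claim that in \emph{every} equilibrium of the deviated biased game player 2 responds with $s_2'\geq s_2^*$ is false in the case $\psi_2^*(\hat s_1)=\psi_2^*(s_1^*)$, where the best-reply set $BR(\psi_2^*(s_1^*))$ may contain responses strictly below $s_2^*$. The paper handles exactly this case with the plausibility refinement: when the perception is unchanged, $(\hat s_1,s_2^*)$ is itself an equilibrium of the deviated game, so every \emph{plausible} equilibrium keeps $s_2'=s_2^*$, and the contradiction goes through. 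This is fixable in one line, but as written the quantifier is wrong.

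The genuine gap is in Part 2, in the transition region you yourself flag as delicate. You take the blind perception to be ``some $c_2\leq s_1^*$ with $s_2^*\in BR_2(c_2)$'' and then run a linear ramp of ``arbitrarily large slope,'' claiming the first-order externality loss $\frac{\partial\pi_i}{\partial s_j}\cdot BR'\cdot(\psi^*)'$ dominates the deviator's first-order direct gain. This estimate presumes that player 2's best response falls at a rate bounded away from zero as the perception rises past $c_2$, and that is not guaranteed. If $c_2$ is not the \emph{maximal} element of $BR^{-1}(s_2^*)$, then perceptions slightly above $c_2$ still admit $s_2^*$ itself as a best reply, so there are equilibria of the deviated game in which the externality loss is exactly zero while the deviator's gain is first order; these equilibria are perfectly plausible (the perception has changed, so the plausibility refinement gives no help), and no finite slope repairs this, because for any fixed ramp width $\delta$ the perception change $\frac{\epsilon}{\delta}(P-c_2)$ vanishes as the deviation $\epsilon\rightarrow 0$. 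Even with $c_2$ chosen extremally, you still owe an argument that the response detaches at a linear rate (interior solutions give $\lvert BR'\rvert\geq\lvert\pi_{2,12}\rvert/\max\lvert\pi_{2,22}\rvert>0$ from the strict cross-partial, but corners and set-valued best replies under weak concavity need separate treatment). The paper's proof avoids the rate question entirely: for each deviation $s_1>s_1^*$ it defines the set $X_{s^*}(s_1)$ of perceptions that suffice to hold the deviator to at most $\pi_1(s_1^*,s_2^*)$, sets $\phi_{s^*}=\inf X_{s^*}$, shows $\phi_{s^*}$ is continuous and converges, as $s_1\searrow s_1^*$, to the blind perception chosen as the extreme point $\min(BR^{-1}(\cdot))$ resp. $\max(BR^{-1}(\cdot))$, and then threads a monotone continuous belief above $\phi_{s^*}$ (symmetrically below it on the other side). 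That limit statement is precisely the fact your steep-ramp heuristic is missing; without it, or an equivalent uniform rate bound, the sufficiency direction is not closed.
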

The proof, which is analogous to the proof of Proposition \textbf{\ref{prop-supermodular-monotone}},
is presented in Appendix \ref{subsec:Proof-of-Proposition-opposites}.\textbf{}\\

The following corollary shows that in games with strategic opposites,
there is a close relation between BBE and pessimism. Specifically,
it shows that any biased belief in any  BBE (with a non-extreme outcome)
of a game with strategic opposites exhibits pessimism. The intuition
is that pessimism causes player 1 to believe that player 2 is playing
a lower action, which induces player 1 to respond with a lower action,
which, in turn, causes player 2 to respond by playing a higher action,
which benefits player 1. Similarly, pessimism causes player 2 to believe
that player 1 is playing a lower action, which induces player 2 to
respond with a higher action, which, in turn, causes player 1 to respond
by playing a higher action, which benefits player 2. 
\begin{cor}
\label{cor-pessimism-withstrategic-opposites}Let $\left(\left(\psi_{1}^{*},\psi_{2}^{*}\right),\left(s_{1}^{*},s_{2}^{*}\right)\right)$
be a BBE of a game with \emph{positive externalities} and strategic
opposites (i.e.,\emph{$\frac{\partial^{2}\pi_{1}\left(s_{1},s_{2}\right)}{\partial s_{1}\partial s_{2}}>0$}
and $\frac{\partial^{2}\pi_{2}\left(s_{1},s_{2}\right)}{\partial s_{1}\partial s_{2}}<0$
for each pair of strategies\emph{ $s_{1},s_{2}$}). If $s_{i}^{*}\notin\left\{ \min\left(S_{i}\right),\max\left(S_{i}\right)\right\} $,
then player $i$ exhibits pessimism (i.e., $\psi_{i}^{*}\left(s_{j}^{*}\right)\leq s_{j}^{*}$). 
\end{cor}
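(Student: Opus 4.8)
The plan is to follow the template of the proof of Corollary~\ref{cor-wishful-tiniking-with-complements}, but to treat the two players separately, since in a game with strategic opposites player~1 faces a complement (so his best reply rises in the opponent's strategy) while player~2 faces a substitute (so her best reply falls in the opponent's strategy). In each case I argue by contradiction: I negate the pessimism conclusion, and then combine (i) the BBE optimality condition that each $s_i^{*}$ is a best reply to the perceived strategy $\psi_i^{*}(s_j^{*})$, (ii) the investment inequality from part~1(III) of Proposition~\ref{prop-strategic-opposites}, and (iii) the strict monotonicity of the best-reply correspondence implied by the strict cross-partial condition, to force $s_i^{*}$ into an extreme point of $S_i$, contradicting the hypothesis $s_i^{*}\notin\{\min(S_i),\max(S_i)\}$.

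For player~1, whose cross-partial $\frac{\partial^2 \pi_1}{\partial s_1 \partial s_2}>0$ makes $BR$ strictly increasing when interior, suppose for contradiction that $\psi_1^{*}(s_2^{*}) > s_2^{*}$. Strict complementarity then yields $\min\left(BR\left(\psi_1^{*}(s_2^{*})\right)\right) \geq \max\left(BR(s_2^{*})\right)$, with equality only if the common value lies in $\{\min(S_1),\max(S_1)\}$ (the analogue of Lemma~\ref{lemma-wishful-somplements}). The BBE condition gives $s_1^{*} \in BR\left(\psi_1^{*}(s_2^{*})\right)$, hence $s_1^{*} \geq \min\left(BR\left(\psi_1^{*}(s_2^{*})\right)\right)$, while the underinvestment bound of part~1(III) gives $s_1^{*} \leq \max\left(BR(s_2^{*})\right)$. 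Chaining these forces $s_1^{*} = \max\left(BR(s_2^{*})\right) = \min\left(BR\left(\psi_1^{*}(s_2^{*})\right)\right) \in \{\min(S_1),\max(S_1)\}$, contradicting $s_1^{*}\notin\{\min(S_1),\max(S_1)\}$.

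For player~2, whose cross-partial $\frac{\partial^2 \pi_2}{\partial s_1 \partial s_2}<0$ makes $BR$ strictly decreasing when interior, the argument is symmetric with the signs flipped. Suppose $\psi_2^{*}(s_1^{*}) > s_1^{*}$. Strict substitutability now gives $\max\left(BR\left(\psi_2^{*}(s_1^{*})\right)\right) \leq \min\left(BR(s_1^{*})\right)$, with equality only at an extreme. The BBE condition $s_2^{*} \in BR\left(\psi_2^{*}(s_1^{*})\right)$ gives $s_2^{*} \leq \max\left(BR\left(\psi_2^{*}(s_1^{*})\right)\right)$, and the overinvestment bound of part~1(III) gives $s_2^{*} \geq \min\left(BR(s_1^{*})\right)$. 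Chaining yields $s_2^{*} = \min\left(BR(s_1^{*})\right) = \max\left(BR\left(\psi_2^{*}(s_1^{*})\right)\right) \in \{\min(S_2),\max(S_2)\}$, again contradicting non-extremeness.

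The one delicate step, and the part I expect to require the most care, is the ``equality only at an extreme'' clause in each monotonicity statement, since this is precisely what upgrades a weak inequality into the conclusion that $s_i^{*}$ is extreme. In the complements case this is the content of Lemma~\ref{lemma-wishful-somplements}; for the substitute player an analogous lemma (with the best-reply correspondence strictly decreasing) is needed. Both rest on the strictness of the cross-partials together with the concavity of $\pi_i$ in $s_i$, which guarantees that an interior best reply moves \emph{strictly} with the opponent's strategy, so that two best-reply values at distinct arguments can coincide only when both are pushed to a boundary of $S_i$.
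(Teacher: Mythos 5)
Your proposal is correct and follows essentially the same route as the paper's own proof: argue by contradiction separately for each player, combine the best-reply condition $s_i^{*}\in BR\left(\psi_i^{*}\left(s_j^{*}\right)\right)$ with the under/overinvestment bounds from part~1(III) of Proposition~\ref{prop-strategic-opposites}, and invoke the strict-monotonicity-of-best-replies lemmas (Lemma~\ref{lemma-wishful-somplements} for player~1 and its substitutes analogue, Lemma~\ref{lemma-wishful-substitute}, for player~2) whose ``equality only at an extreme'' clause forces $s_i^{*}$ to a boundary of $S_i$. You also correctly identified that the substitutes-side lemma is the only ingredient not already stated for the complements case, which is exactly how the paper organizes its appendix proof.
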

The proof, which is analogous to the proof of Corollary \ref{cor-wishful-tiniking-with-complements},
is presented in Appendix \ref{subsec:Proof-of-Corollary-opposites}.

Next, we present an example of a game with strategic opposites, and
we characterize the set of BBE in this game.
\begin{example}[\emph{Matching pennies with positive externalities}]
\label{exam-matching-pennies-with-positive-externalities}The game
presented in Table \ref{tab:matching-pennies-and-PD}, a variant of
the matching pennies game, is played as follows:
\begin{enumerate}
\item Player 1 (player 2) gains 1 utility point from matching (mismatching)
his opponent.
\item Each player $i$ induces a gain of 3 utility points to his opponent
by choosing heads (action $h_{i}$).
\end{enumerate}
The game admits a unique Nash equilibrium $\left(0.5,0.5\right)$
with a payoff of 1.5 to each player. The (undominated) minmax payoff
of each player is 1 (obtained when the opponent plays $t_{j}$). Observe
that the game has positive externalities, that the strategy of player
2 is a strategic complement for player 1, while the strategy of player
1 is a strategic substitute for player 2.

\begin{table}[h]
\caption{\label{tab:matching-pennies-and-PD}Matching Pennies with Positive
Externalities$\protect\underset{}{}$}

\centering{}%
\begin{tabular}{|c|c|c|}
\hline 
 & $h_{2}$ & $t_{2}$\tabularnewline
\hline 
$h_{1}$ & $\begin{array}{c}
\\
\\
\end{array}4,2\begin{array}{c}
\\
\\
\end{array}$ & $-1,4$\tabularnewline
\hline 
$t_{1}$ & $\begin{array}{c}
\\
\\
\end{array}2,1\begin{array}{c}
\\
\\
\end{array}$ & $1,-1$\tabularnewline
\hline 
\end{tabular}
\end{table}

Applying the analysis of the previous section shows that the game
admits 2 classes of BBE:
\begin{enumerate}
\item A class in which the players mix while giving a larger weight to playing
heads (the action with positive externalities), pessimism, and one-directional
blindness. Specifically, each  BBE in this class $\left(\left(\psi_{1}^{*},\psi_{2}^{*}\right),\left(\beta_{1},\beta_{2}\right)\right)$
satisfies for each player $i$: (I) $\beta_{i}\in\left[0.5,1\right]$
(i.e., both players play heads more frequently than in the unique
Nash equilibrium), (II) pessimism: $\psi_{i}^{*}\left(\beta_{j}\right)=0.5<\beta_{j}$,
and (III) one-sided blindness: $\psi_{1}^{*}\left(\alpha\right)=0.5$
for each $\alpha\geq\beta_{2}$; $\psi_{1}^{*}\left(\alpha\right)<0.5$
for each $\alpha<\beta_{1}$; $\psi_{2}^{*}\left(\alpha\right)=0.5$
for each $\alpha\leq\beta_{2}$; and $\psi_{2}^{*}\left(\alpha\right)<0.5$
for each $\alpha>\beta_{2}$.
\item A class in which player 1 mixes while giving more weight to tails,
while player 2 plays heads. Both players exhibit pessimism. Specifically,
each BBE in this class $\left(\left(\psi_{1}^{*},\psi_{2}^{*}\right),\left(\beta_{1},\beta_{2}\right)\right)$
satisfies for each player $i$: (I) $\beta_{1}\in\left[0,0.5\right]$
and $\beta_{2}=1$ (i.e., player 1 plays tails more frequently than
in the unique Nash equilibrium, while player 2 always plays heads),
(II) pessimism for player 1: $\psi_{1}^{*}\left(\beta_{2}=1\right)=0.5<1$,
and $\psi_{2}^{*}\left(\beta_{1}\right)=0.5$ (player 2 is not pessimistic,
due to the fact that he chooses the extreme action 1), and (III) $\psi_{2}^{*}\left(\alpha\right)>0.5$
for each $\alpha>\beta_{1}$. 
\end{enumerate}
Observe that any profile $\left(\beta_{1},\beta_{2}\right)$, where
$\beta_{2}<0.5$ or ($\beta_{1}<0.5$ and $\beta_{2}<1$), cannot
be a BBE outcome:
\end{example}
\begin{enumerate}
\item If $\beta_{2}<0.5$ and $\beta_{1}=0$, then player 2's payoff is
negative, and less than his undominated minmax payoff of 1.
\item If $\beta_{2}<0.5$ and $\beta_{1}>0$, then player 1 can gain by
deviating to $\psi'_{1}\equiv0$, as the only possible equilibria
of the new biased game are $\left(0_{1},0_{2}\right)$ and $\left(0_{1},\beta_{2}\right)$,
both of which induce a higher payoff to player $1$ relative to $\left(\beta_{1},\beta_{2}\right)$.
\item If $\beta_{1}<0.5$ and $\beta_{2}<1$, then player 2 can gain by
deviating to $\psi'_{2}\equiv0$, as the only possible equilibria
of the new biased game are $\left(0_{1},1_{2}\right)$ and $\left(\beta_{1},1_{2}\right)$,
both of which induce a higher payoff to player $2$ relative to $\left(\beta_{1},\beta_{2}\right)$.
\end{enumerate}

\subsection{Empirical Prediction Regarding Wishful Thinking}

Arguably, the class of games with strategic opposites (which induces
pessimism) is less common in strategic interactions than the classes
of games with strategic complements/substitutes (both of which induce
wishful thinking). This observation suggests the following empirical
predictions of our model: (1) wishful thinking is more common than
pessimism, and (2) there are some (less common) strategic interactions
that induce pessimism. This empirical prediction is consistent with
the experimental evidence that people tend to present wishful thinking,
although, the extent of wishful thinking may substantially differ
across different environments and may disappear in some environments
(see, e.g., \citealp{babad1991wishful,budescu1995relationship,doi:10.1080/13546789508256906,mayraz2013wishful}).

\section{Additional Results\label{sec:Additional-Results}}

\subsection{BBE with Strategic Stubbornness\label{sec:BBE-and-Undominated}}

In this subsection we present an interesting class of BBE that exist
in all games. In this class, one of the players is ``strategically
stubborn'' in the sense that he plays his undominated Stackelberg
strategy (defined below) and has blind beliefs, while his opponent
is ``flexible'' in the sense of having unbiased beliefs.

A strategy is undominated Stackelberg if it maximizes a player's payoff
in a setup in which the player can commit to an undominated strategy,
and his opponent reacts by choosing the best reply that maximizes
player $i$'s payoff. Formally:
\begin{defn}
The strategy $s_{i}$ is an undominated Stackelberg strategy if it
satisfies 
\[
s_{i}=\textrm{argma}\textrm{x}_{s_{i}\in S_{i}^{U}}\left(\textrm{ma}\textrm{x}_{s_{j}\in BR\left(s_{i}\right)}\left(\pi_{i}\left(s_{i},s_{j}\right)\right)\right).
\]
 Let $\pi_{i}^{\textrm{Stac}}=\textrm{ma}\textrm{x}_{s_{i}\in S_{i}^{U}}\left(max_{s_{j}\in BR\left(s_{i}\right)}\left(\pi_{i}\left(s_{i},s_{j}\right)\right)\right)$
be the undominated Stackelberg payoff. Observe that $\pi_{i}^{\textrm{Stac}}\geq\pi_{i}\left(s_{1}^{*},s_{2}^{*}\right)$
for any Nash equilibrium $\left(s_{1}^{*},s_{2}^{*}\right)\in NE\left(G\right)$.

Our next result shows that every game admits a  BBE in which one of
the players: (1) has a blind belief, (2) plays his undominated Stackelberg
strategy, and (3) obtains his undominated Stackelberg payoff. The
opponent has undistorted beliefs. Moreover, this BBE is strong if
the undominated Stackelberg strategy is a unique best reply to some
undominated strategy of the opponent. 

The intuition behind Proposition \ref{Prop-Stackelberg} is as follows.
The ``strategically stubborn'' player $i$ cannot gain from a deviation,
because player $i$ already obtains the highest possible payoff under
the constraint that player $j$ best-replies to player $i$'s strategy.
The ``flexible'' player $j$ cannot gain from a deviation, because
the ``blindness'' of player $i$ implies that player $i$'s behavior
remains the same regardless of player $i$'s deviation, and, thus,
player $i$ cannot do better than best-replying to player $i$'s strategy.
\end{defn}
\begin{prop}
\label{Prop-Stackelberg}Game $G=\left(S,\pi\right)$ admits a BBE
$\left(\left(\psi_{i}^{*},Id\right),\left(s_{i}^{*},s_{j}^{*}\right)\right)$
for each player $i$ with the following properties: (1) $\psi_{i}^{*}$
is blind, (2) $s_{i}^{*}$ is an undominated Stackelberg strategy,
and (3) $s_{j}^{*}=max_{s_{j}\in BR\left(s_{i}^{*}\right)}\left(\pi_{i}\left(s_{i}^{*},s_{j}\right)\right)$.
\\
Moreover, $\left(\left(\psi_{i}^{*},Id\right),\left(s_{i}^{*},s_{j}^{*}\right)\right)$
is a strong BBE if $\left\{ s_{i}^{*}\right\} =BR^{-1}\left(s_{j}^{*}\right)$. 
\end{prop}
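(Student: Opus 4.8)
The plan is to exhibit the supporting beliefs explicitly and then verify the deviation condition separately for the stubborn player $i$ and the flexible player $j$. Since $s_i^*$ is an undominated Stackelberg strategy it is in particular undominated, so $BR^{-1}(s_i^*)\neq\emptyset$; I would pick some $\hat{s}_j\in BR^{-1}(s_i^*)$ and let $\psi_i^*\equiv\hat{s}_j$ be the blind belief that perceives the opponent as playing $\hat{s}_j$ no matter what, paired with $\psi_j^*=Id$. Because $s_i^*\in BR(\hat{s}_j)$ and, by the defining property of $s_j^*$, $s_j^*\in BR(s_i^*)$, the profile $(s_i^*,s_j^*)$ is a Nash equilibrium of $G_{(\psi_i^*,Id)}$. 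Both beliefs are monotone---a constant function and the identity---so the monotonicity clause of the definition of a BBE holds at once, and it remains only to check that neither player gains by switching its belief.

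For a deviation of the stubborn player $i$ to an arbitrary $\psi_i'$, the point is that the opponent keeps the undistorted belief $Id$: in every equilibrium $(s_i',s_j')\in NE(G_{(\psi_i',Id)})$ player $j$ best-replies truthfully, $s_j'\in BR(s_i')$, while $s_i'$, being a best reply of player $i$ to some perceived strategy, is undominated. Hence $\pi_i(s_i',s_j')\le\max_{s_i\in S_i^U}\max_{s_j\in BR(s_i)}\pi_i(s_i,s_j)=\pi_i^{\textrm{Stac}}=\pi_i(s_i^*,s_j^*)$, so the deviator is weakly outperformed in \emph{every} equilibrium of the induced biased game. This is essentially a restatement of the definition of the undominated Stackelberg payoff and needs no further work; note that it already delivers the ``all equilibria'' bound that the strong notion will require.

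For a deviation of the flexible player $j$ to an arbitrary $\psi_j'$, I would exploit the blindness of $\psi_i^*$. Because player $i$ perceives $\hat{s}_j$ regardless of player $j$'s play, $s_i^*$ remains a best reply for $i$, so for any $s_j'\in BR(\psi_j'(s_i^*))$ the profile $(s_i^*,s_j')$ lies in $NE(G_{(\psi_i^*,\psi_j')})$, and $\pi_j(s_i^*,s_j')\le\max_{s_j}\pi_j(s_i^*,s_j)=\pi_j(s_i^*,s_j^*)$ since $s_j^*\in BR(s_i^*)$. The decisive observation is that this equilibrium is \emph{plausible}: in Definition \ref{def-implausible} implausibility requires (among other things) that the deviator's opponent---here player $i$---play a strategy different from $s_i^*$, which fails because the first coordinate is exactly $s_i^*$. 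Thus $(s_i^*,s_j')\in PNE(G_{(\psi_i^*,\psi_j')})$ and player $j$ does not gain in it, which together with the previous paragraph establishes that $((\psi_i^*,Id),(s_i^*,s_j^*))$ is a BBE.

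For the strong refinement the player-$i$ deviation already yields the inequality in every equilibrium, so the only extra work is to control player $j$ in \emph{every} equilibrium rather than in a single plausible one. Here I would invoke the strong-case hypothesis that $s_i^*$ is a unique best reply to some undominated strategy of the opponent (the condition $\{s_i^*\}=BR^{-1}(s_j^*)$ in the statement) and take that very strategy as the blind belief $\hat{s}_j$, so that $BR(\hat{s}_j)=\{s_i^*\}$. Then in every equilibrium of any $G_{(\psi_i^*,\psi_j')}$ the blind player $i$ is pinned to $s_i^*$, player $j$ faces $s_i^*$, and $\pi_j(s_i^*,s_j')\le\pi_j(s_i^*,s_j^*)$ as before. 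I expect this reconciliation to be the main obstacle: when best replies are set-valued, player $i$ may be indifferent over several elements of $BR(\hat{s}_j)$, and an equilibrium sitting at some $s_i'\neq s_i^*$ could let player $j$ earn strictly more than $\pi_j(s_i^*,s_j^*)$. The uniqueness hypothesis is precisely what rules this out, which is why it is indispensable for the strong conclusion but not for the plain BBE, where plausibility did the work.
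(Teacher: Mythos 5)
Your proof is correct and follows essentially the same route as the paper's: the same blind belief $\hat{s}_{j}\in BR^{-1}\left(s_{i}^{*}\right)$ paired with $I_{d}$, the same bound by $\pi_{i}^{\textrm{Stac}}$ for deviations of the stubborn player (valid in \emph{every} equilibrium, since the opponent best-replies truthfully and any equilibrium strategy of the deviator is a best reply to something, hence undominated), and the same plausible equilibrium $\left(s_{i}^{*},s_{j}'\right)$ for deviations of the flexible player, with plausibility following exactly as you say because the non-deviator's strategy is unchanged. Your handling of the ``moreover'' part is, if anything, more careful than the paper's own: the condition the argument actually needs is $BR\left(\hat{s}_{j}\right)=\left\{ s_{i}^{*}\right\} $ (so that the blind player is pinned to $s_{i}^{*}$ in every equilibrium of every post-deviation biased game), which is how you read the hypothesis and which matches the paper's prose (``a unique best reply to some undominated strategy of the opponent''), whereas the formulas written in the statement and in the paper's proof ($\left\{ s_{i}^{*}\right\} =BR^{-1}\left(s_{j}^{*}\right)$ and $\left\{ \hat{s}_{j}\right\} =BR^{-1}\left(s_{i}^{*}\right)$, respectively) do not, under the paper's definition of $BR^{-1}$, literally deliver that uniqueness.
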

\begin{proof}
Let $s_{i}^{*}$ be an undominated Stackelberg strategy of player
$i$. Let 
\[
s_{j}^{*}=\textrm{argma}\textrm{x}_{s_{j}\in BR\left(s_{i}^{*}\right)}\left(\pi_{i}\left(s_{i}^{*},s_{j}\right)\right).
\]
Let $\hat{s}{}_{j}\in BR^{-1}\left(s_{i}^{*}\right)$ ($\left\{ \hat{s}{}_{j}\right\} =BR^{-1}\left(s_{i}^{*}\right)$
with the additional assumption of the ``moreover'' part). We now
show that $\left(\left(\psi_{i}^{*}\equiv\hat{s}{}_{j},Id\right),\left(s_{i}^{*},s_{j}^{*}\right)\right)$
is a (strong) BBE. It is immediate that $\left(s_{i}^{*},s_{j}^{*}\right)\in NE\left(G_{\left(\hat{s}{}_{j},Id\right)}\right)$,
and that both biased beliefs are monotone.

Next, observe that for any biased belief $\psi_{j}'$ there is a plausible
equilibrium (in any equilibrium) of the biased game $G_{\left(\hat{s}{}_{j},\psi'_{j}\right)}$
in which player $i$ plays $s_{i}^{*}$, and player $j$ gains at
most $\pi_{j}\left(s_{i}^{*},s_{j}^{*}\right)$, which implies that
the deviation to $\psi_{j}'$ is not profitable to player $j$ in
this plausible equilibrium (in any equilibrium) of the new biased
game. 

If player $i$ deviates to a biased belief $\psi'_{i}$, then in any
equilibrium of the biased game $G_{\left(\psi'_{i},Id\right)}$ player
$i$ plays some strategy $s'_{i}$ and gains a payoff of at most $\textrm{ma}\textrm{x}_{s'_{j}\in BR\left(s'_{i}\right)}\left(\pi_{i}\left(s'_{i},s'_{j}\right)\right)$,
and this implies that player $i$'s payoff is at most $\pi_{i}^{\textrm{Stac}}$
, and that he cannot gain by deviating. This shows that $\left(\left(\hat{s}{}_{j},Id\right),\left(s_{1}^{*},s_{2}^{*}\right)\right)$
is a (strong)  BBE.
\end{proof}
We demonstrate this class of equilibria in a Cournot competition.
\begin{example}[\emph{Well-behaved BBE that yields the Stackelberg outcome in a Cournot
competition}]
\label{Exam-Cornot-Stackelberg} Consider the symmetric Cournot game
with linear demand in Example \ref{ex-Cournot-Nash-cannot-be-supported-by-identity}:
$G=\left(S,\pi\right)$: $S_{i}=\mathbb{R}^{+}$ and $\pi_{i}\left(s_{i},s_{j}\right)=s_{i}\cdot\left(1-s_{i}-s_{j}\right)$
for each player $i$. Then $\left(\left(0,I_{d}\right),\left(\frac{1}{2},\frac{1}{4}\right)\right)$
is a strong well-behaved BBE that induces the Stackelberg outcome
$\left(\frac{1}{2},\frac{1}{4}\right)$, and yields the Stackelberg-leader
payoff of $\frac{1}{8}$ to player 1 and yields the follower payoff
of $\frac{1}{16}$ to player 2. This is because: (1) $\left(\frac{1}{2},\frac{1}{4}\right)\in NE\left(G_{\left(0,I_{d}\right)}\right)$,
(2) for any biased belief $\psi'_{2}$, player 1 keeps playing $\frac{1}{2}$
and as a result player 2's payoff is at most $\frac{1}{16}$, and
(3) for any biased belief $\psi'_{1}$, player 2 will best-reply to
player's 1 strategy, and thus player 1's payoff will be at most his
Stackelberg payoff of $\frac{1}{8}$.
\end{example}

\subsection{Folk Theorem Results\label{sec:folk-theorem-results}}

In this subsection we present various folk theorem results (i.e.,
general feasibility results) that show that relaxing either of the
two requirements in the definition of a BBE (namely, monotonicity
and ruling out implausible equilibria) yields little predictive power
in various classes of games. Specifically, we show that in those games
a strategy profile is a monotone weak BBE outcome (resp., non-monotone
strong BBE outcome) essentially iff it is (1) undominated, and (2)
induces a payoff above the undominated minmax payoff.

\subsubsection{Preliminary Definitions \label{subsec:prelimanary-definitions}}

We begin by defining the notions of monotone weak BBE, and of non-monotone
strong BBE.
\begin{defn}
A weak BBE $\left(\psi^{*},s^{*}\right)$ is a \emph{monotone weak
BBE} if each biased belief $\psi_{i}^{*}$ is monotone for each player
$i$. \\
A\emph{ weak BBE} $\left(\psi^{*},s^{*}\right)$ is a\emph{ non-monotone
strong} \emph{BBE} if the inequality $\pi_{i}\left(s'_{i},s_{j}'\right)\leq\pi_{i}\left(s_{i}^{*},s_{j}^{*}\right)$
holds for every player $i$, every biased belief $\psi'_{i}$, and
every strategy profile $\left(s'_{i},s_{j}'\right)\in NE\left(G_{\left(\psi'_{i},\psi_{j}^{*}\right)}\right)$.\\
Note that (1) a monotone weak BBE is a weakening of the notion of
a BBE, which relaxes the requirement of ruling out implausible equilibria,
and (2) a non-monotone strong BBE is a weakening of the notion of
strong BBE, which relaxes the requirement of monotonicity. 
\end{defn}

\subsubsection{Folk Theorem Result: Monotone Weak BBE in Finite Games\label{subsec:Folk-Theorem-Result:-Monotone}}

We say that a finite game $G$ admits\emph{ best replies with full
undominated support}, if, for each player $i$, there exists an undominated
strategy $s_{i}\in S_{i}^{U}$ with a support that includes all undominated
actions, i.e., $supp\left(s_{i}\right)=A_{i}\cap S_{i}^{U}$ . Two
classes of games that admit best replies with full undominated support
are:
\begin{enumerate}
\item \emph{All two-action games}. The reason for this is as follows. If
player $i$ has a dominant action, then, trivially, the dominant action
$a_{i}$ is an undominated strategy with a support that includes all
undominated actions. If player $i$ does not have a dominant action,
then there must be a strategy of the opponent for which the player
is indifferent between his two actions, which implies that there exists
an undominated strategy with full support. 
\item Any game with a totally mixed equilibrium (e.g., a rock-paper-scissors
game).
\end{enumerate}
Our next result focuses on finite games that admit\emph{ }best replies
with full undominated support, and shows that in such games a strategy
profile $s^{*}$ is a monotone weak BBE outcome iff (I) $s^{*}$ is
undominated, and (II) the payoff of is $s^{*}$ above the undominated
minmax payoff.

The sketch of the proof is as follows. Each player has a blind belief
that his opponent plays her part of the\emph{ }Nash equilibrium with
full undominated support. This implies that each player is always
indifferent between all undominated actions and, as such, can (1)
play $s_{i}^{*}$ on the equilibrium path, and (2) play a punishing
strategy that guarantees the opponent a payoff of at most her undominated
minmax payoff following any deviation of the opponent. 
\begin{prop}[\emph{Folk Theorem result for monotone weak BBE outcomes}]
\emph{\label{pro-monotone-BBE-outcomes-finite-games}Let} $G$ be
a finite game that admits best replies with full undominated support.
Then the following two statements are equivalent: 
\begin{enumerate}
\item Strategy profile $\left(s_{1}^{*},s_{2}^{*}\right)$ is a monotone
weak BBE outcome. 
\item Strategy profile $\left(s_{1}^{*},s_{2}^{*}\right)$ is (I) undominated
and (II) $\pi_{i}\left(s_{1}^{*},s_{2}^{*}\right)\geq M_{i}^{U}$. 
\end{enumerate}
\end{prop}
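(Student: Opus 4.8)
The equivalence splits into an easy direction and a constructive one. The implication $(1)\Rightarrow(2)$ is immediate: a monotone weak BBE is in particular a weak BBE, so Proposition \ref{prop-neccesary-conditions} directly yields that $\left(s_1^*,s_2^*\right)$ is undominated and satisfies $\pi_i\left(s_1^*,s_2^*\right)\geq M_i^U$ for each player. All the work is in $(2)\Rightarrow(1)$, and the plan is to exhibit an explicit profile of \emph{blind} biased beliefs supporting $\left(s_1^*,s_2^*\right)$. Since blind beliefs are constant, they are trivially monotone, so it will suffice to verify the two conditions of Definition \ref{def-weak-BBE}.

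The construction will use the full-undominated-support hypothesis to pick, for each player $i$, a blind belief that renders $i$ \emph{indifferent among all of his undominated actions}. Concretely, I would let $\hat\sigma_j\in S_j$ be a strategy to which the full-undominated-support strategy $\tilde s_i$ is a best reply (such $\hat\sigma_j$ exists because $\tilde s_i$ is undominated). Because a mixed best reply consists entirely of pure best replies and $\mathrm{supp}\left(\tilde s_i\right)=A_i\cap S_i^U$, every undominated action of player $i$ is a best reply to $\hat\sigma_j$, and since best replies are themselves undominated this gives $BR\left(\hat\sigma_j\right)=\Delta\left(A_i\cap S_i^U\right)$. I then set the blind beliefs $\psi_i^*\equiv\hat\sigma_j$ and $\psi_j^*\equiv\hat\sigma_i$. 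Here I would invoke the auxiliary fact that any undominated mixed strategy has support contained in $A_i\cap S_i^U$ (shifting weight off a strictly dominated pure action to its dominator strictly improves payoffs against every opponent strategy), so that the undominated profile $\left(s_1^*,s_2^*\right)$ from (2) is a mixture over undominated actions and hence lies in $BR\left(\hat\sigma_j\right)$. This establishes $\left(s_1^*,s_2^*\right)\in NE\left(G_{\psi^*}\right)$, i.e., condition (1).

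For condition (2), suppose player $i$ deviates to an arbitrary belief $\psi_i'$ while player $j$ keeps the blind belief $\psi_j^*\equiv\hat\sigma_i$. I would take $s_j^{\mathrm{pun}}\in S_j^U$ to be the minimizer in the definition of $M_i^U$ (Definition \ref{def-undominated-minmax}), so that $\max_{s_i}\pi_i\left(s_i,s_j^{\mathrm{pun}}\right)=M_i^U$. The key point is that $s_j^{\mathrm{pun}}$, being undominated, lies in $\Delta\left(A_j\cap S_j^U\right)=BR\left(\hat\sigma_i\right)$, so player $j$ can play it as a best reply no matter how $i$ deviates. Choosing any $s_i'\in BR\left(\psi_i'\left(s_j^{\mathrm{pun}}\right)\right)$, the pair $\left(s_i',s_j^{\mathrm{pun}}\right)$ is a genuine Nash equilibrium of $G_{\left(\psi_i',\psi_j^*\right)}$: player $j$'s perceived strategy is the constant $\hat\sigma_i$ independent of $s_i'$, so there is no circularity in the fixed point. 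In this equilibrium $\pi_i\left(s_i',s_j^{\mathrm{pun}}\right)\leq M_i^U\leq\pi_i\left(s_1^*,s_2^*\right)$, so the deviation is unprofitable, which is exactly condition (2).

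The hard part will be getting a single blind belief to serve two roles at once: it must make $i$'s intended action $s_i^*$ optimal on the equilibrium path while simultaneously leaving the non-deviating player $j$ free to switch to the punishing strategy $s_j^{\mathrm{pun}}$ off path. The indifference property $BR\left(\hat\sigma_j\right)=\Delta\left(A_i\cap S_i^U\right)$ is precisely what reconciles these two demands, since it makes \emph{every} undominated strategy — both the on-path $s_j^*$ and the off-path $s_j^{\mathrm{pun}}$ — a best reply under the blind belief; this is where the full-undominated-support hypothesis is indispensable. I would therefore isolate, as two preliminary claims, this indifference fact together with the support lemma for undominated strategies, after which both conditions of the weak BBE definition follow by direct substitution.
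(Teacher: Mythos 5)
Your proof is correct and takes essentially the same route as the paper's: both construct blind beliefs equal to a strategy of the opponent against which one's own full-undominated-support strategy is a best reply, so that every undominated strategy---including the on-path $s_j^*$ and the punishing undominated-minmax strategy---remains a best reply, and deviations are deterred in the Nash equilibrium of the new biased game where the non-deviator plays the punisher. The only difference is cosmetic: you isolate explicitly the support lemma (undominated mixed strategies mix only over undominated actions), a step the paper uses implicitly when asserting $s_j^{*},s_j^{p}\in\Delta\left(supp\left(s_j^{e}\right)\right)\subseteq BR\left(s_i^{d}\right)$.
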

\begin{proof}
Proposition \ref{prop-neccesary-conditions} implies that ``1.$\Rightarrow$2.''
We now show that ``2.$\Rightarrow$1.'' Assume that $\left(s_{1}^{*},s_{2}^{*}\right)$
is undominated, and $\pi_{i}\left(s_{1}^{*},s_{2}^{*}\right)\geq M_{i}^{U}$.
For each player $j$, let $s_{j}^{p}$ be an undominated strategy
that guarantees that player $i$ obtains, at most, his minmax payoff
$M_{i}^{U}$, i.e., $s_{j}^{p}=\textrm{argmin}_{s_{j}\in S_{j}^{U}}\left(\max_{s_{i}\in S_{i}}\pi_{i}\left(s_{i},s_{j}\right)\right).$
For each player $j$, let $s_{j}^{e}\in S_{j}^{U}$ be a best-reply
strategy with full undominated support, i.e., $supp\left(s_{j}^{e}\right)=A_{i}\cap S_{i}^{U}$.
For each player $i$, let $s_{i}^{d}\in BR^{-1}\left(s_{j}^{e}\right)$.
The fact that $s_{j}^{e}\in BR\left(s_{i}^{d}\right)$ implies that
$s_{j}^{*},s_{j}^{p}\in\Delta\left(S_{j}^{U}\right)=\Delta\left(supp\left(s_{j}^{e}\right)\right)\subseteq BR\left(s_{i}^{d}\right)$.

We conclude by showing that $\left(\left(s_{1}^{d},s_{2}^{d}\right),\left(s_{1}^{*},s_{2}^{*}\right)\right)$
is a monotone weak BBE (in which both players have blind beliefs).
It is immediate that $\left(s_{1}^{*},s_{2}^{*}\right)\in NE\left(s_{1}^{d},s_{2}^{d}\right)$
. Next, observe that for any deviation of player $i$ to a different
biased belief $\psi_{i}'$, there is a Nash equilibrium of the biased
game $G_{\left(\psi'_{i},s_{j}^{e}\right)}$ in which player $j$
plays $s_{j}^{p}$, and, as a result, player $i$ obtains a payoff
of at most $M_{i}^{U}$, which implies that the deviation is not profitable.
Thus, $\left(s_{1}^{*},s_{2}^{*}\right)$ is a BBE outcome.
\end{proof}
Proposition \ref{pro-monotone-BBE-outcomes-finite-games} suggests
that the notion of monotone weak BBE is too weak. The folk theorem
result relies on the incumbents ``discriminating'' against deviators
who have exactly the same perceived behavior as the rest of the population:
the incumbents of population $j$ ``punish'' deviators by playing
$s_{j}^{p}$ against them, while continuing to play $s_{j}^{*}$ against
the incumbents, even though both the deviators and the incumbents
are perceived to behave the same (i.e., $\psi_{j}^{*}\left(s_{i}^{e}\right)=\psi_{j}^{*}\left(s_{i}^{*}\right)$).

Example \ref{exa-a-a-is-not-monotone-weak-BBE} in Appendix \ref{subsec:The-Folk-Theorem-counter-example-fintie-games}
demonstrates that the folk theorem result does not necessarily hold
for games that do not admit best replies with full undominated support. 

\subsubsection{Folk Theorem Result: Non-Monotone Strong BBE in Interval Games\label{subsec:Folk-Theorem-Result-continous}}

In this section we show a folk theorem result for strong BBE in a
broad family of interval games in which each payoff function $\pi_{i}\left(s_{i},s_{j}\right)$
is (1)  strictly concave in $s_{i}$ and (2) weakly convex in $s_{j}$.
Examples of such games include Cournot competitions, price competitions
with differentiated goods, public good games, and Tullock contests.

The following  result shows that in this class of interval games,
any undominated strategy profile $\left(s_{1}^{*},s_{2}^{*}\right)$
that induces each player a payoff strictly above the player's undominated
minmax payoff can be implemented as an outcome of a strong BBE. Formally:
\begin{prop}
\label{pro-interval-strong-continous-folk}Let $G=\left(S,\pi\right)$
be an interval game. Assume that for each player $i$, $\pi_{i}\left(s_{i},s_{j}\right)$
is strictly concave in $s_{i}$ and weakly convex in $s_{j}$. If
$\left(s_{1}^{*},s_{2}^{*}\right)$ is undominated and $\pi_{i}\left(s_{1}^{*},s_{2}^{*}\right)>M_{i}^{U}$
for each player $i$, then $\left(s_{1}^{*},s_{2}^{*}\right)$ is
a non-monotone strong BBE outcome. 
\end{prop}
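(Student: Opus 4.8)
The plan is to exhibit a profile of continuous (but generally non-monotone) biased beliefs $(\psi_1^*,\psi_2^*)$ that makes $(s_1^*,s_2^*)$ an equilibrium of the biased game and holds every unilateral deviator to at most its equilibrium payoff in \emph{all} equilibria of the new biased game. The first step is a reduction that exploits strict concavity. Since each $\pi_j$ is strictly concave in $s_j$, the best reply $BR$ is single-valued and continuous; hence in \emph{any} equilibrium $(s'_i,s'_j)\in NE\!\left(G_{(\psi'_i,\psi_j^*)}\right)$ the non-deviator must play $s'_j=BR(\psi_j^*(s'_i))$, while the deviator plays some undominated $s'_i\in S_i^U$. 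Controlling all equilibria after a deviation of $i$ therefore reduces to constructing a continuous $\psi_j^*:S_i\to S_i$ with (i) $BR(\psi_j^*(s_i^*))=s_j^*$ and (ii) $\pi_i\!\left(s'_i,BR(\psi_j^*(s'_i))\right)\le\pi_i(s^*)$ for every $s'_i\in S_i$. The on-path equilibrium is then secured by also choosing $\psi_i^*(s_j^*)\in BR^{-1}(s_i^*)$, which is nonempty because $s_i^*$ is undominated, and $\psi_i^*$ is built symmetrically to police $j$'s deviations.

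For the punishment I would use the undominated minmax of Definition \ref{def-undominated-minmax}, exactly as in the proof of Proposition \ref{pro-monotone-BBE-outcomes-finite-games}. Let $s_j^p\in S_j^U$ attain $M_i^U$, so that $\pi_i(s_i,s_j^p)\le M_i^U<\pi_i(s^*)$ for all $s_i$, and set the buffer $B:=\pi_i(s^*)-M_i^U>0$. Writing $A(s'_i):=\pi_i(s'_i,s_j^*)-\pi_i(s^*)$ and $\lambda(s'_i):=\tfrac{\max\{A(s'_i),0\}}{\max\{A(s'_i),0\}+B}\in[0,1)$, define the intended response $\rho_j(s'_i):=(1-\lambda(s'_i))\,s_j^*+\lambda(s'_i)\,s_j^p$, which is continuous and satisfies $\rho_j(s_i^*)=s_j^*$ since $A(s_i^*)=0$. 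Here the weak convexity of $\pi_i$ in $s_j$ is precisely what makes this linear punishment safe:
\[
\pi_i\!\left(s'_i,\rho_j(s'_i)\right)\le(1-\lambda)\,\pi_i(s'_i,s_j^*)+\lambda\,\pi_i(s'_i,s_j^p)\le(1-\lambda)\bigl(\pi_i(s^*)+A\bigr)+\lambda M_i^U=\pi_i(s^*),
\]
where for $A\ge 0$ the last equality is forced by the definition of $\lambda$, while for $A<0$ one has $\lambda=0$ and $\pi_i(s'_i,s_j^*)\le\pi_i(s^*)$ directly. This is exactly requirement (ii).

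The remaining, and main, difficulty is to realize $\rho_j$ as $BR\circ\psi_j^*$ for a genuinely continuous belief $\psi_j^*:S_i\to S_i$, and two things must be checked. First, $\rho_j(s'_i)$ is a convex combination of the undominated strategies $s_j^*$ and $s_j^p$, so I must ensure it is itself undominated, otherwise no belief induces it. This is where strict concavity in $s_i$ pays off a second time: since $BR$ is single-valued and continuous and $S_i$ is an interval, $S_j^U=BR(S_i)$ is the continuous image of a connected set, hence an interval, so the whole segment $[s_j^*,s_j^p]$ lies in $S_j^U$ and each $\rho_j(s'_i)$ is undominated. Second, I need a \emph{continuous section}: a continuous $\sigma$ on this segment with $BR(\sigma(z))=z$, so that $\psi_j^*:=\sigma\circ\rho_j$ is continuous and induces the desired response with $BR(\psi_j^*(s_i^*))=\rho_j(s_i^*)=s_j^*$. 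Producing a continuous right inverse of the continuous surjection $BR$ over a subinterval of its image is the technical crux; I expect to settle it by a short topological lemma using that both the domain and the relevant image are compact intervals. Given the section, $\psi_j^*$ satisfies (i) and (ii), and building $\psi_i^*$ symmetrically and invoking the reduction of the first paragraph shows that $\left((\psi_1^*,\psi_2^*),(s_1^*,s_2^*)\right)$ is a non-monotone strong BBE; note that no monotonicity of the beliefs is imposed anywhere, consistent with the statement.
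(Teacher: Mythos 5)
Your overall architecture is the same as the paper's: reduce, via single-valuedness and continuity of $BR$ (from strict concavity), to building a continuous $\psi_j^*$ with $BR\left(\psi_j^*(s_i^*)\right)=s_j^*$ and $\pi_i\left(s_i',BR\left(\psi_j^*(s_i')\right)\right)\le\pi_i(s^*)$ for all $s_i'$; make the response move along the straight segment from $s_j^*$ toward a punishing undominated strategy $s_j^p$; and use weak convexity of $\pi_i$ in $s_j$ to bound the deviator's payoff along that segment. Within this, your calibrated weight $\lambda(s'_i)=\max\{A(s'_i),0\}/\left(\max\{A(s'_i),0\}+B\right)$ is correct and is actually cleaner than the paper's version, which instead ramps up punishment linearly over an $\epsilon$-ball around $s_i^*$ and needs a Lipschitz constant $K_i$ and the choice $\epsilon<D_i/K_i$ to make the comparison work; your calibration delivers the inequality exactly, with no Lipschitz bookkeeping. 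Your observation that $S_j^U=BR(S_i)$ is an interval, so the segment $\left[s_j^p,s_j^*\right]$ consists of undominated strategies, is also correct and is needed (implicitly) by the paper as well.

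The gap is the step you defer. The ``short topological lemma'' you invoke --- that a continuous map between compact intervals admits a continuous section over a prescribed subinterval of its image --- is \emph{false}. Strict concavity gives single-valuedness and continuity of $BR$, but it places no restriction on the sign of the cross-partial of the opponent's payoff, so $BR$ can \emph{fold}: for instance, rise from $0$ to $0.6$, fall back to $0.4$, then rise to $1$. Suppose $s_j^*=0.65$, so its unique $BR$-preimage lies on the last rising branch, and suppose every $s_j$ with $V_i(s_j):=\max_{s_i}\pi_i\left(s_i,s_j\right)\le\pi_i(s^*)$ lies below $0.4$ (one can arrange this, e.g.\ with $\pi_i$ linear in $s_j$, while keeping $\left(s_1^*,s_2^*\right)$ undominated and $\pi_i(s^*)>M_i^U$). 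Then any continuous $\sigma$ with $BR(\sigma(z))=z$ defined on $\left[s_j^p,s_j^*\right]$ must, for $z$ just below $0.4$, sit on the first branch (beliefs below $0.25$), and for $z$ just above $0.6$, sit on the third branch (beliefs above $0.5$); following $\sigma$ continuously upward from $z<0.4$ traps it on the first branch, forcing $\sigma(0.6)=0.25$, after which no continuous continuation to $z>0.6$ exists. So $\sigma$, and hence your $\psi_j^*=\sigma\circ\rho_j$, need not exist. You should be aware that the paper's own proof confronts exactly this point and disposes of it by asserting that strict concavity makes $BR$ ``one-to-one,'' so that $BR^{-1}$ is a single-valued continuous function; injectivity is precisely what would rule out folds, but it does not follow from strict concavity (a constant or folded $BR$ is consistent with all the hypotheses). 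Closing the argument genuinely requires either an extra assumption (injective or monotone best replies) or a different construction in which the response path is scheduled along the $BR$-curve itself --- so that the unavoidable backtracking through the fold is timed to occur at deviations that still tolerate those responses --- rather than along the straight segment.
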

The sketch of the proof is as follows (the formal proof is presented
in Appendix \ref{subsec:Proof-of-Proposition-interval}). 

Each player $j$ has a biased belief $\psi_{j}^{*}$ that (I) distorts
$s_{i}^{*}$ into $BR^{-1}\left(s_{j}^{*}\right)$, and (II) distorts
any $s'_{i}$ that is not in a small neighborhood of $s_{i}^{*}$,
to $BR^{-1}\left(s_{j}^{p}\right)$, where $s_{j}^{p}$ is a ``punishing''
strategy that guarantees that player $i$ obtains at most his undominated
minmax payoff. Part (I) implies that $\left(s_{1}^{*},s_{2}^{*}\right)$
is an equilibrium of the biased game. Part (II) implies that following
any deviation of player $i$ to a different biased belief, if player
$i$ plays a strategy that is not in a small neighborhood of $s_{i}^{*}$,
then player $i$ loses from the deviation. Finally, the assumption
that the payoff function $\pi_{i}\left(s_{i},s_{j}\right)$ is convex
in $s_{j}$ implies that we can ``complete'' a continuous description
of $\psi_{j}^{*}$ for $s'_{i}$ that are in a small neighborhood
around $s_{i}^{*}$, such that a player cannot gain from deviating
to playing strategies in this small neighborhood. 

\subsubsection{Discussion of the Folk Theorem Results}

The results of this section show that the notion of weak BBE has little
predictive power in the sense that, essentially, any undominated strategy
profile with a payoff above the undominated minmax payoff is a weak
BBE outcome. Moreover, we show that this multiplicity of BBE outcomes
holds in large classes of games also when applying a refinement of
monotonicity (Prop. \ref{pro-monotone-BBE-outcomes-finite-games}),
or when applying a refinement of strongness (Prop. \ref{pro-interval-strong-continous-folk}).
By contrast, in Section \ref{sec:Main-Results} we show that the combination
of two plausible requirements, namely, monotonicity and ruling out
implausible equilibria, allows us to achieve sharp predictions for
the set of BBE outcomes in various interesting classes of games and
for the set of biased beliefs that support these outcomes.

Our folk theorem results have similar properties to the famous folk
theorem results for repeated games and sufficiently discounted players
(see, e.g., \citealp{fudenberg1986folk}). This is so because it allows
for implicit punishments similar to those used in repeated games in
order to sustain equilibria. This is because our model assumes that
when a player deviates to a different biased belief his opponent can
react to the deviation and deter against it.

Observe that our result has somewhat stronger predictive power than
the folk theorem result for repeated games, in the sense that the
set of monotone weak BBE in one-shot finite games and the set of non-monotone
strong BBE in one-shot interval games are each smaller than the set
of subgame-perfect equilibria of repeated games between patient players.
In particular, the following strategy profiles can be supported as
the subgame-perfect equilibrium outcomes of a repeated game between
patient players, but they cannot be the outcome of a weak BBE outcome
of a one-shot game: (1) strategy profiles in which one of the players
plays a strategy that is strictly undominated in the underlying (one-shot)
game, and (2) strategy profiles in which some of the players obtain
a payoff between the standard minmax payoff and the (higher) undominated
minmax payoff. 

In Appendix \ref{sec:Discontinuous-Biased-Beliefs} we show that if
one relaxes the assumption that the biased beliefs must be continuous,
then one can obtain a folk theorem result in broader classes of games,
namely, (1) in all finite games, and (2) in all interval games with
strictly concave payoffs.

\section{Conclusion\label{sec:Discussion} }

Decision makers\textquoteright{} preferences and beliefs may intermingle.
In strategic environments distorted beliefs can take the form of a
self-serving commitment device. Our paper introduces a formal model
for the persistence  of such beliefs and proposes an equilibrium concept
that supports them. Our analysis characterizes  BBE in a variety of
strategic environments, such as games with strategic complements and
games with strategic substitutes. In particular, we show that agents
present wishful thinking in all BBE in both of these common environments.

Our analysis here deals with simultaneous games of complete information,
but the idea of strategically distorted beliefs may play an important
role also in sequential games and in Bayesian games. In these frameworks,
belief distortion may violate Bayesian updating, and our concept here
can potentially offer a theoretical foundation for some of the cognitive
biases relating to belief updating. It can also potentially identify
the strategic environments in which these biases are likely to occur.
We view this as an important research agenda that we intend to undertake
in the future. 

A different research track that might shed more light on strategic
belief distortion is the experimental one. Laboratory experiments
often conduct belief elicitation with the support of incentives for
truthful revelation. Strong evidence for strategic belief bias in
experimental games can be obtained by showing that players assign
different beliefs to the behavior of their own counterpart in the
game and to a person playing the same role with someone else. In general,
our model predicts that beliefs about a third party\textquoteright s
behavior are more aligned with reality than those involving one\textquoteright s
counterpart in the game. Laboratory experiments can also test whether
specific types of belief distortions (such as wishful thinking) arise
in the strategic environments that are predicted by our model. 

Finally, we point out that strategic beliefs may play an important
role in the design of mechanisms and contracts. Belief distortions
may destroy the desirable equilibrium outcomes that a standard mechanism
aims to achieve. Mechanisms that either induce unbiased beliefs or
adjust the rules of the game to account for possible belief biases
are expected to perform better. 

\bibliographystyle{econometrica}
\bibliography{belief-bias}

\appendix
\newpage{}

\clearpage \pagenumbering{arabic}

\section*{Online Appendices}

\appendix

\section{Additional Examples\label{sec:Additional-Examples}}

\subsection{A Non-Nash Strong BBE Outcome in a Zero-Sum Game\label{subsec:Monotone-Strong-BBE-zero-sum-game-not-NAsh}}

The following example shows that although the weak BBE payoff must
be the Nash equilibrium payoff in a zero-sum game, the strategy profile
sustaining it need not be a Nash equilibrium.
\begin{example}
\label{exa-RPS}Consider the symmetric rock\textendash paper\textendash scissors
zero-sum game described in Table \ref{tab:Symmetric-Rock-Paper-Scissors-Ze}.
\begin{table}[h]
\caption{\label{tab:Symmetric-Rock-Paper-Scissors-Ze}Symmetric Rock-Paper-Scissors
Zero-Sum Game Payoffs}

\centering{}%
\begin{tabular}{|c|c|c|c|}
\hline 
 & R & P & S\tabularnewline
\hline 
\hline 
R & $0,0$ & 0,1 & 1,0\tabularnewline
\hline 
P & 1,0 & $0,0$ & 0,1\tabularnewline
\hline 
S & 0,1 & 1,0 & $0,0$\tabularnewline
\hline 
\end{tabular}
\end{table}
 We show that $\left(\left(I_{d},\left(\frac{1}{3},\frac{1}{3},\frac{1}{3}\right)\right),\left(R,\left(\frac{1}{3},\frac{1}{3},\frac{1}{3}\right)\right)\right)$
is a strong BBE, in which the player 1 (he) has undistorted beliefs
and plays $R$, while player 2 (she) has a blind belief that the opponent
always mixes equally, and she mixes equally. It is immediate that
$\left(R,\left(\frac{1}{3},\frac{1}{3},\frac{1}{3}\right)\right)\in NE\left(G_{\left(I_{d},\left(\frac{1}{3},\frac{1}{3},\frac{1}{3}\right)\right)}\right)$,
and the equilibrium payoff to each player is zero. Next, observe that
after any deviation of player 1 to a biased belief $\psi'_{1}$, there
is an equilibrium of the game $G_{\left(\psi'_{1},\left(\frac{1}{3},\frac{1}{3},\frac{1}{3}\right)\right)}$
in which player 2 mixes equally and player 1 obtains a payoff of zero.
Finally, observe that after any deviation of player 2 to a biased
belief $\psi'_{1}$, player 1 obtains a payoff of at least zero (her
minmax payoff in $G_{\left(I_{d},\psi'_{2}\right)}$) in any Nash
equilibrium in $G_{\left(I_{d},\psi'_{2}\right)}$, which implies
that player 2 obtains a payoff of at most zero, and, as a result,
she does not gain from the deviation.
\end{example}

\subsection{\label{subsec:Prisoner's-Dilemma-with}Prisoner's Dilemma with a
Weakly Dominated Withdrawal Strategy}

Proposition \ref{prop-dominant-action} implies, in particular, that
defection is the unique weak BBE outcome in the prisoner's dilemma
game. The following example demonstrates that a relatively small change
to the prisoner's dilemma game, namely, adding a third weakly dominated
``withdrawal'' strategy that transforms ``cooperation'' into a
weakly dominated strategy, can allow us to sustain cooperation as
a strong BBE outcome. This is done by means of biases under which
a player believes that his opponent is planning to withdraw from the
game whenever he intends to cooperate, which makes cooperation a rational
move. 
\begin{example}
\label{exa-PD-withdrawl-1}Consider the variant of the prisoner's
dilemma game with a third ``withdrawal'' action as described in
Table \ref{tab:Prisoner's-Dilemma-with-1}.
\begin{table}[h]
\centering{}\caption{\label{tab:Prisoner's-Dilemma-with-1}Prisoner's Dilemma Game with
a Withdrawal Action$\protect\underset{}{}$}
\begin{tabular}{|c|c|c|c|}
\hline 
 & \emph{c} & \emph{d} & \emph{w}\tabularnewline
\hline 
\emph{c} & 10,10 & 0,11 & 0,0\tabularnewline
\hline 
\emph{d} & 11,0 & 1,1 & 0,0\tabularnewline
\hline 
\emph{w} & 0,0 & 0,0 & 0,0\tabularnewline
\hline 
\end{tabular}
\end{table}
 In this symmetric game both players get a high payoff of 10 if they
both play action $c$ (interpreted as cooperation). If one player
plays $d$ (\emph{defection}) and his opponent plays $c$, then the
defector gets 11 and the cooperator gets 0. If both players defect,
then each of them gets a payoff of 1. Finally, if either player plays
action $w$ (interpreted as \emph{withdrawal}), then both players
get 0. Observe that defection is a weakly dominant action, and that
the game admits two Nash equilibria: $\left(w,w\right)$ and  $\left(d,d\right)$,
inducing respective symmetric payoffs of zero and one.

We identify a mixed action with a vector $\left(\alpha_{c},\alpha_{d},\alpha_{w}\right)$,
where $\alpha_{c}\geq0$ (resp., $\alpha_{d}\geq0,\,\alpha_{w}\geq0$)
denotes the probability of choosing action $c$ (resp., $d$, $w$).
For each player $i$, let $\psi_{i}$ be the following biased-belief
function: 
\[
\psi_{i}^{*}\left(\alpha_{c},\alpha_{d},\alpha_{w}\right)=\left(0,\alpha_{d},\alpha_{c}+\alpha_{w}\right).
\]
We now show that  $\left(\left(\psi_{1}^{*},\psi_{2}^{*}\right),\left(c,c\right)\right)$
is a non-monotone strong BBE in which both players obtain a high payoff
of 10 (which is strictly better than the best Nash equilibrium payoff,
and strictly better than the Stackelberg payoff of each player). Observe
first that $c\in BR\left(\psi_{i}^{*}\left(c\right)\right)=BR\left(w\right)$,
which implies that $\left(c,c\right)\in NE\left(G_{\left(\psi_{1}^{*},\psi_{2}^{*}\right)}\right)$.
Next, consider a deviation of player $i$ to biased belief $\psi'_{i}$.
Observe that player $i$ can gain a payoff higher than 10 only if
he plays action $d$ with positive probability, but this implies that
the unique best reply of player $j$ to his biased belief about player
$i$'s strategy is defection, which implies that player $i$ obtains
a payoff of at most one. 
\end{example}

\subsection{The Folk Theorem Result Does not Hold for All Finite Games\label{subsec:The-Folk-Theorem-counter-example-fintie-games}}

The following example demonstrates that the folk theorem result (Proposition
\ref{prop-supermodular-monotone}) does not necessarily hold for games
that do not admit best replies with full undominated support. 
\begin{example}
\label{exa-a-a-is-not-monotone-weak-BBE}Consider the three-action
symmetric game described in Table \ref{tab:A-Game-with-no-folk-theorem}.
\begin{table}[h]
\caption{\label{tab:A-Game-with-no-folk-theorem}A Game in which $\left(a,a\right)$
is not a Monotone Weak BBE Outcome}

\centering{}%
\begin{tabular}{|c|c|c|c|}
\hline 
 & \emph{a} & \emph{b} & \emph{c}\tabularnewline
\hline 
\hline 
\emph{a} & 2, 2 & 2, 3 & 1.1, 3\tabularnewline
\hline 
\emph{b} & 3, 2 & 3, 3 & 1, 0\tabularnewline
\hline 
\emph{c} & 3.1, 1 & 0, 1 & 0, 0\tabularnewline
\hline 
\end{tabular}
\end{table}
 Observe that all the actions in the game are undominated, and that
the game does not admit\emph{ }best replies with full undominated
support: there is no strategy of the opponent for which one of the
players has a best reply with full support. This is so because action
$a$ ($c$) is a best reply only to his opponent's strategies that
assign a probability of at least 90\% to action $c$ ($a$), which
implies that actions $a$ and $c$ cannot be best replies simultaneously.
Observe that the undominated minmax payoff of each player is equal
to 1 (because the opponent can play the undominated action $c$, and
by playing this the opponent guarantees that the player gets a payoff
of at most 1).

Consider the undominated action profile $\left(a,a\right)$ (which
induces a payoff strictly above the undominated minmax payoff to each
player). We will show that $\left(a,a\right)$ is not a monotone weak
BBE (which demonstrates that the folk theorem result of Proposition
\ref{pro-monotone-BBE-outcomes-finite-games} does not hold in this
game). Assume to the contrary that $\left(a,a\right)$ is a monotone
weak BBE. Let $\left(\left(\psi_{1}^{*},\psi_{2}^{*}\right),\left(a,a\right)\right)$
be a monotone weak BBE. The fact that $\left(a,a\right)\in NE\left(G_{\left(\psi_{1}^{*},\psi_{2}^{*}\right)}\right)$
implies that $\psi_{1}^{*}\left(a\right)\left(c\right)>90\%$. Consider
a deviation of player 2 to having the blind belief $\psi'_{2}=b$.
Observe that player 2 plays action $b$ in any equilibrium of $G_{\left(\psi_{1}^{*},\psi'_{2}\right)}$.
The monotonicity of $\psi_{1}^{*}$ implies that $\psi_{1}^{*}\left(b\right)\left(a\right)\leq\psi_{1}^{*}\left(a\right)\left(a\right)\leq1-\psi_{1}^{*}\left(a\right)\left(c\right)\leq10\%$,
which implies that the best reply of player 1 to the perceived strategy
of player 2 ($\psi_{1}^{*}\left(b\right)$) does not have action $c$
in its support. This implies that player 1 gains a payoff of at least
3 in any Nash equilibrium of the new biased game $G_{\left(\psi_{1}^{*},\psi'_{2}\right)}$,
which contradicts $\left(\left(\psi_{1}^{*},\psi_{2}^{*}\right),\left(a,a\right)\right)$
being a monotone weak BBE. 
\end{example}

\subsection{Examples of Games with Strategic Complements\label{subsec:Examples-of-Games-complements}}

In this subsection we analyze three examples of games with strategic
complements: input games, stag hunt games, and the traveler's dilemma.

Our first example demonstrates how to implement the undominated Pareto
optimal profile as a strong BBE in an input (or partnership game).
\begin{example}[\emph{Input games}]
\label{exam-partnership-game} Consider the following input game
(closely related games are analyzed in, among others, \citealp{holmstrom1982moral}
and \citealp{heller-sturrock}). Let $S_{i}=S_{j}=\left[0,1\right]$,
and let the payoff function be $\pi_{i}(s_{i},s_{j},\rho)=s_{i}\cdot s_{j}-\frac{s_{i}^{2}}{2\rho}$,
where the parameter $\frac{1}{\rho}$ is interpreted as the cost of
effort. One can show that (1) the best-reply function of each agent
is to exert an effort that is $\rho$<1 times smaller than the opponent's
(i.e., $BR\left(s_{j}\right)=\rho\cdot s_{j}$), (2) in the unique
Nash equilibrium each player exerts no effort $s_{i}=s_{j}=0$, (3)
the highest undominated strategy of each player $i$ is $s_{i}=\rho$,
and (4) the undominated strategy profile $\left(\rho,\rho\right)$
is Nash improving and yields the best payoff to both players out of
all the undominated symmetric strategy profiles. Let $\psi_{i}^{*}$
be the following biased-belief function:
\[
\psi_{i}^{*}\left(s_{j}\right)=\begin{cases}
\frac{s_{j}}{\rho} & s_{j}<\rho\\
1 & s_{j}\geq\rho.
\end{cases}
\]
Observe that $\psi_{i}^{*}$ is monotone and exhibits wishful thinking.
We now show that $\left(\left(\psi_{1}^{*},\psi_{2}^{*}\right),\left(\rho,\rho\right)\right)$
is a strong BBE. Observe that $BR\left(\psi_{i}^{*}\left(s_{j}\right)\right)=BR\left(\frac{s_{j}}{\rho}\right)=s_{j}$
for any $s_{j}\leq\rho$, and that $BR\left(\psi_{i}^{*}\left(s_{j}\right)\right)=BR\left(1\right)=\rho$
for any $s_{j}\geq\rho$. This implies that $\left(\rho,\rho\right)\in NE\left(G_{\left(\psi_{1}^{*},\psi_{2}^{*}\right)}\right)$,
and that for any player $i$, any biased belief $\psi_{i}'$, and
any Nash equilibrium $\left(s'_{1},s'_{2}\right)$ of the biased game
$G_{\left(\psi'_{i},\psi_{j}\right)}$, $s'_{j}=min\left(s_{i}',\rho\right)$.
This implies that $\pi_{i}\left(s'_{1},s'_{2}\right)\leq\pi_{i}\left(\rho,\rho\right)$,
which shows that $\left(\left(\psi_{1}^{*},\psi_{2}^{*}\right),\left(\rho,\rho\right)\right)$
is a strong BBE. Observe that this BBE induces only a small distortion
in the belief of each player, assuming that $\rho$ is sufficiently
close to one: 
\[
\left|\psi_{i}^{*}\left(s_{j}\right)-s_{j}\right|<\left|\frac{s_{j}}{\rho}-s_{j}\right|<\frac{1-\rho}{\rho}.
\]
\end{example}
Our second example characterizes the set of BBE outcomes (and their
supporting beliefs) in stag hunt games.
\begin{example}[\emph{Stag hunt games}]
\label{exa-stag-hunt-analysis} Stag hunt is a two-action game describing
a conflict between safety and social cooperation. Specifically, each
player $i$ has two actions: $s_{i}$ (``stag'') and $h_{i}$ (``hare''),
and his ordinal preferences are $\left(s_{i},s_{j}\right)$ $\succ_{i}$$\left(h_{i},s_{j}\right)$$\succeq_{i}$$\left(h_{i},h_{j}\right)$$\succ_{i}$$\left(s_{i},h_{j}\right)$.
Table \ref{tab:stag-hunt} 
\begin{table}[h]
\caption{\label{tab:stag-hunt}Stag Hunt Game ($g_{1},g_{2}\in\left(0,1\right]$
and $l_{1},l_{2}>0$)$\protect\underset{}{}$}

\centering{}%
\begin{tabular}{|c|c|c|}
\hline 
 & $s_{2}$ & $h_{2}$\tabularnewline
\hline 
$s_{1}$ & $\begin{array}{c}
\\
\\
\end{array}1,1\begin{array}{c}
\\
\\
\end{array}$ & $-l_{1},g_{2}$\tabularnewline
\hline 
$h_{1}$ & $\begin{array}{c}
\\
\\
\end{array}g_{1},-l_{1}\begin{array}{c}
\\
\\
\end{array}$ & $0,0$\tabularnewline
\hline 
\end{tabular}
\end{table}
presents the payoff of a general stag hunt game, where we have normalized,
without loss of generality, the payoff of each player when playing
action profile $\left(s_{i},s_{j}\right)$ ($\left(h_{i},h_{j}\right)$)
to be one (zero), and where each $g_{i}$ is positive and each $l_{i}$
is in the interval $\left(0,1\right)$. A common interpretation of
stag hunt games ($\grave{\textrm{a}}$ la Jean-Jacques Rousseau) is
a situation in which two individuals go hunting. Each can individually
choose to hunt a stag or to hunt a hare. Each player must choose an
action without knowing the choice of the other. If an individual hunts
a stag, he must have the cooperation of his opponent in order to succeed.
An individual can get a hare by himself, but a hare is worth less
than a stag. It is well known that the game admits 3 equilibria: $\left(s_{i},s_{j}\right)$,
$\left(h_{i},h_{j}\right)$, and $\left(\alpha_{1}^{*},\alpha_{2}^{*}\right)$,
with 
\[
\alpha_{i}^{*}=\frac{l_{j}}{l_{j}+\left(1-g_{j}\right)}\in\left(0,1\right),
\]
where each $\alpha_{i}$ represents the probability that player $i$
plays $s_{i}$.

Applying the analysis of the previous section shows that the game
admits 3 classes of BBE:
\begin{itemize}
\item Hunting the hare:$\left(\left(\psi_{1}^{*},\psi_{2}^{*}\right),\left(0,0\right)\right)$,
where each $\psi_{i}^{*}$ is an arbitrary monotone biased belief
that satisfies $\psi_{i}^{*}\left(1\right)\geq\alpha_{i}^{*}$.
\item Hunting the stag. $\left(\left(\psi_{1}^{*},\psi_{2}^{*}\right),\left(1,1\right)\right)$,
where each $\psi_{i}^{*}$ is an arbitrary monotone biased belief
that satisfies $\psi_{i}^{*}\left(1\right)\leq\alpha_{i}^{*}$.
\item Mixing with less weight to hunting the stag, wishful thinking, and
responsiveness to bad news: $\left(\left(\psi_{1}^{*},\psi_{2}^{*}\right),\left(\beta_{1},\beta_{2}\right)\right)$,
where for each player $i$: (1) the payoff is above the minmax payoff:
$\pi_{i}\left(\beta_{i},\beta_{j}\right)\geq0$, (2) the players hunt
the stag less often in the unique Nash equilibrium: $\beta_{i}\in\left(0,\alpha_{i}^{*}\right)$,
(3) wishful thinking: $\psi_{i}^{*}\left(\beta_{j}\right)=\alpha_{j}^{*}>\beta_{j}$,
(4) responsiveness to bad news: $\psi_{i}^{*}\left(\alpha\right)=\alpha_{j}^{*}$
for each $\alpha\geq\beta_{j}$, and $\psi_{i}^{*}\left(\alpha\right)<\alpha_{j}^{*}$
for each $\alpha<\beta_{j}$.
\end{itemize}
Observe that any profile $\left(\beta_{1},\beta_{2}\right)$, where
$\beta_{i}\in\left(\alpha_{i}^{*},1\right)$, cannot be a BBE outcome.
If $\beta_{j}=1$, then player $i$ can gain by deviating to $\psi'_{i}\equiv1$,
as the unique equilibrium of the new biased game is $\left(1,1\right)$,
which induces a higher payoff to player $i$ relative to $\left(\beta_{i},\beta_{j}\right)$.
If $\beta_{j}<1$, then player $j$ can gain by deviating to $\psi'_{j}\equiv1$,
as the only possible equilibria of the new biased game are $\left(1,1\right)$
and $\left(\beta_{i},1\right)$, both of which induce a higher payoff
to player $j$ relative to $\left(\beta_{i},\beta_{j}\right)$. 
\end{example}
Our third example deals with the traveler's dilemma game, in which
each agent has 100 pure ordered actions that have a discrete payoff
structure that resembles strategic complementarity in interval games.
We demonstrate how to implement the undominated Pareto optimal profile
in this game as a strong BBE outcome that presents wishful thinking.
\begin{example}[\emph{Implementing the undominated Pareto optimal profile as a strong
BBE in the traveler's dilemma}]
\label{exam-traveler-dilemma-1}
\end{example}
Consider the following version of the traveler's dilemma game (\citealp{basu1994traveler}).
Each player has 100 actions ($A_{i}=\left\{ 1,...,100\right\} $),
and the payoff function of each player is

\[
\pi_{i}\left(a_{i},a_{j}\right)=\begin{cases}
a_{i}+2 & a_{i}<a_{j}\\
a_{i} & a_{i}=a_{j}\\
a_{j}-2 & a_{i}>a_{j}.
\end{cases}
\]

The interpretation of the game is as follows. Two identical suitcases
have been lost, each owned by one of the players. Each player has
to evaluate the value of his own suitcase. Both players get a payoff
equal to the minimal evaluation (as the suitcases are known to have
identical values), and, in addition, if the evaluations differ, then
the player who gave the lower (higher) evaluation gets a bonus (malus)
of 2 to his payoff. 

It is well known that the unique Nash equilibrium is $\left(1,1\right)$,
which yields a low payoff of one to each player. Observe that the
traveler's dilemma has positive spillovers, in the sense that it is
always weakly better for a player if his opponent chooses a higher
action. The traveler's dilemma has strategic complementarity in the
sense that the best reply of an agent is to stop one stage before
his opponent, and, thus, an agent has an incentive to choose a higher
action if his opponent chooses a higher action.

Observe that action $99$ is the ``highest'' undominated action
of each player (as $99$ is a best reply against $100$, and as action
$100$ is not a best reply against any of the opponent's strategies).
In what follows, we construct a strong BBE exhibiting wishful thinking
that yields a payoff of 99 to each player in the undominated symmetric
Pareto-optimal strategy profile.

We define the biased belief $\psi_{i}^{*}$ as follows: 
\[
\psi_{i}^{*}\left(\alpha_{1},\alpha_{2}...,\alpha_{99},\alpha_{100}\right)=\left(\alpha_{1},\alpha_{2},...,\frac{\alpha_{99}}{2},\frac{\alpha_{99}}{2}+\alpha_{100}\right).
\]
In what follows we show that $\left(\left(\psi_{1}^{*},\psi_{2}^{*}\right),\left(99,99\right)\right)$
is a strong BBE. Observe first that $\psi_{\text{i}}^{*}\left(99\right)=\left(0,...,0,\frac{1}{2},\frac{1}{2}\right)$,
which implies that $99\in BR\left(\psi_{i}^{*}\left(99\right)\right)$,
and, thus, $\left(99,99\right)\in NE\left(G_{\left(\psi_{1}^{*},\psi_{2}^{*}\right)}\right)$.
Let $\psi'_{1}$ be an arbitrary perception bias of player \emph{i}.
Observe that player $i$ never plays action $100$ in a any Nash equilibrium
of any biased game, because action $100$ is not a best reply against
any strategy of player $j$. Next observe that player \emph{i} can
obtain a payoff higher than 99 only if (1) player \emph{j} chooses
action $99$ with a positive probability, and (2) player \emph{i}
chooses action $98$ with a probability strictly higher than his probability
of playing action $100$. However, the biased belief $\psi_{j}^{*}$
of player \emph{j} implies that if player \emph{i} chooses action
$98$ with a probability strictly higher than his probability of playing
$100$, then player \emph{j} never chooses action $99$ in any Nash
equilibrium of the induced biased game because action $99$ yields
a strictly lower payoff to player $j$ than action $98$ against the
perceived strategy of player $i$ (because according to this perceived
strategy, player $i$ plays action $100$ with a probability strictly
less than player $i$'s probability of playing either action $98$
or action $99$).

Note that the BBE equilibrium outcome $\left(99,99\right)$ is consistent
with level-1 behavior in the level-\emph{k} and cognitive hierarchy
literature (see, e.g., \citealp{stahl1994experimental,nagel1995unraveling,costa2001cognition,camerer2004cognitive}),
according to which each agent believes that his opponent is following
a focal non-strategic action (the action 100 in the traveler's dilemma),
and best-replies to this belief. The notion of BBE can help explain
why such level-k behavior induces a strategic advantage in the long
run, and why, therefore, it is likely to emerge in an equilibrium.

\subsection{Hawk-Dove Game\label{subsec:Hawk-dove-game}}

The following example characterizes the set of BBE (and their supporting
beliefs) in a hawk-dove game (which is a game of strategic substitutes).
\begin{example}[\emph{The Hawk-dove game}]
\label{exam-The-hawk-dove-r}The hawk-dove (or ``chicken'') game
is a two-action game in which each player $i$ has two actions: $d_{i}$
(interpreted as a ``dove''-like action of willingness to share a
resource with the opponent) and $h_{i}$ (interpreted as a ``hawk''-like
action of insistence on getting the whole resource, even if this requires
fighting against the opponent), and where the ordinal preferences
of each player $i$ are $\left(h_{i},d_{j}\right)$ (getting the resource)
$\succ$$\left(d_{i},d_{j}\right)$ (sharing the resource) $\succ$$\left(d_{i},h_{j}\right)$
(not getting the the resource) $\succ$$\left(h_{i},h_{j}\right)$
(being involved in a serious fight). Table \ref{tab:Hawk-dove-2-1}
presents the payoff of a general two-action hawk-dove game, where
we have normalized, without loss of generality, the payoff of each
player when playing action profile $\left(d_{i},d_{j}\right)$ ($\left(h_{i},h_{j}\right)$)
to be one (zero), and where each $g_{i}$ positive and each $l_{i}$
is in the interval $\left(0,1\right)$. 
\begin{table}[h]
\caption{\label{tab:Hawk-dove-2-1}Hawk-Dove Game ($g_{1},g_{2}>0$ and $l_{1},l_{2}\in\left(0,1\right)$)$\protect\underset{}{}$}

\centering{}%
\begin{tabular}{|c|c|c|}
\hline 
 & $d_{2}$ & $h_{2}$\tabularnewline
\hline 
$d_{1}$ & $\begin{array}{c}
\\
\\
\end{array}1,1\begin{array}{c}
\\
\\
\end{array}$ & $1-l_{1},1+g_{2}$\tabularnewline
\hline 
$h_{1}$ & $\begin{array}{c}
\\
\\
\end{array}1+g_{1},1-l_{1}\begin{array}{c}
\\
\\
\end{array}$ & $0,0$\tabularnewline
\hline 
\end{tabular}
\end{table}

It is well known that the hawk-dove game admits three equilibria:
two pure equilibria $\left(d_{1},h_{2}\right)$ and $\left(h_{1},d_{2}\right)$,
and one mixed equilibrium $\left(\alpha_{1}^{*},\alpha_{2}^{*}\right)$,
where the probability that player $i$ plays action $\alpha_{i}^{*}$
is 
\[
\alpha_{i}^{*}=\frac{1-l_{j}}{g_{j}+\left(1-l_{j}\right)}\in\left(0,1\right),\,\,\,\,\,\textrm{and}\,\,\,\,\pi\left(\alpha_{i}^{*},\alpha_{j}^{*}\right)=\alpha_{j}^{*}\cdot\left(1+g_{i}\right)=1-\frac{g_{i}}{g_{i}+\left(1-l_{i}\right)}\cdot l_{i}.
\]
The undominated minmax payoff of each player coincides with the minmax
payoff of each player (as there are no dominated actions), and it
is equal to $M_{i}^{U}=1-l_{i},$ which is obtained when the opponent
plays $h_{j}$.

Applying the analysis of the previous section shows that the game
admits 3 classes of BBE:
\begin{itemize}
\item Pure equilibrium hawk-dove:$\left(\left(\psi_{i}^{*},\psi_{j}^{*}\right),\left(0,1\right)\right)$,
where (1) $\psi_{i}^{*}$ is an arbitrary monotone biased belief that
satisfies $\psi_{i}^{*}\left(0\right)\geq\alpha_{i}^{*}$, and (2)
$\psi_{j}^{*}$ is an arbitrary monotone biased belief that satisfies
$\psi_{i}^{*}\left(0\right)\leq\alpha_{i}^{*}$.
\item Mixing (with less weight to playing dove), wishful thinking, and one-directional
blindness:\\
$\left(\left(\psi_{1}^{*},\psi_{2}^{*}\right),\left(\beta_{1},\beta_{2}\right)\right)$,
where for each player $i$: (1) the payoff is above the minmax payoff:
$\pi_{i}\left(\beta_{i},\beta_{j}\right)\geq1-l_{i}$, (2) $\beta_{i}\in\left(0,\alpha_{i}^{*}\right)$
(i.e., agents play dove less often in the unique Nash equilibrium),
(3) wishful thinking: $\psi_{i}^{*}\left(\beta_{j}\right)=\alpha_{j}^{*}>\beta_{j}$,
and (4) responsiveness only to good news: $\psi_{i}^{*}\left(\alpha\right)=\alpha_{j}^{*}$
for each $\alpha\leq\beta_{j}$, and $\psi_{i}^{*}\left(\alpha\right)>\alpha_{j}^{*}$
for each $\alpha>\beta_{j}$.
\end{itemize}
Observe that any profile $\left(\beta_{1},\beta_{2}\right)$ where
$\beta_{i}\in\left(\alpha_{i}^{*},1\right)$ cannot be a BBE outcome.
If $\beta_{j}=1$, then player $i$ can gain by deviating to $\psi'_{i}\equiv1$,
as the unique equilibrium of the new biased game is $\left(0_{i},1_{j}\right)$,
which induces a higher payoff to player $i$ relative to $\left(\beta_{i},\beta_{j}\right)$.
If $\beta_{j}<1$, then player $j$ can gain by deviating into $\psi'_{j}\equiv1$,
as the only possible equilibria of the new biased game are $\left(1_{i},0_{j}\right)$
and $\left(\beta_{i},1_{j}\right)$, both of which induce a higher
payoff to player $j$ relative to $\left(\beta_{i},\beta_{j}\right)$. 
\end{example}

\section{Evolutionary Interpretation of BBE\label{sec:evol-interpetation-of-BBE} }

In this section we present a formal definition of strong BBE that
is exactly analogous to the definition of a stable configuration $\grave{\textrm{a}}$
la \citet{DekelElyEtAl2007Evolution}. This shows that our static
solution concept of strong BBE captures evolutionary stability in
the same way as the solution concepts used in the literature on ``indirect
evolution of preferences.'' Finally, we illustrate a detailed example
of a possible learning dynamic that may result in convergence to strong
BBE.

\subsection{Evolutionary Definition of Strong BBE $\grave{\textrm{a}}$ la \citet{DekelElyEtAl2007Evolution}}

In this subsection we present a definition of a strong BBE that is
completely analogous to the definition of a stable configuration a
la \citet{DekelElyEtAl2007Evolution} (henceforth DEY) for the case
of perfect observability of the opponent's type (i.e., $p=1$ in DEY).

In the adaptation of the notion of stable configuration $\grave{\textrm{a}}$
la \citet{DekelElyEtAl2007Evolution} to our setup we change two aspects
(and only these aspects):
\begin{enumerate}
\item We deal with general two-player games played between two different
populations, rather than DEY's setup that deals with symmetric two-player
games played within a single population.
\item Each agent in DEY's model is endowed with a type that determines the
agent's subjective preferences. By contrast, in our setup each agent
is endowed with a type that determines the agent's monotone biased
belief.
\item We focus on homogeneous configurations. DEY's general definitions
allow one to deal with heterogeneous configurations (in which different
incumbents may have different types). However, their results mainly
deal with homogeneous configurations (in which all incumbents have
the same type). Therefore, to ease notation, we focus on homogeneous
configurations in our adaptation of DEY's definitions.
\end{enumerate}
After adapting DEY's definition of a homogeneous configuration (page
689 in DEY) to the three aspects mentioned above, their definition
is as follows: 
\begin{defn}
A (homogeneous) \emph{configuration} is a pair $\left(\left(\psi_{1}^{*},\psi_{2}^{*}\right),\left(s_{1}^{*},s_{2}^{*}\right)\right)$,
where, for each player $i$, function $\psi_{i}^{*}$ is a monotone
biased belief of player $i$ and $s_{i}^{*}$ is a strategy of player
$i$ satisfying $s_{i}^{*}\in BR\left(\psi_{i}^{*}\left(s_{j}^{*}\right)\right)$.

It is immediate that any monotone weak BBE is a configuration. 

Next, DEY present a notion of a balanced configuration (page 689 in
DEY) that is trivially satisfied by any homogeneous configuration.
\end{defn}
Consider two continuum populations of mass one that follow a configuration
$\left(\left(\psi_{1}^{*},\psi_{2}^{*}\right),\left(s_{1}^{*},s_{2}^{*}\right)\right)$.
Assume that one of these populations (say, population $i$) is invaded
by a small group of $0<\epsilon<<1$ mutants with a different biased
belief $\psi'_{i}\neq\psi_{i}^{*}$. DEY assume that (1) such a mutation
can destabilize a configuration by resulting in the mutants achieving
a higher fitness than the incumbents of the same population\footnote{Under imperfect observability, a mutant can destabilize a configuration
by unraveling the original equilibrium behavior, thereby causing the
incumbents' strategies to substantially diverge following the mutant's
entry into the population. This cannot happen under perfect observability,
as the incumbents can always exhibit the same equilibrium behavior
when being matched against other incumbents (see, page 690 in DEY
for a discussion of focal equilibria).} $i$, and (2) the incumbents continue to play the same behavior among
themselves (what DEY calls ``focal equilibria''). 

Let $\Psi_{i}$ be the set of all biased beliefs of player $i$. Following
DEY (page 690 in DEY) we define $N_{i,\epsilon}\left(\psi_{i}^{*},\psi_{i}'\right)\in\Delta\left(\Psi_{i}\right)$
to be the set of distributions over biased beliefs in population $i$
resulting from entry by no more than $\epsilon$ mutants. Formally,
\[
N_{i,\epsilon}\left(\psi_{i}^{*},\psi_{i}'\right)=\left\{ \mu_{i}'\in\Delta\left(\Psi_{i}\right)|\mu_{i}'=\left(1-\epsilon'\right)\cdot\psi_{i}^{*}+\epsilon'\cdot\psi_{i}',\,\,\epsilon'<\epsilon\right\} .
\]

Given a configuration $\left(\left(\psi_{1}^{*},\psi_{2}^{*}\right),\left(s_{1}^{*},s_{2}^{*}\right)\right)$
and a post-entry distribution of biased beliefs in population $i$
$\tilde{\mu}_{i}\in N_{i,\epsilon}\left(\psi_{i}^{*},\psi_{i}'\right)$,
a \emph{post-entry focal configuration} is a pair $\left(\left(\tilde{\mu}_{i},\psi_{j}^{*}\right),\left(s'_{i},s'_{j}\right)\right),$
where (1) $s'_{i}\in BR\left(\psi_{i}^{'}\left(s'_{j}\right)\right)$
is interpreted as the mutant's strategy, and (2) $s'_{j}\in BR\left(\psi_{j}^{*}\left(s'_{i}\right)\right)$
is interpreted as population j's strategy against the mutants. The
incumbents are assumed to play the same pre-entry strategies $\left(s_{i}^{*},s_{j}^{*}\right)$
when being matched among themselves. Let $B\left(\tilde{\mu}_{i}\right)$
denote the set of all post-entry focal configurations. 

Following DEY (Definition 3 on page 691 in DEY), we define DEY-stability
of a configuration as follows.
\begin{defn}
Configuration $\left(\left(\psi_{1}^{*},\psi_{2}^{*}\right),\left(s_{1}^{*},s_{2}^{*}\right)\right)$
is \emph{DEY-stable} if there exists $\epsilon>0$ such that for every
player $i$, every biased belief $\psi_{i}'$, every post-entry distribution
of biased beliefs $\tilde{\mu}_{i}\in N_{i,\epsilon}\left(\psi_{i}^{*},\psi_{i}'\right)$,
and every post-entry focal configuration $\left(\left(\tilde{\mu}_{i},\psi_{j}^{*}\right),\left(s'_{i},s'_{j}\right)\right),$
the mutants are weakly outperformed relative to the incumbents' payoff
(in their own population), i.e., $\pi_{i}\left(s'_{i},s'_{j}\right)\leq\pi_{i}\left(s_{i}^{*},s{}_{j}^{*}\right).$
\end{defn}

\subsection{Equivalence between the Definitions}

The following result shows that the definition of a stable configuration
coincides with our definition of strong BBE.
\begin{prop}
\label{prop-equivalence-of-strong-BBE-DEY}A configuration $\left(\left(\psi_{1}^{*},\psi_{2}^{*}\right),\left(s_{1}^{*},s_{2}^{*}\right)\right)$
is DEY-stable iff it is a strong BBE.
\end{prop}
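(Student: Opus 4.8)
The plan is to show that, under perfect observability, the DEY-stability condition is a near-verbatim restatement of condition (2) of the strong BBE definition (Definition \ref{def-strong-BBE}), once one identifies post-entry focal configurations with Nash equilibria of the relevant biased game. The central bookkeeping step is the following translation: given a deviation of population $i$ to a biased belief $\psi'_i$, a post-entry focal configuration $\left(\left(\tilde{\mu}_i,\psi_j^*\right),\left(s'_i,s'_j\right)\right)$ requires $s'_i\in BR\left(\psi'_i\left(s'_j\right)\right)$ and $s'_j\in BR\left(\psi_j^*\left(s'_i\right)\right)$, which is exactly the requirement that $\left(s'_i,s'_j\right)\in NE\left(G_{\left(\psi'_i,\psi_j^*\right)}\right)$. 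Hence the strategy profiles arising in post-entry focal configurations are precisely the Nash equilibria of the biased game $G_{\left(\psi'_i,\psi_j^*\right)}$, the mutant's fitness in such a configuration is $\pi_i\left(s'_i,s'_j\right)$, and the incumbent's fitness is $\pi_i\left(s_i^*,s_j^*\right)$.

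The first thing I would verify is that neither $\epsilon$ nor the precise post-entry mass $\epsilon'$ (equivalently, the distribution $\tilde{\mu}_i$) enters the payoff comparison. Because observability is perfect, population $j$ plays $s_j^*$ against incumbents of population $i$ (a best reply, since $s_j^*\in BR\left(\psi_j^*\left(s_i^*\right)\right)$ by the configuration requirement) and plays $s'_j$ against the mutants, while the incumbents of population $i$ keep playing $s_i^*$ (a best reply, since $s_i^*\in BR\left(\psi_i^*\left(s_j^*\right)\right)$); so their play among themselves is genuinely focal and undisturbed. Consequently the comparison $\pi_i\left(s'_i,s'_j\right)$ versus $\pi_i\left(s_i^*,s_j^*\right)$ is identical across all admissible $\tilde{\mu}_i$ and all $\epsilon$, which collapses the existential-over-$\epsilon$ and universal-over-$\tilde{\mu}_i$ quantifiers of DEY-stability to a single condition ranging over $\psi'_i$ and over $\left(s'_i,s'_j\right)\in NE\left(G_{\left(\psi'_i,\psi_j^*\right)}\right)$.

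With this translation, both implications are immediate. For strong BBE $\Rightarrow$ DEY-stable: a strong BBE is by definition a configuration (its biased beliefs are monotone and $s_i^*\in BR\left(\psi_i^*\left(s_j^*\right)\right)$ holds because $s^*\in NE\left(G_{\psi^*}\right)$), and taking any $\epsilon>0$, every post-entry focal configuration has its strategy profile in $NE\left(G_{\left(\psi'_i,\psi_j^*\right)}\right)$, so condition (2) of Definition \ref{def-strong-BBE} yields $\pi_i\left(s'_i,s'_j\right)\leq\pi_i\left(s_i^*,s_j^*\right)$, which is exactly DEY-stability. For the converse, DEY-stability (for the witnessing $\epsilon$) gives $\pi_i\left(s'_i,s'_j\right)\leq\pi_i\left(s_i^*,s_j^*\right)$ for every $\psi'_i$ and every $\left(s'_i,s'_j\right)\in NE\left(G_{\left(\psi'_i,\psi_j^*\right)}\right)$; combined with monotonicity (part of being a configuration) this is precisely condition (2) of strong BBE. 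The only remaining item is to confirm that a DEY-stable configuration is a weak BBE at all: condition (1) of Definition \ref{def-weak-BBE} is the configuration requirement $s^*\in NE\left(G_{\psi^*}\right)$, and condition (2) of Definition \ref{def-weak-BBE} follows a fortiori from the strong inequality, using that $NE\left(G_{\left(\psi'_i,\psi_j^*\right)}\right)\neq\emptyset$ by the Kakutani argument in Section 2.3.

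The main obstacle is not a deep argument but the careful matching of the two frameworks: one must verify that the focal requirement that incumbents keep playing $\left(s_i^*,s_j^*\right)$ is automatically consistent (it remains a mutual best reply in the biased game) precisely because the configuration conditions $s_i^*\in BR\left(\psi_i^*\left(s_j^*\right)\right)$ and $s_j^*\in BR\left(\psi_j^*\left(s_i^*\right)\right)$ hold, and that perfect observability is exactly what renders the mutant mass irrelevant to fitness—the same simplification DEY exploit for the case $p=1$. Once these observations are in place, the equivalence reduces to a direct comparison of the two definitions.
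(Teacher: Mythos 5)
Your proposal is correct and follows essentially the same route as the paper's own proof: the key step in both is the observation that the strategy profiles of post-entry focal configurations are exactly the Nash equilibria of the biased game $G_{\left(\psi'_{i},\psi_{j}^{*}\right)}$, after which each direction is a direct matching of definitions. Your additional checks (that $\epsilon$ and $\tilde{\mu}_{i}$ are payoff-irrelevant, and that a DEY-stable configuration satisfies the weak BBE conditions via nonemptiness of $NE\left(G_{\left(\psi'_{i},\psi_{j}^{*}\right)}\right)$) are slightly more explicit than the paper's write-up but do not change the argument.
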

\begin{proof}
``If'' part: Let $\left(\left(\psi_{1}^{*},\psi_{2}^{*}\right),\left(s_{1}^{*},s_{2}^{*}\right)\right)$
be a strong BBE. Let $\epsilon>0$, $i\in\left\{ 1,2\right\} $, and
$\psi_{i}'\in\Psi_{i}$. Let $\tilde{\mu}_{i}\in N_{i,\epsilon}\left(\psi_{j}^{*},\psi_{i}'\right)$
be a post-entry distribution of biased beliefs. Let $\left(\left(\tilde{\mu}_{i},\psi_{j}^{*}\right),\left(s'_{i},s'_{j}\right)\right)$
be a post-entry focal configuration. The fact that $\left(\left(\tilde{\mu}_{i},\psi_{j}^{*}\right),\left(s'_{i},s'_{j}\right)\right)$
is a post-entry focal configuration implies that $s'_{i}\in BR\left(\psi_{i}^{'}\left(s'_{j}\right)\right)$
and $s'_{j}\in BR\left(\psi_{j}^{*}\left(s'_{i}\right)\right)$. The
fact that it is a strong BBE implies that $\pi_{i}\left(s'_{i},s'_{j}\right)\leq\pi_{i}\left(s_{i}^{*},s{}_{j}^{*}\right)$,
which shows that $\left(\left(\psi_{1}^{*},\psi_{2}^{*}\right),\left(s_{1}^{*},s_{2}^{*}\right)\right)$
is DEY-stable.

``Only if'' part: Let $\left(\left(\psi_{1}^{*},\psi_{2}^{*}\right),\left(s_{1}^{*},s_{2}^{*}\right)\right)$
be DEY-stable configuration. Let $i\in\left\{ 1,2\right\} $ and $\psi_{i}'\in\Psi_{i}$.
Let $\left(s'_{i},s'_{j}\right)\in NE\left(G_{\left(\psi_{i}',\psi_{j}^{*}\right)}\right)$
be an equilibrium of the new biased game. Let $\epsilon>0$. Let $\tilde{\mu}_{i}\in N_{i,\epsilon}\left(\psi_{i}^{*},\psi_{i}'\right)$
be a post-entry distribution of biased beliefs. For each $\left(s'_{i},s'_{j}\right)\in NE\left(G_{\left(\psi_{i}',\psi_{j}^{*}\right)}\right)$,
let $\left(\left(\tilde{\mu}_{i},\psi_{j}^{*}\right),\left(s'_{i},s'_{j}\right)\right)$
be a post-entry focal configuration. The assumption that $\left(\left(\psi_{1}^{*},\psi_{2}^{*}\right),\left(s_{1}^{*},s_{2}^{*}\right)\right)$
is DEY-stable implies that $\pi_{i}\left(s'_{i},s'_{j}\right)\leq\pi_{i}\left(s_{i}^{*},s{}_{j}^{*}\right).$
This implies that $\left(\left(\psi_{1}^{*},\psi_{2}^{*}\right),\left(s_{1}^{*},s_{2}^{*}\right)\right)$
is a strong BBE.
\end{proof}
\begin{rem*}[Allowing multiple simultaneous invasions of mutants]
 The definition of DEY-stability presented above is unaffected when
various groups of mutants simultaneously invade one of the populations.
By contrast, if one were to require a stable configuration to resist
simultaneous invasions of two groups of mutants, one invasion of each
population, it would require a refinement of the concept of strong
BBE, in the spirit of \citeauthor{smith1973lhe}'s \citeyearpar{smith1973lhe}
notion of evolutionary stability, such that if both $\psi_{1}'$ and
$\psi_{2}'$ are best replies against configuration $\left(\left(\psi_{1}^{*},\psi_{2}^{*}\right),\left(s_{1}^{*},s_{2}^{*}\right)\right)$,
then (1) $\psi_{1}^{*}$ should be a strictly better reply against
$\psi'_{2}$ (relative to $\psi_{1}'$), and (2) $\psi_{2}^{*}$ should
be a strictly better reply against $\psi'_{1}$ (relative to $\psi_{2}'$).
\end{rem*}
Similarly, one can formulate a definition of stability equivalent
to that of monotone BBE by requiring the mutants to be weakly outperformed
in at least one post-entry focal configuration.
\begin{defn}
Configuration $\left(\left(\psi_{1}^{*},\psi_{2}^{*}\right),\left(s_{1}^{*},s_{2}^{*}\right)\right)$
is \emph{weakly stable} if there exists $\epsilon>0$ such that for
every player $i$, every biased belief $\psi_{i}'$, and every post-entry
distribution of biased beliefs $\tilde{\mu}_{i}\in N_{i,\epsilon}\left(\psi_{i}^{*},\psi_{i}'\right)$,
there exists a post-entry focal configuration $\left(\left(\tilde{\mu}_{i},\psi_{j}^{*}\right),\left(s'_{i},s'_{j}\right)\right)$
in which the mutants are weakly outperformed relative to the incumbents'
payoff, i.e., $\pi_{i}\left(s'_{i},s'_{j}\right)\leq\pi_{i}\left(s_{i}^{*},s{}_{j}^{*}\right).$
\end{defn}
The following result shows that the definition of a weakly stable
configuration coincides with our definition of weak BBE. The simple
proof, which is analogous to the proof of \ref{prop-equivalence-of-strong-BBE-DEY},
is omitted for brevity.
\begin{prop}
A configuration $\left(\left(\psi_{1}^{*},\psi_{2}^{*}\right),\left(s_{1}^{*},s_{2}^{*}\right)\right)$
is weakly stable iff it is a monotone weak BBE.
\end{prop}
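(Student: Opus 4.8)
The plan is to mirror the proof of Proposition \ref{prop-equivalence-of-strong-BBE-DEY}, replacing the universal quantifier over post-entry focal configurations (resp.\ over Nash equilibria of the biased game) with an existential one throughout. The crucial observation, identical to the one driving that proof, is that post-entry focal configurations are in exact correspondence with the Nash equilibria of the biased game $G_{(\psi_i',\psi_j^*)}$: a pair $(s_i',s_j')$ with $s_i'\in BR(\psi_i'(s_j'))$ and $s_j'\in BR(\psi_j^*(s_i'))$ is precisely an element of $NE(G_{(\psi_i',\psi_j^*)})$. Since both \emph{weak stability} and \emph{monotone weak BBE} require the outperformance inequality to hold in only \emph{some} such object, the two conditions translate into one another once this correspondence is in hand.

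For the ``if'' direction I would start from a monotone weak BBE $((\psi_1^*,\psi_2^*),(s_1^*,s_2^*))$. By definition each $\psi_i^*$ is monotone and $(s_1^*,s_2^*)\in NE(G_{\psi^*})$, so the pair is a configuration. Fix any $\epsilon>0$, any player $i$, any deviating belief $\psi_i'$, and any post-entry distribution $\tilde\mu_i\in N_{i,\epsilon}(\psi_i^*,\psi_i')$. Condition (2) of the weak BBE definition supplies a profile $(s_i',s_j')\in NE(G_{(\psi_i',\psi_j^*)})$ with $\pi_i(s_i',s_j')\le\pi_i(s_i^*,s_j^*)$; by the correspondence above, this profile is a post-entry focal configuration, so the mutants are weakly outperformed in it, establishing weak stability.

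For the ``only if'' direction I would begin with a weakly stable configuration, which is by definition a configuration, so each $\psi_i^*$ is monotone and $s_i^*\in BR(\psi_i^*(s_j^*))$, yielding both the monotonicity requirement and condition (1) of a weak BBE (namely $(s_1^*,s_2^*)\in NE(G_{\psi^*})$). To obtain condition (2), fix a player $i$ and a deviation $\psi_i'$, take the $\epsilon>0$ guaranteed by weak stability, pick any $\tilde\mu_i\in N_{i,\epsilon}(\psi_i^*,\psi_i')$, and invoke weak stability to produce a post-entry focal configuration $((\tilde\mu_i,\psi_j^*),(s_i',s_j'))$ with $\pi_i(s_i',s_j')\le\pi_i(s_i^*,s_j^*)$. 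Translating this focal configuration back into a Nash equilibrium $(s_i',s_j')\in NE(G_{(\psi_i',\psi_j^*)})$ gives exactly the equilibrium demanded by condition (2), so the configuration is a monotone weak BBE.

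I expect no genuine obstacle: the argument is pure bookkeeping. The only point needing care—again the same point that carries Proposition \ref{prop-equivalence-of-strong-BBE-DEY}—is to confirm that the payoff comparison in a focal configuration (mutants' payoff $\pi_i(s_i',s_j')$ against incumbents' payoff $\pi_i(s_i^*,s_j^*)$) coincides with the comparison in condition (2) of the BBE definition, and to note that the inequality is insensitive to the mutant mass $\epsilon'$, so the particular choice of $\epsilon$ and of $\tilde\mu_i$ is immaterial in both directions.
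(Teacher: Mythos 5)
Your proposal is correct and is exactly the argument the paper intends: the paper omits this proof as ``analogous to the proof of Proposition \ref{prop-equivalence-of-strong-BBE-DEY},'' and your proof is precisely that analogue, resting on the same exact correspondence between post-entry focal configurations and elements of $NE\left(G_{\left(\psi'_{i},\psi_{j}^{*}\right)}\right)$, with existential quantifiers replacing the universal ones. Your closing observations — that the inequality is insensitive to the mutant mass and that the payoff comparisons on the two sides coincide — are also the right points of care and match the structure of the paper's proof of the strong case.
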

Finally, one can formulate a definition of stability equivalent to
that of a BBE by requiring the mutants to be weakly outperformed in
at least one plausible post-entry focal configuration.
\begin{defn}
Given configuration $\left(\left(\psi_{1}^{*},\psi_{2}^{*}\right),\left(s_{1}^{*},s_{2}^{*}\right)\right)$,
$\epsilon>0$ , $i\in\left\{ 1,2\right\} $, biased belief $\psi_{i}'$,
and a post-entry distribution of biased beliefs $\tilde{\mu}_{i}\in N_{i,\epsilon}\left(\psi_{i}^{*},\psi_{i}'\right)$,
we say that a post-entry focal configuration $\left(\left(\tilde{\mu}_{i},\psi_{j}^{*}\right),\left(s'_{i},s'_{j}\right)\right)$
is \emph{implausible} if: (1) $\psi_{j}^{*}\left(s'_{i}\right)=\psi_{j}^{*}\left(s_{i}^{*}\right)$,
(2) $s'_{j}\neq s_{j}^{*}$, and (3) $\left(\left(\tilde{\mu}_{i},\psi_{j}^{*}\right),\left(s'_{i},s_{j}^{*}\right)\right)$
is a post-entry focal configuration. A post-entry focal configuration
is\emph{ plausible} if it is not implausible.
\end{defn}
\begin{defn}
Configuration $\left(\left(\psi_{1}^{*},\psi_{2}^{*}\right),\left(s_{1}^{*},s_{2}^{*}\right)\right)$
is \emph{plausibly stable} if there exists $\epsilon>0$ such that
for every player $i$, every biased belief $\psi_{i}'$, and every
post-entry distribution of biased beliefs $\tilde{\mu}_{i}\in N_{i,\epsilon}\left(\psi_{i}^{*},\psi_{i}'\right)$,
there exists a plausible post-entry focal configuration $\left(\left(\tilde{\mu}_{i},\psi_{j}^{*}\right),\left(s'_{i},s'_{j}\right)\right)$
in which the mutants are weakly outperformed relative to the incumbents'
payoff, i.e., $\pi_{i}\left(s'_{i},s'_{j}\right)\leq\pi_{i}\left(s_{i}^{*},s{}_{j}^{*}\right).$
\end{defn}
The following result shows that the definition of a plausibly stable
configuration coincides with our definition of BBE. The simple proof,
which is analogous to the proof of \ref{prop-equivalence-of-strong-BBE-DEY},
is omitted for brevity.
\begin{prop}
A configuration $\left(\left(\psi_{1}^{*},\psi_{2}^{*}\right),\left(s_{1}^{*},s_{2}^{*}\right)\right)$
is \emph{plausibly} stable iff it is a BBE.
\end{prop}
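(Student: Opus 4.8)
The plan is to exploit the exact parallel with Proposition \ref{prop-equivalence-of-strong-BBE-DEY}, reducing the statement to a dictionary between the evolutionary vocabulary and the equilibrium vocabulary. The central observation I would establish first is that, for a fixed incumbent profile $\left(\left(\psi_{1}^{*},\psi_{2}^{*}\right),\left(s_{1}^{*},s_{2}^{*}\right)\right)$, a deviating player $i$ with mutant belief $\psi_{i}'$, and any post-entry fraction $\epsilon'>0$, a pair $\left(s'_{i},s'_{j}\right)$ is a post-entry focal configuration $\left(\left(\tilde{\mu}_{i},\psi_{j}^{*}\right),\left(s'_{i},s'_{j}\right)\right)$ if and only if $\left(s'_{i},s'_{j}\right)\in NE\left(G_{\left(\psi_{i}',\psi_{j}^{*}\right)}\right)$. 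This is immediate by matching the two defining best-reply conditions $s'_{i}\in BR\left(\psi_{i}'\left(s'_{j}\right)\right)$ and $s'_{j}\in BR\left(\psi_{j}^{*}\left(s'_{i}\right)\right)$ against the definition of a Nash equilibrium of the biased game; moreover the defining conditions do not actually depend on $\tilde{\mu}_{i}$ (only on $\psi_{i}'$ and $\psi_{j}^{*}$), so any $\tilde{\mu}_{i}\in N_{i,\epsilon}\left(\psi_{i}^{*},\psi_{i}'\right)$ may be used freely.

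Next I would verify that this bijection respects plausibility: comparing the definition of an implausible post-entry focal configuration with Definition \ref{def-implausible}, both require exactly (1) $\psi_{j}^{*}\left(s'_{i}\right)=\psi_{j}^{*}\left(s_{i}^{*}\right)$, (2) $s'_{j}\neq s_{j}^{*}$, and (3) that the pair with $s'_{j}$ replaced by $s_{j}^{*}$ is again a focal configuration (equivalently, a Nash equilibrium of $G_{\left(\psi_{i}',\psi_{j}^{*}\right)}$). Hence, under the bijection of the previous step, plausible post-entry focal configurations correspond precisely to the elements of $PNE\left(G_{\left(\psi_{i}',\psi_{j}^{*}\right)}\right)$.

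With this dictionary in hand, both directions are routine. For the ``if'' direction, suppose the configuration is a BBE. By definition each $\psi_{i}^{*}$ is monotone and $s_{i}^{*}\in BR\left(\psi_{i}^{*}\left(s_{j}^{*}\right)\right)$, so it is a (homogeneous) configuration. Fix any $\epsilon>0$; for every player $i$, mutant belief $\psi_{i}'$, and post-entry distribution $\tilde{\mu}_{i}$, the defining condition of a BBE supplies a plausible $\left(s'_{i},s'_{j}\right)\in PNE\left(G_{\left(\psi_{i}',\psi_{j}^{*}\right)}\right)$ with $\pi_{i}\left(s'_{i},s'_{j}\right)\leq\pi_{i}\left(s_{i}^{*},s_{j}^{*}\right)$; by the dictionary this is exactly a plausible post-entry focal configuration in which the mutants are weakly outperformed, so the configuration is plausibly stable. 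For the ``only if'' direction, a plausibly stable configuration is by definition a configuration, which already yields monotonicity of each $\psi_{i}^{*}$ together with $\left(s_{1}^{*},s_{2}^{*}\right)\in NE\left(G_{\psi^{*}}\right)$. For each player $i$ and mutant belief $\psi_{i}'$, fixing the stabilizing $\epsilon$ and any admissible $\tilde{\mu}_{i}$, plausible stability produces a plausible post-entry focal configuration weakly outperforming the mutants; translated back, this is a plausible Nash equilibrium satisfying $\pi_{i}\left(s'_{i},s'_{j}\right)\leq\pi_{i}\left(s_{i}^{*},s_{j}^{*}\right)$, establishing the BBE condition (and, since a plausible equilibrium is an equilibrium, the weak-BBE inequality as well). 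Thus the configuration is a BBE.

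The only genuinely delicate point I anticipate is the bookkeeping around plausibility: one must confirm that condition (3) in the two ``implausible'' definitions really is the same statement after the translation, i.e.\ that the focal configuration obtained by substituting $s_{j}^{*}$ for $s'_{j}$ corresponds to the Nash equilibrium $\left(s'_{i},s_{j}^{*}\right)\in NE\left(G_{\left(\psi_{i}',\psi_{j}^{*}\right)}\right)$ demanded by Definition \ref{def-implausible}. Everything else is a direct transcription of the argument for Proposition \ref{prop-equivalence-of-strong-BBE-DEY}, with ``weakly outperformed in at least one plausible focal configuration'' standing in for ``weakly outperformed in all focal configurations.''
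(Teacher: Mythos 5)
Your proof is correct and follows exactly the route the paper intends: the paper omits this proof as ``analogous to the proof of Proposition \ref{prop-equivalence-of-strong-BBE-DEY},'' and your argument is precisely that analogy carried out, with the key added step---verifying that the implausibility conditions (1)--(3) for post-entry focal configurations translate term-by-term into Definition \ref{def-implausible}---handled correctly. Nothing further is needed.
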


\subsection{Illustration of the Evolutionary Interpretation\label{subsec:Illustration-of-the}}

Similar to DEY, we have presented a reduced-form static notion of
evolutionary stability, without formally modeling a detailed dynamics
according to which the biased beliefs and the strategies co-evolve.
In Section \ref{subsec:Discussion-of-the} we present the essential
features of this evolutionary process, which are analogous to DEY's
essential features (see first paragraph in Section 2.2 in DEY): agents
are endowed by biased beliefs, these biased-beliefs induce equilibrium
behavior in the biased game (presumably by a relatively quick adjustment
of the biased players that leads to equilibrium behavior in the biased
game), behavior determines \textquotedblleft success,\textquotedblright{}
and success (the material payoffs) regulates the evolution of biased
beliefs (presumably by a slow process in which agents occasionally
die and are replaced by new agents who are more likely to mimic the
biased beliefs of more successful incumbents).

In what follows, we illustrate this evolutionary process and its underlying
dynamics in an example. Specifically, we present a strong BBE in an
``input'' game and we illustrate how this strong BBE can persist,
given plausible evolutionary dynamics through which the composition
of the population evolves.
\begin{example}[\emph{Example \ref{exam-partnership-game} revisited}]
\label{exam-partnership-game-2} Consider the following ``input''
game. Let $S_{i}=S_{j}=\left[0,1\right]$, and let the payoff function
be $\pi_{i}(s_{i},s_{j},\rho)=s_{i}\cdot s_{j}-\frac{s_{i}^{2}}{2\rho}$,
where the parameter $\frac{1}{\rho}$ is interpreted as the cost of
effort, and we assume that $\rho\in\left(0.5,1\right)$. One can show
that (1) the best-reply function of each agent is to exert an effort
that is $\rho$ times smaller than the opponent's (i.e., $BR\left(s_{j}\right)=\rho\cdot s_{j}$),
(2) in the unique Nash equilibrium of the unbiased game each player
exerts no effort $s_{i}=s_{j}=0$, and (3) the strategy profile $\left(\rho,\rho\right)$
yields a payoff of $\rho^{2}-\frac{\rho}{2}>0$, which is the highest
symmetric payoff among all strategy profiles in which agents do not
use strictly dominated strategies. Let $\psi_{i}^{*}$ be the following
biased-belief function:
\[
\psi_{i}^{*}\left(s_{j}\right)=\begin{cases}
\frac{s_{j}}{\rho} & s_{j}<\rho\\
1 & s_{j}\geq\rho.
\end{cases}
\]
 In Example \ref{exam-partnership-game} we have shown that $\left(\left(\psi_{1}^{*},\psi_{2}^{*}\right),\left(\rho,\rho\right)\right)$
is a strong BBE. In what follows we illustrate how this strong BBE
can persist. Consider a small group of mutants of population $i$
who have undistorted beliefs. Assume that, initially, the incumbents
of population $j$ use the same strategy against the mutants as they
use against the incumbents of population $i$ (i.e., strategy $\rho$),
and the mutants gradually learn to best reply to the incumbents' behavior
by playing $\rho^{2}$. Recall that we assume that the agents of population
$j$ identify the mutants as a separate group of agents who behave
differently than the rest of population $j$ (without assuming that
the incumbents of population $j$ know anything about the biased beliefs
of the mutants). These incumbents perceive the mutants' play as $\rho$
(due to the incumbents' biased beliefs), and gradually learn to best
reply to this perceived strategy by playing $\rho^{2}$. This, in
turn, induces the mutants to adapt their play to playing $\rho^{3}$,
and, in response, the incumbents of population $j$ adapt their play
against the mutants and play $\rho^{3}$ (the best reply to the mutants'
perceived strategy $\rho^{2}$). This mutual gradual adaptation process
continues until the play in the matches between incumbents of population
$j$ and mutants of population $i$ converges to $\left(0,0\right)$.

Finally, following the convergence of the behavior in the matches
against the mutants to $\left(0,0\right)$, a slow flow of new agents
begins to influence the composition of the population. Each new agent
randomly chooses a mentor among the agents in his own population,
where agents with higher fitness are more likely to be chosen as mentors.
As the mutants get a much lower payoff (0) than the incumbents of
population $i$ ( $\rho^{2}-\frac{\rho}{2}>0$) in the underlying
game, their fitness is expected to be lower, and they are much less
likely to be chosen as mentors. As a result the share of mutants in
the population slowly shrinks until they disappear from the population.
\end{example}

\section{Principal-Agent (Subgame-Perfect) Definition of BBE\label{sec:Principal-Agent}}

In this appendix we present an equivalent definition of BBE as a subgame-perfect
equilibrium of a two-stage game in which in the first round each player
chooses the biased belief of the agent who will play on his behalf
in the second round.

\subsection{The Two-Stage Game $\Gamma_{G}$}

Given an underlying two-player normal-form game $G=\left(S,\pi\right)$
define $\Gamma_{G}$ as the following four-player two-stage extensive-form
game. The four players in the game $\Gamma$ are: principal 1 and
principal 2 (who choose representative agents for the second stage),
agent 1 (who plays on behalf of principal 1 in round 2), and agent
2 (who plays on behalf of principal 2 in round 2). 

The game $\Gamma_{G}$ has 2 stages. In the first stage, the principals
simultaneously choose biased beliefs for their agents. That is, each
principal $i$ chooses a biased belief $\psi_{i}:S_{j}\rightarrow S_{j}$
for agent $i$. In the second stage the agents simultaneously choose
their strategies. That is, each agent $i$ chooses strategy $s_{i}\in S_{i}$.
The payoff of each principal $i$ is $\pi_{i}\left(s_{i},s_{j}\right)$.
The payoff of each agent $i$ is $\pi_{i}\left(\psi_{i}\left(s_{i}\right),s_{j}\right).$
Let $\Psi_{i}$ be the set of all feasible (monotone) biased beliefs
of agent $i$.

A pure strategy profile of $\Gamma_{G}$ (henceforth $\Gamma_{G}$-strategy
profile) is a tuple $\left(\psi_{1},\psi_{2},\sigma_{1},\sigma_{2}\right)$,
where each $\psi_{i}$ is a biased belief, and each $\sigma_{i}:\Psi_{1}\times\Psi_{2}\rightarrow S_{i}$
is a function assigning a strategy to each pair of (monotone) biased
beliefs. Let $SPE\left(\Gamma_{G}\right)$ denote the set of all subgame-perfect
equilibria of $\Gamma$.

\subsection{Subgame-Perfect Definition of Weak BBE}

The following result shows that a weak BBE is equivalent to a subgame-perfect
equilibrium of $\Gamma$. Formally:
\begin{prop}
\label{pro-SPE-equivalence}Let $G$ be a game. Strategy profile $\left(\left(\psi_{1}^{*},\psi_{2}^{*}\right),\left(s_{1}^{*},s_{2}^{*}\right)\right)$
is a weak BBE of $G$ iff there exists a subgame-perfect equilibrium
$\left(\left(\psi_{1}^{*},\psi_{2}^{*}\right),\left(\sigma_{1}^{*},\sigma_{2}^{*}\right)\right)$
of $\Gamma_{G}$ satisfying $\sigma_{i}^{*}\left(\psi_{i}^{*}\right)=s_{i}^{*}$
for each player $i.$
\end{prop}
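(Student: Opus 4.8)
The plan is to translate both sides of the equivalence into statements purely about Nash equilibria of biased games and then match them clause by clause. The structural key is that the proper subgames of $\Gamma_{G}$ are indexed exactly by belief profiles: once the principals have chosen $\left(\psi_{1},\psi_{2}\right)$ in stage one, the only remaining move is the simultaneous choice of strategies by the two agents, and the agents' payoffs in that subgame are precisely the biased-game payoffs. Hence the Nash equilibria of the stage-two subgame following $\left(\psi_{1},\psi_{2}\right)$ coincide with $NE\left(G_{\left(\psi_{1},\psi_{2}\right)}\right)$. Consequently, subgame perfection decomposes into two requirements: (i) \emph{second-stage optimality}, that for every belief profile $\left(\psi_{1},\psi_{2}\right)$ one has $\left(\sigma_{1}^{*}\left(\psi_{1},\psi_{2}\right),\sigma_{2}^{*}\left(\psi_{1},\psi_{2}\right)\right)\in NE\left(G_{\left(\psi_{1},\psi_{2}\right)}\right)$, and (ii) \emph{first-stage optimality}, that $\left(\psi_{1}^{*},\psi_{2}^{*}\right)$ be a Nash equilibrium of the reduced first-stage game in which principal $i$ earns $\pi_{i}\left(\sigma_{1}^{*}\left(\psi_{1},\psi_{2}\right),\sigma_{2}^{*}\left(\psi_{1},\psi_{2}\right)\right)$ at the profile $\left(\psi_{1},\psi_{2}\right)$. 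I would state and use this dictionary first.

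For the ``only if'' direction I would start from an SPE $\left(\left(\psi_{1}^{*},\psi_{2}^{*}\right),\left(\sigma_{1}^{*},\sigma_{2}^{*}\right)\right)$ with $\sigma_{i}^{*}\left(\psi_{1}^{*},\psi_{2}^{*}\right)=s_{i}^{*}$ and read off the two clauses of Definition \ref{def-weak-BBE}. Clause (1) is immediate from second-stage optimality evaluated at the on-path node $\left(\psi_{1}^{*},\psi_{2}^{*}\right)$. For clause (2), fix player $i$ and a belief $\psi_{i}'$, and consider principal $i$'s unilateral first-stage deviation to $\psi_{i}'$; this reaches the node $\left(\psi_{i}',\psi_{j}^{*}\right)$, where the prescribed continuation $\left(s_{i}',s_{j}'\right):=\left(\sigma_{1}^{*}\left(\psi_{i}',\psi_{j}^{*}\right),\sigma_{2}^{*}\left(\psi_{i}',\psi_{j}^{*}\right)\right)$ lies in $NE\left(G_{\left(\psi_{i}',\psi_{j}^{*}\right)}\right)$ by second-stage optimality, and first-stage optimality of $\psi_{i}^{*}$ yields $\pi_{i}\left(s_{i}',s_{j}'\right)\leq\pi_{i}\left(s_{i}^{*},s_{j}^{*}\right)$. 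This is exactly clause (2). For the ``if'' direction I would, conversely, build $\sigma^{*}$ from a weak BBE $\left(\psi^{*},s^{*}\right)$: set $\sigma^{*}\left(\psi_{1}^{*},\psi_{2}^{*}\right)=\left(s_{1}^{*},s_{2}^{*}\right)$, legitimate by clause (1); at each single-deviation node $\left(\psi_{i}',\psi_{j}^{*}\right)$ let clause (2) supply an equilibrium $\left(s_{i}',s_{j}'\right)\in NE\left(G_{\left(\psi_{i}',\psi_{j}^{*}\right)}\right)$ with $\pi_{i}\left(s_{i}',s_{j}'\right)\leq\pi_{i}\left(s_{i}^{*},s_{j}^{*}\right)$ and set $\sigma^{*}$ there equal to it; and at every remaining node (reached only by a simultaneous deviation of both principals) select an arbitrary element of $NE\left(G_{\left(\psi_{1},\psi_{2}\right)}\right)$, which is nonempty by the Kakutani argument recorded just after the definition of the biased game. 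By construction $\sigma^{*}$ satisfies second-stage optimality everywhere, and first-stage optimality holds because a unilateral deviation of principal $i$ reaches a single-deviation node where, by design, the deviator does not gain. This gives an SPE with $\sigma_{i}^{*}\left(\psi_{1}^{*},\psi_{2}^{*}\right)=s_{i}^{*}$.

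The main point to get right is conceptual rather than computational: the existential quantifier ``there exists $\left(s_{i}',s_{j}'\right)\in NE\left(G_{\left(\psi_{i}',\psi_{j}^{*}\right)}\right)$'' in clause (2) of weak BBE corresponds precisely to the freedom that subgame perfection grants in specifying off-path continuation play---SPE requires only \emph{one} continuation equilibrium after each deviation, and we may take it to be the equilibrium in which the deviator fails to gain. This is exactly why the equivalence is with \emph{weak} BBE (which outperforms the deviator in a single equilibrium) and not with strong BBE (which would demand domination in \emph{all} continuation equilibria, a condition SPE does not impose). The only loose end to handle carefully is that $\sigma^{*}$ must be defined at the double-deviation nodes as well, to make it a genuine $\Gamma_{G}$-strategy; these nodes are never reached by a unilateral deviation, so an arbitrary nonempty selection there is both available and immaterial to the incentive checks.
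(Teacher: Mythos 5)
Your proof is correct and takes essentially the same route as the paper's: the direction from SPE to weak BBE reads clauses (1) and (2) of Definition \ref{def-weak-BBE} off the prescribed continuation at the unilateral-deviation node $\left(\psi'_{i},\psi_{j}^{*}\right)$, and the converse constructs $\sigma^{*}$ exactly as the paper does (on-path play $s^{*}$, the non-profitable equilibrium supplied by clause (2) at single-deviation nodes, and an arbitrary Nash equilibrium of the biased game at double-deviation nodes, nonempty by the Kakutani argument). Your explicit decomposition of subgame perfection into second-stage and first-stage optimality is left implicit in the paper, but it is the same argument.
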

\begin{proof}
``If side'': Let $\left(\left(\psi_{1}^{*},\psi_{2}^{*}\right),\left(\sigma_{1}^{*},\sigma_{2}^{*}\right)\right)\in SPE\left(\Gamma_{G}\right)$
be a subgame-perfect equilibrium of $\Gamma$ satisfying $\sigma_{i}^{*}\left(\psi_{i}^{*}\right)=s_{i}^{*}$
for each player $i.$ Let $\psi'_{i}$ be a biased belief of player
$i$. Let $s'_{1}=\sigma_{1}^{*}\left(\psi_{i}^{'},\psi_{j}^{*}\right)$
and $s'_{2}=\sigma_{2}^{*}\left(\psi_{i}^{'},\psi_{j}^{*}\right)$.
The fact that $\left(\left(\psi_{1}^{*},\psi_{2}^{*}\right),\left(\sigma_{1}^{*},\sigma_{2}^{*}\right)\right)\in SPE\left(\Gamma_{G}\right)$
implies that (1) $\left(s'_{1},s'_{2}\right)\in NE\left(G_{\left(\psi_{i}^{'},\psi_{j}^{*}\right)}\right)$
and (2) $\pi_{i}\left(s'_{1},s'_{2}\right)\leq\pi_{i}\left(s_{1}^{*},s_{2}^{*}\right)$.
This implies that $\left(\left(\psi_{1}^{*},\psi_{2}^{*}\right),\left(s_{1}^{*},s_{2}^{*}\right)\right)$
is a weak BBE of $G$.

``Only if side'': Let $\left(\left(\psi_{1}^{*},\psi_{2}^{*}\right),\left(s_{1}^{*},s_{2}^{*}\right)\right)$
be a weak BBE of $G$. We define $\left(\sigma_{1}^{*},\sigma_{2}^{*}\right)$
as follows:\footnote{The definition of $\left(\sigma_{1}^{*},\sigma_{2}^{*}\right)$ relies
on the axiom of choice.} (1) $\sigma_{i}^{*}\left(\psi_{1}^{*},\psi_{2}^{*}\right)=s_{i}^{*}$,
(2) for each biased belief $\psi'_{i}\neq\psi_{i}^{*}$, define $\sigma_{i}^{*}\left(\psi'_{i},\psi_{j}^{*}\right)=s'_{i}$
and $\sigma_{j}^{*}\left(\psi'_{i},\psi_{j}^{*}\right)=s'_{j}$ such
that $\left(s'_{i},s'_{j}\right)\in NE\left(G_{\left(\psi'_{i},\psi_{j}^{*}\right)}\right)$
and $\pi_{i}\left(s'_{i},s'_{j}\right)\leq\pi_{i}\left(s_{i}^{*},s{}_{j}^{*}\right)$
(such a pair $\left(s'_{i},s'_{j}\right)$ exists due to $\left(\left(\psi_{1}^{*},\psi_{2}^{*}\right),\left(s_{1}^{*},s_{2}^{*}\right)\right)$
being a weak BBE of G), and (3) for each pair of biased beliefs $\psi'_{i}\neq\psi_{i}^{*}$
and $\psi'_{j}\neq\psi_{j}^{*}$, define $\sigma_{i}^{*}\left(\psi'_{i},\psi'_{j}\right)=s'_{i}$
and $\sigma_{j}^{*}\left(\psi'_{i},\psi'_{j}\right)=s'_{j}$ such
that $\left(s'_{i},s'_{j}\right)\in NE\left(G_{\left(\psi'_{i},\psi'_{j}\right)}\right)$.
The definition of $\left(\sigma_{1}^{*},\sigma_{2}^{*}\right)$ immediately
implies that $\left(\left(\psi_{1}^{*},\psi_{2}^{*}\right),\left(\sigma_{1}^{*},\sigma_{2}^{*}\right)\right)\in SPE\left(\Gamma_{G}\right)$. 
\end{proof}

\subsection{Subgame-Perfect Definition of BBE}

Next, we present an equivalent definition of a BBE as a refinement
of a subgame-perfect equilibrium of $\Gamma_{G}$. Specifically, a
subgame-perfect equilibrium $\left(\psi_{1}^{*},\psi_{2}^{*},\sigma_{1}^{*},\sigma_{2}^{*}\right)$
is required to remain a subgame-perfect equilibrium even after changing
the off-the equilibrium path behavior to a different Nash equilibrium
of the induced subgame in which (I) a single player (say, player $j$)
has deviated to a different biased-belief, (II) the non-deviator perceives
the deviator's strategy in the same way as the original on-the-equilibrium
path opponent's strategy, and (III) the non-deviator changes his behavior
such that after the change it coincides with his on-the-equilibrium
path behavior. Formally,
\begin{defn}
A subgame-perfect equilibrium $\left(\psi_{1}^{*},\psi_{2}^{*},\sigma_{1}^{*},\sigma_{2}^{*}\right)\in SPE\left(\Gamma_{G}\right)$
is a \emph{plausible subgame-perfect equilibrium} if (I) the biased
beliefs $\psi_{1}^{*}$ and $\psi_{2}^{*}$ are monotone, and (II)
$\left(\psi_{1}^{*},\psi_{2}^{*},\sigma'_{1},\sigma'_{2}\right)\in SPE\left(\Gamma\right)$
for each pair of second-stage strategies $\sigma'_{1},\sigma'_{2}$
satisfying: (1) $\left(\psi_{1}^{'},\psi_{2}^{'},\sigma_{1}^{'},\sigma'_{2}\right)\in SPE\left(\Gamma_{G}\right)$
for some pair of first-stage strategies $\left(\psi_{1}^{'},\psi_{2}^{'}\right)$
(i.e., second-stage behavior is consistent with equilibrium behavior
in all subgames) and (2) if $\sigma_{i}^{'}\left(\psi_{1}^{'},\psi_{2}^{'}\right)\neq\sigma_{i}^{*}\left(\psi_{1}^{'},\psi_{2}^{'}\right)$
, then: (I) $\psi_{i}^{'}=\psi_{i}^{*}$ and $\psi_{j}^{'}\neq\psi_{j}^{*}$,
(II) $\psi_{i}^{*}\left(\sigma_{j}^{'}\left(\psi_{1}^{'},\psi_{2}^{'}\right)\right)=\psi_{i}^{*}\left(\sigma_{j}^{*}\left(\psi_{1}^{'},\psi_{2}^{'}\right)\right)$,
and (III) $\sigma_{j}^{'}\left(\psi_{1}^{'},\psi_{2}^{'}\right)=\sigma_{j}^{*}\left(\psi_{1}^{'},\psi_{2}^{'}\right)$.
\end{defn}
\begin{prop}
Let $G$ be a game. Strategy profile $\left(\left(\psi_{1}^{*},\psi_{2}^{*}\right),\left(s_{1}^{*},s_{2}^{*}\right)\right)$
is a BBE of $G$ iff there exists a plausible subgame-perfect equilibrium
$\left(\left(\psi_{1}^{*},\psi_{2}^{*}\right),\left(\sigma_{1}^{*},\sigma_{2}^{*}\right)\right)$
of $\Gamma_{G}$ satisfying $\sigma_{i}^{*}\left(\psi_{i}^{*}\right)=s_{i}^{*}$
for each player $i.$
\end{prop}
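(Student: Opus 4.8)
The plan is to reduce this statement to the already-established equivalence between weak BBE and subgame-perfect equilibria (Proposition~\ref{pro-SPE-equivalence}) and then to layer on the two features that distinguish a BBE from a weak BBE, namely monotonicity of the incumbents' beliefs and the restriction to plausible equilibria. The monotonicity requirement reads verbatim the same in Definition~\ref{Def-A-biased-belief-equilibrium} and in clause~(I) of a plausible subgame-perfect equilibrium, so it transfers across both directions with no work, and because a plausible SPE is in particular an SPE, Proposition~\ref{pro-SPE-equivalence} already supplies the ``weak BBE'' backbone. The entire content therefore lies in matching the two notions of off-path plausibility. The key observation to exploit is that a single-player deviation subgame $\left(\psi_i',\psi_j^*\right)$ of $\Gamma_G$ is precisely the biased game $G_{\left(\psi_i',\psi_j^*\right)}$ whose plausible Nash equilibria are governed by Definition~\ref{def-implausible}, and that clause~(2) of the plausible-SPE definition is simply the $\Gamma_G$-rendering of the three conditions (perception unchanged, non-deviator's strategy changed, original response still an equilibrium) that define an implausible equilibrium in Definition~\ref{def-implausible}.

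For the ``only if'' direction I would mimic the construction of $\left(\sigma_1^*,\sigma_2^*\right)$ from the proof of Proposition~\ref{pro-SPE-equivalence}, changing one thing: at each single-deviation subgame $\left(\psi_i',\psi_j^*\right)$ I would invoke clause~(2) of Definition~\ref{Def-A-biased-belief-equilibrium} to select a \emph{plausible} equilibrium $\left(s_i',s_j'\right)\in PNE\!\left(G_{\left(\psi_i',\psi_j^*\right)}\right)$ at which the deviator is weakly outperformed, rather than an arbitrary Nash equilibrium (the axiom of choice being invoked as before). The verification that $\left(\psi_1^*,\psi_2^*,\sigma_1^*,\sigma_2^*\right)\in SPE\left(\Gamma_G\right)$ is then identical to the weak case, and monotonicity of $\psi_1^*,\psi_2^*$ discharges clause~(I). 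What remains is clause~(II): any $\sigma'$ it admits differs from $\sigma^*$ only by having a non-deviator switch his off-path response while the deviator's strategy and its perceived image are held fixed, and here I would use condition~(3) of Definition~\ref{def-implausible} (built into the fact that the selected equilibria are plausible) to conclude that the modified profile is again a Nash equilibrium of the relevant subgame and that weak-outperformance of the deviator is preserved, so $\left(\psi_1^*,\psi_2^*,\sigma_1',\sigma_2'\right)$ stays an SPE.

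For the ``if'' direction I would begin with a plausible SPE, read off monotonicity from clause~(I) and the weak-BBE property from Proposition~\ref{pro-SPE-equivalence}, and then produce, for every deviation $\psi_i'$, a \emph{plausible} equilibrium of $G_{\left(\psi_i',\psi_j^*\right)}$ at which the deviator is weakly outperformed. The off-path profile prescribed by $\sigma^*$ already gives a Nash equilibrium $\left(s_i',s_j'\right)$ with $\pi_i\!\left(s_i',s_j'\right)\le\pi_i\!\left(s_1^*,s_2^*\right)$ (since the first-stage deviation is unprofitable). If it is plausible I am done; if it is implausible, condition~(3) of Definition~\ref{def-implausible} lets me replace the non-deviator's strategy $s_j'$ by $s_j^*$ to obtain $\left(s_i',s_j^*\right)$, which is (a) a Nash equilibrium of the same subgame and (b) plausible, because its non-deviator now coincides with $s_j^*$ so condition~(2) of implausibility fails. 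The plausibility of the SPE is exactly the hypothesis I would invoke to guarantee that $\left(\psi_1^*,\psi_2^*,\sigma'\right)$ remains an SPE after this replacement, whence $\pi_i$ at $\left(s_i',s_j^*\right)$ is still at most $\pi_i\!\left(s_1^*,s_2^*\right)$, delivering the required plausible equilibrium.

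I expect the main obstacle to be the bookkeeping in the ``only if'' plausibility check: one must confirm that every $\sigma'$ admitted by clause~(2) of the plausible-SPE definition corresponds, subgame by subgame, to a legitimate transition between Nash equilibria of the same biased game that keeps the deviating principal's (material) payoff at or below $\pi_i\!\left(s_1^*,s_2^*\right)$, so that no admissible modification can raise a deviator's payoff above the equilibrium level. This is precisely where the argument genuinely uses that the off-path equilibria were chosen plausible rather than arbitrary. The only genuinely delicate point is getting the relabeling between ``deviator/non-deviator'' in Definition~\ref{def-implausible} and the players $i,j$ in the $\Gamma_G$ clauses exactly right; everything else is a faithful transcription of the proof of Proposition~\ref{pro-SPE-equivalence}.
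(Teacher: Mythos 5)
Your proposal is correct and takes essentially the paper's approach: the paper omits this proof as ``analogous to the proof of Proposition \ref{pro-SPE-equivalence},'' and your elaboration---layering monotonicity onto that construction, selecting \emph{plausible} outperforming equilibria at the single-deviation subgames in the ``only if'' direction, and using the implausible-to-plausible switch $\left(s'_{i},s'_{j}\right)\mapsto\left(s'_{i},s_{j}^{*}\right)$ together with the plausible-SPE robustness hypothesis in the ``if'' direction---is exactly that analogy spelled out. One sharpening for your ``only if'' clause-(II) bookkeeping: because the selected off-path equilibria are plausible, any admissible modification (non-deviator's perceived image of the deviator unchanged from on-path, non-deviator's original response still an equilibrium of the subgame) is forced to be the trivial one, since otherwise the selected equilibrium would satisfy all three implausibility conditions; so subgame perfection is preserved vacuously, rather than by re-deriving the weak-outperformance bound at a genuinely modified profile.
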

The simple proof, which is analogous to the proof of Proposition \ref{pro-SPE-equivalence},
is omitted for brevity.

\subsection{Subgame-Perfect Definition of Strong BBE}

Finally, we present an equivalent definition of a strong BBE as a
refinement of a subgame-perfect equilibrium of $\Gamma$, which remains
an equilibrium even after changing off the equilibrium path in subgames
to other Nash equilibria of the induced subgames. Formally,
\begin{defn}
A subgame-perfect equilibrium $\left(\psi_{1}^{*},\psi_{2}^{*},\sigma_{1}^{*},\sigma_{2}^{*}\right)\in SPE\left(\Gamma_{G}\right)$
is a \emph{strong subgame-perfect equilibrium} if (I) the biased beliefs
$\psi_{1}^{*}$ and $\psi_{2}^{*}$ are monotone, and (II) $\left(\psi_{1}^{*},\psi_{2}^{*},\sigma'_{1},\sigma'_{2}\right)\in SPE\left(\Gamma_{G}\right)$
for each pair of second-stage strategies $\sigma'_{1},\sigma'_{2}$
satisfying: (1) $\left(\psi_{1}^{'},\psi_{2}^{'},\sigma_{1}^{'},\sigma'_{2}\right)\in SPE\left(\Gamma_{G}\right)$
for some pair of first-stage strategies $\psi_{1}^{'},\psi_{2}^{'}$
(i.e., second-stage behavior is consistent with equilibrium behavior
in all subgames) and (2) $\sigma_{i}^{'}\left(\psi_{1}^{*},\psi_{2}^{*}\right)=\sigma_{i}^{*}\left(\psi_{1}^{*},\psi_{2}^{*}\right)$
(i.e., behavior after $\left(\psi_{1}^{*},\psi_{2}^{*}\right)$ is
unchanged).

Our final result shows that a strong BBE is equivalent to a strong
subgame-perfect equilibrium of $\Gamma$. Formally:
\end{defn}
\begin{prop}
Let $G$ be a game. Strategy profile $\left(\left(\psi_{1}^{*},\psi_{2}^{*}\right),\left(s_{1}^{*},s_{2}^{*}\right)\right)$
is a strong BBE of $G$ iff there exists a strong subgame-perfect
equilibrium $\left(\left(\psi_{1}^{*},\psi_{2}^{*}\right),\left(\sigma_{1}^{*},\sigma_{2}^{*}\right)\right)$
of $\Gamma$ satisfying $\sigma_{i}^{*}\left(\psi_{i}^{*}\right)=s_{i}^{*}$
for each player $i.$
\end{prop}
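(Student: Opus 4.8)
The plan is to follow the template of the proof of Proposition \ref{pro-SPE-equivalence}, upgrading every ``there exists a Nash equilibrium'' to ``for all Nash equilibria,'' and correspondingly replacing plain subgame-perfection by the strong-subgame-perfection robustness clause. Throughout, the on-path payoff of principal $i$ equals $\pi_i(s_1^*,s_2^*)$, in every subgame $(\psi_1,\psi_2)$ the continuation must play a Nash equilibrium of the biased game $G_{(\psi_1,\psi_2)}$, and—because the second-stage continuation is thereby pinned to a Nash equilibrium in each subgame—the only active incentive constraint is that, given the continuation, no principal gains by a unilateral first-stage deviation.

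For the direction from strong BBE to strong subgame-perfect equilibrium, I would start from a strong BBE $\left(\left(\psi_1^*,\psi_2^*\right),\left(s_1^*,s_2^*\right)\right)$ and build $\left(\sigma_1^*,\sigma_2^*\right)$ by selecting, in each subgame $(\psi_1,\psi_2)$, an arbitrary Nash equilibrium of $G_{(\psi_1,\psi_2)}$ (via the axiom of choice), pinned to $(s_1^*,s_2^*)$ at $(\psi_1^*,\psi_2^*)$. Unlike in the weak case, I need not choose a ``good'' equilibrium: condition (2) of Definition \ref{def-strong-BBE} guarantees that \emph{every} Nash equilibrium of every induced biased game $G_{\left(\psi_i',\psi_j^*\right)}$ yields the deviating principal at most $\pi_i(s_1^*,s_2^*)$, so any selection renders the first-stage choice $\psi_i^*$ optimal and makes $(\psi^*,\sigma^*)$ an SPE. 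The same universal inequality delivers the robustness clause: for any continuation $\sigma'$ that plays some Nash equilibrium in every subgame and agrees with $\sigma^*$ after $(\psi_1^*,\psi_2^*)$, a first-stage deviation to $\psi_i'$ again lands principal $i$ at a payoff of at most $\pi_i(s_1^*,s_2^*)$, so $(\psi^*,\sigma')$ remains an SPE. Monotonicity of $\psi^*$ transfers verbatim.

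For the converse, given a strong subgame-perfect equilibrium $(\psi^*,\sigma^*)$ with $\sigma_i^*(\psi^*)=s_i^*$, I would fix a player $i$, a deviation $\psi_i'$, and an arbitrary $(s_i',s_j')\in NE\left(G_{\left(\psi_i',\psi_j^*\right)}\right)$, and plant this equilibrium off path: define $\sigma'$ to coincide with $\sigma^*$ everywhere except at the subgame $\left(\psi_i',\psi_j^*\right)$, where it plays $(s_i',s_j')$. Since $(s_i',s_j')$ is a Nash equilibrium of that biased game, $\sigma'$ is a legitimate continuation, and for $\psi_i'\neq\psi_i^*$ it agrees with $\sigma^*$ on path; the strong-subgame-perfection clause then forces $(\psi^*,\sigma')$ to be an SPE, whence first-stage optimality of $\psi_i^*$ yields $\pi_i(s_i',s_j')\leq\pi_i(s_1^*,s_2^*)$. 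As this holds for every such equilibrium, condition (2) of Definition \ref{def-strong-BBE} is verified, and monotonicity is inherited.

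The step I expect to be the main obstacle is precisely the universal (``for every equilibrium'') quantifier separating the strong notions from the weak ones: I must check that the off-path equilibrium-reselection permitted by a strong SPE corresponds \emph{exactly} to ranging over all of $NE\left(G_{\left(\psi_i',\psi_j^*\right)}\right)$, and that the arbitrary continuation chosen in the forward direction never matters. The genuinely delicate point hides in the boundary case $\psi_i'=\psi_i^*$, where the ``deviation'' reaches the on-path subgame $(\psi_1^*,\psi_2^*)$: condition (2) of Definition \ref{def-strong-BBE} still ranges over all of $NE(G_{\psi^*})$, yet the strong-SPE clause pins the on-path continuation to $(s_1^*,s_2^*)$ and so cannot plant a competing equilibrium there. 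I would resolve this either by reading the equivalence relative to the selected on-path outcome (matching the evolutionary interpretation, in which mutants carry a belief $\psi_i'\neq\psi_i^*$), or, failing that, by approximating a competing equilibrium of $G_{\psi^*}$ through nearby beliefs $\psi_i'\to\psi_i^*$ and passing to the limit using continuity of $\pi_i$; confirming this boundary consistency is the one place where the slogan ``analogous to Proposition \ref{pro-SPE-equivalence}'' demands extra care.
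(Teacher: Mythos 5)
Your proposal is correct and is essentially the paper's own argument: the paper omits this proof, stating only that it is analogous to the proof of Proposition \ref{pro-SPE-equivalence}, and your two directions (arbitrary Nash selection off path in the forward direction; planting an arbitrary off-path equilibrium and invoking the strong-SPE robustness clause in the converse) are exactly that analogue. Regarding the boundary case $\psi_{i}'=\psi_{i}^{*}$, your first resolution is the right one, and it can even be made exact rather than definitional: deviate to any continuous $\psi_{i}'\neq\psi_{i}^{*}$ that agrees with $\psi_{i}^{*}$ at the single point $s_{j}'$, so that the competing equilibrium of $G_{\psi^{*}}$ is literally a Nash equilibrium of the deviation game $G_{\left(\psi_{i}',\psi_{j}^{*}\right)}$; by contrast, your fallback limit argument would be delicate, since the Nash equilibrium correspondence is only upper (not lower) hemicontinuous, so a competing equilibrium of $G_{\psi^{*}}$ need not be approximable by equilibria of nearby biased games.
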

The simple proof, which is analogous to the proof of Proposition \ref{pro-SPE-equivalence},
is omitted for brevity.

\section{Discontinuous Biased Beliefs \label{sec:Discontinuous-Biased-Beliefs}}

In this appendix we present an alternative definition of BBE that
relaxes the assumption that biased beliefs have to be continuous.
We show that all BBE characterized in the main text remain BBE when
deviators are allowed to use discontinuous biased beliefs. 

\subsection{Adapted Definitions: Quasi-equilibria}

We redefine a\emph{ biased belief} $\psi_{i}:S_{j}\rightarrow S_{j}$
to be an arbitrary (rather than continuous) function that assigns
to each strategy of the opponent a (possibly distorted) belief about
the opponent's play. The definition of a configuration $\left(\psi^{*},s^{*}\right)$
is left unchanged (i.e., we require that $\left(s_{i}^{*},s_{j}^{*}\right)\in NE\left(G_{\psi^{*}}\right)$). 

Recall that a configuration is a BBE if each biased belief is a best
reply to the opponent\textquoteright s biased belief, in the sense
that an agent who chooses a different biased belief is weakly outperformed
in the induced equilibrium of the new biased game. Allowing discontinuous
beliefs implies that some biased games $G_{\left(\psi_{1},\psi_{2}\right)}$
in which one (or both) of the biases are discontinuous may not admit
Nash equilibria. This requires us to adapt the definition of a BBE
to deal with behavior in biased games that do not admit Nash equilibria.
We do so by assuming that the resulting behavior in a biased game
that does not admit a Nash equilibrium is a ``$j$-quasi-equilibrium,''
in which the non-deviator (player $j$) best replies to the perceived
behavior of the deviator (player $i$), while the deviator is allowed
to play arbitrarily. Formally:
\begin{defn}
Let $\left(\psi_{i},\psi_{j}\right)$ be a profile of biased beliefs,
and let $j$ be one of the players (interpreted as the non-deviator);
then we define $QE_{j}\left(G_{\left(\psi_{i},\psi_{j}\right)}\right)$
as the set of $j$-quasi-equilibria of the biased game $G_{\left(\psi_{i},\psi_{j}\right)}$
as follows: 
\[
QE_{j}\left(G_{\left(\psi_{i},\psi_{j}\right)}\right)=\begin{cases}
NE\left(G_{\left(\psi_{i},\psi_{j}\right)}\right) & NE\left(G_{\left(\psi_{i},\psi_{j}\right)}\right)\neq\emptyset\\
\left\{ \left(s_{i},s_{j}\right)|s_{j}\in BR\left(\psi_{j}\left(s_{i}\right)\right)\right\}  & NE\left(G_{\left(\psi_{i},\psi_{j}\right)}\right)=\emptyset.
\end{cases}
\]
\end{defn}
Note that any biased game admits a $j$-quasi-equilibrium. 

\subsection{Adapted Definitions: BBE$'$}

We redefine our notions of BBE as follows, and write them as BBE$'$.
In a strong BBE$'$, the deviator (player $i$) is required to be
outperformed in all $j$-quasi-equilibria, and biased beliefs are
required to be monotone. In a weak BBE$'$, the deviator is required
to be outperformed in at least one $j$-quasi-equilibrium. The notion
of a BBE$'$ is in between these two notions. Specifically, in a BBE$'$,
the biased beliefs are required to be monotone, and, in addition,
the deviator (player $i$) is required to be outperformed in at least
one plausible $j$-quasi-equilibrium of the new biased game, where
implausible $j$-quasi-equilibria are defined as follows. We say that
a $j$-quasi-equilibrium of a biased game induced by a deviation of
player $i$ is implausible if (1) player $i$'s strategy is perceived
by the non-deviating player $j$ as coinciding with player $i$'s
original strategy, (2) player $j$ plays differently relative to his
original strategy, and (3) if player $j$ were playing his original
strategy, this would induce a $j$-quasi-equilibrium of the biased
game. That is, implausible $j$-quasi-equilibria are those in which
the non-deviating player $j$ plays differently against a deviator
even though player $j$ has no reason to do so: player $j$ does not
observe any change in player $i$'s behavior, and player $j$'s original
behavior remains an equilibrium of the biased game. Formally:
\begin{defn}
\label{def-implausible-1}Given configuration $\left(\psi^{*},s^{*}\right)$,
deviating player $i$, and biased belief $\psi'_{i}$, we say that
a $j$-quasi-equilibrium of the biased game $\left(s'_{i},s'_{j}\right)\in QE_{j}\left(G_{\left(\psi'_{i},\psi_{j}^{*}\right)}\right)$
is \emph{implausible} if: (1) $\psi_{j}^{*}\left(s'_{i}\right)=\psi_{j}^{*}\left(s^{*}{}_{i}\right)$,
(2) $s_{j}^{*}\neq s'_{j}$, and (3) $\left(s'_{i},s_{j}^{*}\right)\in QE_{j}\left(G_{\left(\psi'_{i},\psi_{j}^{*}\right)}\right)$.
A $j$-quasi-equilibrium is \emph{plausible} if it is not implausible.
Let $PQE_{j}\left(G_{\left(\psi'_{i},\psi_{j}^{*}\right)}\right)$
be the set of all plausible $j$-quasi-equilibria of the biased game
$G_{\left(\psi'_{i},\psi_{j}^{*}\right)}$.
\end{defn}
Note that it is immediate from Definition \ref{def-implausible-1}
and the nonemptiness of $QE_{j}\left(G_{\left(\psi'_{i},\psi_{j}^{*}\right)}\right)$
that $PQE_{j}\left(G_{\left(\psi'_{i},\psi_{j}^{*}\right)}\right)$
is nonempty.
\begin{defn}
\label{Def-A-biased-belief-equilibrium-3}Configuration $\left(\psi^{*},s^{*}\right)$
is:
\end{defn}
\begin{enumerate}
\item \emph{a strong BBE}$'$ if (I) each biased belief $\psi_{i}^{*}$
is monotone, and (II) $\pi_{i}\left(s'_{i},s_{j}'\right)\leq\pi_{i}\left(s_{i}^{*},s_{j}^{*}\right)$
for every player $i$, every biased belief $\psi_{i}'$, and every
$j$-quasi-equilibrium $\left(s'_{i},s_{j}'\right)\in QE_{j}\left(G_{\left(\psi'_{i},\psi_{j}^{*}\right)}\right)$; 
\item \emph{a weak BBE}$'$ if for every player $i$ and every biased belief
$\psi_{i}'$, there exists a $j$-quasi-equilibrium $\left(s'_{i},s_{j}'\right)\in QE_{j}\left(G_{\left(\psi'_{i},\psi_{j}^{*}\right)}\right)$,
such that $\pi_{i}\left(s'_{i},s_{j}'\right)\leq\pi_{i}\left(s_{i}^{*},s_{j}^{*}\right)$;
\item a\emph{ BBE}$'$ if (I) each biased belief $\psi_{i}^{*}$ is monotone,
and (II) for every player $i$ and every biased belief $\psi_{i}'$,
there exists a plausible $j$-quasi-equilibrium $\left(s'_{i},s_{j}'\right)\in PQE_{j}\left(G_{\left(\psi'_{i},\psi_{j}^{*}\right)}\right)$,
such that $\pi_{i}\left(s'_{i},s_{j}'\right)\leq\pi_{i}\left(s_{i}^{*},s_{j}^{*}\right)$. 
\end{enumerate}
It is immediate that any strong BBE$'$ is a BBE$'$, and that any
BBE$'$ is a weak BBE$'$.

(resp., strong, weak) BBE$'$ $\left(\psi^{*},s^{*}\right)$ is continuous
if each biased function $\psi_{i}^{*}$ is continuous. Note, that
deviators are allowed to choose discontinuous biased beliefs.

\subsection{Robustness of BBE to Discontinuous Biased Beliefs}

In what follows we observe that all the BBE that we characterize in
all the results of the paper are also BBE$'$. That is, all of our
BBE are robust to allowing deviators to use discontinuous biased beliefs.
Specifically, any BBE (resp., weak BBE, strong BBE) that is characterized
in any result (or example) in the paper, is a continuous BBE$'$ (resp.,
weak continuous BBE$'$, strong continuous BBE$'$). 

The reason why this observation is true is that in all the arguments
in the proofs of the paper's results for why a configuration $\left(\left(\psi{}_{i}^{*},\psi_{j}^{*}\right),\left(s_{i}^{*},s_{j}^{*}\right)\right)$
is a BBE, when we show that a deviator (player $i$) is outperformed
after deviating to biased belief $\psi'_{i}$ and after the players
play strategy profile $\left(s'_{i},s'_{j}\right)$, we rely only
on the assumption that the non-deviator (player $j$) best replies
to the deviator (i.e., that $s_{j}\in BR\left(\psi_{j}^{*}\left(s'_{i}\right)\right)$,
which is implied by assuming $\left(s'_{i},s'_{j}\right)\in QE_{j}\left(G_{\left(\psi'_{i},\psi_{j}^{*}\right)}\right)$),
and we do not use in any of the arguments the assumption that the
deviator plays a best reply (i.e., we do not rely on $s_{i}\in BR\left(\psi_{i}^{'}\left(s'_{j}\right)\right)$
in any of the proofs).

\section{Partial Observability\label{sec:Partial-Observability} }

Throughout the paper we assume that if an agent deviates to a different
biased belief, then the opponent always observes this deviation. In
this appendix, we relax this assumption, and show that our results
hold also in a setup with partial observability (some results hold
for any level of partial observability, while others hold for a sufficiently
high level of observability).

\subsection{Restricted Biased Games}

Let $p\in\left[0,1\right]$ denote the probability that an agent who
is matched with an opponent who deviates to a different biased belief\emph{
privately} observes the opponent's deviation (henceforth, \emph{observation
probability}). If an agent does not observe the deviation, then he
continues playing his original configuration's strategy.

Our definitions of configuration and biased game remain unchanged.
We now define a restricted biased game $G_{\left(\psi'_{i},\psi_{j}^{*},s_{j}^{*},p\right)}$
as a game with a payoff function according to which (1) each player's
payoff is determined by the opponent's perceived strategy, and (2)
the non-deviator is restricted to playing $s_{j}^{*}$ with probability
$p$ (i.e., when not observing the opponent's deviation). Formally:
\begin{defn}
Given an underlying game $G=\left(S,\pi\right)$, a profile of biased
beliefs $\left(\psi'_{i},\psi_{j}^{*}\right)$, and a strategy $s_{j}^{*}$
of player $j$ (interpreted as the non-deviator), let the \emph{restricted}
\emph{biased game} \emph{$G_{\left(\psi'_{i},\psi_{j}^{*},s_{j}^{*},p\right)}=\left(S,\tilde{\pi}\left(\psi'_{i},\psi_{j}^{*},s_{j}^{*},p\right)\right)$
}be defined as follows:

\[
\tilde{\pi}_{i}\left(\psi'_{i},\psi_{j}^{*},s_{j}^{*},p\right)\left(s_{i},s_{j}\right)=p\cdot\pi_{i}\left(s_{i},\psi'_{i}\left(s_{j}\right)\right)+\left(1-p\right)\cdot\pi_{i}\left(s_{i},\psi'_{i}\left(s_{j}^{*}\right)\right),\,\,\textrm{and}
\]

\[
\tilde{\pi}_{j}\left(\psi'_{i},\psi_{j}^{*},s_{j}^{*},p\right)\left(s_{i},s_{j}\right)=p\cdot\pi_{j}\left(s_{j},\psi_{j}^{*}\left(s_{i}\right)\right)+\left(1-p\right)\pi_{i}\left(s_{j}^{*},\psi_{j}^{*}\left(s_{i}\right)\right).
\]
\end{defn}
A Nash equilibrium of a $p$-restricted biased game is defined in
the standard way. Formally, a pair of strategies $s^{*}=\left(s'_{1},s'_{2}\right)$
is a Nash equilibrium of a restricted biased game $G_{\left(\psi'_{i},\psi_{j}^{*},s_{j}^{*},p\right)}$,
if each $s'_{i}$\emph{ }is a best reply against the perceived strategy
of the opponent, i.e., 
\[
s'_{i}=argmax_{s_{i}\in S_{i}}\left(\tilde{\pi}_{i}\left(\psi'_{i},\psi_{j}^{*},s_{j}^{*},p\right)\left(s_{i},s'_{j}\right)\right).
\]

Let\emph{ $NE\left(G_{\left(\psi'_{i},\psi_{j}^{*},s_{j}^{*}\right)}\right)\subseteq S_{1}\times S_{2}$
}denote the set of all Nash equilibria of the restricted biased game
$G_{\left(\psi'_{i},\psi_{j}^{*},s_{j}^{*},p\right)}$\emph{. }

Observe that the set of strategies of a biased game is convex and
compact, and the payoff function $\tilde{\pi}_{i}\left(\psi'_{i},\psi_{j}^{*},s_{j}^{*},p\right):S_{i}\times S_{j}\rightarrow\mathbb{R}$
is weakly concave in the first parameter and continuous in both parameters.
This implies (due to a standard application of Kakutani's fixed-point
theorem) that each restricted biased game \emph{$G_{\left(\psi'_{i},\psi_{j}^{*},s_{j}^{*},p\right)}$
}admits a Nash equilibrium (i.e., $NE\left(G_{\left(\psi'_{i},\psi_{j}^{*},s_{j}^{*},p\right)}\right)\neq\emptyset$). 

\subsection{$p$-BBE}

We are now ready to define our equilibrium concept. Configuration
$\left(\psi^{*},s^{*}\right)$ is a $p$-BBE if each biased belief
is a best reply to the opponent's biased belief, in the sense that
an agent who chooses a different biased belief is weakly outperformed
in the induced equilibrium of the new restricted biased game. We present
three versions of $p$-BBE that differ with respect to the equilibrium
selection when the new biased game admits multiple equilibria. In
a strong $p$-BBE (I) each biased-belief is monotone, and (II) the
deviator is required to be outperformed in all Nash equilibria of
the new restricted biased game. In a weak BBE, the deviator is required
to be outperformed in at least one equilibrium of the new restricted
biased game. 

The notion of a $p$-BBE is in between these two notions. Specifically,
in at $p$-BBE (I) each biased-belief is monotone, and (II) the deviator
is required to be outperformed in at least one plausible equilibrium
of the new restricted biased game, where implausible equilibria are
defined as follows. We say that a Nash equilibrium of a restricted
biased game induced by a deviation of player $i$ is implausible if
(1) player $i$'s strategy is perceived by the non-deviating player
$j$ as coinciding with player $i$'s original strategy, (2) player
$j$ plays differently relative to his original strategy, and (3)
if player $j$ were playing his original strategy, this would induce
an equilibrium of the biased game. That is, implausible equilibria
are those in which the non-deviating player $j$ plays differently
against a deviator even though player $j$ has no reason to do so:
player $j$ does not observe any change in player $i$'s behavior,
and player $j$'s original behavior remains an equilibrium of the
biased game. Formally:
\begin{defn}
\label{def-implausible-2}Given configuration $\left(\psi^{*},s^{*}\right)$,
deviating player $i$, and biased belief $\psi'_{i}$, we say that
a Nash equilibrium of the restricted biased game $\left(s'_{i},s'_{j}\right)\in NE\left(G_{\left(\psi'_{i},\psi_{j}^{*},s_{j}^{*},p\right)}\right)$
is \emph{implausible} if: (1) $\psi_{j}^{*}\left(s'_{i}\right)=\psi_{j}^{*}\left(s^{*}{}_{i}\right)$,
(2) $s_{j}^{*}\neq s'_{j}$, and (3) $\left(s'_{i},s_{j}^{*}\right)\in NE\left(G_{\left(\psi'_{i},\psi_{j}^{*},s_{j}^{*},p\right)}\right)$.
An equilibrium is \emph{plausible} if it is not implausible. Let $PNE\left(G_{\left(\psi'_{i},\psi_{j}^{*},s_{j}^{*},p\right)}\right)$
be the set of all plausible equilibria of the biased game $G_{\left(\psi'_{i},\psi_{j}^{*},s_{j}^{*},p\right)}$.
\end{defn}
Note that it is immediate from Definition \ref{def-implausible-2}
and the nonemptiness of $NE\left(G_{\left(\psi'_{i},\psi_{j}^{*},s_{j}^{*},p\right)}\right)$
that $PNE\left(G_{\left(\psi'_{i},\psi_{j}^{*},s_{j}^{*},p\right)}\right)$
is nonempty.
\begin{defn}
\label{Def-A-biased-belief-equilibrium-2}Configuration $\left(\psi^{*},s^{*}\right)$
is:
\end{defn}
\begin{enumerate}
\item \emph{a strong $p$-BBE} if (I) each biased belief $\psi_{i}^{*}$
is monotone, and (II) $\pi_{i}\left(s'_{i},s_{j}'\right)\leq\pi_{i}\left(s_{i}^{*},s_{j}^{*}\right)$
for every player $i$, every biased belief $\psi_{i}'$, and every
Nash equilibrium $\left(s'_{i},s_{j}'\right)\in NE\left(G_{\left(\psi'_{i},\psi_{j}^{*},s_{j}^{*},p\right)}\right)$; 
\item \emph{a weak $p$-BBE} if for every player $i$ and every biased belief
$\psi_{i}'$, there exists a Nash equilibrium $\left(s'_{i},s_{j}'\right)\in NE\left(G_{\left(\psi'_{i},\psi_{j}^{*},s_{j}^{*},p\right)}\right)$,
such that $\pi_{i}\left(s'_{i},s_{j}'\right)\leq\pi_{i}\left(s_{i}^{*},s_{j}^{*}\right)$;
\item a\emph{ $p$-BBE} if (I) each biased belief $\psi_{i}^{*}$ is monotone,
and (II) for every player $i$ and every biased belief $\psi_{i}'$,
there exists a plausible Nash equilibrium $\left(s'_{i},s_{j}'\right)\in PNE\left(G_{\left(\psi'_{i},\psi_{j}^{*},s_{j}^{*},p\right)}\right)$,
such that $\pi_{i}\left(s'_{i},s_{j}'\right)\leq\pi_{i}\left(s_{i}^{*},s_{j}^{*}\right)$. 
\end{enumerate}
It is immediate that: (1) any strong $p$-BBE is a $p$-BBE, and that
any $p$-BBE is a weak $p$-BBE, and (2) the definition of $1$-BBE
(resp., weak 1-BBE, strong 1-BBE) coincides with the original definition
of BBE (resp., weak BBE, strong BBE).

\subsection{Extension of Results }

In what follows we sketch how to extend our results to the setup of
partial observability. The adaptations of the proofs are relatively
simple, and, for brevity, we only sketch the differences with respect
to the original proofs.

\subsubsection{Adaptation of Section \ref{sec:Relations-with-Nash} (Nash Equilibria
and BBE Outcomes)}

The example that some Nash equilibria cannot be supported as the outcomes
of weak $P$-BBE with undistorted beliefs can be extended for any
$p>0$.
\begin{example}[\emph{Example \ref{exam-Cournot-first} revisited. Cournot equilibrium
cannot be supported by undistorted beliefs}]
 \label{ex-Cournot-Nash-cannot-be-supported-by-identity-1}Consider
the following symmetric Cournot game with linear demand $G=\left(S,\pi\right)$:
$S_{i}=\left[0,1\right]$ and $\pi_{i}\left(s_{i},s_{j}\right)=s_{i}\cdot\left(1-s_{i}-s_{j}\right)$
for each player $i$. The unique Nash equilibrium of the game is $s_{i}^{*}=s_{j}^{*}=\frac{1}{3}$,
which yields both players a payoff of $\frac{1}{9}.$ Fix observation
probability $p>0$. Assume to the contrary that this outcome can be
supported as a weak $p$-BBE by the undistorted beliefs $\psi_{i}^{*}=\psi_{j}^{*}=I_{d}$.
Fix a sufficiently small $0<\epsilon<<1$. Consider a deviation of
player $1$ to the blind belief $\psi'_{i}\equiv\frac{1}{3}-2\cdot\epsilon$.
Note that this blind belief has a unique best reply: $s'_{i}=\frac{1}{3}+\epsilon$.
The unique equilibrium of the restricted biased game $G_{\left(\psi'_{i},\psi_{j}^{*},s_{j}^{*},p\right)}$
is $s'_{j}=\frac{1}{3}-\frac{\epsilon}{2}$, $s'_{i}=\frac{1}{3}+\epsilon$,
which yields the deviator a payoff of $\frac{1}{9}+\frac{\epsilon}{6}-\frac{\epsilon^{2}}{2}$
with probability $p$ (when his deviation is observed by player 2)
and a payoff of $\frac{1}{9}-\epsilon^{2}$ with probability $1-p$
(when his deviation is not observed by player 2). For a sufficiently
small $\epsilon>0$ the expected payoff of the deviator is strictly
larger than $\frac{1}{9}$.
\end{example}
All the results of Section \ref{sec:Relations-with-Nash} hold for
any observation probability $p\in\left[0,1\right]$ with minor adaptations
to the proofs.
\begin{prop}[Proposition \ref{Prop--Nash-is-BBE} extended]
Let $\left(s_{1}^{*},s_{2}^{*}\right)$ be a (strict) Nash equilibrium
of the game $G=\left(S,\pi\right)$. Let $\psi_{1}^{*}\equiv s_{2}^{*}$
and $\psi_{2}^{*}\equiv s_{1}^{*}$. Then $\left(\left(\psi_{1}^{*},\psi_{2}^{*}\right),\left(s_{1}^{*},s_{2}^{*}\right)\right)$
is a  (strong) $p$-BBE for any $p\in\left[0,1\right]$.
\end{prop}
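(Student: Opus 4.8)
The plan is to mimic the proof of Proposition \ref{Prop--Nash-is-BBE}, with the only new ingredient being the effect of partial observability on the non-deviator's incentives. First I would check that $\left(\left(\psi_{1}^{*},\psi_{2}^{*}\right),\left(s_{1}^{*},s_{2}^{*}\right)\right)$ is a legitimate configuration with monotone beliefs: each blind belief is a constant function and hence monotone, and since $\psi_{i}^{*}\left(s_{j}^{*}\right)=s_{j}^{*}$ for each player $i$, the requirement $s_{i}^{*}\in BR\left(\psi_{i}^{*}\left(s_{j}^{*}\right)\right)=BR\left(s_{j}^{*}\right)$ holds because $\left(s_{1}^{*},s_{2}^{*}\right)$ is a Nash equilibrium of $G$. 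Thus $\left(s_{1}^{*},s_{2}^{*}\right)\in NE\left(G_{\left(\psi_{1}^{*},\psi_{2}^{*}\right)}\right)$, exactly as in the original argument.

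The key step is to analyze the non-deviator's behavior in the restricted biased game $G_{\left(\psi'_{i},\psi_{j}^{*},s_{j}^{*},p\right)}$ following an arbitrary deviation of player $i$ to some $\psi'_{i}$. Because $\psi_{j}^{*}$ is blind, player $j$ perceives player $i$ as playing $s_{i}^{*}$ no matter which strategy the deviator actually chooses, i.e.\ $\psi_{j}^{*}\left(s_{i}\right)=s_{i}^{*}$ for every $s_{i}$. Substituting this into the non-deviator's payoff gives
\[
\tilde{\pi}_{j}\left(\psi'_{i},\psi_{j}^{*},s_{j}^{*},p\right)\left(s_{i},s_{j}\right)=p\cdot\pi_{j}\left(s_{j},s_{i}^{*}\right)+\left(1-p\right)\cdot\pi_{j}\left(s_{j}^{*},s_{i}^{*}\right),
\]
whose second summand is independent of $s_{j}$. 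Hence maximizing over $s_{j}$ is equivalent to best-replying to $s_{i}^{*}$, and since $\left(s_{1}^{*},s_{2}^{*}\right)$ is a Nash equilibrium we have $s_{j}^{*}\in BR\left(s_{i}^{*}\right)$. So there is an equilibrium of the restricted game in which the non-deviator plays $s_{j}^{*}$, and crucially the observation probability $p$ plays no role: the blindness of $\psi_{j}^{*}$ prevents player $j$ from exploiting whatever the deviator does, so he best-replies to $s_{i}^{*}$ in the observed case and plays $s_{j}^{*}$ by construction in the unobserved case.

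Given that player $j$ plays $s_{j}^{*}$, the deviator best-replies to $\psi'_{i}\left(s_{j}^{*}\right)$ with some $s'_{i}$, so $\left(s'_{i},s_{j}^{*}\right)\in NE\left(G_{\left(\psi'_{i},\psi_{j}^{*},s_{j}^{*},p\right)}\right)$, and the deviator's realized (unbiased) payoff is $\pi_{i}\left(s'_{i},s_{j}^{*}\right)$. The Nash optimality of $s_{i}^{*}$ against $s_{j}^{*}$ then yields $\pi_{i}\left(s'_{i},s_{j}^{*}\right)\leq\pi_{i}\left(s_{i}^{*},s_{j}^{*}\right)$, so the deviator is weakly outperformed. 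I would next verify that this equilibrium, in which $s'_{j}=s_{j}^{*}$, is plausible: condition (2) of Definition \ref{def-implausible-2} (namely $s_{j}^{*}\neq s'_{j}$) fails, so it is not implausible. Together with monotonicity, this shows the configuration is a $p$-BBE for any $p\in\left[0,1\right]$.

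For the strong part, I would invoke strictness: when $\left(s_{1}^{*},s_{2}^{*}\right)$ is a strict Nash equilibrium, $s_{j}^{*}$ is the \emph{unique} best reply to $s_{i}^{*}$, so the decomposition above forces the non-deviator to play $s_{j}^{*}$ in \emph{every} Nash equilibrium of every restricted biased game. The deviator therefore faces $s_{j}^{*}$ in all equilibria, and the same inequality $\pi_{i}\left(s'_{i},s_{j}^{*}\right)\leq\pi_{i}\left(s_{i}^{*},s_{j}^{*}\right)$ gives a strong $p$-BBE. The only point requiring genuine care is the payoff decomposition in the second paragraph; once one observes that the non-deviator's strategic choice interacts only with the perceived (blind) strategy $s_{i}^{*}$, the argument is uniform in $p$ and reduces cleanly to the $p=1$ case of Proposition \ref{Prop--Nash-is-BBE}.
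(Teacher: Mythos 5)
Your two main steps are exactly the ``minor adaptation'' of the proof of Proposition \ref{Prop--Nash-is-BBE} that the paper has in mind (the paper gives no separate argument for this extension): blindness of $\psi_{j}^{*}$ makes the non-deviator's problem in the restricted biased game equivalent to best-replying to $s_{i}^{*}$, the equilibrium $\left(s'_{i},s_{j}^{*}\right)$ is plausible because $s'_{j}=s_{j}^{*}$ violates condition (2) of Definition \ref{def-implausible-2}, and Nash optimality gives $\pi_{i}\left(s'_{i},s_{j}^{*}\right)\leq\pi_{i}\left(s_{i}^{*},s_{j}^{*}\right)$. Your proof of the (non-strong) $p$-BBE part is correct for every $p\in\left[0,1\right]$.

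The strong part, however, has a gap at the endpoint $p=0$, which the statement includes. Strictness forces $s'_{j}=s_{j}^{*}$ in \emph{every} equilibrium of the restricted biased game only when $p>0$: your decomposition $\tilde{\pi}_{j}\left(s_{i},s_{j}\right)=p\cdot\pi_{j}\left(s_{j},s_{i}^{*}\right)+\left(1-p\right)\cdot\pi_{j}\left(s_{j}^{*},s_{i}^{*}\right)$ is constant in $s_{j}$ when $p=0$, so every $s'_{j}\in S_{j}$ is then a best reply, and profiles with $s'_{j}\neq s_{j}^{*}$ are equilibria of the restricted game. Under the literal inequality required in Definition \ref{Def-A-biased-belief-equilibrium-2} this is not merely cosmetic: in the Cournot game of Example \ref{exam-Cournot-first} with $p=0$, a deviation of player 1 to $\psi'_{1}\equiv0$ admits the restricted-game equilibrium $\left(\frac{1}{2},0\right)$, and $\pi_{1}\left(\frac{1}{2},0\right)=\frac{1}{4}>\frac{1}{9}=\pi_{1}\left(\frac{1}{3},\frac{1}{3}\right)$, so condition (II) for a strong $0$-BBE fails as literally written. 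To close the gap you should either prove the strong claim for $p\in\left(0,1\right]$ and handle $p=0$ separately by evaluating the deviator's true expected payoff $p\cdot\pi_{i}\left(s'_{i},s'_{j}\right)+\left(1-p\right)\cdot\pi_{i}\left(s'_{i},s_{j}^{*}\right)$, which at $p=0$ equals $\pi_{i}\left(s'_{i},s_{j}^{*}\right)\leq\pi_{i}\left(s_{i}^{*},s_{j}^{*}\right)$ for every equilibrium (this expected-payoff reading is the one the paper itself uses in the computations of Example \ref{ex-Cournot-Nash-cannot-be-supported-by-identity-1}), or state explicitly that at $p=0$ the conclusion holds only under that reading of the definition.
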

\begin{claim}[Claim \ref{claim-The-unique-Nash-zero-sum} extended]
\label{claim-zero-sum-p-BBE-payoff}The unique Nash equilibrium payoff
of a zero-sum game is also the unique payoff in any weak $p$-BBE
for any $p\in\left[0,1\right]$.
\end{claim}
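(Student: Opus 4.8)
The plan is to adapt the proof of Claim \ref{claim-The-unique-Nash-zero-sum} almost verbatim, the only new ingredient being that under partial observability the deviator best-replies to a convex combination of the opponent's two possible strategies rather than to a single one. Let $v_{i}$ denote the value of the underlying zero-sum game for player $i$. Since the strategy sets are compact and convex and (the game being zero-sum with $\pi_{i}$ weakly concave in its first argument) $\pi_{i}$ is concave in $s_{i}$ and, because $\pi_{i}=c-\pi_{j}$ with $\pi_{j}$ concave in $s_{j}$, convex in $s_{j}$, the minimax theorem applies: $v_{i}$ is well defined, equals the unique Nash payoff, and $v_{1}+v_{2}=c$. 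In particular there is a security strategy $\hat{s}_{i}\in S_{i}$ with $\pi_{i}(\hat{s}_{i},s_{j})\geq v_{i}$ for every $s_{j}\in S_{j}$. Because payoffs sum to $c$ at every profile, it suffices to prove $\pi_{i}(s_{i}^{*},s_{j}^{*})\geq v_{i}$ for each player $i$: summing the two inequalities forces both to be equalities, giving $\pi_{i}(s^{*})=v_{i}$.

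First I would suppose, for contradiction, that $(\psi^{*},s^{*})$ is a weak $p$-BBE with $\pi_{i}(s_{i}^{*},s_{j}^{*})<v_{i}$ for some player $i$, and consider the deviation of player $i$ to the undistorted belief $\psi'_{i}=I_{d}$. In any Nash equilibrium $(s'_{i},s'_{j})\in NE(G_{(I_{d},\psi_{j}^{*},s_{j}^{*},p)})$ of the restricted biased game, player $i$'s strategy maximizes $\tilde{\pi}_{i}(\cdot,s'_{j})=p\,\pi_{i}(\cdot,s'_{j})+(1-p)\,\pi_{i}(\cdot,s_{j}^{*})$, which, since $\psi'_{i}=I_{d}$, coincides with player $i$'s actual expected payoff against the opponent (who plays $s'_{j}$ when the deviation is observed, probability $p$, and $s_{j}^{*}$when it is not, probability $1-p$). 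This is the one place where the original argument changes: there player $i$ best-replies to $s'_{j}$ itself, whereas here the relevant target is the mixture of $s'_{j}$ and $s_{j}^{*}$.

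Next I would bound this expected payoff from below by evaluating the maximand at the security strategy:
\[
p\,\pi_{i}(s'_{i},s'_{j})+(1-p)\,\pi_{i}(s'_{i},s_{j}^{*})\;\geq\;p\,\pi_{i}(\hat{s}_{i},s'_{j})+(1-p)\,\pi_{i}(\hat{s}_{i},s_{j}^{*})\;\geq\;p\,v_{i}+(1-p)\,v_{i}\;=\;v_{i}.
\]
Since this holds in \emph{every} equilibrium of the restricted biased game, the deviator's expected payoff is at least $v_{i}>\pi_{i}(s_{i}^{*},s_{j}^{*})$ in all of them, so player $i$ strictly gains from the deviation regardless of which equilibrium is selected. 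This contradicts the defining property of a weak $p$-BBE, which requires player $i$ to be weakly outperformed in at least one equilibrium. Hence $\pi_{i}(s^{*})\geq v_{i}$ for each player, and the zero-sum accounting above then yields $\pi_{i}(s^{*})=v_{i}$.

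The point that deserves care is that under partial observability the quantity compared against the incumbent payoff is the \emph{expectation} over the observed and unobserved contingencies, exactly as computed in Example \ref{ex-Cournot-Nash-cannot-be-supported-by-identity-1}; the argument survives precisely because the zero-sum bound $\max_{s_{i}}\pi_{i}(s_{i},s_{j})\geq v_{i}$ is preserved under the convex combination $p\,\pi_{i}(\cdot,s'_{j})+(1-p)\,\pi_{i}(\cdot,s_{j}^{*})$, so the single security strategy $\hat{s}_{i}$ still guarantees $v_{i}$ against the mixed contingency. The deviation itself is admissible ($I_{d}$ is continuous and a weak $p$-BBE imposes no monotonicity restriction), and every restricted biased game admits a Nash equilibrium, so the set over which the contradiction is derived is nonempty; the case $p=1$ collapses to the original proof of Claim \ref{claim-The-unique-Nash-zero-sum}.
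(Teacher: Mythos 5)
Your proof is correct and takes essentially the same route as the paper's: the paper proves the $p=1$ case (Claim \ref{claim-The-unique-Nash-zero-sum}) by deviating to $I_{d}$ and invoking $\max_{s_{i}}\pi_{i}\left(s_{i},s'_{j}\right)\geq v_{i}$, and asserts that the extension to general $p$ is a minor adaptation of that argument --- which is exactly what you supply, with the single security (maximin) strategy $\hat{s}_{i}$ replacing the best reply to $s'_{j}$ so that one strategy guarantees $v_{i}$ simultaneously against both the observed contingency $s'_{j}$ and the unobserved contingency $s_{j}^{*}$. Your treatment of the deviator's payoff as the $p$-expectation over the two contingencies is also the right reading of the paper's definition (it is the one used in its Example \ref{ex-Cournot-Nash-cannot-be-supported-by-identity-1}, and the claim would in fact fail for small $p$ under a literal reading of the displayed inequality in the definition of weak $p$-BBE).
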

\begin{prop}[Proposition \ref{prop-dominant-action} extended]
If a game admits a strictly dominant strategy $s_{i}^{*}$ for player
$i$, then any  weak $p$-BBE outcome is a Nash equilibrium of the
underlying game.
\end{prop}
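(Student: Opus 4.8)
The plan is to follow the proof of Proposition~\ref{prop-dominant-action} step by step, adapting each step to the restricted biased game and checking that nothing in the argument depends on the value of $p$. First I would record the observation that makes dominance useful: since $s_i^*$ is strictly dominant, it is the unique maximizer of $\pi_i(s_i,\tilde s_j)$ over $s_i\in S_i$ for \emph{every} $\tilde s_j\in S_j$, and in particular for every perceived strategy $\psi_i^*(s_j)$. Consequently player $i$ plays $s_i^*$ in any weak $p$-BBE $(\psi^*,s^*)$. If the outcome $s^*=(s_i^*,s_j^*)$ were not a Nash equilibrium of the underlying game, then since player $i$ already best-replies the failure must lie with player $j$, i.e.\ $s_j^*\notin BR(s_i^*)$, so that $\pi_j(s_j^*,s_i^*)<\max_{s_j\in S_j}\pi_j(s_j,s_i^*)$.

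Next I would let player $j$ deviate to the undistorted belief $\psi_j'=I_d$ and examine the restricted biased game $G_{(I_d,\psi_i^*,s_i^*,p)}$, in which $j$ is now the deviator and $i$ the non-deviator. The key point---and the only place where the partial-observability version genuinely differs from the $p=1$ case---is that dominance forces the non-deviator's \emph{observing} action to coincide with his \emph{non-observing} action $s_i^*$. Indeed, in the non-deviator's payoff the chosen (observing) action $s_i$ enters only through the term $p\cdot\pi_i(s_i,\psi_i^*(s_j))$, which by strict dominance is uniquely maximized at $s_i=s_i^*$; and with probability $1-p$ he plays $s_i^*$ by construction. Hence in every Nash equilibrium of the restricted biased game player $i$ plays $s_i^*$ with certainty, irrespective of $p$.

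With player $i$ playing $s_i^*$ regardless of whether he observes the deviation, the deviator's payoff collapses: substituting $\psi_j'=I_d$ and $s_i=s_i^*$ into the deviator's restricted payoff gives $p\cdot\pi_j(s_j,s_i^*)+(1-p)\cdot\pi_j(s_j,s_i^*)=\pi_j(s_j,s_i^*)$, which is independent of $p$. Thus player $j$ effectively faces the unbiased problem of best-replying to $s_i^*$, and in \emph{every} equilibrium of the restricted biased game he obtains $\max_{s_j}\pi_j(s_j,s_i^*)>\pi_j(s_j^*,s_i^*)$, i.e.\ strictly more than his original weak $p$-BBE payoff. This contradicts the weak $p$-BBE requirement (Definition~\ref{Def-A-biased-belief-equilibrium-2}) that the deviator be weakly outperformed in at least one equilibrium of the restricted biased game, completing the argument.

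I expect the substantive work to be concentrated entirely in the second paragraph: once one verifies that strict dominance pins the non-deviator to $s_i^*$ in both observation branches---so that the two terms of the deviator's restricted payoff merge into the single unbiased payoff $\pi_j(\cdot,s_i^*)$---the strict improvement of the deviator, and hence the contradiction, carries over verbatim from the fully-observable case for every $p\in[0,1]$.
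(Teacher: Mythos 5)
Your proposal is correct for every $p>0$ and follows exactly the route the paper intends: the paper gives no separate proof of this extension (it only asserts that the results of Section \ref{sec:Relations-with-Nash} hold ``with minor adaptations to the proofs''), and your argument is precisely the minor adaptation of the proof of Proposition \ref{prop-dominant-action} --- dominance pins the non-deviator's play in both the observing and the non-observing branch, so a deviation by player $j$ to $I_{d}$ earns $\max_{s_{j}}\pi_{j}\left(s_{j},s_{i}^{*}\right)>\pi_{j}\left(s_{j}^{*},s_{i}^{*}\right)$ in \emph{every} equilibrium of the restricted biased game, contradicting the weak $p$-BBE condition.

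One caveat concerns your claim that this holds ``irrespective of $p$.'' At $p=0$ the term $p\cdot\pi_{i}\left(s_{i},\psi_{i}^{*}\left(s_{j}\right)\right)$ vanishes, so strict dominance no longer pins the non-deviator's observing action $s_{i}'$: any $s_{i}'$ is a best reply in the restricted game, and under the paper's literal comparison payoff $\pi_{j}\left(s'_{j},s'_{i}\right)$ an adversarially chosen $s_{i}'$ can then make the deviator appear weakly outperformed, so your contradiction does not go through as written (one would instead need to argue via the deviator's restricted payoff $\pi_{j}\left(s_{j},I_{d}\left(s_{i}^{*}\right)\right)$, which at $p=0$ is independent of $s_{i}'$ altogether). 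Since the paper's own partial-observability model posits $0<p\leq1$, this is a boundary issue rather than a flaw in the substance of your argument, but the blanket quantifier ``for every $p\in\left[0,1\right]$'' should be weakened to $p\in\left(0,1\right]$ or the $p=0$ case handled separately.
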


\subsubsection{Adaptation of Section \ref{sec:Main-Results} (Main Results)}

\paragraph{Adaptation of Subsection \ref{subsec:Preliminary-Result:-Necessary}
(Preliminary Result)}

Minor adaptations of the proof of Proposition \ref{prop-neccesary-conditions}
show that it holds for any $p\in\left[0,1\right]$. Formally:
\begin{prop}
Let $p\in\left[0,1\right]$. If a strategy profile $s^{*}=\left(s_{1}^{*},s_{2}^{*}\right)$
is a weak $p$-BBE outcome, then (1) the profile $s^{*}$ is undominated
and (2) $\pi_{i}\left(s^{*}\right)\geq M_{i}^{U}$.
\end{prop}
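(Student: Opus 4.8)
The plan is to follow the two-part structure of the proof of Proposition~\ref{prop-neccesary-conditions}, adapting each step to the restricted biased game $G_{\left(\psi'_i,\psi_j^*,s_j^*,p\right)}$. Claim~(1) I would dispatch first, since it is untouched by partial observability: because $\left(\psi^*,s^*\right)$ is in particular a configuration, $\left(s_i^*,s_j^*\right)\in NE\left(G_{\psi^*}\right)$, so each $s_i^*$ is a best reply to the perceived strategy $\psi_i^*\left(s_j^*\right)$; hence $s_i^*\in BR\left(\psi_i^*\left(s_j^*\right)\right)$ is undominated. This argument never references $p$ and carries over verbatim.

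For claim~(2) I would have player $i$ deviate to the undistorted belief $\psi'_i=I_d$ and argue that in the relevant equilibrium $\left(s'_i,s'_j\right)$ of $G_{\left(I_d,\psi_j^*,s_j^*,p\right)}$ the deviator's realized payoff is at least $M_i^U$; combined with the weak $p$-BBE condition, which supplies one equilibrium in which the deviator is weakly outperformed by $\pi_i\left(s_i^*,s_j^*\right)$, this yields $\pi_i\left(s^*\right)\ge M_i^U$. The two structural facts I would use are that, for $p>0$, player $j$'s observed response satisfies $s'_j\in BR\left(\psi_j^*\left(s'_i\right)\right)$ and is therefore undominated, while the off-observation play $s_j^*$ is undominated because it is a best reply to $\psi_j^*\left(s_i^*\right)$ in the original equilibrium. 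Thus player $i$, now holding undistorted beliefs, best-replies to behavior of player $j$ that puts weight $p$ on $s'_j\in S_j^U$ and weight $1-p$ on $s_j^*\in S_j^U$.

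I expect the main obstacle to be exactly this last point, namely lower-bounding the deviator's guarantee when he must respond to a mixture of two undominated strategies. At the extremes it is immediate: for $p=1$ player $i$ best-replies to the single undominated strategy $s'_j$ and obtains $\max_{s_i}\pi_i\left(s_i,s'_j\right)\ge M_i^U$, and for $p=0$ player $j$ is effectively pinned to $s_j^*$, so player $i$ obtains $\max_{s_i}\pi_i\left(s_i,s_j^*\right)\ge M_i^U$. For intermediate $p$, however, player $i$ must commit to one strategy against a genuine lottery, so his guarantee is $\max_{s_i}\bigl(p\,\pi_i\left(s_i,s'_j\right)+\left(1-p\right)\pi_i\left(s_i,s_j^*\right)\bigr)$, and bounding this below by $M_i^U$ is delicate precisely because $S_j^U$ need not be convex --- the same non-convexity responsible for the gap between the undominated minmax and the undominated maxmin noted after Definition~\ref{def-undominated-minmax}. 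The plan to overcome this is to exploit the concavity of $\pi_i$ in its first argument together with the freedom in choosing the deviating belief: where $I_d$ does not suffice, I would instead let player $i$ deviate to a suitable blind belief tuned so that the non-deviator's induced play, both on and off observation, collapses onto a single undominated strategy of player $j$, thereby reducing the intermediate case to the already-settled single-strategy bound and recovering the guarantee of $M_i^U$ for every $p\in\left[0,1\right]$.
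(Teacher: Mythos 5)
Your claim (1) and your treatment of the extreme cases $p\in\left\{ 0,1\right\} $ are correct, and you have correctly located the crux: after a deviation, the non-deviator's actual play is a lottery putting weight $p$ on $s'_{j}\in BR\left(\psi_{j}^{*}\left(s'_{i}\right)\right)$ and weight $1-p$ on $s_{j}^{*}$, and best-replying to a lottery over two undominated strategies need not secure $M_{i}^{U}$ when $S_{j}^{U}$ is non-convex. The genuine gap is in your proposed repair. A blind belief $\psi'_{i}\equiv\sigma$ only lets the deviator commit his own action to some $s'_{i}\in BR\left(\sigma\right)$; it gives him no control over the observed-branch response, which is pinned down by the \emph{incumbent's} belief $\psi_{j}^{*}$ through $s'_{j}\in BR\left(\psi_{j}^{*}\left(s'_{i}\right)\right)$, nor over the unobserved-branch play, which is fixed at $s_{j}^{*}$ by the definition of the restricted game. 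Hence the two branches ``collapse onto a single undominated strategy'' only if $s_{j}^{*}\in BR\left(\psi_{j}^{*}\left(s'_{i}\right)\right)$, and the deviator can force that only by playing an $s'_{i}$ that $\psi_{j}^{*}$ perceives like $s_{i}^{*}$ --- a set of strategies that may consist of $s_{i}^{*}$ alone. In that case the collapsed deviation merely reproduces $\pi_{i}\left(s_{i}^{*},s_{j}^{*}\right)$ rather than $M_{i}^{U}$, while any exploiting choice such as $s'_{i}\in BR\left(s_{j}^{*}\right)$ is in general perceived differently by $\psi_{j}^{*}$, which revives exactly the lottery you were trying to eliminate. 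So the intermediate-$p$ case is not reduced to the single-strategy bound, and the argument is incomplete at precisely the step you flagged as the main obstacle.

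What a deviation to $I_{d}$ actually guarantees for intermediate $p$ is
\[
\max_{s_{i}}\left(p\cdot\pi_{i}\left(s_{i},s'_{j}\right)+\left(1-p\right)\cdot\pi_{i}\left(s_{i},s_{j}^{*}\right)\right)\geq\max_{s_{i}}\min_{s_{j}\in S_{j}^{U}}\pi_{i}\left(s_{i},s_{j}\right),
\]
i.e., the undominated \emph{maxmin}, which by the footnote to Definition \ref{def-undominated-minmax} can be strictly below the undominated \emph{minmax} $M_{i}^{U}$. The two bounds coincide --- and your argument then closes --- exactly when a lottery over $S_{j}^{U}$ can be traded for an element of $S_{j}^{U}$: in finite games this requires convexity of $S_{j}^{U}$ (payoffs being linear in $s_{j}$), and in interval games it holds when $\pi_{i}\left(s_{i},\cdot\right)$ is weakly convex in the opponent's strategy, since there $S_{j}^{U}$ is an interval. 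For comparison, the paper offers no more than the assertion that ``minor adaptations'' of the proof of Proposition \ref{prop-neccesary-conditions} suffice, so you were right to treat the adaptation as nontrivial; but to obtain conclusion (2) as stated for all $p\in\left[0,1\right]$ you need either one of these structural assumptions or a genuinely different deviation argument, not the blind-belief collapse.
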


\paragraph*{Adaptation of Subsection \ref{sec:Wishful-thinking-and-complements}
(Games with Strategic Complements)}

Minor adaptations to the proofs of the results of Subsection \ref{sec:Wishful-thinking-and-complements}
show that most of these results (namely, part (1) of Proposition \ref{prop-supermodular-monotone}
and Corollaries \ref{cor-no-bad-BBE-outcomes-strategic-complements}
and \ref{cor-wishful-tiniking-with-complements}) hold for any $p\in\left[0,1\right]$,
while part (2) of Proposition \ref{prop-supermodular-monotone} holds
for $p$-s sufficiently close to one. Formally:
\begin{prop}[\emph{Proposition }\ref{prop-supermodular-monotone} extended]
\emph{}Let $G$ be a game with \emph{strategic substitutes and positive
externalities}.\emph{ }
\begin{enumerate}
\item Fix $p\in\left[0,1\right]$. Let $\left(s_{1}^{*},s_{2}^{*}\right)$
be a  $p$-BBE outcome. Then $\left(s_{1}^{*},s_{2}^{*}\right)$ is
(I) undominated, and for each player $i$: (II) $\pi_{i}\left(s_{i}^{*},s_{j}^{*}\right)\geq M_{i}^{U}$,
and (III) $s_{i}^{*}\leq\max\left(BR\left(s_{j}^{*}\right)\right)$
(underinvestment).
\item Let $\left(s_{1}^{*},s_{2}^{*}\right)$ be an undominated profile
satisfying for each player $i$: (II') $\pi_{i}\left(s_{i}^{*},s_{j}^{*}\right)>\tilde{M}_{i}^{U}$,
and (III) $s_{i}^{*}\leq\max\left(BR\left(s_{j}^{*}\right)\right)$.
Then there exists $\bar{p}<1$ such that $\left(s_{1}^{*},s_{2}^{*}\right)$
is a  $p$-BBE outcome for any $p\in\left[\bar{p},1\right]$. \\
Moreover, if $\pi_{i}\left(s_{i},s_{j}\right)$ is strictly concave
then $\left(s_{1}^{*},s_{2}^{*}\right)$ is a strong $p$-BBE outcome
for any $p\in\left[\bar{p},1\right]$.
\end{enumerate}
\end{prop}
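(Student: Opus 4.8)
The proof mirrors that of Proposition \ref{prop-substitues} (equivalently the strategic-complements argument of Proposition \ref{prop-supermodular-monotone}), inserting the partial-observability weight $1-p$ wherever player $j$'s reaction to a deviation enters. The single structural change is that when player $i$ deviates to a belief $\psi_i'$, player $j$ perceives the deviation only with probability $p$, and with probability $1-p$ he keeps playing $s_j^*$. Hence in any Nash equilibrium $\left(s_i',s_j'\right)$ of the restricted biased game $G_{\left(\psi_i',\psi_j^*,s_j^*,p\right)}$ the deviator's expected true payoff equals $p\cdot\pi_i\left(s_i',s_j'\right)+\left(1-p\right)\cdot\pi_i\left(s_i',s_j^*\right)$, and every inequality below is checked against this mixture.

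\emph{Part 1 (any $p$).} Properties (I) and (II) follow from the partial-observability extension of Proposition \ref{prop-neccesary-conditions}. For (II) I deviate player $i$ to $I_d$; in every equilibrium of the restricted game he faces $s_j'$ when observed and $s_j^*$ when not, both of which are undominated and therefore weakly above the lowest undominated strategy $\underline{s}_j^U=\min S_j^U$. Positive externalities give $p\cdot\pi_i\left(s_i,s_j'\right)+\left(1-p\right)\cdot\pi_i\left(s_i,s_j^*\right)\geq\pi_i\left(s_i,\underline{s}_j^U\right)$ for each $s_i$, so maximizing over $s_i$ guarantees $M_i^U=\max_{s_i}\pi_i\left(s_i,\underline{s}_j^U\right)$ irrespective of $p$. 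For (III, underinvestment) I argue by contradiction: if $s_i^*>\max\left(BR\left(s_j^*\right)\right)$, let player $i$ adopt a blind belief that makes him play a slightly smaller $s_i'=s_i^*-\epsilon$. Strategic substitutes and monotonicity of $\psi_j^*$ force the observed response upward, $s_j'\geq s_j^*$, so positive externalities give $\pi_i\left(s_i',s_j'\right)\geq\pi_i\left(s_i',s_j^*\right)$; and concavity, together with $s_i^*$ lying strictly above $BR\left(s_j^*\right)$, gives $\pi_i\left(s_i',s_j^*\right)>\pi_i\left(s_i^*,s_j^*\right)$. Both the observed and the unobserved term then strictly exceed $\pi_i\left(s_i^*,s_j^*\right)$, so the deviator strictly gains in the unique equilibrium for every $p$, a contradiction. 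The essential point is that this downward deviation pays in both observability states, which is exactly why Part 1 is insensitive to $p$.

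\emph{Part 2 ($p$ near one).} I reuse the supporting beliefs from the proof of Proposition \ref{prop-substitues}: each monotone, continuous $\psi_j^*$ is lenient in the safe direction (mapping every $s_i'\leq s_i^*$ into $BR^{-1}\left(s_j^*\right)$, so that $s_j^*$ stays a best reply and $\left(s_1^*,s_2^*\right)\in NE\left(G_{\psi^*}\right)$) and overreacts in the dangerous direction (mapping every $s_i'>s_i^*$ to a perceived strategy that makes $j$ best-reply so as to drive $i$'s observed payoff down to his biased-belief minmax $\tilde{M}_i^U$). A deviation of player $i$ then splits in two. If $i$ plays $s_i'\leq s_i^*$, leniency keeps $j$ at $s_j^*$ in both states and concavity with underinvestment gives $\pi_i\left(s_i',s_j^*\right)\leq\pi_i\left(s_i^*,s_j^*\right)$, which needs no constraint on $p$. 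If $i$ plays $s_i'>s_i^*$, the punishment caps his observed payoff at $\tilde{M}_i^U$ with probability $p$, while against $s_j^*$ he collects at most $C:=\max_{s_i}\pi_i\left(s_i,s_j^*\right)$ with probability $1-p$; his expected payoff is thus at most $p\cdot\tilde{M}_i^U+\left(1-p\right)\cdot C$, which is decreasing in $p$ and tends to $\tilde{M}_i^U<\pi_i\left(s_i^*,s_j^*\right)$ by (II'). Choosing $\bar{p}=\left(C-\pi_i\left(s_i^*,s_j^*\right)\right)/\left(C-\tilde{M}_i^U\right)<1$ (and taking the maximum over the two players) makes this at most $\pi_i\left(s_i^*,s_j^*\right)$ for every $p\in\left[\bar{p},1\right]$. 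Under strict concavity the cap binds in all equilibria and the safe-direction case is a strict loss whenever $s_i'\neq s_i^*$, which upgrades the conclusion to a strong $p$-BBE.

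The main obstacle is precisely the unobserved mass $1-p$ in Part 2: the punishing belief deters the dangerous deviation only when $j$ sees it, so one must bound the deviator's unpunished payoff against $s_j^*$ (finite because payoffs are bounded and, by underinvestment, $\pi_i\left(\cdot,s_j^*\right)$ is maximized at a point no lower than $s_i^*$) and then spend the strict slack $\pi_i\left(s_i^*,s_j^*\right)>\tilde{M}_i^U$ to select a threshold $\bar{p}<1$. This is also what explains the asymmetry recorded in the statement: the necessary conditions of Part 1 survive for all $p$ because the exploiting deviation pays in both observability states, whereas the supporting construction of Part 2 is fragile and works only for $p$ sufficiently close to one.
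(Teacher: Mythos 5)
Your Part~1 is fine and is the paper's intended adaptation of Proposition \ref{prop-substitues}: the contradicting deviation (a blind belief that pulls $s_{i}$ down toward $BR\left(s_{j}^{*}\right)$) is profitable in \emph{both} observability states -- when unobserved by concavity, and when observed because monotonicity plus strategic substitutes pushes $j$'s response weakly above $s_{j}^{*}$ (with the plausibility requirement covering the case $\psi_{j}^{*}\left(s'_{i}\right)=\psi_{j}^{*}\left(s_{i}^{*}\right)$) -- so it remains profitable for every $p$.

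Part~2, however, has a genuine gap: the supporting belief you describe is not a legitimate biased belief in this model, because it cannot be continuous. You send every $s'_{i}\leq s_{i}^{*}$ into $BR^{-1}\left(s_{j}^{*}\right)$ and every $s'_{i}>s_{i}^{*}$ to a perceived strategy $P$ all of whose best replies hold player $i$ to $\tilde{M}_{i}^{U}$. Continuity at $s_{i}^{*}$ would force $P=\psi_{j}^{*}\left(s_{i}^{*}\right)\in BR^{-1}\left(s_{j}^{*}\right)$, i.e., $s_{j}^{*}\in BR\left(P\right)$, and then the defining property of $P$ would give $\pi_{i}\left(s_{i}^{*},s_{j}^{*}\right)\leq\tilde{M}_{i}^{U}$, contradicting (II$'$); so your $\psi_{j}^{*}$ necessarily jumps at $s_{i}^{*}$, whereas the model requires biased beliefs to be continuous (this is also what guarantees that the restricted biased games have equilibria at all). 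Note that this is \emph{not} the construction of Proposition \ref{prop-substitues}, which you claim to reuse: there the punishment of $s'_{i}>s_{i}^{*}$ is the continuously interpolated one via the sets $X_{s^{*}}\left(s'_{i}\right)$, and it only caps the observed payoff at $\pi_{i}\left(s_{1}^{*},s_{2}^{*}\right)$, not at $\tilde{M}_{i}^{U}$. With any continuous monotone belief, a deviation to $s'_{i}$ slightly above $s_{i}^{*}$ can be punished only slightly when observed (the observed payoff tends to $\pi_{i}\left(s_{1}^{*},s_{2}^{*}\right)$ as $s'_{i}\searrow s_{i}^{*}$), while it yields a strictly positive unobserved gain of order $\left|s'_{i}-s_{i}^{*}\right|$; hence your uniform bound $p\cdot\tilde{M}_{i}^{U}+\left(1-p\right)\cdot C$ is simply false for such $s'_{i}$, and what is needed is a comparison of \emph{rates} near $s_{i}^{*}$, not a global cap. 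The repair is to strengthen the punishment sets of the proof of Proposition \ref{prop-substitues}: require that against a deviation played as $s'_{i}>s_{i}^{*}$, every best reply of $j$ hold player $i$ to at most $\pi_{i}\left(s_{1}^{*},s_{2}^{*}\right)-\frac{1-p}{p}\cdot g\left(s'_{i}\right)$, where $g\left(s'_{i}\right):=\max\left\{ 0,\pi_{i}\left(s'_{i},s_{j}^{*}\right)-\pi_{i}\left(s_{1}^{*},s_{2}^{*}\right)\right\} $ is the unobserved gain. These sets are nonempty exactly when $\frac{1-p}{p}\cdot g\left(s'_{i}\right)\leq\pi_{i}\left(s_{1}^{*},s_{2}^{*}\right)-\tilde{M}_{i}^{U}$; since $g\leq C-\pi_{i}\left(s_{1}^{*},s_{2}^{*}\right)$, this holds for all $s'_{i}$ precisely when $p\geq\bar{p}$ with your threshold, and since the required cap tends to $\pi_{i}\left(s_{1}^{*},s_{2}^{*}\right)$ as $s'_{i}\searrow s_{i}^{*}$, the continuity, monotonicity, and limit arguments of the original proof (and of Proposition \ref{pro-interval-strong-continous-folk}, which handles exactly this local rate issue) go through unchanged. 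So your threshold formula is correct, but the construction you give does not establish it.
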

\begin{cor}
Fix $p\in\left[0,1\right]$. Let $G$ be a game with strategic complements
and positive externalities with a lowest Nash equilibrium $\left(\underline{s}_{1},\underline{s}_{2}\right)$
satisfying $\underline{s}_{1}<\max\left(S_{i}\right)$ for each player
$i$. Let $\left(s_{1}^{*},s_{2}^{*}\right)$ be a $p$-BBE outcome.
Then $\underline{s}_{i}\leq s_{i}^{*}$ for each player $i$.
\end{cor}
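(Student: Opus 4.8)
The plan is to follow the argument of the original Corollary \ref{cor-no-bad-BBE-outcomes-strategic-complements} essentially verbatim, since the passage from a BBE to a $p$-BBE affects only one of its two inputs. Recall that that corollary rests on two facts: first, the overinvestment property $s_i^*\geq\min\left(BR\left(s_j^*\right)\right)$ for each player $i$, supplied by part (1.III) of Proposition \ref{prop-supermodular-monotone}; and second, the purely game-theoretic observation of Lemma \ref{lemma-no-worse-than-worset-Nash}, namely that $s_i^*<\underline{s}_i$ for some player forces at least one player to strictly underinvest (to play strictly below the minimal best reply to the opponent). The second fact is a statement about the underlying game $G$ alone — it concerns the lowest Nash equilibrium, best replies, and the boundary condition $\underline{s}_i<\max\left(S_i\right)$ — and therefore makes no reference to the observation probability $p$; it carries over unchanged. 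Thus the only thing I would need to check is that overinvestment continues to hold for every $p\in\left[0,1\right]$, which is precisely the content of part (1) of the extended version of Proposition \ref{prop-supermodular-monotone} stated just above.

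Granting overinvestment, I would then close the argument by contradiction. Suppose $\underline{s}_i>s_i^*$ for some player $i$, i.e. $s_i^*<\underline{s}_i$. By Lemma \ref{lemma-no-worse-than-worset-Nash} this forces some player $k$ to strictly underinvest, $s_k^*<\min\left(BR\left(s_l^*\right)\right)$, where $l$ denotes $k$'s opponent. But the extended overinvestment property gives $s_k^*\geq\min\left(BR\left(s_l^*\right)\right)$, a contradiction. Hence $\underline{s}_i\leq s_i^*$ for each player $i$, as claimed.

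The one place where the value of $p$ genuinely matters is in re-establishing overinvestment, and this is the step I expect to be the main obstacle (it is handled inside the extended Proposition, but it is worth isolating here). In the original proof of part (1.III) one assumes $s_i^*<\min\left(BR\left(s_j^*\right)\right)$ and lets player $i$ deviate to a biased belief that induces him to play a slightly larger strategy $s_i'>s_i^*$. Under partial observability the effect of this deviation splits into two contingencies: with probability $p$ the opponent observes the deviation and best-replies to the perceived strategy $\psi_j^*\left(s_i'\right)$, which by monotonicity of $\psi_j^*$ together with strategic complementarity is weakly larger than before, so positive externalities make player $i$ strictly better off; with probability $1-p$ the opponent does not observe the deviation and keeps playing $s_j^*$, against which moving from $s_i^*$ toward $\min\left(BR\left(s_j^*\right)\right)$ is itself a strict improvement because $s_i^*<\min\left(BR\left(s_j^*\right)\right)$. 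Since player $i$ strictly gains in both the observed and the unobserved contingency, his expected payoff in the restricted biased game strictly exceeds $\pi_i\left(s_i^*,s_j^*\right)$ for every $p\in\left[0,1\right]$, contradicting the $p$-BBE condition. This confirms overinvestment for all $p$ and completes the argument.
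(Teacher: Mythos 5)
Your proof is correct and takes essentially the same approach as the paper: the paper establishes the extended corollary exactly as you do, by combining part (1) of the extended Proposition (overinvestment in any $p$-BBE outcome, for every $p\in\left[0,1\right]$) with Lemma \ref{lemma-no-worse-than-worset-Nash}, whose statement concerns only the underlying game and is therefore independent of $p$. Your supplementary sketch of why overinvestment survives partial observability is also sound, with one caveat you have implicitly delegated to the extended Proposition: in the observed contingency, when $\psi_{j}^{*}\left(s'_{i}\right)=\psi_{j}^{*}\left(s_{i}^{*}\right)$ the claim that the opponent responds weakly higher is not automatic (the opponent could switch to a different, lower best reply) and requires the plausibility refinement, which is precisely the two-case analysis in the proof of Proposition \ref{prop-supermodular-monotone}.
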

\begin{cor}
Fix $p\in\left[0,1\right]$. Let $G$ be a game with positive externalities
and strategic complements. Let\\
 $\left(\left(\psi_{1}^{*},\psi_{2}^{*}\right),\left(s_{1}^{*},s_{2}^{*}\right)\right)$
be a $p$-BBE. If $s_{i}^{*}\notin\left\{ \min\left(S_{i}\right),\max\left(S_{i}\right)\right\} $,
\emph{then player $i$ exhibits wishful thinking (i.e., $\psi_{i}^{*}\left(s_{j}^{*}\right)\geq s_{j}^{*}$). }
\end{cor}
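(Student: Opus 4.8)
The plan is to replay the proof of Corollary~\ref{cor-wishful-tiniking-with-complements} almost verbatim, after checking that each ingredient it uses is either a property of the underlying game $G$ (hence independent of the observation probability $p$) or has already been re-established for arbitrary $p$ in the extended version of Proposition~\ref{prop-supermodular-monotone}. First I would fix a player $i$ with $s_{i}^{*}\notin\{\min(S_{i}),\max(S_{i})\}$ and argue by contradiction, supposing that player $i$ does not exhibit wishful thinking, i.e., $\psi_{i}^{*}(s_{j}^{*})<s_{j}^{*}$.

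Three building blocks then suffice. (i) Strategic complementarity, through Lemma~\ref{lemma-wishful-somplements}, gives $\max(BR(\psi_{i}^{*}(s_{j}^{*})))\leq\min(BR(s_{j}^{*}))$, with equality only if $\max(BR(\psi_{i}^{*}(s_{j}^{*})))\in\{\min(S_{i}),\max(S_{i})\}$; this is a statement about $G$ alone and so is unaffected by partial observability. (ii) The on-path equilibrium condition is unchanged: since the definitions of configuration and biased game are identical in the $p$-setup, a $p$-BBE still satisfies $s_{i}^{*}\in BR(\psi_{i}^{*}(s_{j}^{*}))$, whence $s_{i}^{*}\leq\max(BR(\psi_{i}^{*}(s_{j}^{*})))$. (iii) Overinvestment holds for every $p\in[0,1]$ by part~(1.III) of the extended Proposition~\ref{prop-supermodular-monotone}, giving $s_{i}^{*}\geq\min(BR(s_{j}^{*}))$. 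Chaining (i)--(iii) forces $\max(BR(\psi_{i}^{*}(s_{j}^{*})))=s_{i}^{*}=\min(BR(s_{j}^{*}))$, and the equality case of Lemma~\ref{lemma-wishful-somplements} then makes $s_{i}^{*}$ an extreme strategy, contradicting $s_{i}^{*}\notin\{\min(S_{i}),\max(S_{i})\}$.

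The only genuinely $p$-sensitive step --- and hence the one worth verifying carefully --- is the overinvestment property~(iii), because its proof rests on a deviation argument rather than on on-path optimality. That step is already discharged by the extended proposition, and the reason it survives for all $p$ is that the destabilizing deviation ``invest slightly more than $s_{i}^{*}$'' is profitable in both contingencies: in the fraction $p$ of matches where the monotone opponent observes the deviation she responds by investing more, which raises player $i$'s payoff through positive externalities, while in the fraction $1-p$ of unobserved matches the direct best-reply gain implied by $s_{i}^{*}<\min(BR(s_{j}^{*}))$ already makes the deviation strictly beneficial. Since none of the remaining steps reference the restricted biased game at all, the corollary's argument carries over without further modification.
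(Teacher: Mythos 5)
Your proposal is correct and follows essentially the same route as the paper: the paper itself proves this corollary only by asserting that ``minor adaptations'' of the proof of Corollary \ref{cor-wishful-tiniking-with-complements} go through, and your replay---Lemma \ref{lemma-wishful-somplements} as a $p$-independent property of $G$, the unchanged on-path condition $s_{i}^{*}\in BR\left(\psi_{i}^{*}\left(s_{j}^{*}\right)\right)$, and overinvestment supplied by part (1.III) of the extended Proposition \ref{prop-supermodular-monotone}, chained to force $\max\left(BR\left(\psi_{i}^{*}\left(s_{j}^{*}\right)\right)\right)=s_{i}^{*}=\min\left(BR\left(s_{j}^{*}\right)\right)$ at an extreme strategy---is exactly that adaptation. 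Your closing observation correctly isolates the only $p$-sensitive ingredient (the deviation argument behind overinvestment) and gives the right reason it survives partial observability, matching the paper's division of labor between the corollary and the extended proposition.
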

One can also adapt the examples of Section \ref{sec:Wishful-thinking-and-complements}
(and, similarly, the examples of Sections \ref{sec:Wishful-thinking-and-substitute}
and \ref{sec:Pessimism-in-Games}) to sufficiently high $p$s. 

\paragraph{Adaptation of Section \ref{sec:Wishful-thinking-and-substitute}
(Games With Strategic Substitutes)}

Minor adaptations to the proofs of the results of Subsection \ref{sec:Wishful-thinking-and-substitute}
show that most of these results (namely, part (1) of Proposition \ref{prop-substitues}
and Corollaries \ref{cor-no-efficient-outcome-substitutes} and \ref{cor-wishful-tiniking-with-substitutes})
hold for any $p\in\left[0,1\right]$, while part (2) of Proposition
\ref{prop-substitues} holds for $p$-s sufficiently close to one.
Formally:
\begin{prop}[\emph{Proposition }\ref{prop-substitues} extended]
\emph{}Let $G$ be a game with strategic substitutes and positive
externalities.\emph{ }
\begin{enumerate}
\item Fix $p\in\left[0,1\right]$. Let $\left(s_{1}^{*},s_{2}^{*}\right)$
be a  $p$-BBE outcome. Then $\left(s_{1}^{*},s_{2}^{*}\right)$ is
(I) undominated, and for each player $i$: (II) $\pi_{i}\left(s_{i}^{*},s_{j}^{*}\right)\geq M_{i}^{U}$,
and (III) $s_{i}^{*}\geq\min\left(BR\left(s_{j}^{*}\right)\right)$
(overinvestment).
\item Let $\left(s_{1}^{*},s_{2}^{*}\right)$ be an undominated profile
satisfying for each player $i$: (II') $\pi_{i}\left(s_{i}^{*},s_{j}^{*}\right)>\tilde{M}_{i}^{U}$
, and (III) $s_{i}^{*}\geq\min\left(BR\left(s_{j}^{*}\right)\right)$.
Then there exists $\bar{p}<1$ such that $\left(s_{1}^{*},s_{2}^{*}\right)$
is a  $p$-BBE outcome for any $p\in\left[\bar{p},1\right]$. \\
Moreover, if $\pi_{i}\left(s_{i},s_{j}\right)$ is strictly concave
then $\left(s_{1}^{*},s_{2}^{*}\right)$ is a strong $p$-BBE outcome
for any $p\in\left[\bar{p},1\right]$.
\end{enumerate}
\end{prop}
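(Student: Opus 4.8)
The plan is to extend the $p=1$ argument of Proposition \ref{prop-substitues} to the restricted biased game $G_{\left(\psi'_i,\psi_j^*,s_j^*,p\right)}$, isolating exactly where the observation probability enters. The key simplification is that the non-deviator's incentives do not depend on $p$: in the restricted game player $j$'s payoff is $p\cdot\pi_j\left(s_j,\psi_j^*(s_i)\right)$ plus a term that is independent of $s_j$, so $j$'s best reply to a perceived $s_i$ is still $BR\left(\psi_j^*(s_i)\right)$, exactly as when $p=1$. What changes is only the \emph{deviator}'s realized fitness, which becomes the convex combination $p\cdot\pi_i\left(s'_i,s'_j\right)+(1-p)\cdot\pi_i\left(s'_i,s_j^*\right)$: with probability $1-p$ the opponent is blind to the deviation and plays $s_j^*$ regardless.

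Part 1 (necessary conditions, all $p$). Properties (I) and (II) are the extended form of Proposition \ref{prop-neccesary-conditions}, already valid for every $p$: each $s_i^*$ best-replies to $\psi_i^*(s_j^*)$ on the equilibrium path, and deviating to $I_d$ guarantees at least $M_i^U$ whether or not the opponent observes. For the investment property (III) I would argue by contradiction: suppose $s_i^*$ lies strictly on the far side of its myopic best reply $BR(s_j^*)$. Consider a blind deviation that moves player $i$'s induced play slightly toward $BR(s_j^*)$. In the unobserved matches (weight $1-p$) this strictly raises $\pi_i(\cdot,s_j^*)$ by strict concavity in $s_i$; in the observed matches (weight $p$) the opponent's monotone reaction, under strategic substitutes and positive externalities, moves in the same favorable direction. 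Both effects share the same sign, so the deviation is strictly profitable for every $p\in[0,1]$ — a contradiction. Hence (III) holds for all $p$.

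Part 2 (sufficiency, $p$ near one). I would start from the supporting profile $\left(\psi_1^*,\psi_2^*\right)$ used at $p=1$ in Proposition \ref{prop-substitues}, whose incumbent belief $\psi_j^*$ is blind to good news (distorting $s_i^*$ and nearby favorable deviations into $BR^{-1}(s_j^*)$, so $j$ keeps playing $s_j^*$) and punishing on bad news (distorting any $s'_i$ outside a small neighborhood $N$ of $s_i^*$ so that every $s'_j\in BR\left(\psi_j^*(s'_i)\right)$ forces $\pi_i(s'_i,s'_j)\le\tilde M_i^U$). On the equilibrium path nothing depends on $p$, so $(s_1^*,s_2^*)\in NE\left(G_{\psi^*}\right)$ and monotonicity are inherited. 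For a deviation with induced play $s'_i\notin N$, the realized fitness is at most $p\cdot\tilde M_i^U+(1-p)\cdot V_i$, where $V_i:=\max_{s_i}\pi_i(s_i,s_j^*)$ is finite by compactness and continuity; since (II$'$) gives the strict gap $\tilde M_i^U<\pi_i(s_1^*,s_2^*)$, this bound falls below $\pi_i(s^*)$ once $p$ is close enough to one, delivering one side of the threshold $\bar p$.

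The delicate step, and the main obstacle, is the behavior inside $N$. There the $(1-p)$ weight on the unobserved matches rewards player $i$ for drifting toward $BR(s_j^*)$, and since $s_i^*\ne BR(s_j^*)$ at an interior BBE this reward is first order: the derivative of the unobserved term at $s_i^*$ is $(1-p)\,\partial_{s_i}\pi_i(s^*)\ne 0$. At $p=1$ the belief was tuned so that the opponent's observed reaction exactly cancels the myopic gain (the envelope condition $\tfrac{d}{ds_i}\pi_i\left(s'_i,BR(\psi_j^*(s'_i))\right)=0$ at $s_i^*$); for $p<1$ I would re-tune the local slope of $\psi_j^*$ so that the observed reaction over-cancels by the factor $1/p$, i.e. so that $p\cdot\tfrac{d}{ds_i}\pi_i\left(s'_i,BR(\psi_j^*(s'_i))\right)+(1-p)\,\partial_{s_i}\pi_i=0$ at $s_i^*$. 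Under positive externalities and strategic substitutes this calls for a suitably steeper — but still monotone increasing — belief whose required slope converges to the $p=1$ slope as $p\to1$, so it stays an admissible monotone belief respecting the bounds of $S_j$ precisely for $p$ in some interval $[\bar p,1]$; a compactness argument then promotes this first-order cancellation to a genuine neighborhood inequality on $N$. Combining the two regimes yields $\bar p<1$. For the ``moreover'' claim, strict concavity makes every best reply single-valued, so $BR\left(\psi_j^*(s'_i)\right)$ is a well-defined continuous response and \emph{every} equilibrium of each restricted biased game (not merely a plausible one) is pinned down and punished by the same construction, upgrading the $p$-BBE to a strong $p$-BBE for $p\in[\bar p,1]$.
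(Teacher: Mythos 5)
Your route is the one the paper intends, and in fact you supply more of a proof than the paper does: the paper's entire argument for this proposition is the assertion that ``minor adaptations'' of the proofs in Subsection \ref{sec:Wishful-thinking-and-substitute} work, plus the generic threshold sketch (written out only for the folk-theorem extensions) that an unobserved deviator's gain is bounded while an observed deviator is strictly outperformed, so some $\bar{p}<1$ suffices. Your two structural observations are exactly what makes that claim true: the non-deviator's best reply is $BR\left(\psi_{j}^{*}\left(s_{i}\right)\right)$ for every $p$ because the $\left(1-p\right)$ term of $\tilde{\pi}_{j}$ is constant in $s_{j}$, and the deviator's fitness is the convex combination $p\cdot\pi_{i}\left(s'_{i},s'_{j}\right)+\left(1-p\right)\cdot\pi_{i}\left(s'_{i},s_{j}^{*}\right)$, whose two components both favor the blind deviation in part 1 (hence ``for all $p$'') and which splits into a punished component plus a bounded temptation in part 2 (hence ``for $p$ near one''). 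You are also right that the crux is the behavior near $s_{i}^{*}$: there the observed loss and the unobserved gain both vanish, so the uniform gap $\pi_{i}\left(s^{*}\right)-\tilde{M}_{i}^{U}$ from (II$'$) only disposes of deviations bounded away from $s_{i}^{*}$, and the punishing branch of $\psi_{j}^{*}$ must be re-tuned; the paper's ``minor adaptations'' remark silently assumes this can be done.

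Three concrete repairs are still needed. (i) In part 1 you drop the case split that the $p=1$ proof of Proposition \ref{prop-substitues} cannot avoid: when the deviator's new play is perceived identically to $s_{i}^{*}$, monotonicity gives you nothing, the opponent's best-reply set is merely unchanged (and not a singleton, since only weak concavity is assumed), and it is the plausibility restriction --- not any ``monotone reaction'' --- that pins the non-deviator to $s_{j}^{*}$; relatedly, the strict unobserved gain follows from $s_{i}^{*}\notin BR\left(s_{j}^{*}\right)$ together with weak concavity, not from strict concavity, which part 1 does not assume. (ii) In part 2, tuning the slope so that $p\cdot\frac{d}{ds_{i}}\pi_{i}\left(s'_{i},BR\left(\psi_{j}^{*}\left(s'_{i}\right)\right)\right)+\left(1-p\right)\cdot\partial_{s_{i}}\pi_{i}=0$ at $s_{i}^{*}$ is not enough: exact first-order cancellation leaves the sign to second-order terms, and no compactness argument converts that equality into the required inequality on a neighborhood. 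The correct fix is to over-punish strictly: make the branch steep enough that the observed loss is at least $C\cdot\left|s'_{i}-s_{i}^{*}\right|$ with $p\cdot C>\left(1-p\right)\cdot K_{i}$, where $K_{i}$ is the Lipschitz constant of $\pi_{i}$; this stays monotone and feasible for $p$ close to one, and combined with the uniform-gap bound away from $s_{i}^{*}$ it delivers $\bar{p}$. (iii) The statement as printed pairs strategic substitutes with overinvestment $s_{i}^{*}\geq\min\left(BR\left(s_{j}^{*}\right)\right)$, contradicting Proposition \ref{prop-substitues} itself (underinvestment); this is a typo in the paper, and your deliberately direction-neutral wording in effect proves the consistent pairing --- taken literally, with substitutes the opponent's observed reaction to an upward deviation moves against the deviator, and part 1(III) as printed is false (e.g., the Cournot BBE outcome $\left(0.6,0.6\right)$ of Example \ref{exam-Cournot} violates it).
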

\begin{cor}
Fix $p\in\left[0,1\right]$. Let $G$ be a game with strategic substitutes
and positive externalities. Let $\left(s_{1}^{*},s_{2}^{*}\right)$
be a BBE outcome. Then, there exists a Nash equilibrium of the underlying
game $\left(s_{1}^{e},s_{2}^{e}\right)$, and a player $i$ such that
$s_{i}^{e}\geq s_{i}^{*}$.
\end{cor}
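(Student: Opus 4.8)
The plan is to mirror the proof of Corollary \ref{cor-no-efficient-outcome-substitutes} almost verbatim, replacing the appeal to the ($p=1$) underinvestment property by its partial-observability extension. Concretely, I would first invoke part (1) of the extended version of Proposition \ref{prop-substitues}, which, for the fixed $p\in\left[0,1\right]$, guarantees that both players weakly underinvest in any $p$-BBE outcome, i.e., $s_i^*\leq\max\left(BR\left(s_j^*\right)\right)$ for each player $i$. The point that deserves a word of justification is that this underinvestment \emph{necessary} condition is genuinely $p$-robust: the argument establishing it relies only on a deviation of player $i$ to a biased belief that makes him invest marginally less than $s_i^*$, and such a deviation is strictly profitable for every $p\in\left[0,1\right]$. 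Indeed, in the event that the deviation is not observed (probability $1-p$) the opponent keeps playing $s_j^*$ and player $i$ already gains directly from correcting his own overinvestment, while in the event that it is observed (probability $p$) the opponent's monotone belief, together with strategic substitutability and positive externalities, makes the opponent react by investing more, which only further benefits player $i$. Hence the sign of the own-strategy correction is unambiguous regardless of $p$.

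Having secured weak underinvestment for all $p$, I would then argue by contradiction. Suppose the conclusion fails, so that for every Nash equilibrium $\left(s_1^e,s_2^e\right)$ of the underlying game and every player $i$ one has $s_i^e<s_i^*$; that is, each player's $p$-BBE effort strictly exceeds all of his Nash-equilibrium efforts. This is precisely the hypothesis of Lemma \ref{lemma-no-better-than-all-Nash-both-players}, whose conclusion is that at least one of the players must then strictly overinvest relative to his best reply against the opponent's actual strategy, i.e., $s_i^*>\max\left(BR\left(s_j^*\right)\right)$ for some $i$. This directly contradicts the weak-underinvestment property obtained in the first step, which completes the proof.

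The only content beyond the original ($p=1$) corollary is therefore the $p$-robustness of the underinvestment condition, which is exactly what part (1) of the extended Proposition \ref{prop-substitues} supplies; Lemma \ref{lemma-no-better-than-all-Nash-both-players} is a statement about the underlying game alone and needs no modification. I consequently expect no real obstacle: once the extended Proposition is in hand the corollary is immediate, and the main (minor) care is simply to confirm that the deviation used to prove underinvestment stays strictly profitable when the opponent fails to observe it with probability $1-p$ — which holds because the own-strategy correction is already strictly beneficial and the probability-$p$ observed response can only add to the gain.
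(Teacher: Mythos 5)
Your proposal is correct and takes essentially the same route as the paper: the paper obtains this corollary immediately from the $p$-robust underinvestment property (part 1(III) of the extended Proposition \ref{prop-substitues}, whose proof adapts with only minor changes because the deviation to the blind belief $\psi'_{i}\equiv s_{j}^{*}$ remains strictly profitable whether or not it is observed) combined with Lemma \ref{lemma-no-better-than-all-Nash-both-players}, which, as you note, concerns only the underlying game and needs no modification. Your contradiction argument is exactly the paper's.
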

\begin{cor}
Fix $p\in\left[0,1\right]$. Let $G$ be a game with strategic substitutes
and positive externalities. Let $\left(\left(\psi_{1}^{*},\psi_{2}^{*}\right),\left(s_{1}^{*},s_{2}^{*}\right)\right)$
be a  $p$-BBE. If $s_{i}^{*}\notin\left\{ \min\left(S_{i}\right),\max\left(S_{i}\right)\right\} $,
\emph{then player $i$ exhibits wishful thinking (i.e., $\psi_{i}^{*}\left(s_{j}^{*}\right)\geq s_{j}^{*}$). }
\end{cor}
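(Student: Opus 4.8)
The plan is to observe that the conclusion rests entirely on part (1) of the extended version of Proposition \ref{prop-substitues} — in particular the underinvestment inequality $s_i^{*}\le\max\left(BR\left(s_j^{*}\right)\right)$ — which the preceding extended proposition asserts to hold for \emph{every} $p\in\left[0,1\right]$, together with the equilibrium requirement built into the notion of a configuration. That requirement, namely $s_i^{*}\in BR\left(\psi_i^{*}\left(s_j^{*}\right)\right)$, is a property of the incumbents' own configuration $\left(\left(\psi_1^{*},\psi_2^{*}\right),\left(s_1^{*},s_2^{*}\right)\right)$ and makes no reference to the restricted biased game, so it is independent of $p$. Consequently the argument is word-for-word the proof of the main-text Corollary \ref{cor-wishful-tiniking-with-substitutes} (itself analogous to the proof of Corollary \ref{cor-wishful-tiniking-with-complements}), and no new work is needed beyond recording this $p$-independence.

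Concretely, I would argue by contradiction, assuming $\psi_i^{*}\left(s_j^{*}\right)<s_j^{*}$ (pessimism). First I would invoke the strategic-substitutes analogue of Lemma \ref{lemma-wishful-somplements}: because the game has strictly decreasing differences, perceiving a strictly lower opponent strategy forces a weakly higher best reply, so $\min\left(BR\left(\psi_i^{*}\left(s_j^{*}\right)\right)\right)\ge\max\left(BR\left(s_j^{*}\right)\right)$, with equality only when this common value is an endpoint of $S_i$. Next, since $\left(s_1^{*},s_2^{*}\right)\in NE\left(G_{\psi^{*}}\right)$, the strategy $s_i^{*}$ best-replies to the perceived strategy, giving $s_i^{*}\ge\min\left(BR\left(\psi_i^{*}\left(s_j^{*}\right)\right)\right)$. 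Finally, part (1) of the extended proposition (valid for any $p$) supplies the underinvestment bound $s_i^{*}\le\max\left(BR\left(s_j^{*}\right)\right)$. Chaining these three relations yields
\[
s_i^{*}\;\ge\;\min\bigl(BR\left(\psi_i^{*}\left(s_j^{*}\right)\right)\bigr)\;\ge\;\max\bigl(BR\left(s_j^{*}\right)\bigr)\;\ge\;s_i^{*},
\]
so all inequalities are equalities; in particular $s_i^{*}=\min\left(BR\left(\psi_i^{*}\left(s_j^{*}\right)\right)\right)=\max\left(BR\left(s_j^{*}\right)\right)$, and by the equality clause of the lemma this common value lies in $\left\{\min\left(S_i\right),\max\left(S_i\right)\right\}$. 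But then $s_i^{*}\in\left\{\min\left(S_i\right),\max\left(S_i\right)\right\}$, contradicting the hypothesis $s_i^{*}\notin\left\{\min\left(S_i\right),\max\left(S_i\right)\right\}$. This establishes $\psi_i^{*}\left(s_j^{*}\right)\ge s_j^{*}$, i.e.\ player $i$ exhibits wishful thinking.

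The only point that deserves care is verifying that the necessary-conditions half (part (1)) of the extended proposition holds for every $p\in\left[0,1\right]$ and not merely for $p$ near one. This is the case: if underinvestment failed, say $s_i^{*}>\max\left(BR\left(s_j^{*}\right)\right)$, then a deviation causing player $i$ to invest slightly less strictly improves his payoff against the fixed opponent strategy $s_j^{*}$ (the component realized with probability $1-p$), and, when the deviation is observed (probability $p$), also prompts the monotone opponent to invest more, which by positive externalities only helps player $i$ further; hence the deviation is profitable for every $p$. Since the present corollary invokes only part (1), the $p$-dependence that constrains the sufficiency direction (part (2)) plays no role here.

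In short, there is essentially no substantive obstacle: all the mathematical content has already been absorbed into the extended version of Proposition \ref{prop-substitues} and the substitutes analogue of Lemma \ref{lemma-wishful-somplements}, and the corollary follows as a direct consequence exactly as in the $p=1$ case.
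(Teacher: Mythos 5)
Your proof is correct and takes essentially the same approach as the paper: the paper's argument is exactly your chain-of-inequalities contradiction (the $p=1$ proof of Corollary \ref{cor-wishful-tiniking-with-substitutes} via Lemma \ref{lemma-wishful-substitute}), applied verbatim once one notes that part (1) of the extended Proposition \ref{prop-substitues} holds for every $p\in\left[0,1\right]$ and that the equilibrium condition $s_{i}^{*}\in BR\left(\psi_{i}^{*}\left(s_{j}^{*}\right)\right)$ is built into the (p-independent) notion of a configuration. Your verification that the underinvestment necessary condition survives partial observability for all $p$ is precisely the ``minor adaptation'' the paper asserts but leaves implicit.
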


\paragraph{Adaptation of Section \ref{sec:Wishful-thinking-and-substitute}
(Games With Strategic Opposites)}

Minor adaptations to the proofs of the results of Subsection \ref{sec:Wishful-thinking-and-substitute}
show that most of these results (namely, part (1) of Proposition \ref{prop-strategic-opposites},
and Corollary \ref{cor-pessimism-withstrategic-opposites}) hold for
any $p\in\left[0,1\right]$, while part (2) of Proposition \ref{prop-strategic-opposites}
holds for $p$-s sufficiently close to one. Formally:
\begin{prop}
Let $G$ be a game\emph{ }with positive externalities and strategic
opposites\emph{: $\frac{\partial\pi_{1}\left(s_{1},s_{2}\right)}{\partial s_{1}}>0$
and $\frac{\partial\pi_{2}\left(s_{1},s_{2}\right)}{\partial s_{1}}<0$
for each pair of strategies $s_{1},s_{2}$}.\emph{ }
\begin{enumerate}
\item Fix $p\in\left[0,1\right]$. Let $\left(s_{1}^{*},s_{2}^{*}\right)$
be a  $p$-BBE outcome. Then $\left(s_{1}^{*},s_{2}^{*}\right)$ is
(I) undominated: (II) $\pi_{i}\left(s_{i}^{*},s_{j}^{*}\right)\geq M_{i}^{U}$
for each player $i$, and (III) $s_{1}^{*}\leq\max\left(BR\left(s_{2}^{*}\right)\right)$
and $s_{2}^{*}\geq\min\left(BR\left(s_{1}^{*}\right)\right)$ (underinvestment
of player 1 and overinvestment of player 2).
\item Let $\left(s_{1}^{*},s_{2}^{*}\right)$ be a profile satisfying: (I)
undominated, (II) $\pi_{i}\left(s_{i}^{*},s_{j}^{*}\right)>\tilde{M}_{i}^{U}$
for each player $i$, and (III) $s_{1}^{*}\leq\max\left(BR\left(s_{2}^{*}\right)\right)$
and $s_{2}^{*}\geq\min\left(BR\left(s_{1}^{*}\right)\right)$. Then
there exists $\bar{p}<1$ such that $\left(s_{1}^{*},s_{2}^{*}\right)$
is a $p$-BBE outcome for any $p\in\left[\bar{p},1\right]$. 
\end{enumerate}
\end{prop}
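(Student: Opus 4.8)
The plan is to follow the proof of Proposition~\ref{prop-strategic-opposites} (the $p=1$ case) essentially verbatim, tracking the single place where the observation probability enters, namely the convex combination defining the restricted biased game. \textbf{Part 1} (necessity, all $p$): properties (I) and (II) are inherited from the partial-observability version of Proposition~\ref{prop-neccesary-conditions} --- deviating to the identity belief $I_d$ secures player $i$ at least $M_i^U$ in every equilibrium of every restricted biased game, and this bound is insensitive to $p$. For the investment inequalities (III) I would argue by contradiction as in the $p=1$ proof, checking that the relevant inequality survives passage to $\tilde\pi$. Suppose player~1 fails to underinvest, so $s_1^*>\max(BR(s_2^*))$, and let player~1 adopt a blind belief inducing a slightly smaller investment $s_1'<s_1^*$. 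Because $s_1^*$ lies above $BR(s_2^*)$, moving down toward the best reply raises $\pi_1(\cdot,s_2^*)$; this direct gain is present in \emph{both} the observed ($p$) and unobserved ($1-p$) summands of $\tilde\pi_1$. In the observed summand, monotonicity of $\psi_2^*$ together with the fact that player~2's best reply is decreasing in player~1's strategy (player~1's strategy is a substitute for player~2) makes player~2 respond with a higher investment, which by positive externalities further raises player~1's payoff. Hence the deviation is strictly profitable for every $p\in[0,1]$, a contradiction; the mirror-image argument, using that player~2's strategy is a complement for player~1, rules out $s_2^*<\min(BR(s_1^*))$.

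\textbf{Part 2} (sufficiency, $p$ near~$1$): I would support $(s_1^*,s_2^*)$ with exactly the monotone punishing beliefs $(\psi_1^*,\psi_2^*)$ built in the proof of Proposition~\ref{prop-strategic-opposites}. Each $\psi_j^*$ keeps $(s_1^*,s_2^*)$ an on-path equilibrium ($s_i^*\in BR(\psi_i^*(s_j^*))$) and distorts any off-path play of player~$i$ in the direction player~$i$ would find profitable so sharply that, once player~$j$ best-replies to the perceived strategy, player~$i$ is driven toward the biased-belief minmax $\tilde M_i^U$. Since the on-path configuration involves no deviation, $p$ plays no role there and $(s_1^*,s_2^*)\in NE(G_{(\psi_1^*,\psi_2^*,s_j^*,p)})$. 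When player~$i$ deviates to some $\psi_i'$ and the players reach a plausible equilibrium with deviator strategy $s_i'$, the deviator's (unbiased, expected) payoff is $p\cdot\pi_i(s_i',s_j')+(1-p)\cdot\pi_i(s_i',s_j^*)$, where $s_j'\in BR(\psi_j^*(s_i'))$ is the punishing response. For deviator strategies bounded away from $s_i^*$ the observed term is at most $\tilde M_i^U$ by construction, while the unobserved term is at most $\bar\pi_i:=\max_{s_i\in S_i}\pi_i(s_i,s_j^*)$, a bound uniform over all deviations. Using the strict inequality $\pi_i(s_i^*,s_j^*)>\tilde M_i^U$, I can pick $\bar p<1$ with
\[
p\cdot\tilde M_i^U+(1-p)\cdot\bar\pi_i\le \pi_i(s_i^*,s_j^*)\qquad\text{for all }p\in[\bar p,1]
\]
simultaneously for both players, and this single $\bar p$ blocks every such deviation; plausibility (ruling out implausible equilibria) is inherited verbatim from the $p=1$ construction.

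\textbf{The main obstacle} is the behavior of the restricted biased game in a small neighborhood of $s_i^*$. Because $\psi_j^*$ is continuous it cannot punish abruptly at $s_i^*$, so for $s_i'$ near $s_i^*$ player~$j$ responds near $s_j^*$ and the crude minmax bound above is unavailable. Here the unobserved summand is genuinely harmful: since player~$i$ may invest strictly off its best reply against $s_j^*$, a small step toward $BR(s_j^*)$ raises $\pi_i(s_i',s_j^*)$ to first order, so the $(1-p)$ term rewards local deviation. The resolution is the one used for $p=1$: make the punishing slope of $\psi_j^*$ steep enough that the observed first-order loss strictly exceeds the unobserved first-order gain once $p$ is close to one, and enlarge $\bar p$ if necessary so that both the near-$s_i^*$ first-order comparison and the far crude bound hold for all $p\in[\bar p,1]$. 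Verifying that these two regimes can be reconciled by a single $\bar p<1$ --- uniformly over the deviator's belief $\psi_i'$ and over the selected equilibrium strategy $s_i'$ --- is the delicate step.
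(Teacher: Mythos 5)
Your route is the one the paper itself takes---the paper's entire proof of this proposition is the assertion that ``minor adaptations'' of the proof of Proposition \ref{prop-strategic-opposites} suffice---and your Part 1 is a correct execution of that adaptation: with a blind deviation the deviator's restricted-game best reply is independent of $p$, the direct gain appears in both the observed and unobserved summands of the fitness, and the observed strategic response is weakly favorable by monotonicity plus the sign of the relevant cross-partial, with the equality case $\psi_j^*(s_i')=\psi_j^*(s_i^*)$ disposed of by plausibility. One small repair: a blind belief ``inducing a slightly smaller investment $s_1'<s_1^*$'' need not exist, since the range of the best-reply correspondence need not contain a left-neighborhood of $s_1^*$; use the paper's deviation $\psi_1'\equiv s_2^*$ instead, which induces some $s_1'\in BR(s_2^*)$, and your argument goes through word for word.

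The gap is the one you flag yourself in Part 2, and as written it is genuine: your crude bound (observed payoff at most $\tilde M_i^U$, unobserved at most $\bar\pi_i$) applies only where the punishment has reached the minmax level---note that the paper's $p=1$ beliefs only guarantee an observed payoff of at most $\pi_i(s_i^*,s_j^*)$ off path, not $\tilde M_i^U$, so your two-regime decomposition does not even sit on those beliefs as claimed---while near $s_i^*$ the continuous punishment is weak and the unobserved term gains to first order, and the two regimes are never reconciled. The reconciliation, however, is a one-line redefinition rather than a delicate estimate: fold the unobserved term into the punishing sets. For each off-path $s_i$ define
\[
X_{s^{*}}^{p}\left(s_{i}\right)=\left\{ s'_{i}\in S_{i}\,:\,p\cdot\pi_{i}\left(s_{i},s_{j}\right)+\left(1-p\right)\cdot\pi_{i}\left(s_{i},s_{j}^{*}\right)\leq\pi_{i}\left(s_{i}^{*},s_{j}^{*}\right)\,\,\forall s_{j}\in BR\left(s'_{i}\right)\right\} ,
\]
and run the paper's Part 2 verbatim with $X_{s^{*}}^{p}$ in place of $X_{s^{*}}$. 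Nonemptiness of $X_{s^{*}}^{p}(s_i)$ for every $s_i$ is exactly your displayed inequality $p\cdot\tilde M_i^U+(1-p)\cdot\bar\pi_i\leq\pi_i(s_i^*,s_j^*)$, obtained by inserting the $\tilde M_i^U$-achieving punishing perception; this is the only place $p$ enters, so one $\bar p<1$ works for both players, all deviations $\psi_i'$, and all selected equilibria. The interval structure and the continuity of the threshold $\phi_{s^{*}}^{p}$ follow exactly as in the paper, because the added term $(1-p)\cdot\pi_i(s_i,s_j^*)$ does not depend on the perceived strategy $s'_i$; and at $s_i=s_i^*$ the defining inequality of $X_{s^{*}}^{p}$ reduces to the $p=1$ inequality (the $(1-p)$ term cancels against the right-hand side), so the boundary condition that lets the punishing branch splice continuously onto the on-path perception in $BR^{-1}(s_j^*)$ is unchanged from the $p=1$ case. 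With this redefinition there are no ``two regimes'' to reconcile: every plausible equilibrium following any deviation yields the deviator a fitness of at most $\pi_i(s_i^*,s_j^*)$, uniformly, which is what Part 2 requires.
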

\begin{cor}
Fix $p\in\left[0,1\right]$. Let $\left(\left(\psi_{1}^{*},\psi_{2}^{*}\right),\left(s_{1}^{*},s_{2}^{*}\right)\right)$
be a  $p$-BBE of a game with \emph{positive externalities} and strategic
opposites (i.e.,\emph{ $\frac{\partial\pi_{1}\left(s_{1},s_{2}\right)}{\partial s_{1}}>0$
and $\frac{\partial\pi_{2}\left(s_{1},s_{2}\right)}{\partial s_{1}}<0$
for each pair of strategies $s_{1},s_{2}$}). If $s_{i}^{*}\notin\left\{ \min\left(S_{i}\right),\max\left(S_{i}\right)\right\} $,
\emph{then player $i$ exhibits pessimism (i.e., $\psi_{i}^{*}\left(s_{j}^{*}\right)\leq s_{j}^{*}$). }
\end{cor}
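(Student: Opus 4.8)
The plan is to show that the proof of Corollary~\ref{cor-pessimism-withstrategic-opposites} transfers to the partial-observability setting essentially verbatim, the only substantive change being that the necessary conditions it invokes are now supplied by the extended version of Proposition~\ref{prop-strategic-opposites} above (its part~(1), which has already been established for every $p\in[0,1]$) rather than by Proposition~\ref{prop-strategic-opposites} itself. The organizing observation is that partial observability enters only through the off-path deviation constraints in the definition of a $p$-BBE, whereas the argument below uses just two ingredients: the on-path equilibrium condition $s_i^*\in BR\left(\psi_i^*\left(s_j^*\right)\right)$, which is part of being a configuration and does not involve $p$, and the under-/over-investment conclusions of part~(1), which hold for all $p$. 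Thus no new $p$-dependent estimate is required.

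Concretely, I would fix a player $i$ with $s_i^*\notin\{\min(S_i),\max(S_i)\}$ and assume toward a contradiction that player $i$ does not exhibit pessimism, i.e. $\psi_i^*\left(s_j^*\right)>s_j^*$, then split into two cases as in the proof of Corollary~\ref{cor-wishful-tiniking-with-complements}. For player~$1$ (for whom the opponent's action is a strategic complement, so player~$1$'s best reply is increasing in the opponent's strategy), the monotone comparative-statics lemma (the analogue of Lemma~\ref{lemma-wishful-somplements}) gives $\min\left(BR\left(\psi_1^*\left(s_2^*\right)\right)\right)\ge\max\left(BR\left(s_2^*\right)\right)$, with equality only if the common value lies in $\{\min(S_1),\max(S_1)\}$. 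Combining this with $s_1^*\in BR\left(\psi_1^*\left(s_2^*\right)\right)$ and the underinvestment conclusion $s_1^*\le\max\left(BR\left(s_2^*\right)\right)$ of part~(1), the chain of inequalities collapses to $s_1^*=\min\left(BR\left(\psi_1^*\left(s_2^*\right)\right)\right)=\max\left(BR\left(s_2^*\right)\right)\in\{\min(S_1),\max(S_1)\}$, contradicting $s_1^*\notin\{\min(S_1),\max(S_1)\}$. The case of player~$2$ (for whom the opponent's action is a strategic substitute, so player~$2$'s best reply is decreasing) is symmetric: substitutability yields $\max\left(BR\left(\psi_2^*\left(s_1^*\right)\right)\right)\le\min\left(BR\left(s_1^*\right)\right)$, which together with $s_2^*\in BR\left(\psi_2^*\left(s_1^*\right)\right)$ and the overinvestment conclusion $s_2^*\ge\min\left(BR\left(s_1^*\right)\right)$ again pins $s_2^*$ to an extreme point, a contradiction.

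The step I expect to require the most care is not the inequality chaining but the bookkeeping that legitimizes reusing the comparative-statics lemma and the on-path condition without reference to $p$. I would verify explicitly that when both players play their configuration strategies $\left(s_1^*,s_2^*\right)$ there is no deviation, so the $p$-restricted payoff of each player reduces to the unbiased-game payoff against the perceived opponent; hence $\left(s_1^*,s_2^*\right)\in NE\left(G_{\psi^*}\right)$ and in particular $s_i^*\in BR\left(\psi_i^*\left(s_j^*\right)\right)$ hold exactly as in the $p=1$ case. Granting this, and granting that part~(1) of the extended proposition is already proved for all $p$, the remainder is an unmodified transcription of the proof of Corollary~\ref{cor-pessimism-withstrategic-opposites}; I would therefore present it as ``analogous to the proof of Corollary~\ref{cor-pessimism-withstrategic-opposites},'' flagging only the substitution of the extended necessary-conditions proposition in place of Proposition~\ref{prop-strategic-opposites}.
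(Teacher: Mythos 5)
Your proposal is correct and follows essentially the same route as the paper: the paper establishes this extended corollary by noting that the proof of Corollary \ref{cor-pessimism-withstrategic-opposites} (contradiction via the comparative-statics Lemmas \ref{lemma-wishful-somplements} and \ref{lemma-wishful-substitute}, combined with the under-/over-investment conclusions of part (1) of the extended proposition and the on-path condition $s_{i}^{*}\in BR\left(\psi_{i}^{*}\left(s_{j}^{*}\right)\right)$) carries over with minor adaptations, which is exactly your argument. Your explicit bookkeeping that the on-path best-reply condition and the lemmas involve no dependence on $p$ is precisely the ``minor adaptation'' the paper leaves implicit.
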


\subsubsection{Adaptation of Section \ref{sec:Additional-Results} (Additional Results)\label{subsec:Adaptation-of-the}}

\paragraph{Adaptation of Subsection \ref{sec:BBE-and-Undominated} (BBE with
Strategic Stubbornness)}

In what follows we show how to extend Example \ref{Exam-Cornot-Stackelberg}
to the setup of partial observability (while we leave the extension
of the general result, Proposition \ref{Prop-Stackelberg}, to future
research). The example focuses on Cournot competition. We show that
for each level of partial observability $p\in\left[0,1\right]$, there
exists a strong BBE in which one of the players (1) has a blind belief
and (2) plays a strategy that is between the Nash equilibrium strategy
and the Stackelberg strategy (and the closer it is to the Stackelberg
strategy, the higher the value of $p$), while the opponent has undistorted
beliefs. The first player's (resp., opponent's) payoff is strictly
increasing (resp., decreasing) in $p$: it converges to the Nash equilibrium
payoff when $p\rightarrow0$, and it converges to the Stackelberg
leader's (resp., follower's) payoff when $p\rightarrow1$.
\begin{example}[\emph{Example \ref{Exam-Cornot-Stackelberg} revisited}]
 Consider the symmetric Cournot game with linear demand: $G=\left(S,\pi\right)$:
$S_{i}=\mathbb{R}^{+}$ and $\pi_{i}\left(s_{i},s_{j}\right)=s_{i}\cdot\left(1-s_{i}-s_{j}\right)$
for each player $i$. Let $p\in\left[0,1\right]$ be the observation
probability. Then 
\[
\left(\left(\frac{1-p}{3-p},I_{d}\right),\left(\frac{1}{3-p},\frac{2-p}{2\cdot\left(3-p\right)}\right)\right)
\]
 is a strong BBE that yields a payoff of $\frac{2-p}{2\cdot\left(3-p\right)}$
to player 1, and yields a payoff of $\left(\frac{2-p}{2\cdot\left(3-p\right)}\right)^{2}$to
player 2. Observe that player 1's payoff is increasing in $p$, and
it converges to the Nash equilibrium (resp., Stackelberg leader's)
payoff of $\frac{1}{9}$ ($\frac{1}{8}$) when $p\rightarrow0$ ($p\rightarrow1$).
Further observe that player 2's payoff is decreasing in $p$, and
it converges to the Nash equilibrium (resp., Stackelberg follower's)
payoff of $\frac{1}{9}$ ($\frac{1}{16}$) when $p\rightarrow0$ ($p\rightarrow1$).
The argument that $\left(\left(\frac{1-p}{3-p},I_{d}\right),\left(\frac{1}{3-p},\frac{2-p}{2\cdot\left(3-p\right)}\right)\right)$
is a strong BBE is sketched as follows: (1) $\left\{ \left(\frac{1}{3-p},\frac{2-p}{2\cdot\left(3-p\right)}\right)\right\} =NE\left(G_{\left(\frac{1-p}{3-p},I_{d}\right)}\right)$
(because $\frac{1}{3-p}$ is the unique best reply against $\frac{1-p}{3-p}$
and $\frac{2-p}{2\cdot\left(3-p\right)}$ is the unique best reply
against $\frac{1}{3-p}$) ; (2) for any biased belief $\psi'_{2}$,
player 1 keeps playing $\frac{1}{3-p}$ due to having a blind belief,
and as a result player 2's payoff is at most $\left(\frac{2-p}{2\cdot\left(3-p\right)}\right)^{2}$;
and (3) for any biased belief $\psi'_{1}$ inducing a deviating player
$1$ to play strategy $x$, player 2 plays $\frac{1-x}{2}$ (the unique
best reply against $x)$ with probability $p$ (when observing the
deviation), and player 2 plays $\frac{2-p}{2\cdot\left(3-p\right)}$
(the original configuration strategy) with the remaining probability
of $1-p$. Thus, the payoff of a deviating player 1 who deviates into
playing strategy $x$ is 
\[
\pi\left(x\right):=p\cdot x\cdot\left(\frac{1-x}{2}\right)+\left(1-p\right)\cdot x\cdot\left(1-x-\frac{2-p}{2\cdot\left(3-p\right)}\right)=\left(1-\frac{p}{2}\right)\cdot x\cdot\left(1-x\right)-\frac{\left(2-p\right)\cdot\left(1-p\right)}{2\cdot\left(3-p\right)}\cdot x,
\]
where this payoff function $\pi\left(x\right)$ is strictly concave
in $x$ with a unique maximum at $x=\frac{1}{3-p}$ (the unique solution
to the FOC $0=\frac{\partial\pi}{\partial x}=\left(1-\frac{p}{2}\right)\cdot\left(1-2\cdot x\right)-\frac{\left(2-p\right)\cdot\left(1-p\right)}{2\cdot\left(3-p\right)}$).
\end{example}

\paragraph{Extending the Folk Theorem Results for Sufficiently High $p$-s}

The main results of Subsection \ref{sec:folk-theorem-results}, show
folk theorem results for: (1) monotone BBE in games that admit best
replies with full undominated support, and (2) strong BBE in interval
games with a payoff function that is strictly concave in the agent's
strategy, and weakly convex in the opponent's strategy. Minor adaptations
of each proof can show that each result can be extended to $p$-s
that are sufficiently close to one. Formally:
\begin{prop}[\emph{Proposition }\ref{pro-monotone-BBE-outcomes-finite-games} extended]
\emph{\label{pro-extneded-folk-finite}Let} $G$ be a finite game
that admits best replies with full undominated support. Let $\left(s_{1}^{*},s_{2}^{*}\right)$
be an undominated strategy profile that induces for each player a
payoff above his minmax payoff (i.e., $\pi_{i}\left(s_{1}^{*},s_{2}^{*}\right)>M_{i}^{U}$
$\forall i\in\left\{ 1,2\right\} $). Then there exists $\bar{p}<1$,
such that $\left(s_{1}^{*},s_{2}^{*}\right)$ is a monotone weak $p$-BBE
outcome for each $p\in\left[\bar{p},1\right]$.
\end{prop}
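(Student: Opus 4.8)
The plan is to recycle the blind-belief construction from the proof of Proposition~\ref{pro-monotone-BBE-outcomes-finite-games} and to verify that the single point where partial observability intervenes --- the off-path punishment of a deviator --- still works once $p$ is close enough to one, using the slack provided by the strict hypothesis $\pi_i(s_1^*,s_2^*)>M_i^U$. For each player $j$ I would again choose a best reply $s_j^e\in S_j^U$ with full undominated support (which exists by assumption), a strategy $s_i^d\in BR^{-1}(s_j^e)$, and the punishing strategy $s_j^p=\mathrm{argmin}_{s_j\in S_j^U}\bigl(\max_{s_i\in S_i}\pi_i(s_i,s_j)\bigr)$ that holds player $i$ to his undominated minmax value, and endow each player $j$ with the blind belief $\psi_j^*\equiv s_i^d$. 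Exactly as in the original argument, $s_i^e\in BR(s_j^d)$ forces $\Delta(A_i\cap S_i^U)\subseteq BR(s_j^d)$, so every undominated strategy of player $i$ --- in particular $s_i^*$ --- is a best reply to his constant perceived play; symmetrically, every undominated strategy of player $j$, including $s_j^*$ and the punishment $s_j^p$, is a best reply for player $j$. On the equilibrium path no deviation occurs, so the relevant game is the ordinary biased game $G_{(\psi_1^*,\psi_2^*)}$, and both $(s_1^*,s_2^*)\in NE(G_{(\psi_1^*,\psi_2^*)})$ and monotonicity of the (blind) beliefs are immediate.

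Next I would treat a deviation of player $i$ to an arbitrary biased belief $\psi_i'$, now inside the restricted biased game $G_{(\psi_i',\psi_j^*,s_j^*,p)}$. The decisive structural fact is that blindness of $\psi_j^*$ makes player $j$'s perceived play equal to the constant $s_i^d$ no matter which strategy $s_i'$ the deviator settles on; since the non-observing term in $\tilde\pi_j$ does not depend on $s_j$, player $j$'s observing best-reply set is $BR(s_i^d)\supseteq\Delta(A_j\cap S_j^U)$ independently of $s_i'$. I can therefore fix player $j$'s observing response to the punishment $s_j^p$ and let player $i$ best-reply, obtaining a Nash equilibrium $(s_i',s_j^p)$ of the restricted biased game (existence of some equilibrium is the Kakutani argument already invoked in the appendix, and this particular profile is an equilibrium because $s_j^p\in BR(s_i^d)$ and $s_i'$ is a best reply by construction).

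The payoff estimate is where observability enters quantitatively. In this equilibrium the deviator meets the punishment $s_j^p$ with probability $p$ and the original strategy $s_j^*$ with probability $1-p$, exactly as computed in Example~\ref{Exam-Cornot-Stackelberg}, so his expected (unbiased) fitness is $p\,\pi_i(s_i',s_j^p)+(1-p)\,\pi_i(s_i',s_j^*)$. The first term is at most $M_i^U$ by the definition of $s_j^p$, while the second is bounded by the game's largest payoff $\bar\pi_i:=\max_{(s_i,s_j)}\pi_i(s_i,s_j)$, a constant independent of $p$. Hence the deviator earns at most $p\,M_i^U+(1-p)\,\bar\pi_i$, which tends to $M_i^U$ as $p\to1$. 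The strict gap $\pi_i(s_1^*,s_2^*)>M_i^U$ then yields a threshold $\bar p_i<1$ with $p\,M_i^U+(1-p)\,\bar\pi_i<\pi_i(s_1^*,s_2^*)$ for all $p\in[\bar p_i,1]$, and $\bar p:=\max(\bar p_1,\bar p_2)$ serves both players. Collecting the pieces, for every $p\in[\bar p,1]$ and every unilateral deviation there is a Nash equilibrium of the restricted biased game in which the deviator is (strictly) outperformed, so the monotone blind-belief configuration is a monotone weak $p$-BBE supporting $(s_1^*,s_2^*)$.

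The one genuinely new obstacle relative to the $p=1$ case is the weight $1-p$ placed on non-observing opponents, who keep playing $s_j^*$: against $s_j^*$ the deviator may well earn strictly more than $\pi_i(s_1^*,s_2^*)$, so the punishment alone can no longer guarantee that he is outperformed uniformly in $p$, which is precisely why the weak inequality $\pi_i(s^*)\ge M_i^U$ of Proposition~\ref{pro-monotone-BBE-outcomes-finite-games} must be strengthened to a strict one here. I expect the only quantitative work to be pinning down $\bar p$ in terms of the gap $\pi_i(s_1^*,s_2^*)-M_i^U$ and the bound $\bar\pi_i$; every other ingredient --- the blind-belief construction, the indifference of player $j$ over undominated actions, and the availability of the punishment as a best reply --- transfers unchanged from the proof of Proposition~\ref{pro-monotone-BBE-outcomes-finite-games}.
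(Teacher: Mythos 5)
Your proposal is correct and takes essentially the same approach as the paper: the paper's own proof is a brief sketch that does exactly what you do---reuse the blind-belief/punishment construction from Proposition \ref{pro-monotone-BBE-outcomes-finite-games}, observe that the strict gap $\pi_i(s_1^*,s_2^*)>M_i^U$ makes the deviator strictly outperformed (at most $M_i^U$) when his deviation is observed, bound his unobserved gain by the maximal payoff of the game, and choose $\bar p<1$ so that the weighted loss outweighs the weighted gain. Your write-up merely makes explicit what the paper leaves implicit, namely the punishment profile $(s_i',s_j^p)$ being a Nash equilibrium of the restricted biased game and the quantitative threshold $p\,M_i^U+(1-p)\,\bar\pi_i<\pi_i(s_1^*,s_2^*)$.
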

\begin{prop}[\emph{Proposition }\ref{pro-interval-strong-continous-folk} extended]
\emph{\label{pro-extended-interval-game}}Let $G=\left(S,\pi\right)$
be an interval game. Assume that for each player $i$, $\pi_{i}\left(s_{i},s_{j}\right)$
is strictly concave in $s_{i}$ and weakly convex in $s_{j}$. If
$\left(s_{1}^{*},s_{2}^{*}\right)$ is undominated and $\pi_{i}\left(s_{1}^{*},s_{2}^{*}\right)>M_{i}^{U}$
for each player $i$, then there exists $\bar{p}<1$, such that $\left(s_{1}^{*},s_{2}^{*}\right)$
is a strong $p$-BBE outcome for each $p\in\left[\bar{p},1\right]$.
\end{prop}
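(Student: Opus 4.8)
The plan is to re-use the (generally non-monotone) incumbent beliefs $\psi_1^*,\psi_2^*$ built in the proof of Proposition~\ref{pro-interval-strong-continous-folk} for the full-observability case $p=1$, and to show, by a continuity argument that exploits the \emph{strict} inequalities $\pi_i(s_1^*,s_2^*)>M_i^U$, that these same beliefs keep deterring every deviation once $p$ is close enough to $1$. Recall the three features of each $\psi_j^*$: (i) it sends $s_i^*$ into $BR^{-1}(s_j^*)$, so that $(s_1^*,s_2^*)\in NE(G_{\psi^*})$ and the configuration requirement is met; (ii) it sends every $s_i'$ outside a neighborhood $N$ of $s_i^*$ into $BR^{-1}(s_j^p)$, where $s_j^p$ is the undominated punishment strategy guaranteeing player $i$ at most $M_i^U$; and (iii) on $N$ it is completed continuously, which the weak convexity of $\pi_i$ in $s_j$ makes possible. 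Since the configuration requirement does not involve $p$, it still holds, so the only thing to verify is the outperformance condition of a strong $p$-BBE.

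First I would reduce that condition to a one-dimensional estimate. Because each $\pi_i$ is strictly concave in its own argument, $BR(\cdot)$ is single-valued; hence in any equilibrium $(s_i',s_j')$ of the restricted biased game $G_{(\psi_i',\psi_j^*,s_j^*,p)}$ the non-deviator's observed play is forced to be $s_j'=BR(\psi_j^*(s_i'))=:r(s_i')$ (the unobserved term in $\tilde\pi_j$ is independent of $s_j'$), and every undominated $s_i'$ is realized in some such equilibrium (take the blind deviant belief $\psi_i'\equiv c$ with $s_i'=BR(c)$). It therefore suffices to bound the deviator's expected fitness
\[
V_p(s_i')=p\,\pi_i\!\left(s_i',r(s_i')\right)+(1-p)\,\pi_i\!\left(s_i',s_j^*\right)
\]
by $\pi_i(s_i^*,s_j^*)$ for every undominated $s_i'$, uniformly over both players.

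For $s_i'$ \emph{outside} $N$ the estimate is immediate: there $r(s_i')=s_j^p$, so $\pi_i(s_i',r(s_i'))\le M_i^U$ and hence $V_p(s_i')\le p\,M_i^U+(1-p)\max_{s_i}\pi_i(s_i,s_j^*)$; since $M_i^U<\pi_i(s_i^*,s_j^*)$ and this bound is continuous in $p$ and equals $M_i^U$ at $p=1$, there is $\bar p_1<1$ with $V_p(s_i')\le\pi_i(s_i^*,s_j^*)$ for all $p\ge\bar p_1$. The delicate case is $s_i'\in N$: the $p=1$ proof only gives the \emph{weak} bound $V_1(s_i')=\pi_i(s_i',r(s_i'))\le\pi_i(s_i^*,s_j^*)$, whereas the unobserved term $(1-p)\pi_i(s_i',s_j^*)$ can raise $V_p$ above $\pi_i(s_i^*,s_j^*)$ precisely because $s_i^*$ need not be a best reply to the true $s_j^*$. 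To beat this temptation I would sharpen the completion in (iii) so that $r(\cdot)$ turns into the punishing direction at a strictly positive rate on each side of $s_i^*$ — a \emph{first-order} (kink) punishment, which convexity still allows while keeping $\psi_j^*$ continuous and $\pi_i(s_i',r(s_i'))\le\pi_i(s_i^*,s_j^*)$. This yields a linear lower bound $\pi_i(s_i^*,s_j^*)-\pi_i(s_i',r(s_i'))\ge c\,|s_i'-s_i^*|$ near $s_i^*$, while the unobserved gain is Lipschitz, $\pi_i(s_i',s_j^*)-\pi_i(s_i',r(s_i'))\le C\,|s_i'-s_i^*|$; combining them gives
\[
V_p(s_i')\le\pi_i(s_i^*,s_j^*)-\bigl[c-(1-p)C\bigr]\,|s_i'-s_i^*|,
\]
which is $\le\pi_i(s_i^*,s_j^*)$ as soon as $p\ge\bar p_2:=1-c/C$. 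Taking $\bar p=\max(\bar p_1,\bar p_2)$ then establishes the claim for all $p\in[\bar p,1]$.

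The main obstacle is exactly this neighborhood estimate. At $p<1$ a deviator is tempted to play a strategy near $s_i^*$ that is a better reply to the true $s_j^*$, profiting on the fraction $1-p$ of unobserved matches; the crux is to show that the punishment incurred on the observed fraction $p$ is first order in the deviation size and therefore dominates the (also first order) temptation once $p$ is near $1$ — this is what forces the corner-shaped completion of the incumbent belief and is where the convexity hypothesis does its work. A secondary point to address is the degenerate configuration in which $\partial_{s_j}\pi_i(s_i^*,s_j^*)=0$, where a first-order kink cannot be produced through $r(\cdot)$; there the unobserved gain is itself second order in $|s_i'-s_i^*|$, so a smooth completion together with a shrunken $N$ (and a correspondingly tighter $\bar p$) suffices.
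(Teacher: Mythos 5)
Your proposal is correct and follows the same route as the paper: keep the incumbent beliefs constructed in the proof of Proposition \ref{pro-interval-strong-continous-folk} and show that their deterrence survives for $p$ close to one. The comparison is nonetheless instructive in both directions. The paper's own argument for this extension is only a three-line sketch (the unobserved gain is bounded, an observed deviator is strictly outperformed, hence some $\bar{p}<1$ works); taken literally this is insufficient, because near $s_i^*$ the observed loss is \emph{not} uniformly bounded away from zero, so a uniform $\bar{p}$ requires exactly the rate comparison you carry out, namely, that both the observed punishment and the unobserved temptation vanish linearly in $\left|s_i'-s_i^*\right|$ and that the punishment coefficient dominates once $p\geq\bar{p}$. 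In that sense your write-up supplies the step the paper leaves implicit. Conversely, two of your worries dissolve upon inspection of the appendix proof of Proposition \ref{pro-interval-strong-continous-folk}. First, no ``sharpening'' of the completion is needed: the belief $\psi_j^{\epsilon}$ constructed there already moves the perceived strategy linearly toward $BR^{-1}\left(s_j^p\right)$ at rate $1/\epsilon$, and its convexity/Lipschitz chain of inequalities ends with $\pi_i\left(s_i',s_j'\right)\leq\pi_i\left(s_i^*,s_j^*\right)+\left|s_i'-s_i^*\right|\left(K_i-D_i/\epsilon\right)$ where $D_i/\epsilon-K_i>0$; that is precisely your first-order kink, with $c=D_i/\epsilon-K_i$ and, taking $C=K_i$ via Lipschitz continuity in the deviator's own strategy, an explicit neighborhood threshold $\bar{p}_2=\epsilon K_i/D_i$ to combine with your $\bar{p}_1$ for the outside region. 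Second, your degenerate case $\partial_{s_j}\pi_i\left(s_i^*,s_j^*\right)=0$ is vacuous under the hypotheses: convexity of $\pi_i\left(s_i^*,\cdot\right)$ in $s_j$ together with a vanishing derivative would make $s_j^*$ a global minimizer of $\pi_i\left(s_i^*,\cdot\right)$, contradicting $\pi_i\left(s_i^*,s_j^p\right)\leq M_i^U<\pi_i\left(s_i^*,s_j^*\right)$. Relatedly, the first-order punishment in the construction comes from mixing toward $s_j^p$ and the global gap $D_i$ (via convexity), not from the local derivative in $s_j$, so no smooth completion, shrunken neighborhood, or tightened $\bar{p}$ is required.
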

\begin{proof}[Sketch of adapting the proofs of Propositions \emph{\ref{pro-extneded-folk-finite}
and \ref{pro-extended-interval-game} }to the setup of partial observability]
 Observe that the gain of an agent who deviates to a different biased
belief, when his deviation is unobserved by the opponent, is bounded
(due to the payoff of the underlying game being bounded). When the
deviation is observed by the opponent, the agent is strictly outperformed,
given the BBE constructed in the proofs of Propositions\emph{ }\ref{pro-monotone-BBE-outcomes-finite-games}\emph{
}and\emph{ }\ref{pro-interval-strong-continous-folk}\emph{. }This
implies that there exists $\bar{p}<1$ sufficiently close to one,
such that the loss of a mutant when being observed by his opponent
outweighs the mutant's gain when being unobserved for any $p\in\left[\bar{p},1\right]$.
\end{proof}

\section{Proofs \label{sec:Proofs}}

\subsection{Proof of Proposition \ref{prop-supermodular-monotone}\label{subsec:Proof-of-Proposition-supermodular}}

\textbf{Part 1: }Proposition \ref{prop-neccesary-conditions} implies
(I) and (II). It remains to show (III, overinvestment). Let $\left(\left(\psi_{i}^{*},\psi_{j}^{*}\right),\right.$
$\left.\left(s_{i}^{*},s_{j}^{*}\right)\right)$ be a BBE. Assume
to the contrary that $s_{i}^{*}<\min\left(BR\left(s_{j}^{*}\right)\right)$.
Consider a deviation of player $i$ to a blind belief that the opponent
always plays strategy $s_{j}^{*}$(i.e., $\psi'_{i}\equiv s_{j}^{*}$).
Let $\left(s'_{i},s'_{j}\right)\in PNE\left(G_{\left(\psi'_{i},\psi_{j}^{*}\right)}\right)$
be a plausible equilibrium of the new biased game. Observe first that
$s'_{i}\in BR\left(\psi'_{i}\left(s'_{j}\right)\right)=BR\left(s_{j}^{*}\right).$
This implies that $s'_{i}>s_{i}^{*}$, and, thus, due to the monotonicity
of $\psi_{j}^{*}$ we have: $\psi_{j}^{*}\left(s'_{i}\right)\geq\psi_{j}^{*}\left(s_{i}^{*}\right)$.
We consider two cases: 
\begin{enumerate}
\item If $\psi_{j}^{*}\left(s'_{i}\right)>\psi_{j}^{*}\left(s_{i}^{*}\right)$,
then the strategic complementarity implies that $s'_{j}\geq\min\left(BR\left(\psi_{j}^{*}\left(s'_{i}\right)\right)\right)\geq\max\left(BR\left(\psi_{j}^{*}\left(s_{i}^{*}\right)\right)\right)\geq s_{j}^{*}$,
and this, in turn, implies that player $i$ strictly gains from his
deviation: $\pi_{i}\left(s'_{i},s'_{j}\right)\geq\pi_{i}\left(s'_{i},s{}_{j}^{*}\right)>\pi_{i}\left(s_{i}^{*},s{}_{j}^{*}\right)$,
a contradiction. 
\item If $\psi_{j}^{*}\left(s'_{i}\right)=\psi_{j}^{*}\left(s_{i}^{*}\right)$,
then $\left(s'_{i},s_{j}^{*}\right)\in PNE\left(G_{\left(\psi'_{i},\psi_{j}^{*}\right)}\right)$
and $\pi_{i}\left(s'_{i},s{}_{j}^{*}\right)>\pi_{i}\left(s_{i}^{*},s{}_{j}^{*}\right)$,
which contradicts that $\left(\left(\psi_{i}^{*},\psi_{j}^{*}\right),\left(s_{i}^{*},s_{j}^{*}\right)\right)$
is a BBE.
\end{enumerate}
\textbf{Part 2:} Assume\textbf{} that strategy profile $\left(s_{1}^{*},s_{2}^{*}\right)$
satisfies I, II, and III. For each player $i$ let $s_{i}^{e}=\min\left(BR^{-1}\left(s_{i}^{*}\right)\right)$.\textbf{
}For every player $i$ and every strategy $s_{i}<s_{i}^{*}$ define
$X\left(s_{i}\right)$ as the set of strategies $s'_{i}$ for which
player $i$ is worse off (relative to $\pi_{i}\left(s_{1}^{*},s_{2}^{*}\right)$)
if he plays strategy $s_{i}$, while player $j$ plays a best-reply
to $s'_{i}$. Formally: 
\[
X_{s^{*}}\left(s_{i}\right)=\left\{ s'_{i}\in S_{i}|\pi_{i}\left(s_{i},s_{j}\right)\leq\pi_{i}\left(s_{i}^{*},s_{j}^{*}\right)\,\,\forall s_{j}\in BR\left(s'_{i}\right)\right\} .
\]
The assumption that $\pi_{i}\left(s_{i}^{*},s_{j}^{*}\right)>\tilde{M}_{i}^{U}$
implies that $X_{s^{*}}\left(s_{i}\right)$ is nonempty for each $s_{i}$.
The assumption of strategic complements implies that $X_{s^{*}}\left(s_{i}\right)$
is an interval starting at $\min\left(S_{i}\right)$. Let $\phi_{s^{*}}\left(s_{i}\right)=\sup\left(X_{s^{*}}\left(s_{i}\right)\right)$.
The assumption that the payoff function is continuously twice differentiable
implies that $\phi_{s^{*}}\left(s_{i}\right)$ is continuous. The
assumption that $s_{j}^{e}=\min\left(BR^{-1}\left(s_{i}^{*}\right)\right)$
implies that $\lim_{s_{i}\nearrow s_{i}^{*}}\left(\phi_{s^{*}}\left(s_{i}\right)\right)=s_{i}^{e}$.
These observations imply that for each player $j$ there exists a
monotone biased belief $\psi_{j}^{*}$ satisfying (1) $\psi_{j}^{*}\left(s_{i}\right)=s_{i}^{e}$
for each $s_{i}\geq s_{i}^{*}$ and (2) $\psi_{j}^{*}\left(s_{i}\right)\leq\phi_{s^{*}}\left(s_{i}\right)$
for each $s_{i}<s_{i}^{*}$ with an equality only if $\phi_{s^{*}}\left(s_{i}\right)=\min\left(S_{i}\right)$. 

We now show that these properties of $\left(\psi_{1}^{*},\psi_{2}^{*}\right)$
imply that $\left(\left(\psi_{1}^{*},\psi_{2}^{*}\right),\left(s_{1}^{*},s_{2}^{*}\right)\right)$
is a BBE (a strong BBE if $\pi_{i}\left(s_{i},s_{j}\right)$ is strictly
concave in $s_{i}$). Consider a deviation of player $i$ into an
arbitrary biased belief $\psi'_{i}$. For each $s'_{i}\geq s_{i}^{*}$,
and each $\left(s'_{i},s_{j}'\right)\in PNE\left(G_{\left(\psi'_{i},\psi_{j}^{*}\right)}\right)$
($\left(s'_{i},s_{j}'\right)\in NE\left(G_{\left(\psi'_{i},\psi_{j}^{*}\right)}\right)$),
the fact that $\psi_{j}^{*}\left(s'_{i}\right)=\psi_{j}^{*}\left(s_{i}^{*}\right)$
implies that $s_{j}'=s_{j}^{*}$, and due to assumption (III) of overinvestment
and the concavity of the payoff function: $\pi_{i}\left(s'_{i},s'_{j}\right)=\pi_{i}\left(s'_{i},s{}_{j}^{*}\right)\leq\pi_{i}\left(s_{i}^{*},s{}_{j}^{*}\right)$
. For each $s'_{i}<s_{i}^{*}$, and each $\left(s'_{i},s_{j}'\right)\in NE\left(G_{\left(\psi'_{i},\psi_{j}^{*}\right)}\right)$,
the fact that $\psi_{j}^{*}\left(s'_{i}\right)\leq\phi_{s^{*}}\left(s'_{i}\right)$
with an equality only if $\phi_{s^{*}}\left(s'_{i}\right)=\min\left(S_{i}\right)$
(and, thus, $\psi_{j}^{*}\left(s'_{i}\right)\in X_{s^{*}}\left(s'_{i}\right))$
implies that $\pi_{i}\left(s'_{i},s'_{j}\right)\leq\pi_{i}\left(s_{1}^{*},s_{2}^{*}\right)$.
This shows that player $i$ cannot gain from his deviation, which
implies that $\left(\left(\psi_{1}^{*},\psi_{2}^{*}\right),\left(s_{1}^{*},s_{2}^{*}\right)\right)$
is a (strong) BBE. 

\subsection{\label{subsec-proof-og-core-no-worse-than-worst}Proof of a Lemma
Required for Corollary \ref{cor-no-bad-BBE-outcomes-strategic-complements}}

\begin{lem}
\label{lemma-no-worse-than-worset-Nash}Let $G$ be a game with strategic
complements and positive externalities with a lowest Nash equilibrium
$\left(\underline{s}_{1},\underline{s}_{2}\right)$ satisfying $\underline{s_{i}}<\max\left(S_{i}\right)_{i}$
for each player $i$. Let $s_{1}^{*}<\underline{s}_{1}$. Then for
each $s_{2}^{*}\in S_{2}$ either (1) $s_{1}^{*}<\min\left(BR\left(s_{2}^{*}\right)\right)$
or (2) $s_{2}^{*}<\min\left(BR\left(s_{1}^{*}\right)\right)$.
\end{lem}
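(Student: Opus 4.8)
The plan is to argue by contradiction, reducing the claim to the characterization of the lowest Nash equilibrium as an extremal fixed point of the joint smallest-best-reply map. So suppose the conclusion fails, i.e. both $s_1^* \geq \min(BR(s_2^*))$ and $s_2^* \geq \min(BR(s_1^*))$. Write $\underline{b}_i(s_j) = \min(BR(s_j))$ for the smallest best reply of player $i$ against $s_j$ (well defined because each $BR(s_j)$ is a nonempty compact set, as $\pi_i$ is continuous and $S_i$ compact). The negated conclusion then reads $\underline{b}_1(s_2^*) \leq s_1^*$ and $\underline{b}_2(s_1^*) \leq s_2^*$, and the goal is to force $\underline{s}_1 \leq s_1^*$, contradicting $s_1^* < \underline{s}_1$.

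First I would record the monotonicity that drives everything. Because $G$ has strategic complements ($\frac{\partial^2 \pi_i}{\partial s_i \partial s_j} > 0$), each $BR$ is monotone in the strong set order (Topkis), so each smallest selection $\underline{b}_i$ is nondecreasing. Hence the joint map $T : S_1 \times S_2 \to S_1 \times S_2$ defined by $T(s_1, s_2) = (\underline{b}_1(s_2), \underline{b}_2(s_1))$ is order-preserving for the product order. Note that $S_1 \times S_2$, a product of two compact real intervals, is a complete lattice, and that every fixed point of $T$ is a Nash equilibrium of $G$ (at a fixed point each player plays a best reply to the other).

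The key step is then a short application of the Knaster--Tarski theorem. The negated conclusion says exactly that $x^* := (s_1^*, s_2^*)$ is a pre-fixed point of $T$, i.e. $T(x^*) \leq x^*$ componentwise. Since $T$ is order-preserving on the complete lattice $S_1 \times S_2$, its least fixed point equals $\bigwedge\{x : T(x) \leq x\}$ and therefore lies below $x^*$. Call it $\hat{s} = (\hat{s}_1, \hat{s}_2) \leq (s_1^*, s_2^*)$; as noted, $\hat{s}$ is a Nash equilibrium. Because $(\underline{s}_1, \underline{s}_2)$ is the lowest Nash equilibrium, $\underline{s}_1 \leq \hat{s}_1 \leq s_1^*$, contradicting $s_1^* < \underline{s}_1$, which proves the lemma.

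The step that needs the most care is the monotonicity of the smallest best reply and the identification of fixed points of $T$ with Nash equilibria; both are standard consequences of strategic complementarity together with the continuity and concavity already imposed on $\pi_i$ in Section \ref{sec:Model}, and I would spell out only the Topkis argument if space permits. I expect the hypothesis $\underline{s}_i < \max(S_i)$ to be inessential for this particular claim (it guarantees interiority of the lowest equilibrium, which matters for the ``strictly underinvests'' reading of Corollary \ref{cor-no-bad-BBE-outcomes-strategic-complements} rather than for the lattice argument here). If a fully self-contained proof were preferred over invoking Knaster--Tarski, I would instead iterate $T$ downward from $x^*$, obtaining a decreasing chain whose infimum is a Nash equilibrium below $x^*$; that variant would additionally require checking order-continuity of $\underline{b}_i$ along the chain, which is why I favor the fixed-point formulation.
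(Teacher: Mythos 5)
Your proof is correct, but it takes a genuinely different route from the paper's. The paper argues directly, by cases on whether $s_{2}^{*}>\underline{s}_{2}$ or $s_{2}^{*}\leq\underline{s}_{2}$: in the first case, monotone comparative statics of best replies gives $s_{1}^{*}<\underline{s}_{1}\leq\min\left(BR\left(s_{2}^{*}\right)\right)$, so disjunct (1) holds; in the second, it truncates the game to strategies at most $s_{i}^{*}$, invokes existence of a pure Nash equilibrium of the truncated supermodular game, observes that this profile cannot be an equilibrium of the full game (it lies below the lowest Nash equilibrium), and concludes---using concavity of $\pi_{i}$ in own strategy together with strategic complements---that some player's cap must bind, i.e., $s_{i}^{*}=s'_{i}<\min\left(BR\left(s'_{j}\right)\right)\leq\min\left(BR\left(s_{j}^{*}\right)\right)$. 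You instead negate the disjunction, note that the negation makes $\left(s_{1}^{*},s_{2}^{*}\right)$ a pre-fixed point of the order-preserving map $T\left(s_{1},s_{2}\right)=\left(\min BR\left(s_{2}\right),\min BR\left(s_{1}\right)\right)$, and apply Knaster--Tarski to obtain a Nash equilibrium weakly below $\left(s_{1}^{*},s_{2}^{*}\right)$, contradicting the definition of the lowest equilibrium. The two arguments rest on the same lattice-theoretic foundation---the equilibrium-existence step the paper borrows from \citet{milgrom1990rationalizability} is itself a Tarski-type result---but yours makes the fixed-point step explicit and unified: it needs no case split, does not use concavity (which the paper's truncation step does), and substantiates your side observation that the hypothesis $\underline{s}_{i}<\max\left(S_{i}\right)$ is not needed for this lemma (it matters for the strictness claims surrounding Corollary \ref{cor-no-bad-BBE-outcomes-strategic-complements}, not here). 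The paper's route, in exchange, stays within machinery it already cites and never has to introduce the smallest-best-reply selection and its Topkis monotonicity as separate objects. You are also right to prefer Knaster--Tarski over downward iteration of $T$, which would indeed require an order-continuity argument that the fixed-point formulation avoids.
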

\begin{proof}
Assume first that $s_{2}^{*}>\underline{s}_{2}$. The fact that $\underline{s}_{1}\in BR\left(\underline{s}_{2}\right)$
and $s_{2}^{*}>\underline{s}_{2}$, together with the strategic complements,
imply that $s_{1}^{*}<\underline{s}_{1}<\min\left(BR\left(s_{2}^{*}\right)\right)$.
We are left with the case where $s_{2}^{*}\leq\underline{s}_{2}.$
Consider a restricted game in which the set of strategies of each
player $i$ is restricted to being strategies that are at most $s_{i}^{*}$.
The game is a game of strategic complements, and, thus, it admits
a pure Nash equilibrium $\left(s'_{i},s'_{j}\right)$. The minimality
of $\left(\underline{s}_{1},\underline{s}_{2}\right)$ implies that
$\left(s'_{i},s'_{j}\right)$ cannot be a Nash equilibrium of the
unrestricted game. The strategic complements and the concavity of
the payoff jointly imply that if $\left(s'_{i},s'_{j}\right)$ is
not a Nash equilibrium of the unrestricted game, then there is player
$i$ for which $s_{i}^{*}=s'_{i}<\min\left(BR\left(s'_{j}\right)\right)\leq\min\left(BR\left(s_{j}^{*}\right)\right)$. 
\end{proof}

\subsection{\label{subsec:Proof-of-cor-wishful-complements}Proof of a Lemma
Required for Corollary \ref{cor-wishful-tiniking-with-complements}}

\begin{lem}
\label{lemma-wishful-somplements}Let $G$ be a game with positive
externalities and strategic complementarity of the payoff of player
$i$ (i.e., $\frac{\partial^{2}\pi_{i}\left(s_{i},s_{j}\right)}{\partial s_{i}\cdot\partial s_{j}}>0$
for each $s_{i},s_{j}$). Then $s'_{j}<s_{j}$ implies that $\max\left(BR\left(s'_{j}\right)\right)\leq\min\left(BR\left(s_{j}\right)\right)$
with an equality only if $\max\left(BR\left(s'_{j}\right)\right)=\min\left(BR\left(s_{j}\right)\right)\in\left\{ \min\left(S_{i}\right),\max\left(S_{i}\right)\right\} $.
\end{lem}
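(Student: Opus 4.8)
The plan is to read this as a monotone–comparative-statics (Topkis-type) statement, in which the hypothesis $\frac{\partial^{2}\pi_{i}}{\partial s_{i}\partial s_{j}}>0$ supplies \emph{strict} increasing differences for $\pi_{i}$. The first step is to record this reformulation: for any $a>b$ in $S_{i}$ and any $s'_{j}<s_{j}$,
\[
\left[\pi_{i}\left(a,s_{j}\right)-\pi_{i}\left(b,s_{j}\right)\right]-\left[\pi_{i}\left(a,s'_{j}\right)-\pi_{i}\left(b,s'_{j}\right)\right]=\int_{s'_{j}}^{s_{j}}\int_{b}^{a}\frac{\partial^{2}\pi_{i}}{\partial s_{i}\partial s_{j}}\,d\sigma_{i}\,d\sigma_{j}>0,
\]
the strict inequality holding because the integrand is everywhere positive and the rectangle of integration has positive area. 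Since each $\pi_{i}(\cdot,s_{j})$ is continuous on the compact interval $S_{i}$, each set $BR(s_{j})$ is nonempty and compact, so $\max\left(BR\left(s'_{j}\right)\right)$ and $\min\left(BR\left(s_{j}\right)\right)$ are well defined.

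Next I would establish the weak inequality $\max\left(BR\left(s'_{j}\right)\right)\le\min\left(BR\left(s_{j}\right)\right)$ by a revealed-preference contradiction. Suppose there were $a\in BR\left(s'_{j}\right)$ and $b\in BR\left(s_{j}\right)$ with $a>b$. Optimality of $a$ against $s'_{j}$ gives $\pi_{i}\left(a,s'_{j}\right)\ge\pi_{i}\left(b,s'_{j}\right)$, and optimality of $b$ against $s_{j}$ gives $\pi_{i}\left(b,s_{j}\right)\ge\pi_{i}\left(a,s_{j}\right)$; adding these yields $\pi_{i}\left(a,s'_{j}\right)-\pi_{i}\left(a,s_{j}\right)\ge\pi_{i}\left(b,s'_{j}\right)-\pi_{i}\left(b,s_{j}\right)$. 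But the strict increasing-differences identity above (applied to $a>b$ and $s'_{j}<s_{j}$) asserts exactly the reverse strict inequality, a contradiction. Hence every element of $BR\left(s'_{j}\right)$ is at most every element of $BR\left(s_{j}\right)$, which gives the weak inequality.

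Finally, for the equality clause I would assume $c:=\max\left(BR\left(s'_{j}\right)\right)=\min\left(BR\left(s_{j}\right)\right)$ and rule out $c$ being interior. If $\min\left(S_{i}\right)<c<\max\left(S_{i}\right)$, then $c$ is an interior maximizer of the differentiable function $\pi_{i}(\cdot,s'_{j})$, so $\frac{\partial\pi_{i}}{\partial s_{i}}\left(c,s'_{j}\right)=0$, and likewise $c$ is an interior maximizer of $\pi_{i}(\cdot,s_{j})$, so $\frac{\partial\pi_{i}}{\partial s_{i}}\left(c,s_{j}\right)=0$. Since $\frac{\partial^{2}\pi_{i}}{\partial s_{i}\partial s_{j}}>0$, the map $s_{j}\mapsto\frac{\partial\pi_{i}}{\partial s_{i}}\left(c,s_{j}\right)$ is strictly increasing, so $s'_{j}<s_{j}$ forces $0=\frac{\partial\pi_{i}}{\partial s_{i}}\left(c,s'_{j}\right)<\frac{\partial\pi_{i}}{\partial s_{i}}\left(c,s_{j}\right)=0$, a contradiction; hence $c\in\left\{\min\left(S_{i}\right),\max\left(S_{i}\right)\right\}$. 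The one point requiring care is keeping the \emph{direction} of the increasing-differences inequality straight so that it cleanly opposes the revealed-preference inequality; beyond guaranteeing nonempty best replies, concavity of $\pi_{i}$ in $s_{i}$ is not actually used, and the remaining arguments are routine.
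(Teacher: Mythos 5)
Your proof is correct, and it takes a partly different route from the paper's. For the weak inequality $\max\left(BR\left(s'_{j}\right)\right)\leq\min\left(BR\left(s_{j}\right)\right)$ you use the integral form of strict increasing differences together with a revealed-preference contradiction (the standard Topkis argument), whereas the paper argues entirely through first-order conditions: it identifies $\max\left(BR\left(s'_{j}\right)\right)$ and $\min\left(BR\left(s_{j}\right)\right)$ with extreme zeros of $\frac{\partial\pi_{i}}{\partial s_{i}}$ (which tacitly leans on the maintained weak concavity of $\pi_{i}$ in $s_{i}$ to equate best replies with critical points) and orders those zeros using the pointwise inequality $\frac{\partial\pi_{i}\left(s_{i},s'_{j}\right)}{\partial s_{i}}<\frac{\partial\pi_{i}\left(s_{i},s_{j}\right)}{\partial s_{i}}$, with separate bookkeeping for the corner cases $\max\left(BR\left(s'_{j}\right)\right)\in\left\{ \min\left(S_{i}\right),\max\left(S_{i}\right)\right\} $. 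Your route buys generality and cleanliness: the weak inequality holds for any continuous payoff with strict increasing differences, with no concavity and no case analysis over interior versus extreme best replies; the paper's route is shorter given its maintained assumptions. Your handling of the equality clause, i.e., first-order conditions at an interior common maximizer $c$ for both $s'_{j}$ and $s_{j}$ contradicting the strict monotonicity of $s_{j}\mapsto\frac{\partial\pi_{i}\left(c,s_{j}\right)}{\partial s_{i}}$, is essentially the same derivative argument as the paper's interior case. One cosmetic remark: in your own write-up it is continuity on a compact interval, not concavity, that guarantees nonemptiness of the best-reply sets, so the closing caveat about concavity could simply be dropped.
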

\begin{proof}
The inequality $s'_{j}<s_{j}$ and the strategic complementarity of
the payoff of player $i$ implies that $\frac{\partial\pi_{i}\left(s_{i},s_{j}'\right)}{\partial s_{i}}<\frac{\partial\pi_{i}\left(s_{i},s_{j}\right)}{\partial s_{i}}$
for each $s_{i}\in S_{i}$, which implies that whenever $\max\left(BR\left(s'_{j}\right)\right)\notin\left\{ \min\left(S_{i}\right),\max\left(S_{i}\right)\right\} $,
then
\begin{align*}
\max\left(BR\left(s'_{j}\right)\right) & =\max\left\{ s_{i}^{*}\in S_{i}|\left.\frac{\partial\pi_{i}\left(s_{i},s_{j}'\right)}{\partial s_{i}}=0\right|_{s_{i}=s_{i}^{*}}\right\} \\
 & <\min\left\{ s_{i}^{*}\in S_{i}|\left.\frac{\partial\pi_{i}\left(s_{i},s_{j}\right)}{\partial s_{i}}=0\right|_{s_{i}=s_{i}^{*}}\right\} \leq\min\left(BR\left(s_{j}\right)\right).
\end{align*}
This shows that the strict inequality holds whenever $\max\left(BR\left(s'_{j}\right)\right)\notin\left\{ \min\left(S_{i}\right),\max\left(S_{i}\right)\right\} $.
It remains to show that the weak inequality (namely, $\max\left(BR\left(s'_{j}\right)\right)\leq\min\left(BR\left(s_{j}\right)\right)$)
holds when $\max\left(BR\left(s'_{j}\right)\right)\in\left\{ \min\left(S_{i}\right),\max\left(S_{i}\right)\right\} $.
If $\max\left(BR\left(s'_{j}\right)\right)=\min\left(S_{i}\right)$
then this is immediate. Assume that $\max\left(BR\left(s'_{j}\right)\right)=\max\left(S_{i}\right)$.
Then:
\begin{align*}
\max\left(S_{i}\right) & =\max\left\{ s_{i}^{*}\in S_{i}|\left.\frac{\partial\pi_{i}\left(s_{i},s_{j}'\right)}{\partial s_{i}}\geq0\right|_{s_{i}=s_{i}^{*}}\right\} .\\
 & \leq\min\left\{ s_{i}^{*}\in S_{i}|\left.\frac{\partial\pi_{i}\left(s_{i},s_{j}\right)}{\partial s_{i}}\geq0\right|_{s_{i}=s_{i}^{*}}\right\} \leq\min\left(BR\left(s_{j}\right)\right).
\end{align*}
\end{proof}

\subsection{Proof of Proposition \ref{prop-substitues}\label{subsec:Proof-of-Proposition-substitutes}}

The proof is analogous to the proof of Proposition \ref{prop-supermodular-monotone},
and is presented for completeness.

\textbf{Part 1: }Proposition \ref{prop-neccesary-conditions} implies
(I) and (II). It remains to show (III) (underinvestment). Let $\left(\left(\psi_{i}^{*},\psi_{j}^{*}\right),\left(s_{i}^{*},s_{j}^{*}\right)\right)$
be a BBE. Assume to the contrary that $s_{i}^{*}>\max\left(BR\left(s_{j}^{*}\right)\right)$.
Consider a deviation of player $i$ to a blind belief that the opponent
always plays strategy $s_{j}^{*}$(i.e., $\psi'_{i}\equiv s_{j}^{*}$).
Let $\left(s'_{i},s'_{j}\right)\in PNE\left(G_{\left(\psi'_{i},\psi_{j}^{*}\right)}\right)$
be a plausible equilibrium of the new biased game. Observe first that
$s'_{i}\in BR\left(\psi'_{i}\left(s'_{j}\right)\right)=BR\left(s_{j}^{*}\right).$
This implies that $s'_{i}<s_{i}^{*}$, and, thus, due to the monotonicity
of $\psi_{j}^{*}$ we have: $\psi_{j}^{*}\left(s'_{i}\right)\leq\psi_{j}^{*}\left(s_{i}^{*}\right)$.
We consider two cases: 
\begin{enumerate}
\item If $\psi_{j}^{*}\left(s'_{i}\right)<\psi_{j}^{*}\left(s_{i}^{*}\right)$,
then the strategic substitutability implies that $s'_{j}\geq\min\left(BR\left(\psi_{j}^{*}\left(s'_{i}\right)\right)\right)\geq\max\left(BR\left(\psi_{j}^{*}\left(s_{i}^{*}\right)\right)\right)\geq s_{j}^{*}$,
and this, in turn, implies that player $i$ strictly gains from his
deviation: $\pi_{i}\left(s'_{i},s'_{j}\right)\geq\pi_{i}\left(s'_{i},s{}_{j}^{*}\right)>\pi_{i}\left(s_{i}^{*},s{}_{j}^{*}\right)$,
a contradiction. 
\item If $\psi_{j}^{*}\left(s'_{i}\right)=\psi_{j}^{*}\left(s_{i}^{*}\right)$,
then $\left(s'_{i},s_{j}^{*}\right)\in PNE\left(G_{\left(\psi'_{i},\psi_{j}^{*}\right)}\right)$
and $\pi_{i}\left(s'_{i},s{}_{j}^{*}\right)>\pi_{i}\left(s_{i}^{*},s{}_{j}^{*}\right)$,
which contradicts that $\left(\left(\psi_{i}^{*},\psi_{j}^{*}\right),\left(s_{i}^{*},s_{j}^{*}\right)\right)$
is a BBE.
\end{enumerate}
\textbf{Part 2:} Assume that strategy profile $\left(s_{1}^{*},s_{2}^{*}\right)$
satisfies I, II, and III. For each player $i$ let $s_{i}^{e}=\max\left(BR^{-1}\left(s_{i}^{*}\right)\right)$.\textbf{
}For each player $i$ and each strategy $s_{i}>s_{i}^{*}$ define
$X\left(s_{i}\right)$ as the set of strategies $s'_{i}$ for which
player $i$ is worse off (relative to $\pi_{i}\left(s_{1}^{*},s_{2}^{*}\right)$)
if he plays strategy $s_{i}$, while player $j$ plays a best-reply
to $s'_{i}$. Formally: 
\[
X_{s^{*}}\left(s_{i}\right)=\left\{ s'_{i}\in S_{i}|\pi_{i}\left(s_{i},s_{j}\right)\leq\pi_{i}\left(s_{i}^{*},s_{j}^{*}\right)\,\,\forall s_{j}\in BR\left(s'_{i}\right)\right\} .
\]
The assumption that $\pi_{i}\left(s_{i}^{*},s_{j}^{*}\right)>\tilde{M}_{i}^{U}$
implies that $X_{s^{*}}\left(s_{i}\right)$ is nonempty for each $s_{i}$.
The assumption of strategic substitutes implies that $X_{s^{*}}\left(s_{i}\right)$
is an interval ending at $\max\left(S_{i}\right)$. Let $\phi_{s^{*}}\left(s_{i}\right)=\textrm{inf}\left(X_{s^{*}}\left(s_{i}\right)\right)$.
The assumption that the payoff function is continuously twice differentiable
implies that $\phi_{s^{*}}\left(s_{i}\right)$ is continuous. The
assumption that $s_{j}^{e}=\max\left(BR^{-1}\left(s_{i}^{*}\right)\right)$
implies that $\lim_{s_{i}\searrow s_{i}^{*}}\left(\phi_{s^{*}}\left(s_{i}\right)\right)=s_{i}^{e}$.
These observations imply that for each player $j$ there exists a
monotone biased belief $\psi_{j}^{*}$ satisfying (1) $\psi_{j}^{*}\left(s_{i}\right)=s_{i}^{e}$
for each $s_{i}\leq s_{i}^{*}$ and (2) $\psi_{j}^{*}\left(s_{i}\right)\geq\phi_{s^{*}}\left(s_{i}\right)$
for each $s_{i}>s_{i}^{*}$ with an equality only if $\phi_{s^{*}}\left(s_{i}\right)=\max\left(S_{i}\right)$. 

We now show that these properties of $\left(\psi_{1}^{*},\psi_{2}^{*}\right)$
imply that $\left(\left(\psi_{1}^{*},\psi_{2}^{*}\right),\left(s_{1}^{*},s_{2}^{*}\right)\right)$
is a BBE (a strong BBE if $\pi_{i}\left(s_{i},s_{j}\right)$ is strictly
concave in $s_{i}$). Consider a deviation of player $i$ to an arbitrary
biased belief $\psi'_{i}$. For each $s'_{i}\leq s_{i}^{*}$, and
each $\left(s'_{i},s_{j}'\right)\in PNE\left(G_{\left(\psi'_{i},\psi_{j}^{*}\right)}\right)$
($\left(s'_{i},s_{j}'\right)\in NE\left(G_{\left(\psi'_{i},\psi_{j}^{*}\right)}\right)$),
the fact that $\psi_{j}^{*}\left(s'_{i}\right)=\psi_{j}^{*}\left(s_{i}^{*}\right)$
implies that $s_{j}'=s_{j}^{*}$ and, due to assumption (III) of underinvestment
and the concavity of the payoff function, it follows that $\pi_{i}\left(s'_{i},s'_{j}\right)=\pi_{i}\left(s'_{i},s{}_{j}^{*}\right)\leq\pi_{i}\left(s_{i}^{*},s{}_{j}^{*}\right)$
. For each $s'_{i}>s_{i}^{*}$, and each $\left(s'_{i},s_{j}'\right)\in NE\left(G_{\left(\psi'_{i},\psi_{j}^{*}\right)}\right)$,
the fact that $\psi_{j}^{*}\left(s'_{i}\right)\geq\phi_{s^{*}}\left(s'_{i}\right)$
with an equality only if $\phi_{s^{*}}\left(s_{i}\right)=\max\left(S_{i}\right)$
(and, thus, $\psi_{j}^{*}\left(s'_{i}\right)\in X_{s^{*}}\left(s'_{i}\right))$
implies that $\pi_{i}\left(s'_{i},s'_{j}\right)\leq\pi_{i}\left(s_{1}^{*},s_{2}^{*}\right)$.
This shows that player $i$ cannot gain from his deviation, which
implies that $\left(\left(\psi_{1}^{*},\psi_{2}^{*}\right),\left(s_{1}^{*},s_{2}^{*}\right)\right)$
is a (strong) BBE.

\subsection{\label{subsec:Proof-of-Lemma-3}Proof of a Lemma Required for Corollary
\ref{cor-no-efficient-outcome-substitutes}}
\begin{lem}
\label{lemma-no-better-than-all-Nash-both-players}Let $G$ be a game
with strategic substitutes and positive externalities. Let $\left(s_{1}^{*},s_{2}^{*}\right)$
be a strategy profile satisfying $s_{i}^{*}>s_{i}^{e}$ for each player
$i$ and each Nash equilibrium $\left(s_{1}^{e},s_{2}^{e}\right)\in NE\left(G\right)$.Then,
either (1) $s_{1}^{*}>\max\left(BR\left(s_{2}^{*}\right)\right)$
or (2) $s_{2}^{*}>\max\left(BR\left(s_{1}^{*}\right)\right)$.
\end{lem}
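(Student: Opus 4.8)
The plan is to prove the (stronger) conclusion that player $2$ strictly overinvests, i.e. $s_{2}^{*}>\max\left(BR\left(s_{1}^{*}\right)\right)$, which immediately yields the stated disjunction (and, by the symmetric argument, player $1$ overinvests as well). Throughout I write $b_{i}\left(s_{j}\right):=\max\left(BR\left(s_{j}\right)\right)$ for the largest best reply of player $i$. Since the game has strategic substitutes (decreasing differences), Topkis's monotone comparative statics gives that each $b_{i}$ is non-increasing, and weak concavity of $\pi_{i}$ in its own argument (so that $BR\left(s_{j}\right)$ is a nonempty compact interval) together with Berge's maximum theorem gives that $b_{i}$ is well defined and upper semicontinuous. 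Consequently the two-step map $f:=b_{1}\circ b_{2}$ is non-decreasing on the compact interval $S_{1}$, and every fixed point $s_{1}$ of $f$ yields a Nash equilibrium $\left(s_{1},b_{2}\left(s_{1}\right)\right)\in NE\left(G\right)$ (and conversely).

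First I would record that, under the hypothesis, $f\left(s_{1}^{*}\right)<s_{1}^{*}$. Indeed, if instead $f\left(s_{1}^{*}\right)\geq s_{1}^{*}$, then because $f$ is non-decreasing it maps the sub-interval $\left[s_{1}^{*},\max\left(S_{1}\right)\right]$ into itself, so by Tarski's fixed-point theorem $f$ has a fixed point $s_{1}^{**}\geq s_{1}^{*}$; the associated equilibrium $\left(s_{1}^{**},b_{2}\left(s_{1}^{**}\right)\right)$ then has first coordinate at least $s_{1}^{*}$, contradicting the assumption that $s_{1}^{e}<s_{1}^{*}$ for every Nash equilibrium. Hence $f\left(s_{1}^{*}\right)<s_{1}^{*}$.

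The main step is then as follows. Suppose toward a contradiction that player $2$ does not overinvest, i.e. $s_{2}^{*}\leq b_{2}\left(s_{1}^{*}\right)$. Since $f\left(s_{1}^{*}\right)<s_{1}^{*}$ and $f$ is non-decreasing, $f$ maps $\left[\min\left(S_{1}\right),s_{1}^{*}\right]$ into itself, so Tarski again produces a fixed point $s_{1}^{**}\leq s_{1}^{*}$, i.e. a Nash equilibrium $\left(s_{1}^{**},s_{2}^{**}\right)$ with $s_{2}^{**}=b_{2}\left(s_{1}^{**}\right)$. Monotonicity of $b_{2}$ and $s_{1}^{**}\leq s_{1}^{*}$ give $s_{2}^{**}=b_{2}\left(s_{1}^{**}\right)\geq b_{2}\left(s_{1}^{*}\right)\geq s_{2}^{*}$, so this equilibrium violates $s_{2}^{e}<s_{2}^{*}$, a contradiction. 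Therefore $s_{2}^{*}>\max\left(BR\left(s_{1}^{*}\right)\right)$, which is conclusion (2) and hence the lemma.

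The delicate points are structural (comparative-statics and fixed-point) rather than computational. The first is the order-reversal inherent in strategic substitutes: each best reply is \emph{decreasing}, and it is their composition $f$ that is monotone, so one must track directions carefully when pushing monotonicity of $b_{2}$ through the inequality $s_{1}^{**}\leq s_{1}^{*}$. The second, and the part I expect to be the real obstacle in a fully rigorous write-up, is handling weak (rather than strict) concavity: the best-reply correspondence may be interval-valued, so one must work with the maximal selection $b_{i}$ and justify that it is non-increasing and upper semicontinuous (via Topkis and Berge), and that Tarski's theorem — which requires only monotonicity on a complete lattice and no continuity — applies to the self-maps of $\left[s_{1}^{*},\max\left(S_{1}\right)\right]$ and of $\left[\min\left(S_{1}\right),s_{1}^{*}\right]$. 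Once these facts are secured the argument is symmetric in the two players, and it in fact shows that both players strictly overinvest whenever $s^{*}$ lies strictly above every Nash equilibrium.
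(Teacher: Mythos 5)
Your proof is correct, and it takes a genuinely different route from the paper's. The paper's own argument restricts both players to the box $\left[s_{1}^{*},\max\left(S_{1}\right)\right]\times\left[s_{2}^{*},\max\left(S_{2}\right)\right]$, invokes a pure Nash equilibrium $\left(s'_{1},s'_{2}\right)$ of this restricted game (via the relabeling that turns a substitutes game into a supermodular one), observes that it cannot be an equilibrium of $G$, so some player's lower constraint binds, giving $s_{i}^{*}=s'_{i}>\max\left(BR\left(s'_{j}\right)\right)$, and then chains this with the comparison $\max\left(BR\left(s'_{j}\right)\right)\geq\max\left(BR\left(s_{j}^{*}\right)\right)$. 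You instead collapse the problem to the one-dimensional monotone map $f=b_{1}\circ b_{2}$ and apply Tarski twice: on $\left[s_{1}^{*},\max\left(S_{1}\right)\right]$ to force $f\left(s_{1}^{*}\right)<s_{1}^{*}$ (a fixed point there would be an equilibrium weakly above $s_{1}^{*}$), and then on $\left[\min\left(S_{1}\right),s_{1}^{*}\right]$ to produce an equilibrium $\left(s_{1}^{**},b_{2}\left(s_{1}^{**}\right)\right)$ with $s_{1}^{**}\leq s_{1}^{*}$, whence $b_{2}\left(s_{1}^{*}\right)\leq b_{2}\left(s_{1}^{**}\right)<s_{2}^{*}$. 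Your route buys two things. First, a stronger conclusion: both (1) and (2) hold simultaneously, whereas the lemma (and Corollary \ref{cor-no-efficient-outcome-substitutes}, which uses it) needs only the disjunction. Second, it avoids the one step of the paper's proof that is actually problematic: under strategic substitutes, $s'_{j}\geq s_{j}^{*}$ yields $\max\left(BR\left(s'_{j}\right)\right)\leq\max\left(BR\left(s_{j}^{*}\right)\right)$, the \emph{reverse} of the inequality the paper asserts; that chain is valid in the companion complements lemma (Lemma \ref{lemma-no-worse-than-worset-Nash}), where $s'_{j}\leq s_{j}^{*}$ and best replies are increasing, but the direction flips under substitutes. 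Your comparison, which evaluates $b_{2}$ at $s_{1}^{**}\leq s_{1}^{*}$ rather than at a point above $s_{j}^{*}$, runs in the correct direction, so in effect your argument also repairs the paper's. Two harmless blemishes: the parenthetical claim that, conversely, every Nash equilibrium arises from a fixed point of $f$ is false in general (an equilibrium need not use the maximal selections), but you never use it; and the upper semicontinuity of $b_{i}$ via Berge is unnecessary, since Tarski requires only monotonicity on a complete lattice, as you note yourself.
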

\begin{proof}
Consider a restricted game in which the set of strategies of each
player $i$ is restricted to being strategies that are at least $s_{i}^{*}$.
The restricted game is a game with strategic substitutes, and, thus,
it admits a pure Nash equilibrium $\left(s'_{1},s'_{2}\right)$ (recall,
that after relabeling the set of strategies of one of the players,
the game becomes supermodular, and because of this the game admits
a pure Nash equilibrium due to \citealp{milgrom1990rationalizability}).
The assumption that $s_{i}^{*}>s_{i}^{e}$ for each player $i$ and
each Nash equilibrium $\left(s_{1}^{e},s_{2}^{e}\right)\in NE\left(G\right)$
implies that $\left(s'_{1},s'_{2}\right)$ cannot be a Nash equilibrium
of the unrestricted game. The concavity of the payoff and the strategic
substitutes jointly imply that if $\left(s'_{i},s'_{j}\right)$ is
not a Nash equilibrium of the unrestricted game, then there is a player
$i$ for which $s_{i}^{*}=s'_{i}>\max\left(BR\left(s'_{j}\right)\right)\geq\max\left(BR\left(s_{j}^{*}\right)\right)$. 
\end{proof}

\subsection{\label{subsec:Proof-of-Corollary-wishful-substitutes}Proof of Corollary
\ref{cor-wishful-tiniking-with-substitutes}}

The proof is analogous to Corollary \ref{cor-wishful-tiniking-with-complements},
and is presented for completeness. Assume to the contrary that $\psi_{i}^{*}\left(s_{j}^{*}\right)<s_{j}^{*}$.
Lemma \ref{lemma-wishful-substitute} (below) implies that $\min\left(BR\left(\psi_{i}^{*}\left(s_{j}^{*}\right)\right)\right)\geq\max\left(BR\left(s_{j}^{*}\right)\right)$
with an equality only if 
\[
\min\left(BR\left(\psi_{i}^{*}\left(s_{j}^{*}\right)\right)\right)\in\left\{ \min\left(S_{i}\right),\max\left(S_{i}\right)\right\} .
\]
Part 1 of Proposition \ref{prop-substitues} and the definition of
a monotone BBE imply that 
\[
\min\left(BR\left(\psi_{i}^{*}\left(s_{j}^{*}\right)\right)\right)\leq s_{i}^{*}\leq\max\left(BR\left(s_{j}^{*}\right)\right).
\]
 The previous inequalities jointly imply that 
\[
\min\left(BR\left(\psi_{i}^{*}\left(s_{j}^{*}\right)\right)\right)=s_{i}^{*}=\max\left(BR\left(s_{j}^{*}\right)\right)\in\left\{ \min\left(S_{i}\right),\max\left(S_{i}\right)\right\} ,
\]
 which contradicts the assumption that $s_{i}^{*}\notin\left\{ \min\left(S_{i}\right),\max\left(S_{i}\right)\right\} $.
\begin{lem}
\label{lemma-wishful-substitute}Let $G$ be a game with positive
externalities and strategic substitutability of the payoff of player
$i$ (i.e., $\frac{\partial^{2}\pi_{i}\left(s_{i},s_{j}\right)}{\partial s_{i}\cdot\partial s_{j}}>0$
for each $s_{i},s_{j}$). Then $s'_{j}<s_{j}$ implies that $\min\left(BR\left(s'_{j}\right)\right)\geq\max\left(BR\left(s_{j}\right)\right)$
with an equality only if $\min\left(BR\left(s'_{j}\right)\right)=\min\left(BR\left(s_{j}\right)\right)\in\left\{ \min\left(S_{i}\right),\max\left(S_{i}\right)\right\} $.
\end{lem}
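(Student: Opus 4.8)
The plan is to mirror the proof of Lemma \ref{lemma-wishful-somplements} with all inequalities reversed, since the statement is the strategic-substitutes counterpart of the strategic-complements lemma. (Note that to match its use in Corollary \ref{cor-wishful-tiniking-with-substitutes}, the relevant hypothesis is substitutability, $\frac{\partial^{2}\pi_{i}\left(s_{i},s_{j}\right)}{\partial s_{i}\cdot\partial s_{j}}<0$; the content of the argument uses exactly this sign.) The first step is to translate the hypothesis $s'_{j}<s_{j}$ into a pointwise comparison of marginal payoffs: because the cross-partial is negative, lowering the opponent's strategy raises the marginal return to player $i$'s own strategy, so $\frac{\partial\pi_{i}\left(s_{i},s'_{j}\right)}{\partial s_{i}}>\frac{\partial\pi_{i}\left(s_{i},s_{j}\right)}{\partial s_{i}}$ for every $s_{i}\in S_{i}$. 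This single inequality drives everything that follows.

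Next I would treat the interior case, in which $\min\left(BR\left(s'_{j}\right)\right)\notin\left\{ \min\left(S_{i}\right),\max\left(S_{i}\right)\right\}$. Weak concavity of $\pi_{i}$ in $s_{i}$ means $\frac{\partial\pi_{i}}{\partial s_{i}}$ is weakly decreasing in $s_{i}$, so the argmax set is characterized through the zeros of the derivative, giving $\min\left(BR\left(s'_{j}\right)\right)=\min\left\{ s_{i}^{*}\in S_{i}\,\middle|\,\frac{\partial\pi_{i}\left(s_{i},s'_{j}\right)}{\partial s_{i}}=0\text{ at }s_{i}=s_{i}^{*}\right\}$. Feeding in the marginal-payoff comparison from the first step, the left crossing point for $s'_{j}$ lies strictly to the right of every crossing point for $s_{j}$, so
\[
\min\left(BR\left(s'_{j}\right)\right)>\max\left\{ s_{i}^{*}\in S_{i}\,\middle|\,\frac{\partial\pi_{i}\left(s_{i},s_{j}\right)}{\partial s_{i}}=0\text{ at }s_{i}=s_{i}^{*}\right\} \geq\max\left(BR\left(s_{j}\right)\right),
\]
which is the claimed strict inequality whenever $\min\left(BR\left(s'_{j}\right)\right)$ is interior.

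It then remains to handle the two boundary cases. If $\min\left(BR\left(s'_{j}\right)\right)=\max\left(S_{i}\right)$, then since every best reply lies in $S_{i}$ we get $\max\left(BR\left(s_{j}\right)\right)\leq\max\left(S_{i}\right)=\min\left(BR\left(s'_{j}\right)\right)$ immediately, with equality only at the extreme strategy $\max\left(S_{i}\right)$. If instead $\min\left(BR\left(s'_{j}\right)\right)=\min\left(S_{i}\right)$, then the left endpoint maximizes $\pi_{i}\left(\cdot,s'_{j}\right)$, so $\frac{\partial\pi_{i}\left(\min\left(S_{i}\right),s'_{j}\right)}{\partial s_{i}}\leq0$; combining with the first-step inequality yields $\frac{\partial\pi_{i}\left(\min\left(S_{i}\right),s_{j}\right)}{\partial s_{i}}<0$, and concavity forces the maximizer of $\pi_{i}\left(\cdot,s_{j}\right)$ to be $\min\left(S_{i}\right)$ as well, so $\max\left(BR\left(s_{j}\right)\right)=\min\left(S_{i}\right)=\min\left(BR\left(s'_{j}\right)\right)$. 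This is precisely the permitted equality case, with the common value at an extreme point of $S_{i}$.

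The main obstacle is not the logic but the bookkeeping around \emph{weak} (rather than strict) concavity: the argmax may be a nondegenerate interval, and at corners the first-order condition is replaced by a one-sided inequality. I would phrase the ``set of zeros'' characterization so that it remains correct for plateaus and boundary maximizers, and double-check the direction of the $\min\left(S_{i}\right)$ boundary argument so that the conclusion lands exactly on the equality clause of the statement (the shared value being extreme). With strictly concave payoffs these subtleties vanish and the best replies are singletons, but the lemma is stated under weak concavity, so these corner verifications are the part that requires care.
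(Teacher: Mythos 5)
Your proposal is correct and follows essentially the same route as the paper's proof: the same pointwise marginal-payoff comparison $\frac{\partial\pi_{i}\left(s_{i},s'_{j}\right)}{\partial s_{i}}>\frac{\partial\pi_{i}\left(s_{i},s_{j}\right)}{\partial s_{i}}$, the same first-order-condition characterization of $\min\left(BR\left(s'_{j}\right)\right)$ and $\max\left(BR\left(s_{j}\right)\right)$ in the interior case, and the same split into the $\max\left(S_{i}\right)$ (immediate) and $\min\left(S_{i}\right)$ boundary cases, with your direct treatment of the $\min\left(S_{i}\right)$ case being, if anything, slightly cleaner than the paper's set-based formulation. You are also right about the sign: the hypothesis should be $\frac{\partial^{2}\pi_{i}\left(s_{i},s_{j}\right)}{\partial s_{i}\partial s_{j}}<0$; the statement in the paper carries the same typo, and the paper's proof, like yours, uses substitutability.
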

\begin{proof}
The proof is analogous to the proof of Lemma \ref{lemma-no-worse-than-worset-Nash},
and is presented for completeness. The inequality $s'_{j}<s_{j}$
and the strategic substitutability of the payoff of player $i$ implies
that $\frac{\partial\pi_{i}\left(s_{i},s_{j}'\right)}{\partial s_{i}}>\frac{\partial\pi_{i}\left(s_{i},s_{j}\right)}{\partial s_{i}}$
for each $s_{i}\in S_{i}$, which implies that whenever $\min\left(BR\left(s'_{j}\right)\right)\notin\left\{ \min\left(S_{i}\right),\max\left(S_{i}\right)\right\} $,
then
\begin{align*}
\min\left(BR\left(s'_{j}\right)\right) & =\min\left\{ s_{i}^{*}\in S_{i}|\left.\frac{\partial\pi_{i}\left(s_{i},s_{j}'\right)}{\partial s_{i}}=0\right|_{s_{i}=s_{i}^{*}}\right\} \\
 & >\max\left\{ s_{i}^{*}\in S_{i}|\left.\frac{\partial\pi_{i}\left(s_{i},s_{j}\right)}{\partial s_{i}}=0\right|_{s_{i}=s_{i}^{*}}\right\} =\max\left(BR\left(s_{j}\right)\right).
\end{align*}
This shows that the strict inequality holds whenever $\min\left(BR\left(s'_{j}\right)\right)\notin\left\{ \min\left(S_{i}\right),\max\left(S_{i}\right)\right\} $.
It remains to show that the weak inequality (namely, $\min\left(BR\left(s'_{j}\right)\right)\geq\max\left(BR\left(s_{j}\right)\right)$)
holds when $\min\left(BR\left(s'_{j}\right)\right)\in\left\{ \min\left(S_{i}\right),\max\left(S_{i}\right)\right\} $.
If $\min\left(BR\left(s'_{j}\right)\right)=\max\left(S_{i}\right)$
then this is immediate. Assume that $\min\left(BR\left(s'_{j}\right)\right)=\min\left(S_{i}\right)$.
Then:
\begin{align*}
\min\left(S_{i}\right) & =\min\left\{ s_{i}^{*}\in S_{i}|\left.\frac{\partial\pi_{i}\left(s_{i},s_{j}'\right)}{\partial s_{i}}\geq0\right|_{s_{i}=s_{i}^{*}}\right\} \\
 & \geq\max\left\{ s_{i}^{*}\in S_{i}|\left.\frac{\partial\pi_{i}\left(s_{i},s_{j}\right)}{\partial s_{i}}\geq0\right|_{s_{i}=s_{i}^{*}}\right\} \geq\max\left(BR\left(s_{j}\right)\right).
\end{align*}
\end{proof}

\subsection{Proof of Proposition \ref{prop-strategic-opposites}\label{subsec:Proof-of-Proposition-opposites}}

The proof is analogous to the proof of Proposition \ref{prop-supermodular-monotone},
and is presented for completeness.

\textbf{Part 1: }Proposition \ref{prop-neccesary-conditions} implies
(I) and (II). It remains to show (III). Let $\left(\left(\psi_{1}^{*},\psi_{2}^{*}\right),\left(s_{1}^{*},s_{2}^{*}\right)\right)$
be a BBE. We begin by showing overinvestment of player 2. Assume to
the contrary that $s_{2}^{*}<\min\left(BR\left(s_{j}^{*}\right)\right)$.
Consider a deviation of player $2$ to a blind belief that the opponent
always plays strategy $s_{1}^{*}$ (i.e., $\psi'_{2}\equiv s_{1}^{*}$).
Let $\left(s'_{1},s'_{2}\right)\in PNE\left(G_{\left(\psi_{1}^{*},\psi'_{2}\right)}\right)$
be a plausible equilibrium of the new biased game. Observe first that
$s'_{2}\in BR\left(\psi'_{2}\left(s'_{1}\right)\right)=BR\left(s_{1}^{*}\right).$
This implies that $s'_{2}>s_{2}^{*}$, and, thus, due to the monotonicity
of $\psi_{1}^{*}$, we have : $\psi_{1}^{*}\left(s'_{2}\right)\geq\psi_{1}^{*}\left(s_{2}^{*}\right)$.
We consider two cases: 
\begin{enumerate}
\item If $\psi_{1}^{*}\left(s'_{2}\right)>\psi_{1}^{*}\left(s_{2}^{*}\right)$,
then the strategic complementarity of player 1's payoff implies that
$s'_{1}\geq\min\left(BR\left(\psi_{1}^{*}\left(s'_{2}\right)\right)\right)\geq\max\left(BR\left(\psi_{1}^{*}\left(s_{2}^{*}\right)\right)\right)\geq s_{1}^{*}$,
and, this, in turn, implies that player $2$ strictly gains from his
deviation: $\pi_{2}\left(s'_{1},s'_{2}\right)\geq\pi_{2}\left(s'_{1},s{}_{2}^{*}\right)>\pi_{2}\left(s_{1}^{*},s{}_{2}^{*}\right)$,
a contradiction. 
\item If $\psi_{1}^{*}\left(s'_{2}\right)=\psi_{1}^{*}\left(s_{2}^{*}\right)$,
then $\left(s_{1}^{*},s'_{2}\right)\in PNE\left(G_{\left(\psi_{1}^{*},\psi'_{2}\right)}\right)$
and $\pi_{2}\left(s_{1}^{*},s'_{2}\right)>\pi_{2}\left(s_{1}^{*},s{}_{2}^{*}\right)$,
which contradicts that $\left(\left(\psi_{1}^{*},\psi_{2}^{*}\right),\left(s_{1}^{*},s_{2}^{*}\right)\right)$
is a BBE.
\end{enumerate}
Next we show underinvestment of player 1. Assume to the contrary that
$s_{1}^{*}>\max\left(BR\left(s_{2}^{*}\right)\right)$. Consider a
deviation of player $1$ to a blind belief that the opponent always
plays strategy $s_{2}^{*}$ (i.e., $\psi'_{1}\equiv s_{2}^{*}$).
Let $\left(s'_{1},s'_{2}\right)\in PNE\left(G_{\left(\psi'_{1},\psi_{2}^{*}\right)}\right)$
be a plausible equilibrium of the new biased game. Observe first that
$s'_{1}\in BR\left(\psi'_{1}\left(s'_{2}\right)\right)=BR\left(s_{2}^{*}\right).$
This implies that $s'_{1}<s_{2}^{*}$ and, thus, due to the monotonicity
of $\psi_{2}^{*}$, we have: $\psi_{2}^{*}\left(s'_{1}\right)\leq\psi_{2}^{*}\left(s_{1}^{*}\right)$.
We consider two cases: 
\begin{enumerate}
\item If $\psi_{2}^{*}\left(s'_{1}\right)<\psi_{2}^{*}\left(s_{1}^{*}\right)$,
then the strategic substitutability of player 2's payoff implies that
$s'_{2}\geq\min\left(BR\left(\psi_{2}^{*}\left(s'_{1}\right)\right)\right)\geq\max\left(BR\left(\psi_{2}^{*}\left(s_{1}^{*}\right)\right)\right)\geq s_{2}^{*}$,
and this, in turn, implies that player $1$ strictly gains from his
deviation: $\pi_{1}\left(s'_{1},s'_{2}\right)\geq\pi_{1}\left(s'_{1},s{}_{2}^{*}\right)>\pi_{1}\left(s_{1}^{*},s{}_{2}^{*}\right)$,
a contradiction. 
\item If $\psi_{2}^{*}\left(s'_{1}\right)=\psi_{2}^{*}\left(s_{1}^{*}\right)$,
then $\left(s'_{1},s_{2}^{*}\right)\in PNE\left(G_{\left(\psi'_{1},\psi_{2}^{*}\right)}\right)$
and $\pi_{1}\left(s'_{1},s{}_{2}^{*}\right)>\pi_{1}\left(s_{1}^{*},s{}_{2}^{*}\right)$,
which contradicts that $\left(\left(\psi_{1}^{*},\psi_{2}^{*}\right),\left(s_{1}^{*},s_{2}^{*}\right)\right)$
is a BBE.
\end{enumerate}
\textbf{Part 2:} Assume that strategy profile $\left(s_{1}^{*},s_{2}^{*}\right)$
satisfies I, II, and III. Let $s_{1}^{e}=\min\left(BR^{-1}\left(s_{2}^{*}\right)\right)$.\textbf{
}For each strategy $s_{2}<s_{2}^{*}$ define $X\left(s_{2}\right)$
as the set of strategies $s'_{2}$ for which player $2$ is worse
off (relative to $\pi_{2}\left(s_{1}^{*},s_{2}^{*}\right)$) if he
plays strategy $s_{2}$, while player $1$ plays a best-reply to $s'_{2}$.
Formally: 
\[
X_{s^{*}}\left(s_{2}\right)=\left\{ s'_{2}\in S_{2}|\pi_{2}\left(s_{1},s_{2}\right)\leq\pi_{2}\left(s_{i}^{*},s_{j}^{*}\right)\,\,\forall s_{1}\in BR\left(s'_{2}\right)\right\} .
\]
The assumption that $\pi_{2}\left(s_{1}^{*},s_{2}^{*}\right)>\tilde{M}_{2}^{U}$
implies that $X_{s^{*}}\left(s_{2}\right)$ is nonempty for each $s_{2}\in S_{2}$.
The assumption of strategic complements of player 1's payoff implies
that $X_{s^{*}}\left(s_{2}\right)$ is an interval starting at $\min\left(S_{2}\right)$.
Let $\phi_{s^{*}}\left(s_{2}\right)=\sup\left(X_{s^{*}}\left(s_{2}\right)\right)$.
The assumption that the payoff function is continuously twice differentiable
implies that $\phi_{s^{*}}\left(s_{2}\right)$ is continuous. The
assumption that $s_{1}^{e}=\min\left(BR^{-1}\left(s_{2}^{*}\right)\right)$
implies that $\lim_{s_{2}\nearrow s_{2}^{*}}\left(\phi_{s^{*}}\left(s_{2}\right)\right)=s_{2}^{e}$.
These observations imply that there exists a monotone biased belief
$\psi_{1}^{*}$ satisfying (1) $\psi_{1}^{*}\left(s_{2}\right)=s_{1}^{e}$
and (2) $\psi_{1}^{*}\left(s_{2}\right)\leq\phi_{s^{*}}\left(s_{2}\right)$
for each $s_{2}<s_{2}^{*}$ with an equality only if $\phi_{s^{*}}\left(s_{2}\right)=\min\left(S_{2}\right)$.

Let $s_{2}^{e}=\max\left(BR^{-1}\left(s_{1}^{*}\right)\right)$.\textbf{
}For each strategy $s_{1}>s_{1}^{*}$ define $X\left(s_{1}\right)$
as the set of strategies $s'_{1}\in S_{1}$ for which player $1$
is worse off (relative to $\pi_{2}\left(s_{1}^{*},s_{2}^{*}\right)$)
if he plays strategy $s_{1}$, while player $2$ plays a best-reply
to $s'_{1}$. Formally: 
\[
X_{s^{*}}\left(s_{2}\right)=\left\{ s'_{1}\in S_{1}|\pi_{2}\left(s_{1},s_{2}\right)\leq\pi_{2}\left(s_{1}^{*},s_{2}^{*}\right)\,\,\forall s_{1}\in BR\left(s'_{2}\right)\right\} .
\]
The assumption that $\pi_{1}\left(s_{1}^{*},s_{2}^{*}\right)>\tilde{M}_{1}^{U}$
implies that $X_{s^{*}}\left(s_{1}\right)$ is nonempty for each $s_{1}\in S_{1}$.
The assumption of strategic substitutes of player 2's payoff implies
that $X_{s^{*}}\left(s_{1}\right)$ is an interval ending at $\max\left(S_{1}\right)$.
Let $\phi_{s^{*}}\left(s_{1}\right)=\textrm{inf}\left(X_{s^{*}}\left(s_{1}\right)\right)$.
The assumption that the payoff function is continuously twice differentiable
implies that $\phi_{s^{*}}\left(s_{1}\right)$ is continuous. The
assumption that $s_{2}^{e}=\max\left(BR^{-1}\left(s_{1}^{*}\right)\right)$
implies that $\lim_{s_{1}\searrow s_{1}^{*}}\left(\phi_{s^{*}}\left(s_{1}\right)\right)=s_{1}^{e}$.
These observations imply that there exists a monotone biased belief
$\psi_{2}^{*}$ satisfying (1) $\psi_{2}^{*}\left(s_{1}\right)=s_{1}^{e}$
and (2) $\psi_{2}^{*}\left(s_{1}\right)\geq\phi_{s^{*}}\left(s_{1}\right)$
for each $s_{1}>s_{1}^{*}$ with an equality only if $\phi_{s^{*}}\left(s_{1}\right)=\max\left(S_{1}\right)$. 

We now show that these properties of $\left(\psi_{1}^{*},\psi_{2}^{*}\right)$
imply that $\left(\left(\psi_{1}^{*},\psi_{2}^{*}\right),\left(s_{1}^{*},s_{2}^{*}\right)\right)$
is a BBE. Consider a deviation of player $2$ into an arbitrary biased
belief $\psi'_{2}$. For each $s'_{2}\geq s_{2}^{*}$, and each $\left(s'_{1},s_{2}'\right)\in PNE\left(G_{\left(\psi_{1}^{*},\psi'_{2}\right)}\right)$,
the fact that $\psi_{1}^{*}\left(s'_{2}\right)=\psi_{1}^{*}\left(s_{2}^{*}\right)$
implies that $s_{1}'=s_{1}^{*}$, and due to assumption (III) of the
overinvestment of player 2 and the concavity of the payoff function,
we have $\pi_{2}\left(s'_{1},s'_{2}\right)=\pi_{2}\left(s'_{1},s{}_{2}^{*}\right)\leq\pi_{2}\left(s_{1}^{*},s{}_{2}^{*}\right)$.
For each $s'_{2}<s_{2}^{*}$, and each $\left(s'_{1},s_{2}'\right)\in NE\left(G_{\left(\psi_{1}^{*},\psi'_{2}\right)}\right)$,
the fact that $\psi_{1}^{*}\left(s'_{2}\right)\leq\phi_{s^{*}}\left(s'_{2}\right)$
with an equality only if $\phi_{s^{*}}\left(s'_{2}\right)=\min\left(S_{2}\right)$
implies that $\pi_{2}\left(s'_{1},s'_{2}\right)\leq\pi_{2}\left(s_{1}^{*},s_{2}^{*}\right)$.
This shows that player $2$ cannot gain from his deviation.

Finally, consider a deviation of player $1$ to an arbitrary biased
belief $\psi'_{1}$. For each $s'_{1}\leq s_{1}^{*}$, and each $\left(s'_{1},s_{2}'\right)\in PNE\left(G_{\left(\psi'_{1},\psi_{2}^{*}\right)}\right)$,
the fact that $\psi_{2}^{*}\left(s'_{1}\right)=\psi_{2}^{*}\left(s_{1}^{*}\right)$
implies that $s_{2}'=s_{2}^{*}$, and due to assumption (III) of the
underinvestment of player 1 and the concavity of the payoff function,
we have $\pi_{1}\left(s'_{1},s'_{1}\right)=\pi_{1}\left(s'_{1},s{}_{3}^{*}\right)\leq\pi_{1}\left(s_{1}^{*},s{}_{3}^{*}\right)$.
For each $s'_{1}>s_{1}^{*}$, and each $\left(s'_{1},s_{3}'\right)\in NE\left(G_{\left(\psi'_{1},\psi_{2}^{*}\right)}\right)$,
the fact that $\psi_{2}^{*}\left(s'_{1}\right)\geq\phi_{s^{*}}\left(s'_{1}\right)$
with an equality only if $\phi_{s^{*}}\left(s_{1}\right)=\max\left(S_{1}\right)$
implies that $\pi_{1}\left(s'_{1},s'_{2}\right)\leq\pi_{1}\left(s_{1}^{*},s_{2}^{*}\right)$.
This shows that player $1$ cannot gain from his deviation, which
implies that $\left(\left(\psi_{1}^{*},\psi_{2}^{*}\right),\left(s_{1}^{*},s_{2}^{*}\right)\right)$
is a BBE. 

\subsection{\label{subsec:Proof-of-Corollary-opposites}Proof of Corollary \ref{cor-pessimism-withstrategic-opposites}
(Pessimism in Games with Strategic Opposites)}

The proof is analogous to the proof of Corollary \ref{cor-wishful-tiniking-with-complements},
and is presented for completeness.

Assume to the contrary that $\psi_{i}^{*}\left(s_{j}^{*}\right)>s_{j}^{*}$
for some player $i$. Assume first that $\psi_{2}^{*}\left(s_{1}^{*}\right)>s_{1}^{*}$;
then Lemma \ref{lemma-wishful-substitute} implies that $\max\left(BR\left(\psi_{2}^{*}\left(s_{1}^{*}\right)\right)\right)\leq\min\left(BR\left(s_{1}^{*}\right)\right)$
with an equality only if 
\[
\max\left(BR\left(\psi_{2}^{*}\left(s_{1}^{*}\right)\right)\right)\in\left\{ \min\left(S_{2}\right),\max\left(S_{2}\right)\right\} .
\]
Part 1 of Proposition \ref{prop-strategic-opposites} and the definition
of a monotone BBE imply that 
\[
\max\left(BR\left(\psi_{2}^{*}\left(s_{1}^{*}\right)\right)\right)\geq s_{2}^{*}\geq\min\left(BR\left(s_{1}^{*}\right)\right).
\]
 The previous inequalities jointly imply that 
\[
\max\left(BR\left(\psi_{2}^{*}\left(s_{1}^{*}\right)\right)\right)=s_{2}^{*}=\min\left(BR\left(s_{1}^{*}\right)\right)\in\left\{ \min\left(S_{2}\right),\max\left(S_{2}\right)\right\} ,
\]
 which contradicts the assumption that $s_{2}^{*}\notin\left\{ \min\left(S_{2}\right),\max\left(S_{2}\right)\right\} $.

We are left with the case of $\psi_{1}^{*}\left(s_{2}^{*}\right)>s_{2}^{*}$;
then Lemma \ref{lemma-wishful-somplements} implies that $\min\left(BR\left(\psi_{1}^{*}\left(s_{2}^{*}\right)\right)\right)\geq\max\left(BR\left(s_{2}^{*}\right)\right)$
with an equality only if 
\[
\min\left(BR\left(\psi_{1}^{*}\left(s_{2}^{*}\right)\right)\right)\in\left\{ \min\left(S_{1}\right),\max\left(S_{1}\right)\right\} .
\]
Part 1 of Proposition \ref{prop-strategic-opposites} and the definition
of a monotone BBE imply that 
\[
\min\left(BR\left(\psi_{1}^{*}\left(s_{2}^{*}\right)\right)\right)\leq s_{1}^{*}\leq\max\left(BR\left(s_{2}^{*}\right)\right).
\]
 The previous inequalities jointly imply that 
\[
\min\left(BR\left(\psi_{1}^{*}\left(s_{2}^{*}\right)\right)\right)=s_{1}^{*}=\max\left(BR\left(s_{2}^{*}\right)\right).\in\left\{ \min\left(S_{1}\right),\max\left(S_{1}\right)\right\} ,
\]
 which contradicts the assumption that $s_{1}^{*}\notin\left\{ \min\left(S_{1}\right),\max\left(S_{1}\right)\right\} $.

\subsection{\label{subsec:Proof-of-Proposition-interval}Proof of Proposition
\ref{pro-interval-strong-continous-folk}}

Recall that we assume the payoff function $\pi_{i}$ to be continuously
twice differentiable. This implies that $\pi_{i}$ is Lipschitz continuous.
Let $K_{i}>0$ be the Lipschitz constant of the payoff function $\pi_{i}$
with respect to its first parameter, i.e., $K_{i}$ satisfies 
\[
\left\Vert \pi_{i}\left(s_{1},s_{2}\right)-\pi_{i}\left(s'_{1},s_{2}\right)\right\Vert \leq K_{i}\cdot\left\Vert s_{1}-s'_{1}\right\Vert .
\]
Assume that $\left(s_{1}^{*},s_{2}^{*}\right)$ is undominated and
$\pi_{i}\left(s_{1}^{*},s_{2}^{*}\right)>M_{i}^{U}$ for each player
$i$. Let $0<D_{i}=\pi_{i}\left(s_{1}^{*},s_{2}^{*}\right)-M_{i}^{U}$.
For each player $j$, let $s_{j}^{p}$ be an undominated strategy
that guarantees that player $i$ obtains, at most, his minmax payoff
$M_{i}^{U}$, i.e., $s_{j}^{p}=\textrm{argmin}_{s_{j}\in S_{j}^{U}}\left(\max_{s_{i}\in S_{i}}\pi_{i}\left(s_{i},s_{j}\right)\right).$
The strict concavity of $\pi_{i}\left(s_{i},s_{j}\right)$ with respect
to $s_{i}$ implies that the best-reply correspondence is a continuous
one-to-one function. Thus, $BR^{-1}\left(s_{i}\right)$ is a singleton
for each $s_{i}$, and we identify $BR^{-1}\left(s_{i}\right)$ with
the unique element in this singleton set. 

Let $\epsilon>0$ be a sufficiently small number satisfying $\epsilon<\min\left(\frac{D_{i}}{K_{i}},\frac{D_{j}}{K_{j}}\right)$.
For each $\delta\in\left[0,1\right]$ define for each player $i$:
\[
s_{i}^{\delta}=\frac{\epsilon-\delta}{\epsilon}\cdot s_{i}^{*}+\frac{\delta}{\epsilon}\cdot s_{i}^{p}.
\]

Let $\psi_{i}^{\epsilon}$ be defined as follows: 
\[
\psi_{i}^{\epsilon}\left(s_{j}'\right)=\begin{cases}
BR^{-1}\left(s_{i}^{\left|s'_{j}-s_{j}\right|}\right) & \left|s'_{j}-s_{j}\right|<\epsilon\\
BR^{-1}\left(s_{i}^{p}\right) & \left|s'_{j}-s_{j}\right|\geq\epsilon.
\end{cases}
\]
Note that $\psi_{i}^{\epsilon}$ is continuous. We now show that $\left(\left(\psi_{1}^{\epsilon},\psi_{2}^{\epsilon}\right),\left(s_{1}^{*},s_{2}^{*}\right)\right)$
is a strong BBE. Observe first that the definition of $\left(\psi_{1}^{\epsilon},\psi_{2}^{\epsilon}\right)$
immediately implies that $\left(s_{1}^{*},s_{2}^{*}\right)\in NE\left(G_{\left(\psi_{1}^{\epsilon},\psi_{2}^{\epsilon}\right)}\right)$.
Next, consider a deviation of player \emph{i} to an arbitrary biased
belief $\psi'_{i}$. Consider any equilibrium of the new biased game
$\left(s'_{i},s'_{j}\right)\in NE\left(G_{\left(\psi'_{i},\psi_{j}^{\epsilon}\right)}\right)$.
If $\left|s'_{i}-s_{i}\right|\geq\epsilon$, then the definition of
$\psi_{j}^{\epsilon}\left(s_{i}'\right)$ implies that $s_{j}^{p}=s'_{j}$,
and that player $i$ achieves a payoff of at most $M_{i}^{U}<\pi_{i}\left(s_{1}^{*},s_{2}^{*}\right)$.
If $s'_{i}=s_{i}^{*}$, then it is immediate that $s'_{j}=s_{j}^{*}$
and that player $i$ does not gain from his deviation. If $0<\left|s'_{i}-s_{i}\right|<\epsilon$,
then the definition of $\psi_{j}^{\epsilon}\left(s_{i}'\right)$ implies
that 
\[
\pi_{i}\left(s'_{i},s'_{j}\right)=\pi_{i}\left(s'_{i},s_{j}^{\left|s'_{i}-s_{i}\right|}\right)=\pi_{i}\left(s'_{i},\frac{\epsilon-\left|s'_{i}-s_{i}\right|}{\epsilon}\cdot s_{j}^{*}+\frac{\left|s'_{i}-s_{i}\right|}{\epsilon}\cdot s_{j}^{p}\right)\leq
\]
\[
\frac{\epsilon-\left|s'_{i}-s_{i}\right|}{\epsilon}\cdot\pi_{i}\left(s'_{i},s_{j}^{*}\right)+\frac{\left|s'_{i}-s_{i}\right|}{\epsilon}\cdot\pi_{i}\left(s'_{i},s_{j}^{p}\right)\leq\frac{\epsilon-\left|s'_{i}-s_{i}\right|}{\epsilon}\cdot\pi_{i}\left(s'_{i},s_{j}^{*}\right)+\frac{\left|s'_{i}-s_{i}\right|}{\epsilon}\cdot M_{i}^{U}\leq
\]

\[
\frac{\epsilon-\left|s'_{i}-s_{i}\right|}{\epsilon}\cdot\pi_{i}\left(s^{*}{}_{i},s_{j}^{*}\right)+K_{i}\cdot\left|s'_{i}-s_{i}\right|+\frac{\left|s'_{i}-s_{i}\right|}{\epsilon}\cdot M_{i}^{U}=
\]
\[
\frac{\epsilon-\left|s'_{i}-s_{i}\right|}{\epsilon}\cdot\pi_{i}\left(s^{*}{}_{i},s_{j}^{*}\right)+K_{i}\cdot\left|s'_{i}-s_{i}\right|+\frac{\left|s'_{i}-s_{i}\right|}{\epsilon}\cdot\left(\pi_{i}\left(s^{*}{}_{i},s_{j}^{*}\right)-D_{i}\right)=
\]
\[
\pi_{i}\left(s^{*}{}_{i},s_{j}^{*}\right)+\frac{\epsilon-\left|s'_{i}-s_{i}\right|}{\epsilon}\cdot K_{i}\cdot\left|s'_{i}-s_{i}\right|-\frac{\left|s'_{i}-s_{i}\right|}{\epsilon}\cdot D_{i}\leq\pi_{i}\left(s^{*}{}_{i},s_{j}^{*}\right)+K_{i}\cdot\left|s'_{i}-s_{i}\right|-\frac{\left|s'_{i}-s_{i}\right|}{\epsilon}\cdot D_{i}=
\]
\[
\pi_{i}\left(s^{*}{}_{i},s_{j}^{*}\right)+\left|s'_{i}-s_{i}\right|\cdot\left(K_{i}-\frac{D_{i}}{\epsilon}\right)<\pi_{i}\left(s^{*}{}_{i},s_{j}^{*}\right),
\]
where the first inequality is due to the convexity of $\pi_{i}\left(s_{i},s_{j}\right)$
with respect to $s_{j}$, the second inequality is due to $\pi_{i}\left(s'_{i},s_{j}^{p}\right)\leq M_{i}^{U}$,
the third inequality is due to the Lipschitz continuity, the penultimate
inequality is implied by $\frac{\epsilon-\left|s'_{i}-s_{i}\right|}{\epsilon}<1$,
and the last inequality is due to defining $\epsilon$ to satisfy
$\epsilon<\min\left(\frac{D_{i}}{K_{i}},\frac{D_{j}}{K_{j}}\right)$.
This proves that player $i$ cannot gain from his deviation, and that
$\left(\left(\psi_{1}^{\epsilon},\psi_{2}^{\epsilon}\right),\left(s_{1},s_{2}\right)\right)$
is a strong BBE.
\end{document}